\documentclass[acmsmall]{acmart}
\AtBeginDocument{%
  }

\renewcommand\footnotetextcopyrightpermission[1]{} %

\usepackage{booktabs,tabularx}
\usepackage{extarrows}

\usepackage{enumitem}
\usepackage[ruled,lined,linesnumbered,noend]{algorithm2e}
\usepackage{url}
\usepackage{stmaryrd}
\usepackage{mathtools}

\hypersetup{hypertexnames=false}  %

\usepackage{thmtools} 
\usepackage{thm-restate}
\usepackage{hyperref}

\newcommand{\doc}{\mathsf{D}}
\newcommand{\ta}{\mathtt{a}}
\newcommand{\tb}{\mathtt{b}}
\newcommand{\tc}{\mathtt{c}}

\newcommand{\eword}{\varepsilon}
\newcommand{\query}{u_{\mathcal C}}
\newcommand{\longright}{\mathsf{longestRight}}
\newcommand{\shortleft}{\mathsf{shortestLeft}}
\newcommand{\LRArray}{\mathsf{LR}}
\newcommand{\SLArray}{\mathsf{SL}}

\newcommand{\LCON}{\mathsf{LCON}}
\newcommand{\RCON}{\mathsf{RCON}}
\newcommand{\REG}{\mathsf{REG}}

\newcommand{\LL}{\mathcal{L}}
\newcommand{\embedLCONSubseq}{\mathsf{EmbedLCONSubseq}}

\newcommand{\nextMove}{\mathsf{nextMove}}
\newcommand{\incr}{\mathsf{incr}}

\newcommand{\embedAll}{\mathsf{EmbedLCONAll}}
\newcommand{\nextMoveLRCON}{\nextMove\mathsf{LRCON}}
\newcommand{\CSP}{\mathsf{CSP}}
\newcommand{\Pol}{\mathsf{Pol}}
\newcommand{\const}{K}
\newcommand{\domain}{T}

\let\oldnl\nl%
\newcommand{\nonl}{\renewcommand{\nl}{\let\nl\oldnl}}%

\theoremstyle{plain}
\newtheorem{theorem}{Theorem}[section]
\newtheorem{proposition}[theorem]{Proposition}
\newtheorem{lemma}[theorem]{Lemma}

\newtheorem{claim}[theorem]{Claim}
\newtheorem{definition}[theorem]{Definition}

\theoremstyle{remark}

\newtheorem{observation}[theorem]{Observation}

\RenewCommandCopy{\theHtheorem}{\thetheorem} %
\begin{document}

\title{Tractable Gap-Constraint Languages for Complex Event Recognition}

\author{Antoine Amarilli}
\email{a3nm@a3nm.net}
\orcid{0000-0002-7977-4441}
\affiliation{
  \institution{Univ. Lille, Inria, CNRS, Centrale Lille, UMR 9189 CRIStAL}
  \city{F-59000 Lille}
  \country{France}
}
\author{Florin Manea}
\email{florin.manea@cs.uni-goettingen.de}
\orcid{0000-0001-6094-3324}
\affiliation{
  \institution{University of G\"ottingen, Institute for Computer Science and CIDAS}
  \city{D-37077 G\"ottingen}
  \country{Germany}
}
\author{Tina Ringleb}
\email{tina.ringleb@cs.uni-goettingen.de}
\orcid{0009-0004-8928-2928}
\affiliation{
  \institution{University of G\"ottingen, Institute for Computer Science and CIDAS}
  \city{D-37077 G\"ottingen}
  \country{Germany}
}
\author{Markus L. Schmid}
\email{MLSchmid@MLSchmid.de}
\orcid{0000-0001-5137-1504}
\affiliation{
  \institution{Humboldt-Universit\"at zu Berlin}
  \city{Unter den Linden 6, D-10099, Berlin}
  \country{Germany}
}

\begin{abstract}
For strings $u, \doc \in \Sigma^*$, a \emph{subsequence embedding} of $u$ in $\doc$ is a function $e \colon \{1, 2, \ldots, |u|\} \to \{1, 2, \ldots, |\doc|\}$ with $e(i) < e(i+1)$ for every $i \in \{1, 2, \ldots, |u|-1\}$ and the $i^{\text{th}}$ symbol of $u$ equals the $e(i)^{\text{th}}$ symbol of $\doc$. A \emph{gap-constraint} for $u$ is a triple $(i, j, L)$ with $1 \leq i < j \leq |u|$ and $L$ is a regular language over $\Sigma$. An embedding $e$ \emph{satisfies} a gap-constraint $(i, j, L)$ if the factor of $\doc$ strictly between positions $e(i)$ and $e(j)$ is a word from $L$. We investigate the subsequence matching problem with gap-constraints, which is relevant in the context of complex event recognition (CER): given $u, \doc \in \Sigma^*$ and a set $\mathcal{C}$ of gap-constraints, find an embedding of $u$ in $\doc$ that satisfies all gap-constraints from $\mathcal{C}$. 
In general, subsequence matching is NP-complete and the only known tractable variants restrict the interval structure of the gap-constraints. In this work, we show that we can solve subsequence matching with gap-constraints with an arbitrary interval structure rather efficiently (in fact, optimally under SETH) in time $O(|\doc| (|u| + \lVert \mathcal C\rVert))$ if the gap-constraint languages satisfy a property which we dub \emph{left-convexity}: whenever $u v w \in L$ and $v \in L$, then also $uv \in L$. Left-convex languages are sufficiently expressive to model interesting real-world scenarios considered in CER, e.g., length constraints $L = \{w \mid a \leq |w| \leq b\}$ for $a, b \in \mathbb{N}$. 
We also show how our algorithm can be used in order to efficiently enumerate all satisfying embeddings, which is particularly relevant for possible applications in CER. Finally, we show how \emph{non}-left-convex languages can lead to intractability, i.e., if in addition to length constraints we allow $\{\ta \ta, \eword\}$ as the only non-left-convex constraint language, then the problem is NP-complete again. 

\end{abstract}

\maketitle

\section{Introduction}\label{sec:intro}

In \emph{complex event recognition} (CER), we receive a long sequence of events generated by the run of a large system (e.g., a distributed computer system with many interacting components, or a surveillance system that continuously produces measurements); see the introductions of the surveys~\cite{ArtikisEtAl2017, GiatrakosEtAl2020}. The task is to process the received event sequence and search for certain patterns (i.e., strings of events), which are called \emph{complex events}. This explains why several query languages for CER are rooted in pattern matching and automata theory (see~\cite{AgrawalDGI08, ZhangEtAl2014, GrezEtAl2021}).
However, CER deviates from the classical string pattern matching paradigm because pattern occurrences are formalised as subsequences rather than consecutive factors, and there are additional constraints on how the subsequences are embedded (e.g., by asking for two events of type A that are \emph{not} separated by an event of type B),
which often amounts to regular constraints on some gaps in the embeddings.

A reasonable mathematical abstraction of (a core functionality of) CER is therefore \emph{subsequence matching with (regular) gap-constraints}.
This task is defined as searching for a suitable \emph{embedding} of a string $u \in \Sigma^*$ (over some finite alphabet $\Sigma$) in a larger document $\doc \in \Sigma^*$, namely, a total function $e \colon \{1, 2, \ldots, |u|\} \to \{1, 2, \ldots, |\doc|\}$ such that $e(i) < e(i + 1)$ for every $i \in \{1, 2, \ldots, |u| - 1\}$ and $u[i] = \doc[e(i)]$ for every $i \in \{1, 2, \ldots, |u|\}$ (i.e., the $i$-th symbol of $u$ equals the $e(i)$-th symbol of $\doc$). If $e$ exists, then $u$ is also called a (scattered) \emph{subsequence} of $\doc$. Finding an embedding of $u$ in $\doc$ is a trivial computational task that can be solved greedily in linear time. However, the situation becomes more interesting if we require $e$ to satisfy some additional constraints, and motivated by the CER setting we are concerned with so-called \emph{gap-constraints}. For every $i, j \in \{1, 2, \ldots, |u|\}$ with $i < j$, the \emph{$(i, j)$-gap induced by $e$} is the factor $\doc[e(i) + 1 : e(j) - 1]$, i.e., the factor of $\doc$ strictly between the images of $i$ and $j$ under $e$. A \emph{gap-constraint} is a triple $(i, j, L)$ with $1 \leq i < j \leq |u|$ and $L \subseteq \Sigma^*$, and the embedding $e$ \emph{satisfies} $(i, j, L)$ if and only if the $(i, j)$-gap induced by $e$ is in $L$. If $L$ is a regular language, then $(i, j, L)$ is a \emph{regular} gap-constraint.

Figure~\ref{fig:embeddingExample} illustrates an example for $u = \ta \tb \ta \tc$ and $\doc = \tc \ta \ta \tb \tb \tb \ta \tb \ta \tc \tb \tc$, and three gap-constraints displayed at the right. In general, there are several possible embeddings of $u$ in $\doc$, but the specific one illustrated by the arrows in Figure~\ref{fig:embeddingExample} is one that satisfies all three gap-constraints. If we change $e(4)$ from $12$ to $10$, we still have an embedding of $u$ in $\doc$, but the $(1, 4)$-gap $\doc[3 : 9] = \ta \tb \tb \tb \ta \tb \ta$ has only $4$ occurrences of $\tb$ and therefore does not satisfy the $(1, 4)$-gap-constraint. Changing $e(2)$ from $5$ to $4$ violates the $(2, 4)$-gap-constraint, while changing $e(2)$ from $5$ to $6$ results in an embedding that still satisfies all gap-constraints.

\begin{figure}[t]
  \centering
\includegraphics[height=2.5cm]{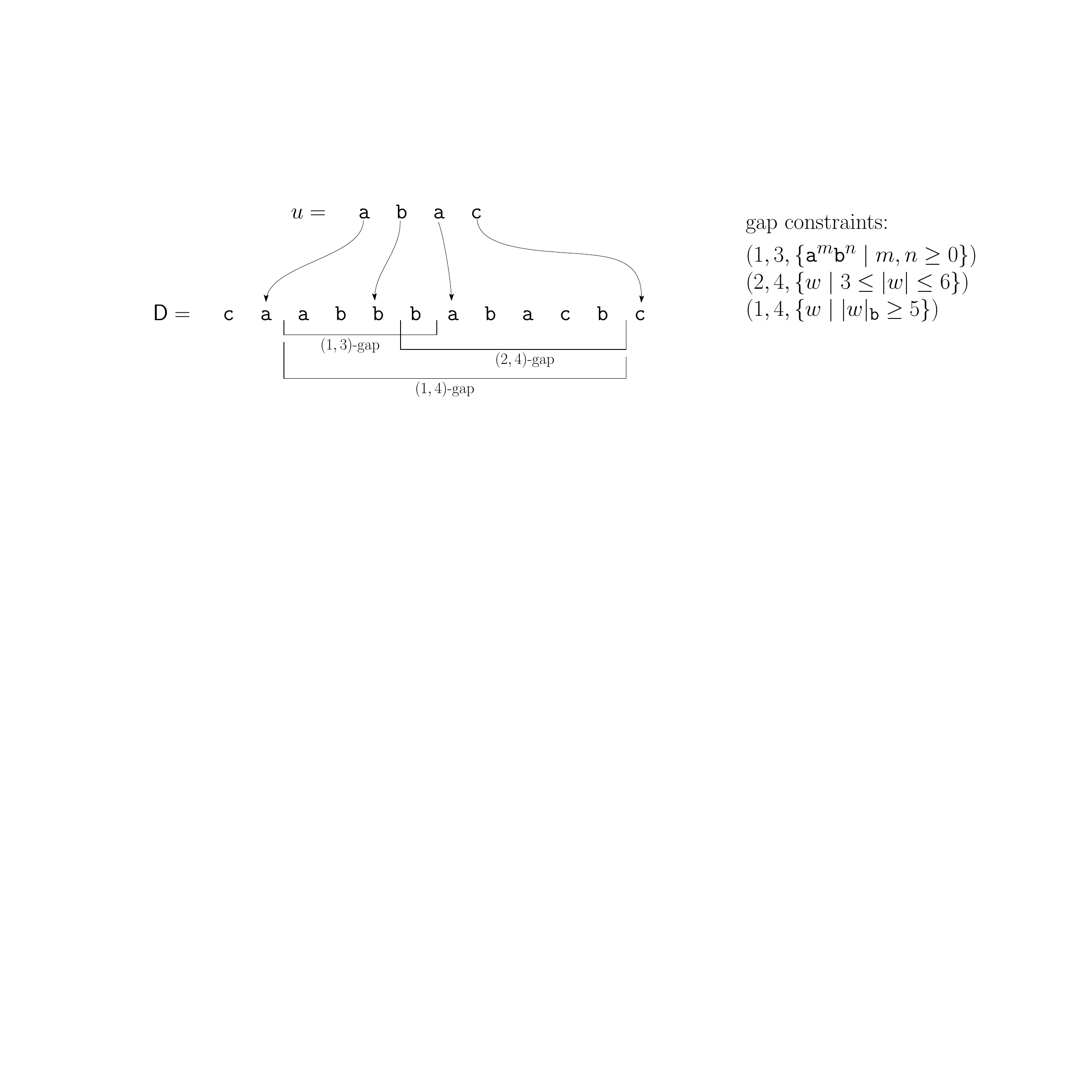}
\caption{Example of an embedding of $u$ in $\doc$ with gap-constraints (here, $|w|_x$ denotes the number of occurrences of letter $x$ in the string $w$).}
\label{fig:embeddingExample}
\end{figure}

Coming back to the literature on CER, we observe that, inspired by the SASE-language~\cite{WuDR06, AgrawalDGI08, ZhangEtAl2014}, the papers~\cite{KleestMeissnerEtAl2021,Kleest-MeissnerEtAl23, FrochauxKleestMeissner2023, SattlerKLSSW25, FrochauxKS25, SattlerKLSS025} introduce and investigate \emph{subsequence queries with gap-constraints}.\footnote{The queries investigated in these papers also allow variables in order to model equality constraints of the SASE-language. This is a functionality that goes beyond the setting of this paper and is also well-known to lead to intractability (see~\cite{KleestMeissnerEtAl2021}).} In this model, a string and a set of gap-constraints is interpreted as a query, and the corresponding result set contains all embeddings of the string into the event sequence such that all gap-constraints are satisfied. An important observation is that CER scenarios mostly consider very simple constraint languages -- both in practically motivated settings and in their theoretical abstractions -- with the most prominent example of a very simple yet practically relevant such class being so-called \emph{length constraints}, i.e., gap-constraints $(i, j, L)$ with $L = \{w \in \Sigma^* \mid a \leq |w| \leq b\}$ for some $a, b \in \mathbb{N} \cup \{0\}$. More precisely, the initial model of~\cite{KleestMeissnerEtAl2021} considers length constraints and only allows constraints between neighbouring positions. In~\cite{FrochauxKleestMeissner2023}, this setting has been extended to possibly overlapping gap-constraints between arbitrary positions (a feature also used in the example from Figure~\ref{fig:embeddingExample}), and the query language of the recent proposal~\cite{GarciaR25} can also formulate overlapping gap-constraints.

Motivated by CER, we are interested in the following tasks: Given $u, \doc \in \Sigma^*$ and a set $\mathcal{C}$ of regular gap-constraints, (1) compute a $\mathcal{C}$-embedding (i.e., one that satisfies the constraints of $\mathcal{C}$) of $u$ in $\doc$, or report that none exists; (2) enumerate all such
$\mathcal{C}$-embeddings with bounded delay.

As a first step, we observe that task (1) has already been investigated in the literature on string pattern matching. Particularly relevant 
to us 
is~\cite{DayKMS22}, where the authors investigate subsequence matching with arbitrary regular gap-constraints and with length constraints (note that although length constraints are a special case of regular constraints, they can be concisely represented by two numbers written in binary, which can have an impact on the complexity). Moreover,~\cite{DayKMS22} only considers constraints between neighbouring positions (like~\cite{KleestMeissnerEtAl2021} mentioned above), while the general case of regular gap-constraints between arbitrary pairs of positions has been investigated in~\cite{manea2024generalisedgaps}. In this work~\cite{manea2024generalisedgaps}, it is also shown that task (1) is NP-complete in general, but it becomes tractable if the gap-constraints have a restricted nested structure, namely,
for any two gap-constraints $(i, j, L)$ and $(i', j', L)$, we have that $j \leq i'$ or $j' \leq i$ or $i \leq i' < j' \leq j$ or $i' \leq i < j \leq j'$. 

In terms of the CER setting, a drawback of the intractability result of~\cite{manea2024generalisedgaps} is that the case of \emph{simple} constraint languages is neglected. As possible examples of such languages we have already mentioned length constraints, but other simple language classes with conceivable practical relevance are easily found: alphabet restrictions, i.e., languages $\Gamma^*$ for $\Gamma \subsetneq \Sigma$; singleton languages; languages that upper and lower bound the number of occurrences of certain symbols; languages with unique start or end delimiters; languages defined by forbidden and required factors; etc. For such language classes, the hardness result of~\cite{manea2024generalisedgaps} does not apply. This motivates the study of the scenario where the \emph{structure} of the gap-constraints is unrestricted, but the constraint \emph{languages} are from some simple class of languages. Surprisingly, even the tractability of task (1) in the case of only length constraints (or in the case of only singleton languages) is still open.\footnote{These questions were 
posed at the Dagstuhl Seminar 24472 \emph{``Regular Expressions: Matching and Indexing''} \cite[Section~3.11]{GortzM0P24}.} Thus, our goal in this paper is to find classes of simple constraint languages for which subsequence matching with gap-constraints can be done efficiently. As usual in data management tasks, we are interested in algorithms that are linear in the size of the data (i.e., the document $\doc$), and instead of just solving decision problems (or computing a single witness), we are interested in efficiently computing 
the whole set of solutions by enumerating them with bounded delay
(i.e., task (2) mentioned above).

\subsection{Our Contribution}

As a first step, we observe that known results in the field of constraint satisfaction problems (CSP) can be used in a black-box fashion in order to obtain an efficient algorithm for restricted cases of our task (1). Namely, we call a set $\mathcal{C}$ of regular gap-constraints \emph{min-closed} for some $u, \doc \in \Sigma^*$ if for every two $\mathcal{C}$-embeddings $e, e' \colon \{1, \ldots, |u|\} \to \{1, \ldots, |\doc|\}$, their pointwise minimum $e_{\min}$ is also a \mbox{$\mathcal{C}$-embedding} (where we define $e_{\min}(x) \coloneq \min\{e(x), e'(x)\}$ for every $x \in \{1, \ldots, |u|\}$). For min-closed instances, we can show the following result, where $\lVert \mathcal C \rVert$ is the total size of all regular expressions that represent the regular gap-constraints.

\begin{theorem}\label{cspBlackBoxTheorem}
Restricted to min-closed instances, task (1) can be solved in time $O((|u| + \lVert\mathcal C \rVert) |\doc|^2)$.
\end{theorem}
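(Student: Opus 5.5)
We encode the problem as a binary CSP and solve it by arc consistency. The variables are the positions $x \in \{1, \ldots, |u|\}$, each with domain $\domain = \{1, \ldots, |\doc|\}$. The constraints are of three kinds:
\begin{itemize}
\item unary constraints $\doc[e(x)] = u[x]$;
\item binary ordering constraints $e(x) < e(x+1)$;
\item for each $(i,j,L) \in \mathcal C$, the binary relation $R_{i,j,L} = \{(p,q) \mid p<q,\ \doc[p+1:q-1] \in L\}$.
\end{itemize}
The $\mathcal C$-embeddings are exactly the solutions of this instance. The min-closed assumption says that the solution set is closed under pointwise $\min$. However, the relations themselves need not be min-closed, so we cannot directly cite the classical theorem that CSPs with min-closed relations are solved by arc consistency. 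The first step is therefore to work carefully with what the hypothesis gives us.

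The algorithm is the standard one. Compute for each constraint its relation explicitly, then run arc consistency (AC-3/AC-4 style) until a fixpoint. If some domain becomes empty, report that no embedding exists. Otherwise output $e(x) = \min D_x$ for each $x$, and verify $e$ against all constraints. The candidate has this form because, under min-closure, the set of solutions has a unique pointwise-minimal element, when nonempty. That element has $e^*(x) = \min\{e(x) \mid e \text{ solution}\}$, since the min of all solutions is a solution.

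The key lemma is that after arc consistency, $\min D_x$ coincides with $e^*(x)$, or at least that $(\min D_x)_x$ is a solution. For min-closed relations this is classical: arc consistency leaves each $(\min D_x, \min D_y)$ in each relation. Here I expect the main obstacle. Only the global solution set is min-closed, so the argument must instead use a stronger consistency, or establish that the relations restricted to the relevant positions are effectively min-closed.

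My first attempt would be the following. Apply the black-box result on binary min-closed (more generally, connected-row-convex or semilattice) CSPs to the instance obtained by replacing each relation by its projection from the solution set; that instance is min-closed. Then argue that arc consistency on the original instance computes domains whose minima dominate those projections. If this fails, I would fall back to a verification-plus-pruning loop. Repeatedly compute the min-candidate; if some constraint is violated by $(\min D_x,\min D_y)$, remove the value that can belong to no solution. Min-closure of the solution set guarantees that such a value exists: if both values appeared in solutions, their min would be a solution containing both. This loop makes at most $|u||\doc|$ removals.

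For the running time, each relation $R_{i,j,L}$ can be tabulated in $O(|L|\,|\doc|^2)$ time. To do so, run the automaton (Thompson NFA of size $O(|L|)$ with bit-parallel or set simulation, amortised via state-set sweeps) from each start $p$ over the suffix, which gives $O(\lVert \mathcal C\rVert |\doc|^2)$ in total. The ordering and letter constraints cost $O(|u||\doc|^2)$. Arc consistency on binary constraints with tabulated relations runs in $O(\sum_{\text{constraints}} |\doc|^2)$ via AC-4, giving the claimed $O((|u|+\lVert\mathcal C\rVert)|\doc|^2)$.
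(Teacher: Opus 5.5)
Your proposal has a genuine gap, and it sits exactly at the step you flag. You read ``min-closed instance'' as ``the set of $\mathcal C$-embeddings is closed under pointwise minimum''. The introduction's informal wording invites that reading. Under it, however, you never establish the key lemma that $(\min D_x)_x$ is a solution after arc consistency, and neither of your remedies supplies it.

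\begin{itemize}
\item The ``projection'' instance cannot be built. Its relations are projections of the solution set, which is the very object you are trying to compute. You also only assert, and do not show, that the arc-consistent domains dominate those projections.
\item In the pruning loop, you correctly show that at least one of $a=\min D_x$ and $b=\min D_y$ lies in no solution. Nothing tells you \emph{which} one, though. Deleting the wrong value can eliminate a solution (e.g.\ the only one), so the loop is not an algorithm. Deciding which endpoint to move is precisely what Algorithm~\ref{mainAlgo} can do only through left-convexity and $\longright$.
\end{itemize}

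More fundamentally, the solution-level property is satisfied by every instance with at most one solution, in particular by every unsatisfiable instance. So under your reading the restriction excludes none of the hard ``no''-instances. This is also why the paper can claim that min-closedness is testable in polynomial time once the CSP is built: that claim only makes sense for a property of the tabulated relations.

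The paper's proof (Theorem~\ref{thm:CSPsolution}) assumes instead that the constraint relations of the normalised instance $\CSP_{u,\doc,\mathcal C}$ are themselves min-closed. Lemma~\ref{lem:LCON_min_closed} is exactly what shows this holds for left-convex constraints. With that hypothesis, the argument you called classical finishes the proof:
\begin{itemize}
\item After arc consistency, $a=\min D_x$ has a support $b'\in D_y$, and $b=\min D_y$ has a support $a'\in D_x$.
\item Since $a\le a'$ and $b\le b'$, the pair $(a,b)$ is the pointwise minimum of $(a,b')$ and $(a',b)$, so it lies in $R$.
\item The unary constraints hold because the domains were pruned by them.
\end{itemize}
Your cost accounting then matches the paper: tabulating the gap relations in $O(\lVert\mathcal C\rVert\,|\doc|^2)$, intersecting relations that share a scope, and running AC4 in $O(|\const|\,|\doc|^2)$ with $|\const|=O(|u|+|\mathcal C|)$.
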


It is not difficult to see that instances with only length constraints are necessarily min-closed, which already answers one of the open questions mentioned above. In general, for a given instance $u, \doc \in \Sigma^*$ and $\mathcal{C}$, we can check in polynomial time whether $\mathcal{C}$ is min-closed for $u$ and $\doc$. 

While this first result is a relevant observation, it seems to be of limited value in the context of CER, since the same set of gap-constraints might be min-closed for some pair $(u, \doc)$, but not for another pair $(u, \doc')$; thus, the min-closed property does not yield a reasonable query class. Instead, we are interested in finding language classes such that queries using constraints from these classes are necessarily tractable on every possible document $\doc$. Moreover, our main focus is not a complexity classification with respect to task (1), but finding efficient and natural algorithms for also solving task (2), for which the CSP meta-theorems do not seem to be helpful.

We identify a rather simple language property, called \emph{left-convexity}, which (as we will show) serves as a unifying explanation of the tractability of many cases: 

\begin{definition}\label{def:lcon}
A language $L\subseteq\Sigma^\ast$ is 
\emph{left-convex} if, for all $u,v,w\in\Sigma^\ast$, whenever 
$uvw\in L$ and $v\in L$, then 
$uv\in L$. A gap-constraint $(i, j ,L)$ is \emph{left-convex} if $L$ is a left-convex regular language. 
\end{definition}

Intuitively speaking, left-convexity of $L$ means that if a factor in a word from $L$ is also from~$L$, then the whole prefix including this factor must be in $L$. For example, length constraints are left-convex, and so are all the languages of the example given in Figure~\ref{fig:embeddingExample}. The name ``left-convex'' is inspired by other notions of convexity for formal languages as investigated in~\cite{AngB08, AngB09, BrzozowskiSX11, Thierrin72}, but to the best of our knowledge the specific left-convex property defined above has not been considered earlier in the literature. The symmetric notion of \emph{right-convex} is analogous and all our results for left-convex gap-constraints also hold for right-convex gap-constraints.
See Appendix~\ref{sec:leftConvexProp} for more information (including language theoretical considerations and several examples of left-convex classes).

The point of left-convexity is the following: we can show that whenever $\mathcal{C}$ only contains left-convex constraint languages then $\mathcal{C}$ is necessarily min-closed for every $u, \doc \in \Sigma^*$. Hence, as a direct consequence of Theorem~\ref{cspBlackBoxTheorem}, if $\mathcal{C}$ only contains left-convex constraint languages, then we can solve task $(1)$ in time $O(( |u| + \lVert \mathcal C \rVert ) |\doc|^2)$.

For data management, the quadratic dependency on the data size $|\doc|$ is
unrealistic, 
so we 
give an algorithm that makes it linear, with the following structure.
We maintain a candidate mapping $e \colon \{1, \ldots, |u|\} \to \{1, \ldots, |\doc|\}$, which is initialised with $e(i) = i$ for every $i \in \{1, \ldots, |u|\}$. As long as $e$ is not a $\mathcal{C}$-embedding of $u$ in $\doc$ that satisfies all constraints from $\mathcal{C}$, we move some position $i$ to the right, i.e., we set $e(i) \coloneq e(i) + c$, until we have found such a satisfying embedding. 
This requires us to efficiently answer queries of the form $\doc[e(i) + 1 : e(j)
- 1] \in L$, which we show can be done because of left-convexity by computing,
for every $k\in [|\doc|]$ and language $L$ with $(i,j,L) \in \mathcal C$, the
size of a longest factor of $\doc$ starting at position $k+1$ that is contained in
$L$.

In the following informal statements of our main results, we let $u, \doc \in \Sigma^*$ and $\mathcal{C}$ be a set of regular left-convex gap-constraints for $u$. We then have:

\begin{theorem}\label{thm:mainalg}
We can compute a $\mathcal{C}$-embedding of $u$ in $\doc$, or report that no such embedding exists, in time $O(|\doc| (|u| + \lVert \mathcal C\rVert))$.
\end{theorem}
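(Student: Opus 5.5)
The plan is an iterative ``push-right'' algorithm. It rests on a pointwise-minimality invariant, which left-convexity supports, and on an amortised running-time analysis. We maintain a candidate mapping $e$, initialised to $e(x)=x$. The invariant is: \emph{for every $\mathcal C$-embedding $e^*$ of $u$ in $\doc$ we have $e \le e^*$ pointwise.} It holds initially. As long as $e$ is not a $\mathcal C$-embedding, we find a violated requirement and increase one position $e(x)$, in a way that provably preserves the invariant. If some $e(x)$ ever exceeds $|\doc|$, the invariant shows that no $\mathcal C$-embedding exists. Otherwise we stop with a $\mathcal C$-embedding, which by the invariant is the pointwise minimum one; this is consistent with min-closure, which was already used for Theorem~\ref{cspBlackBoxTheorem}. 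Since every step strictly increases some $e(x)$, there are at most $|u|\cdot|\doc|$ steps.

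The violated requirements are of three kinds. (a) $\doc[e(x)]\neq u[x]$: we increment $e(x)$, which is safe since $e^*(x)\ge e(x)$ and $e^*(x)\neq e(x)$. (b) $e(x+1)\le e(x)$: we increment $e(x+1)$. (c) A constraint $(i,j,L)$ whose gap $g=\doc[e(i)+1:e(j)-1]$ is not in $L$; this is the core case. Let $\LRArray_L[k]$ be the length of a longest factor of $\doc$ that starts at position $k+1$ and lies in $L$. If $|g|>\LRArray_L[e(i)]$, then $e^*(i)=e(i)$ is impossible: in that case $e^*(j)\ge e(j)$ would make the $L$-word $\doc[e(i)+1:e^*(j)-1]$ at least as long as $g$. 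So we increment $e(i)$. Otherwise, there is a longer word $\doc[e(i)+1:q]\in L$ with $q\ge e(j)$ (longer because $g\notin L$), and we increment $e(j)$.

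The key lemma is that this last move is safe, and it is where left-convexity enters. Assume towards a contradiction that $e^*(j)=e(j)$. If $e^*(i)=e(i)$, then the gap of $e^*$ equals $g$, so $g\in L$, a contradiction. If $e^*(i)>e(i)$, write $\doc[e(i)+1:q]=xvw$ with $x=\doc[e(i)+1:e^*(i)]$ and $v=\doc[e^*(i)+1:e^*(j)-1]\in L$, using $q\ge e^*(j)-1$. Left-convexity gives $xv=\doc[e(i)+1:e^*(j)-1]=g\in L$, again a contradiction. Hence $e^*(j)>e(j)$, and the invariant is maintained.

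For the running time, we precompute $\LRArray_L$ for every constraint language in time $O(|\doc|\,\lVert\mathcal C\rVert)$. We then process violations through a worklist: whenever $e(x)$ moves, we re-examine only the order requirements and the constraints incident to $x$. A constraint $(i,j,L)$ is therefore re-examined at most $O(|\doc|)$ times, which gives $O(|\doc|(|u|+|\mathcal C|))$ re-examinations in total.

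I expect two steps to be the main obstacles: computing $\LRArray_L$ in linear time, and testing $g\in L$ in amortised $O(1)$. For $\LRArray_L$, I would first try a single left-to-right simulation of an automaton for $L$ started from all positions simultaneously. Whenever two starting points $k<k'$ reach the same state, we keep only $k$; the hope is that left-convexity makes this pruning harmless for $k'$, or that a companion array of shortest $L$-factors handles the cases that are lost. For the membership test, I would maintain for each constraint the automaton state reached after reading the current gap. Advancing $e(j)$ then costs one transition. Advancing $e(i)$ is the problematic case, and I would try to avoid recomputation there by characterising $g\in L$ through $\LRArray_L$ together with such precomputed arrays, again relying on left-convexity.
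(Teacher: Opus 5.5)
The algorithm, the invariant $e\le e^*$, the three kinds of moves, and the left-convexity argument for pushing $e(j)$ are exactly the paper's (Algorithm~\ref{mainAlgo}, Lemma~\ref{mainAlgoCorrectnessLemma}), and so is the worklist count of $O(|\doc|(|u|+|\mathcal C|))$ constraint checks. You should also say what your rule does when $e(i)\ge e(j)$ and $g$ is undefined; pushing $e(j)$ is then trivially safe.

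\textbf{Membership test.} The genuine gap is the step you flag yourself. The time bound needs every test ``$g\in L$'' in $O(1)$, and your sketch does not deliver it. A per-constraint automaton state is useless once $e(i)$ moves, because the gap loses a prefix and must be re-read. The missing idea is a second use of left-convexity. Besides $\longright(\cdot,L)$, precompute $\shortleft(q,L)$, the largest $t<q$ with $\doc[t+1:q-1]\in L$; this is the start of the \emph{shortest} $L$-factor ending at $q-1$. Then $\doc[p+1:q-1]\in L$ iff $\longright(p,L)\ge q$ and $\shortleft(q,L)\ge p$. For the nontrivial direction, set $x=\shortleft(q,L)$ and $y=\longright(p,L)$, so $p\le x<q\le y$. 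We have $\doc[p+1:y-1]=\doc[p+1:x]\,\doc[x+1:q-1]\,\doc[q:y-1]\in L$ and $\doc[x+1:q-1]\in L$. Left-convexity then gives $\doc[p+1:q-1]\in L$. Every check becomes two table lookups (Lemma~\ref{longestRightShortestLeftLemma}), and no per-constraint state is needed.

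\textbf{Computing the arrays.} Your plan for $\longright$ is unsound as stated. In a forward simulation, discarding start $k'$ when its run merges with that of some $k<k'$ loses every $L$-factor starting at $k'+1$ that continues past the merge point. For $L=\Sigma^*$ with a one-state automaton, every start except the first is discarded at once, and left-convexity does not repair this. In fact no convexity is needed for this step. Take the product of $\doc$ with the $\eword$NFA $M_L$, after collapsing $\eword$-strongly connected components so the graph is acyclic. A \emph{backward} dynamic program storing at each node $(k,q)$ the largest index of an accepting node reachable from it yields $\longright(k,L)$ at $(k,q_0)$. A \emph{forward} pass that keeps, per node, the \emph{largest} start index reaching it (the opposite of your pruning rule) computes $\shortleft$. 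Both passes take $O(|\doc|\cdot|M_L|)$, hence $O(|\doc|\,\lVert\mathcal C\rVert)$ overall (Lemma~\ref{computeLRSLArrays}). Together with the lemma above, this closes the time bound.
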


This algorithm is optimal, assuming the strong exponential time hypothesis SETH (Section~\ref{sec:lowerBound}).

A special property of our algorithm is that it can be called with an initial mapping $e_0$ and then produces (if possible) a $\mathcal{C}$-embedding $e$ of $u$ in $\doc$ with $e_0(i) \leq e(i)$ for every $i \in [|u|]$ that is minimal among all such embeddings. By carefully using this property, we are able to devise an algorithm for task $(2)$. In the next theorem, let $\query(\doc)$ be the set of all $\mathcal{C}$-embeddings of $u$ in $\doc$.

\begin{theorem}
  \label{thm:enum}
  We can enumerate the set $\query(\doc)$ with preprocessing time $O(T)$ and delay $O(|u| \cdot T)$, where $T$ is the running time of the matching algorithm of Theorem~\ref{thm:mainalg} above. (Note that this entails a total computation of $\query(\doc)$ in time $O(|\query(\doc)| \cdot |u| \cdot T)$.)
\end{theorem}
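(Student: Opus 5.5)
We plan a flashlight-style search over embeddings in lexicographic order (or any fixed order compatible with the coordinatewise-minimal property), using the matching algorithm of Theorem~\ref{thm:mainalg} as a black-box subroutine. We use it in the strengthened form described after the theorem: given an initial mapping $e_0$, it returns the unique coordinatewise-minimal $\mathcal{C}$-embedding $e$ with $e \geq e_0$ pointwise, or reports that none exists, in time $O(T)$. Uniqueness of this minimum follows from min-closedness of left-convex instances: the pointwise minimum of all $\mathcal{C}$-embeddings above $e_0$ is again one.

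We additionally need a mild extension with upper bounds. We want the minimal embedding $e$ with $e_0 \le e$ and with $e(k)=p_k$ fixed for all $k<i$, for some prefix. We would realise this by restricting: fixing a prefix amounts to setting $e_0(k)=p_k$ for $k<i$ and requiring the algorithm never to move positions $k<i$. If the algorithm ever wants to move such a position, it answers ``no''. This is sound by monotonicity: the algorithm only moves a position when every embedding above the current mapping must have a larger value there. This justifies the early rejection, and the running time stays $O(T)$.

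The enumeration then works as follows. In preprocessing, run the algorithm with $e_0(i)=i$ to get the minimal embedding $e^{(1)}$, or stop. We enumerate in lexicographic order. Given the current output $e$, the next embedding in lex order is found by trying $i=|u|,|u|-1,\dots,1$. For each $i$, call the prefix-fixed variant with prefix $e(1),\dots,e(i-1)$ fixed and lower bound $e(i)+1$ at position $i$ (other positions at $e(i)+1+(k-i)$). The first $i$ for which it succeeds yields the lexicographic successor. Indeed, the lex successor shares the longest possible prefix with $e$ and increases position $i$; among embeddings with that prefix and $e'(i)>e(i)$, the lex-smallest coincides with the coordinatewise minimum, because a coordinatewise minimum exists (by min-closedness restricted to the prefix-fixed set, which is closed under pointwise min) and is then also lex-minimal. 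Each output thus costs at most $|u|$ calls, giving delay $O(|u|\cdot T)$, and no embedding is repeated or missed.

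The main obstacle is justifying the prefix-fixed variant: we must show that the internal algorithm of Theorem~\ref{thm:mainalg} only ever increases a coordinate when forced. Equivalently, it computes the least fixpoint above $e_0$, so that a forced increase on a frozen coordinate certifies nonexistence. We expect to argue this via the invariant that the current mapping is always pointwise below every $\mathcal{C}$-embedding that dominates $e_0$; this invariant should be exactly the correctness argument of the main algorithm. A secondary point is that preprocessing of the longest-factor tables depends only on $\doc$ and $\mathcal{C}$, so it is done once in $O(T)$ and each call reuses it.
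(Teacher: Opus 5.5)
Your proposal is correct, but it organises the enumeration differently from the paper.

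\textbf{The paper's route.} The paper does not enumerate in lexicographic order. It splits $S(e,i)$ (the $\mathcal{C}$-embeddings above $e$ that agree with $e$ on $[i]$) into $\{e_{\min}\}$ and the disjoint sets $S(e_{\min}^{(j)},j-1)$ for $j=i+1,\dots,|u|$. It recurses into these sets in increasing order of $j$ and locates the next non-empty one with $\nextMove$, which costs up to $|u|$ calls of Algorithm~\ref{mainAlgo}. It also needs a tail-recursion argument, so that a long chain of finishing calls, each running a failing $\nextMove$, cannot pile up between two outputs.

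\textbf{What your route buys.} Your lexicographic-successor iteration scans $i$ from $|u|$ downwards, makes one call per candidate, and keeps only the current embedding. It needs no recursion stack. The delay bound, and the absence of repetitions and omissions, are immediate because you walk through a total order starting from its minimum.

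\textbf{The step you flag as the main obstacle is unnecessary.} The minimal embedding returned for your lower-bound mapping $e_0$ lies pointwise between $e_0$ and every element of the prefix-fixed set. So that set is non-empty iff the output of the unmodified algorithm agrees with $e_0$ on the fixed prefix. This is exactly how the paper tests its branches. Your freezing variant is also fine: it is justified because Algorithm~\ref{mainAlgo} only advances a position that every $\mathcal{C}$-embedding above the current mapping must also exceed.

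\textbf{What the paper's ordering buys.} It enables Theorem~\ref{thm:lrconEnum}. There the question is the \emph{smallest} $j'>j$ with a non-empty branch under one fixed locked prefix $[j]$. A single $e'$-maximal embedding computation answers this when all constraints are also right-convex. Your search instead needs the \emph{largest} $i$ whose branch is non-empty, with a different locked prefix $[i-1]$ for each $i$. One maximal-embedding call does not directly answer that, so your route does not immediately remove the factor $|u|$.
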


In the special case where all constraint languages are both left- and right-convex (as, e.g., length constraints), we can even get rid of the factor $|u|$ in the above theorem.

Finally, we investigate whether the left-convex property is necessary for tractability by considering a smallest language that is not left-convex: $\{\ta \ta, \eword\}$. Note that every language with strictly smaller cardinality or smaller maximum word size is immediately left-convex. Our algorithm solves the setting with length constraints rather efficiently, but we can show that if we additionally allow $\{\ta \ta, \eword\}$ as a constraint language, then the problem is NP-complete again, even if all other constraints are length constraints. By modifying our hardness reduction, we can derive two more relevant intractability results: there is a fixed regular language $L$ such that subsequence matching is NP-complete if all gap-constraints use $L$ as their constraint language, and the case where all constraints are left- or right-convex is also intractable. These hardness results are shown in Section~\ref{sec:hardness}.

\subsection{Further Related Work}

The classical concept of subsequences is employed in different areas of computer science: formal languages and logics (e.g., piecewise testable languages~\cite{Simon72, simonPhD, KarandikarKS15, CSLKarandikarS, journals/lmcs/KarandikarS19, PraveenEtAl2024}, subword order and downward closures~\cite{HalfonSZ17, KuskeZ19, Kuske20, Zetzsche16}), combinatorics on words~\cite{RigoS15, FreydenbergerGK15, LeroyRS17a, Rigo19, Seki12, MateescuSY04, Salomaa05, SchnoebelenVeron2023}, modelling concurrency~\cite{Riddle1979a, Shaw1978, BussSoltys2014}, database theory (especially event stream processing~\cite{ArtikisEtAl2017, GiatrakosEtAl2020, ZhangEtAl2014, KleestMeissnerEtAl2021, Kleest-MeissnerEtAl23, FrochauxKleestMeissner2023}). Moreover, many classical algorithmic problems are based on subsequences, e.g., longest common subsequence~\cite{baeza1991searching} or shortest common supersequence~\cite{Maier:1978} (see \cite{AdamsonSDKKMS25, FleischmannKKMNSW23} and the survey~\cite{abs-2208-14722}, for recent results on string problems concerned with subsequences). The longest common subsequence problem has also recently received substantial attention in fine-grained complexity (see~\cite{DBLP:conf/fsttcs/BringmannC18, BringmannK18, AbboudEtAl2015, AbboudEtAl2014}).

\section{Preliminaries}
\label{sec:prelim}

Let $\mathbb{N} = \{1, 2, 3, \ldots\}$ and, for every $i, j \in \mathbb{N}$ with
$i \leq j$, we define $[i, j] = \{i, i+1, \ldots, j\}$ and $[i] = \{1, 2,
\ldots, i\}$;
for convenience we define $[0] = \emptyset$. We let $\Sigma$ be a finite alphabet of symbols (sometimes called letters) and write $\Sigma^*$ for the set of strings (sometimes called words) over $\Sigma$. We write $\eword$ for the empty string. For a string $w \in \Sigma^*$, we denote by $|w|$ its length and, for every $i \in \{1, 2, \ldots, |w|\}$, we denote by $w[i]$ the $i^{\text{th}}$ symbol of $w$. 
For $i, j \in \{1, 2, \ldots, |w|\}$ with $i \leq j$, we denote by $w[i:j]$ the factor (also called infix) $w[i]w[i+1]\cdots w[j]$; in particular, $w[i:i] = w[i]$.
We extend the notation for $i > j$ as follows. If $i = j + 1$, then we set $w[i : j] = \eword$ and if $i > j + 1$, then we set $w[i : j] = \bot$, where $\bot$ means \emph{undefined}. This technical particularity is due to the fact that we often want to talk about the factor of a string $w$ strictly in between positions $i$ and $j$, which is $\eword$ if $j = i + 1$ and undefined if $j \leq i$. Thus, with our definition, we can use $w[i + 1 : j - 1]$ to refer to this factor that lies strictly in between positions $i$ and $j$. \emph{Regular expressions}, \emph{nondeterministic finite automata with $\eword$-transitions}, called $\eword$NFA for brevity, and the class $\REG$ of \emph{regular languages} are defined in the usual way (see, e.g.,~\cite{hopcroft}). We write $\LL(X)$ for the language of a regular expression $X$ or an $\eword$NFA $X$. We use the classical result that a given regular expression $r$ can be converted in time $O(|r|)$ into an $\eword$NFA $A$ such that $\LL(A) = \LL(r)$ and $|A| = O(|r|)$ (see~\cite[Section~3.2.3]{hopcroft}).

\medskip

\noindent\textbf{Embeddings, Subsequences and Gap-Constraints.}  Let us define
embeddings and subsequences:

\begin{definition}
Given strings $w, u\in\Sigma^\ast$ with $|u| \leq |w|$, we say that a mapping $e\colon [|u|]\to[|w|]$ is an \emph{embedding of $u$ in $w$} if $e(1)<e(2)<\ldots<e(|u|)$ and $w[e(i)] = u[i]$ for every $i \in [|u|]$, i.e., $u = w[e(1)] w[e(2)] \cdots w[e(|u|)]$.
In this case, we write $u \preceq_e w$ and call $u$ a \emph{subsequence} of~$w$ induced by $e$. We write $u \preceq w$ to denote that $u \preceq_e w$ for some embedding $e$ of $u$ in $w$ and also say that $e$ \emph{witnesses} $u \preceq w$. For $i\in [|u|]$, we say that $u_e[i]$ is \emph{embedded} on position $e(i)$ of $w$. 
\end{definition}

\begin{definition}
Let $w, u\in\Sigma^\ast$ with $|u| \leq |w|$ and let $e\colon [|u|]\rightarrow
[|w|]$ be a mapping.
For every $i,j\in[|u|]$, with $i<j$, we define the
\emph{$(i, j)$-gap} induced by $w$ and $e$ as $w[e(i)+1:e(j)-1]$.
\end{definition}

We define gaps not only between two consecutive positions (i.e., the $(i,i+1)$-gap which is strictly between $e(i)$ and $e(i+1)$), but also between positions that are further apart
(i.e., the $(i,j)$-gap with $i<j$, which contains in particular the images $e(k)$ for each $i < k < j$). Intuitively, 
the $(i,j)$-gap $w[e(i)+1:e(j)-1]$ is the factor that occurs strictly between the positions corresponding to the images of $i$ and $j$ under the
mapping $e$. 
Note that gaps may be empty or undefined strings: $w[e(i)+1:e(j)-1]=\eword$ iff $j = i + 1$ and $e(j) = e(i) + 1$, and $w[e(i)+1:e(j)-1]=\bot$ iff $e(i)\geq e(j)$.
Next, we introduce gap-constraints.

\begin{definition}
For a language $L \subseteq \Sigma^\ast$, an \emph{$L$-gap-constraint} for a
string $u \in \Sigma^\ast$ is a triple $(i,j,L)$, where $i, j \in [|u|]$ such
that $i < j$. 
A mapping 
$e \colon [|u|]\rightarrow [|w|]$, where $w \in
\Sigma^*$ with $|u| \leq |w|$, satisfies the gap-constraint $(i,j,L)$ if and
only if  the gap
$w[e(i)+1:e(j)-1]$ belongs to~$L$.
\end{definition}

We call $L$ the \emph{gap-constraint language} of an $L$-gap-constraint. For any $L$-gap-constraint $c=(i,j,L)$, we also say that $c$ is an \emph{$(i,j)$-gap-constraint} and we use the term \emph{gap-constraint} to denote any $L$-gap-constraint for an arbitrary language $L$. We say that $c=(i, j ,L)$ is a \emph{regular constraint} if $L \in \REG$. For our algorithms, we will assume that regular gap-constraints $c=(i, j ,L)$ are represented by the numbers $i, j$ and a regular expression $r$ with $\LL(r) = L$; thus, we will also write them as $c=(i, j, r)$.
The size $|c|$ of the constraint is defined as $|c| = |r|$.
Note that, as we explained above, we can also assume that any regular expression $r$ is given by an $\eword$NFA.

We call $c$ a \emph{length constraint} if $L = \{v\in\Sigma^\ast\mid a\leq |v|\leq b\}$ for some $a, b \in \mathbb{N} \cup \{0\}$ with $a \leq b$. In this case, a regular expression $r$ for $L$ can be concisely described by the interval $[a, b]$; thus, we will write length constraints as $(i, j, [a, b])$ instead of $c=(i, j, L)$ or $c=(i, j, r)$, and we consider such a length constraint $c$ to be of constant size (since it is represented by $4$ numbers). 

For a set $\mathcal{C}$ of gap-constraints for $u \in \Sigma^*$, $w \in \Sigma^*$ with $|w| \geq |u|$ and 
mapping 
$e \colon [|u|]\rightarrow [|w|]$, 
we say that $e$ \emph{satisfies} $\mathcal{C}$ if and only if it satisfies every $c \in \mathcal C$. 

An embedding $e \colon [|u|]\rightarrow [|w|]$ of $u$ in $w$ is a \emph{\mbox{$\mathcal C$-embedding}} of $u$ in $w$ (denoted by $u \preceq_{e, \mathcal{C}} w$) if it satisfies the set $\mathcal{C}$ of gap-constraints for $u$.
We shall also just write $u \preceq_{\mathcal{C}} w$ to denote that $u \preceq_{e, \mathcal{C}} w$ holds for some $e$. 

When we are dealing with a set $\mathcal{C}$ of regular gap-constraints as input for an algorithm, then we measure the \emph{size} of $\mathcal{C}$ as $\lVert \mathcal{C} \rVert = \sum_{c \in \mathcal{C}} |c|$, which by the above 
  is $\sum_{(i, j, r) \in \mathcal{C}} |r|$. Further, the \emph{number} of gap-constraints is $|\mathcal{C}|$. Obviously, we always have $\lVert \mathcal{C} \rVert \geq |\mathcal{C}|$.

\medskip

\noindent\textbf{Subsequence Matching with Gap-Constraints.} In the rest of this paper, we are interested in the following computational problem. The input is a \emph{query string} $u \in \Sigma^*$ along with a set $\mathcal{C}$ of regular\footnote{Note that we will always assume that the gap-constraints are \emph{regular}.} gap-constraints for $u$, and a \emph{document} $\doc \in \Sigma^*$. Our interpretation is that the pair $(u,{\mathcal C})$ is a query and $u_{\mathcal{C}}(\doc) = \{e \mid u \preceq_{e, \mathcal{C}} \doc\}$ is the \emph{result set} of $(u,{\mathcal C})$ on $\doc$, i.e., the set of all $\mathcal{C}$-embeddings of $u$ in~$\doc$.
We study the \emph{matching problem}, which is to compute a witness from $u_{\mathcal{C}}(\doc)$ or to report that $u_{\mathcal{C}}(\doc) = \emptyset$ (for our hardness results, we will consider the obvious decision problem variant that simply checks whether $u_{\mathcal{C}}(\doc) \neq \emptyset$).
We will also consider the problem of computing the whole result set $u_{\mathcal{C}}(\doc)$, which we approach as the problem to 
enumerate its contents (i.e., each $\mathcal{C}$-embedding is produced exactly once, with no duplicates).
As is common in such query evaluation scenarios, we assume that the query $(u, {\mathcal C})$ is much smaller than the data $|\doc|$; thus, for running times, our focus is on the dependency on $|\doc|$. Further, when computing a set, we are interested in the total time to do that (in dependency of the size of the set), while for enumerating a set we are interested in two dimensions: the running time of the preprocessing phase (i.e., until the first answer is produced), and the bound on the worst-case delay after each answer (i.e., the time until we produce either the next answer or signal that the enumeration is finished).

\section{Subsequence Matching With Gap-Constraints as a CSP}\label{sec:CSP}

In this section, we show that interpreting the subsequence matching with gap-constraints problem as a Constraint Satisfaction Problem (for short, CSP) 
leads to some initial tractability results.
For instance, it 
gives a polynomial algorithm for the setting where
left-convex regular constraints (e.g., length constraints) are considered. 

Let us briefly introduce the CSP-related machinery that we will use in this section; we follow the definitions from \cite{FreuderM06,Bessiere06}. A more thorough discussion on this topic is given in Appendix~\ref{CSPAppendix}. 

A \emph{constraint satisfaction problem} is a triple ${\mathcal P}=\langle V, \domain, \const \rangle$, where $V=\{1,\ldots, m\}$ is a set of {\em variables} (denoted, for simplicity, by natural numbers), $\domain$ is a set called the {\em domain}, and $\const$ is a set of {\em constraints} $\{\const_1, \ldots,\const_q\}$, such that each $\const_i\in \const$ is a pair $\langle s_i,R_i\rangle$, where $s_i\subseteq V$ is a set of variables of size $n_i$, called the \emph{constraint scope}, and $R_i\subseteq \domain^{n_i}$ is an $n_i$-ary relation over $\domain$, called the \emph{constraint relation}; w.l.o.g., we assume $s_i$ to be increasingly ordered, and denote by $s_i[j]$ the $j^{th}$ element of $s_i$. A \emph{solution} for $\mathcal P$ is a function $\phi: V\rightarrow \domain$ such that for each $\langle s,R\rangle \in \const$ the tuple $\langle \phi (1), \ldots, \phi(m)\rangle$ is in $R$. The CSP ${\mathcal P}=\langle V, \domain, \const\rangle$ is \emph{normalised} if no two distinct constraints from $\const$ have the same scope. \looseness=-1

In general, for a CSP ${\mathcal P}$, we are interested in whether it admits a solution or not. To this end, it was shown \cite{Zhuk20} that the tractability of ${\mathcal P}$ is exactly determined by structural properties of the class of relations over $\domain$ from which the constraints of $\const$ stem. But already the less recent survey~\cite{CohenJ06} showed two particularly relevant such classes of constraints (which also allow for simpler CSP-solving algorithms, compared to the general algorithm given in \cite{Zhuk20}): the min-closed and, respectively, max-closed relations. A $k$-ary relation $R$ is \emph{min-closed} if for every pair of $k$-tuples $t,t'\in R$ we have that the $k$-tuple $\langle \min\{t[1],t'[1]\}, \ldots, \min\{t[k],t'[k]\}\rangle$ also belongs to~$R$ (max-closed relations are defined analogously). The following result holds. \looseness=-1
\begin{proposition}[Example 6.39 in \cite{CohenJ06}, originally in \cite{JeavonsC95}]\label{prop:minclosedCSP}
We can decide in polynomial time whether a CSP ${\mathcal P}=\langle V, \domain, \const\rangle$, where all constraints of $\const$ are min-closed (resp., max-closed), admits a solution.
\end{proposition}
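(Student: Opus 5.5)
The plan is to use the classical \emph{generalised arc consistency} approach. Pruning the domains until every remaining value has support in every constraint will turn out to be enough, because in a min-closed instance the vector of per-variable minima of the pruned domains is already a solution. We assume, as usual, that $\domain$ is finite and that each relation $R_i$ is given explicitly as a list of tuples, so the input size is at least $\sum_i |R_i|$. The max-closed case is symmetric: replace minima by maxima throughout.

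\textbf{Step 1 (propagation).} Maintain for every variable $x \in V$ a current domain $D_x \subseteq \domain$, initially $D_x = \domain$. Repeat the following while possible. If for some constraint $\langle s, R\rangle \in \const$, some position $k$, and some value $a \in D_{s[k]}$ there is no tuple $t \in R$ with $t[k] = a$ and $t[\ell] \in D_{s[\ell]}$ for all $\ell$, then delete $a$ from $D_{s[k]}$.

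Each deletion is sound: any solution $\phi$ satisfies $\phi(x) \in D_x$ throughout, by induction on the number of deletions, since the tuple of $\phi$ on $s$ witnesses support for $\phi(s[k])$. Hence if some $D_x$ becomes empty, we correctly answer that there is no solution. At most $|V|\cdot|\domain|$ deletions happen. Between deletions, a full scan over all constraints and all tuples costs $O(\sum_i |R_i| \cdot n_i)$. The whole phase is therefore polynomial.

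\textbf{Step 2 (minimum assignment).} Suppose all domains are nonempty at the fixpoint. Set $\phi(x) = m_x := \min D_x$ for each $x$; we claim $\phi$ is a solution. Fix a constraint $\langle s, R\rangle$ of arity $n$. For each position $k \in [n]$, the fixpoint property gives a tuple $t^{k} \in R$ with $t^{k}[k] = m_{s[k]}$ and $t^{k}[\ell] \in D_{s[\ell]}$ for all $\ell$. Let $t$ be the pointwise minimum of $t^{1}, \ldots, t^{n}$. Applying min-closedness $n-1$ times gives $t \in R$.

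For each coordinate $k$, we have $t[k] \le t^{k}[k] = m_{s[k]}$. Conversely, every $t^{\ell}[k]$ lies in $D_{s[k]}$, so $t[k] \ge \min D_{s[k]} = m_{s[k]}$. Thus $t = \langle m_{s[1]}, \ldots, m_{s[n]}\rangle$, so $\phi$ satisfies the constraint. Since the constraint was arbitrary, $\phi$ is a solution.

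\textbf{Main obstacle.} The key step is the argument in Step 2 that the pointwise minimum of the supporting tuples hits exactly the domain minima. The essential point there is that the supports are drawn from the \emph{current} domains, so no coordinate can drop below $m_x$. The rest of the proof is routine bookkeeping for the running-time bound.
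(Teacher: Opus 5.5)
Your proof is correct and is essentially the standard argument behind the cited result, which is also what the paper relies on when it says that enforcing generalised arc consistency decides min-closed instances: you prune to a GAC fixpoint, reject if a domain becomes empty, and otherwise take the pointwise minimum of the supports of the domain minima to show that the all-minima assignment satisfies every constraint. Because you assume the relations are listed explicitly, your running-time bound is polynomial in the input size without the constant-arity assumption that the paper mentions for GAC algorithms.
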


Let us make some remarks on the above result. Firstly, note that there are quite a few interesting examples of min- or max-closed relations, as shown in \cite{JeavonsC95}. For instance, all unary constraints are min- and max-closed, and the same holds for all basic arithmetic constraints over the natural numbers in the constraint programming language CHIP \cite{HentenryckDT92}, which include, e.g., relations defined by linear (in)equalities between variables. 
Secondly, many efficient algorithms support the above statement, not only the one from \cite{JeavonsC95}; in particular, if there is a constant upper bound on the arity of all constraint relations, then algorithms implementing the technique of \emph{enforcing generalised arc consistency}, see \cite{Bessiere06}, achieve a polynomial running time for the above problem with a polynomial degree that depends on the arity bound. For normalised CSPs where each constraint is unary or binary, such an algorithm is AC4 (defined in \cite{YuanlinY01}, as an efficient implementation of AC3 \cite{Mackworth77}) which works in $O(|\const| |\domain|^2)$ time (which is optimal \cite{Bessiere06}). \looseness=-1

Coming now back to the subsequence matching with gap-constraints problem, we show how this can be formalised as a CSP. Recall that we are given a \emph{query string} $u \in \Sigma^*$, $|u|=m$, along with a set $\mathcal{C}$ of regular gap-constraints for $u$, and a \emph{document} $\doc \in \Sigma^*$, $|\doc|=n$, and we want to see if there exists a $\mathcal{C}$-embedding of $u$ in $\doc$. Let $V=[m]$ and $\domain=[n]$. 
The set $\const'$ contains the following constraints:\looseness=-1
\begin{itemize}
\item $\langle (i) , R_i\rangle$, where $i\in [m]$ and $R_i=\{ a\in [n]\mid  \doc[a]=u[i]\}$.
\item $\langle (i,i+1) , R_{(i,i+1)}\rangle$, where $i\in [m-1]$ and $R_{(i,i+1)} =\{(a,b)\in [n]\times [n] \mid a<b\}$. 
\item $\langle (i,j) , R_{(i,j,L)}\rangle$, where $(i,j,L)\in {\mathcal C}$ and $R_{(i,j,L)} =\{(a,b) \in [n]\times[n]\mid  \doc[a+1:b-1]\in L\}$. 
\end{itemize}

Let $\CSP'_{u,\doc,{\mathcal C}}=\langle V,\domain,\const'\rangle$ be the constraint satisfaction problem defined for the instance $u,\doc, {\mathcal C}$ of the subsequence matching with gap-constraints problem. It is immediate that this CSP has a solution if and only if there exists a ${\mathcal C}$-embedding of $u$ in $\doc$. Note that, while having only unary and binary constraints, this CSP is not necessarily normalised (as there can be multiple constraints for the same pair of variables).
Fortunately, we can show the following result (see Appendix~\ref{CSPAppendix} for a proof).\looseness=-1

\begin{restatable}{lemma}{RegularMembershipSubstrRest}
\label{lem:RegularMembershipSubstr}
Given $u, \doc,$ and ${\mathcal C}$, we can construct in $O(\lVert{\mathcal C} \rVert |\doc|^2)$ time the problem $\CSP'_{u,\doc,{\mathcal C}}=\langle V,\domain,\const'\rangle$. In the same time complexity, we can construct $\CSP_{u,\doc,{\mathcal C}}=\langle V,\domain,\const\rangle$, a normalised problem with $O(|u|+|{\mathcal C}|)$ constraints, which has a solution if and only if there exists a ${\mathcal C}$-embedding of $u$ in $\doc$.\looseness=-1
\end{restatable}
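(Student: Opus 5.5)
The plan has three parts. First build the unary and successor constraints of $\CSP'_{u,\doc,{\mathcal C}}$, which is cheap. Then compute the binary relations $R_{(i,j,L)}$ using an $\eword$NFA per constraint. Finally normalise by intersecting constraints that share a scope.

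\textbf{Unary and successor constraints.} For each $i\in[m]$, compute $R_i$ by one scan of $\doc$. Store it as a bit vector of length $n$; this takes $O(|u||\doc|)$ time in total. For each $i\in[m-1]$, write out $R_{(i,i+1)}$ explicitly as an $n\times n$ Boolean table. This costs $O(|u||\doc|^2)$ time.

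The statement only charges $O(\lVert\mathcal C\rVert|\doc|^2)$, so this last cost must be handled. I would argue that the relation $a<b$ is uniform, so it can be represented once (or implicitly) rather than copied for every $i$. Alternatively, I would charge it to the overall $O((|u|+\lVert\mathcal C\rVert)|\doc|^2)$ budget of Theorem~\ref{cspBlackBoxTheorem}. I expect this bookkeeping to be a minor point, not a real difficulty.

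\textbf{Relations from gap-constraints (main step).} For each $(i,j,r)\in\mathcal C$, convert $r$ in $O(|r|)$ time into an $\eword$NFA $A$ with $|A|=O(|r|)$. Then, for every start position $a\in[0,n]$, simulate $A$ on the suffix $\doc[a+1:n]$ by the standard set-of-states simulation. At each prefix length $\ell$, maintain the current state set together with its $\eword$-closure, and record whether it contains a final state. That record tells us exactly whether $\doc[a+1:a+\ell]\in L$, and hence whether $(a,a+\ell+1)\in R_{(i,j,L)}$.

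Each simulation step costs $O(|A|)$, so a single start position costs $O(n|r|)$ and all start positions together cost $O(n^2|r|)$. Summing over all constraints gives $O(\lVert\mathcal C\rVert|\doc|^2)$.

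Pairs $(a,b)$ with $b\le a$ are undefined gaps and are simply excluded. The boundary case $b=a+1$ corresponds to the empty word and is read off the $\eword$-closure of the initial state.

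Correctness of $\CSP'_{u,\doc,{\mathcal C}}$ is immediate from the definitions, as the paper already notes.

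\textbf{Normalisation.} Group the constraints by scope. Several constraints can share a scope $(i,j)$: possibly the successor constraint when $j=i+1$, and several gap-constraints. For each group, replace all its relations by their intersection, computed entrywise on the $n\times n$ tables. Every table participates in exactly one intersection, so the cost is linear in the total size of the tables already built, which stays within the bound above.

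The result has one unary constraint per variable, at most $m-1$ successor scopes, and at most $|\mathcal C|$ further scopes, so $O(|u|+|\mathcal C|)$ constraints in total. A function $\phi$ satisfies an intersected constraint iff it satisfies every constraint in its group. Hence the normalised CSP has exactly the same solutions as $\CSP'_{u,\doc,{\mathcal C}}$, and so it has a solution iff a ${\mathcal C}$-embedding exists.
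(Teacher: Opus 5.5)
Your proposal is correct and takes essentially the same route as the paper. Both compute each gap relation by a state-set simulation of the $\eword$NFA from every start position, which gives $O(\lVert\mathcal C\rVert|\doc|^2)$, and both then intersect all relations that share a scope; the paper does this with a counting matrix over lists of pairs, while you use entrywise intersection of Boolean tables. Your first fix for the successor constraints, storing the single relation $a<b$ once or implicitly, is the right one, and the paper's claim that these constraints cost only $O(|u||\doc|)$ tacitly relies on it too; note only that a shared table then enters several intersections rather than exactly one, which still costs just $O(|\doc|^2)$ per gap-constraint scope and so stays within the bound.
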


Now, in the case where all the constraints of $\CSP_{u,\doc,{\mathcal C}}$ happen to be min-closed, we can use Proposition~\ref{prop:minclosedCSP} to get the following result, using, e.g., the AC4 algorithm to enforce arc consistency. 

\begin{restatable}{theorem}{CSPsolutionRest}
\label{thm:CSPsolution}
Given $u, \doc,$ and ${\mathcal C}$, let $\CSP_{u,\doc,{\mathcal C}}=\langle V,\domain,\const\rangle $ be constructed as in Lemma~\ref{lem:RegularMembershipSubstr}. Assuming that the set of constraints $\const$ is min-closed, then we can decide whether there exists a ${\mathcal C}$-embedding of $u$ in $\doc$ in time $O(|\doc|^2 ( |u| + \lVert{\mathcal C}\rVert ))$ (which includes the construction of $\CSP_{u,\doc,{\mathcal C}}$).  
\end{restatable}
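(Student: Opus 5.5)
The plan is to combine Lemma~\ref{lem:RegularMembershipSubstr} with Proposition~\ref{prop:minclosedCSP}, using the AC4 arc-consistency algorithm as the concrete solver. First, by Lemma~\ref{lem:RegularMembershipSubstr}, we construct the normalised problem $\CSP_{u,\doc,{\mathcal C}}=\langle V,\domain,\const\rangle$ in time $O(\lVert{\mathcal C}\rVert |\doc|^2)$. It has $|V|=|u|$ variables, domain $\domain=[|\doc|]$, and $O(|u|+|{\mathcal C}|)$ constraints, each unary or binary. By assumption all relations in $\const$ are min-closed.

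Second, we run AC4 on this instance. For normalised CSPs with unary and binary constraints, AC4 runs in time $O(|\const|\,|\domain|^2)$. Here this is $O((|u|+|{\mathcal C}|)|\doc|^2)$, which is within $O((|u|+\lVert{\mathcal C}\rVert)|\doc|^2)$ since $|{\mathcal C}|\le\lVert{\mathcal C}\rVert$. Unary constraints can simply be applied up front by pruning domains.

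The key step is correctness: for min-closed binary constraints, arc consistency decides satisfiability. I would argue this directly, as in~\cite{JeavonsC95}. Suppose AC4 leaves every domain nonempty, and define $\phi(x)=\min D_x$ for each variable $x$. Take any binary constraint $\langle (x,y),R\rangle$. By arc consistency there is some $b\in D_y$ with $(\phi(x),b)\in R$, and some $a\in D_x$ with $(a,\phi(y))\in R$. Min-closure gives $(\min\{\phi(x),a\},\min\{b,\phi(y)\})\in R$. Because $\phi(x)\le a$ and $\phi(y)\le b$, this pair is exactly $(\phi(x),\phi(y))$. So $\phi$ satisfies every constraint. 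Conversely, if some domain becomes empty, there is no solution, since AC4 only removes values that appear in no solution.

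Finally, solutions of $\CSP_{u,\doc,{\mathcal C}}$ correspond exactly to ${\mathcal C}$-embeddings by Lemma~\ref{lem:RegularMembershipSubstr}, so the algorithm decides existence, and even outputs the pointwise-minimal embedding. Adding the construction time and the AC4 time gives the stated bound.

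The main obstacle is making sure normalisation, i.e.\ merging several constraints on the same scope by intersecting their relations, preserves min-closure. It does, because an intersection of min-closed relations is min-closed. Since the theorem assumes min-closure of $\const$ directly, this mainly matters for applying the result to left-convex instances later.
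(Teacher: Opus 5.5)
Your proposal is correct and takes the paper's route: build the normalised instance via Lemma~\ref{lem:RegularMembershipSubstr}, then run AC4 in $O(|\const|\,|\domain|^2)$ time. The one difference is that you prove directly, by taking $\phi(x)=\min D_x$ over the arc-consistent domains, that arc consistency decides min-closed binary instances. The paper instead cites this fact from the CSP literature (semilattice polymorphisms, Proposition~\ref{prop:minclosedCSP}).
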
 

We note that we can easily test in polynomial time whether it is the case that the set of constraints $\const$ is min-closed, once we have constructed $\CSP_{u,\doc,{\mathcal C}}$. However, the above result provides no algorithm to solve the problem when $\const$ is not min-closed, and we also have no guarantee on when this might happen. In the sequel we will see how enforcing the \emph{left-convexity} of the regular languages used as gap-constraints can ensure that the constraints are min-closed.

Recall the definition of left-convex languages from the Introduction (Definition~\ref{def:lcon}): a language $L$ is left-convex if $uvw \in L$ and $v\in L$ implies $uv \in L$. Let $\LCON$ be the class of left-convex languages.

\begin{restatable}{lemma}{LCONmin}
\label{lem:LCON_min_closed}
If $L \in \LCON$ for all $(i, j, L) \in \mathcal{C}$, then all constraints of $\CSP_{u,\doc,{\mathcal C}}$ are min-closed. 
\end{restatable}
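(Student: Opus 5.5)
The plan is to check min-closedness relation by relation. Every constraint of $\CSP_{u,\doc,{\mathcal C}}$ comes from $\CSP'_{u,\doc,{\mathcal C}}$ by intersecting all constraints that share a scope; I am assuming this is how the normalisation of Lemma~\ref{lem:RegularMembershipSubstr} works. The intersection of min-closed relations of the same arity is again min-closed: if $t,t'$ lie in every relation of the family, then so does their componentwise minimum. So it suffices to show that each of the three kinds of relations in $\const'$ is min-closed.

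\textbf{Unary and order relations.} The unary relations $R_i$ are trivially min-closed, since $\min\{a,a'\}\in\{a,a'\}$. For $R_{(i,i+1)}$, take $(a,b),(a',b')$ with $a<b$ and $a'<b'$. Then $\min\{a,a'\}\le a<b$ and $\min\{a,a'\}\le a'<b'$, so $\min\{a,a'\}<\min\{b,b'\}$.

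\textbf{Gap relations $R_{(i,j,L)}$ with $L\in\LCON$.} Take $(a,b),(a',b')\in R_{(i,j,L)}$. Both gaps $\doc[a+1:b-1]$ and $\doc[a'+1:b'-1]$ lie in $L$, so in particular they are defined, i.e.\ $a<b$ and $a'<b'$. I split into cases by how the two intervals compare.
\begin{itemize}
\item If $a\le a'$ and $b\le b'$, then the minimum is $(a,b)$, which is already in the relation. The case $a'\le a$, $b'\le b$ is symmetric.
\item If $a\le a'$ and $b'<b$, the minimum is $(a,b')$. The interval $[a'+1,b'-1]$ is nested in $[a+1,b-1]$, so we can write $\doc[a+1:b-1]=xvw$ with $x=\doc[a+1:a']$, $v=\doc[a'+1:b'-1]$ and $w=\doc[b':b-1]$. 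These factors are well defined because $a\le a'$ and $b'\le b-1$. Since $xvw\in L$ and $v\in L$, left-convexity gives $xv=\doc[a+1:b'-1]\in L$, i.e.\ $(a,b')\in R_{(i,j,L)}$.
\item The remaining case $a'<a$, $b<b'$ is the same argument with the roles of the two tuples swapped, yielding $(a',b)$.
\end{itemize}

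The main point needing care is this nested case. One must make sure the decomposition $xvw$ is legitimate, with possibly empty $x$ or $w$, including the boundary situations $a=a'$ or $b'=b-1$ and the convention $\doc[i:j]=\eword$ when $i=j+1$. Left-convexity is used exactly there: it lets us cut the outer gap at the right end of the inner gap. Once this is settled, the intersection argument above transfers min-closedness to all constraints of the normalised problem, which proves the statement.
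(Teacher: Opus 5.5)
Your proof is correct and follows the paper's argument. The only nontrivial case is the nested one ($a\le a'<b'<b$ or its mirror), where left-convexity cuts the outer gap at $b'$; every other configuration returns one of the two input pairs. The one difference is how you handle several constraints on the same scope. The paper first intersects the languages $H_1,\ldots,H_p$ and uses that $\LCON$ is closed under intersection, treating a bare order constraint as $L=\Sigma^\ast$. You instead show each relation of $\const'$ is min-closed and use that min-closedness is preserved under intersection of relations, which is equally valid and does not need the language-level closure property.
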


Clearly, if $L$ is right-convex, then all constraints of $\CSP_{u,\doc,{\mathcal C}}$ are max-closed. So, the result of Theorem~\ref{thm:CSPsolution} applies for the case when all gap-constraints in ${\mathcal C}$ are left-convex (respectively, right-convex). In the following, and as the main technical contribution of our work, we show that this result can be improved in a non-trivial way regarding two different aspects. Firstly, we are able to improve the running time from $O(|\doc|^2 (|u| + \lVert{\mathcal C}\rVert ))$ to $O(|\doc| ( |u| + \lVert{\mathcal C}\rVert ))$ (i.e., from quadratic to linear in the data size, which is highly relevant for applications in complex event recognition), which is conditionally optimal, and secondly we are able to extend our algorithm to enumerating all solutions with bounded delay instead of producing only one witness.

\section{Improvement to Linear Dependency on the Data Size}\label{sec:mainAlgo}

In this section, we substantially improve the upper bound from the previous section obtained by the CSP-based approach. More precisely, we show that the quadratic dependency on the size of the data $|\doc|$ can be lowered to a linear dependency:

\begin{theorem}\label{mainAlgoTheorem}
The matching problem with left-convex constraints can be solved in $O(|\doc| (|u| + \lVert \mathcal C\rVert))$.
\end{theorem}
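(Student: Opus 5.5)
The plan is to compute the \emph{least} $\mathcal C$-embedding directly, by monotone propagation of lower bounds instead of general arc consistency. By Lemma~\ref{lem:LCON_min_closed}, all constraints are min-closed, so the $\mathcal C$-embeddings (if any exist) have a pointwise minimum $e^\ast$. We keep a mapping $e$ with the invariant $e \le e^\ast$ pointwise, starting from $e(i)=i$. While $e$ violates some condition, we increase one coordinate to a value that is still provably $\le e^\ast$. We stop when $e$ is a $\mathcal C$-embedding, in which case it equals $e^\ast$, or when some $e(i)$ exceeds $|\doc|$, in which case we report that no embedding exists. Every move increases some $e(i)$, so there are at most $|u|\,|\doc|$ moves in total. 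The time bound will come from charging the work of each move to the pair (constraint, position it advances), so that each constraint $(i,j,L)$ is charged $O(|\doc|)$ times in total.

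\textbf{Step 1 (local repair rules).} The letter constraints and the ordering $e(i)<e(i+1)$ are repaired greedily: jump $e(i)$ to the next occurrence of $u[i]$ using a precomputed next-occurrence table, and push $e(i+1)$ beyond $e(i)$. Both moves are clearly sound. For a gap constraint $(i,j,L)$ with $a=e(i)$, $b=e(j)$ and $\doc[a+1:b-1]\notin L$, we precompute $\LRArray_L[a]$, the length of a longest factor of $\doc$ starting at position $a+1$ that lies in $L$. The rule is as follows. If $b-1-a > \LRArray_L[a]$, then every valid partner of $a$ lies strictly to the left of $b$, so we advance $e(i)$. Otherwise we advance $e(j)$.

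Soundness is proved by contradiction with left-convexity, as follows. Suppose a solution $e'\ge e$ keeps $e'(j)=b$, or keeps $e'(i)=a$. Write $\doc[a+1:e'(j)-1]=xvw$, where $v=\doc[e'(i)+1:e'(j)-1]\in L$. Left-convexity, together with the longest factor in $L$ at $a+1$ (or, for the second case, the min-closure of the relation), should show that either the current gap is in $L$ or the relevant bound can be raised. If the single array $\LRArray$ does not suffice for one of the two directions, I would also precompute the symmetric array $\SLArray_L[b]$, the shortest suffix-type witness ending at $b-1$, and use it to decide which endpoint to move. Getting exactly the right case split, so that every move preserves $e\le e^\ast$, is the main obstacle. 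I would first try to prove the key lemma: if $(a,b)$ is invalid and $b-1-a\le \LRArray_L[a]$, then no solution has $e(j)=b$ while $e(i)\ge a$.

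\textbf{Step 2 (preprocessing).} For each constraint with $\eword$NFA $A$, we compute $\LRArray_L[k]$ for all $k$ in time $O(|\doc|\,|A|)$. The idea is to process $\doc$ from right to left with an automaton for the reversed language, maintaining for each state the farthest right endpoint from which that state is reachable. Because $L$ is left-convex, the longest witnesses are consistent across consecutive start positions, and this lets one pass keep a single representative per state. Summed over all constraints this costs $O(|\doc|\,\lVert\mathcal C\rVert)$.

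\textbf{Step 3 (scheduling and amortisation).} We maintain a worklist of violated constraints. After $e(i)$ moves, we re-check only the constraints incident to $i$; each check is an $O(1)$ array lookup together with an $O(1)$ membership test based on $\LRArray$ and $\SLArray$. Each constraint is re-examined only when one of its endpoints moves, and each endpoint moves at most $|\doc|$ times. The total time is therefore $O(|\doc|\,(|u|+|\mathcal C|))$ plus the preprocessing cost, which gives $O(|\doc|\,(|u|+\lVert\mathcal C\rVert))$. Finally, we output $e$, or report failure if some coordinate overflows.
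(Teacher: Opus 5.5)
Your architecture is the paper's, and your move rule is essentially the test in Line~\ref{condLineTwo} of Algorithm~\ref{mainAlgo}: advance $e(i)$ if $\longright(a,L)<b$, otherwise advance $e(j)$. The two agree because a violated constraint cannot have $\longright(a,L)=b$. The other ingredients also match: monotone right-shifts under the invariant $e\le e^\ast$, precomputed $\LRArray$/$\SLArray$ tables, and charging each re-check to a move of an endpoint.

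However, you leave unproved exactly the two steps where left-convexity is used, so neither correctness nor the $O(1)$ cost per check is established.

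The first gap is soundness of advancing $e(j)$, which you yourself call the main obstacle. Your decomposition $\doc[a+1:e'(j)-1]=xvw$ with $v=\doc[e'(i)+1:e'(j)-1]$ applies left-convexity to the wrong word: it forces $w=\eword$ and yields nothing. The outer word must be the longest witness.
\begin{itemize}
\item Let $t=\longright(a,L)>b$, and suppose some $\mathcal C$-embedding $e'\ge e$ has $e'(j)=b$.
\item Then $a<e'(i)<b$: taking $e'(i)=a$ reproduces the violated gap, and $e'(i)\ge b$ makes the gap undefined.
\item Set $x=\doc[a+1:e'(i)]$, $v=\doc[e'(i)+1:b-1]\in L$ and $w=\doc[b:t-1]$, so that $xvw=\doc[a+1:t-1]\in L$.
\item Left-convexity gives $xv=\doc[a+1:b-1]\in L$, a contradiction.
\end{itemize}
The other case ($t<b$, advance $e(i)$) follows from the maximality of $t$ alone. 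In particular, no $\SLArray$ is needed to decide which endpoint to move.

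The second gap is the $O(1)$ membership test, which you only describe as ``based on $\LRArray$ and $\SLArray$''. For general languages these two values do not decide membership. For $L=\{\ta\ta,\eword\}$ and $\doc=\ta\ta\ta\ta$ we have $\longright(1,L)=4\ge 3$ and $\shortleft(3,L)=2\ge 1$, yet $\doc[2:2]=\ta\notin L$.

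What you need is Lemma~\ref{longestRightShortestLeftLemma}: for $L\in\LCON$, $\doc[p+1:q-1]\in L$ iff $\longright(p,L)\ge q$ and $\shortleft(q,L)\ge p$. It follows by applying left-convexity to $\doc[p+1:y-1]=\doc[p+1:x]\,\doc[x+1:q-1]\,\doc[q:y-1]$ with $x=\shortleft(q,L)$ and $y=\longright(p,L)$. Without it, each check costs an automaton simulation and the linear bound is lost.

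Finally, your Step~2 computes only $\LRArray$. $\SLArray$ needs the symmetric left-to-right pass, keeping the largest start position per state, with $\eword$-cycles collapsed so each position can be handled in topological order. Also, the justification you give there is misplaced: keeping one extremal value per state is a max-propagation DP valid for every $\eword$NFA, and left-convexity plays no role in the preprocessing.
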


We can also show that under the strong exponential time hypothesis (SETH, for short), this bound cannot be improved, i.e., for $\epsilon>0$, there is most likely no algorithm with running time $O((|\doc| (|u| + \lVert \mathcal C\rVert))^{1-\epsilon})$ (see Section~\ref{sec:lowerBound} below).

Before we can state our algorithm for Theorem~\ref{mainAlgoTheorem}, we will need the following crucial definition. For every $k\in [|\doc|]$ and language $L$ with $(i,j,L) \in \mathcal C$, we define $\longright(k, L) = \max(\{t\in[k+1,|\doc|]\mid \doc[k+1:t-1]\in L\} \cup \{0\})$, i.e., if $\longright(k, L)=t$, then $\doc[k+1:t-1]$ is the \emph{longest} factor of $\doc$ starting at position $k+1$ that belongs to $L$. Analogously, we define $\shortleft(k, L) = \max(\{t\in[k-1]\mid \doc[t+1:k-1]\in L\} \cup \{0\})$, i.e., if $\shortleft(k, L)=t$, then $\doc[t+1:k-1]$ is the \emph{shortest} factor of $\doc$ ending at position $k-1$ that belongs to $L$. Moreover, if no such factor exists or if $k\notin[|\doc|]$, then $\longright(k, L)$
and $\shortleft(k, L)$
return $0$.

\begin{algorithm}
    \caption{$\embedLCONSubseq(u, \mathcal{C}, \doc, e_0)$}
        \label{mainAlgo}
   \KwIn{$u, \mathcal{C}$ s.t. $\mathcal{C}$ only contains left-convex gap-constraints, $\doc \in\Sigma^\ast$, $e_0 \colon [|u|] \to [|\doc|]$.}
    \KwOut{$e_0$-minimal $\mathcal C$-embedding of $u$ in $\doc$ if it exists, and $\bot$ otherwise.}
        $e \coloneq e_0$; $S \coloneq \{1, 2, \ldots, |u|\}$\;
        \While{$S \neq \emptyset$}
        {
        Let $s \in S$ be arbitrarily chosen and $S \gets S \setminus \{s\}$\;\label{startLine}
        \lIf{$e(s) > |\doc|$}{\textbf{return} $\bot$\label{returnBotLine}}
        \If{$u[s] \neq \doc[e(s)]$\label{symbolCond}}
        {
        	$e(s) \gets e(s) + 1$; $S \gets S \cup \{s\}$\;\label{addsLine}
        }
        \If{$s < |u|$ and $e(s) \geq e(s + 1)$\label{orderCond}}
        {
        	$e(s + 1) \gets e(s) + 1$; $S \gets S \cup \{s + 1\}$\;\label{endFirstPartLine}
        }
        \ForEach{$(i, j, L) \in\mathcal C$ with $s \in \{i, j\}$}
        {
        	\If{$\doc[e(i) + 1 : e(j)-1] \notin L$\label{condLineOne}}
		{
            		\lIf{$\longright(e(i), L) > e(j)$\label{condLineTwo}}
			{
				$s' \gets j$ \textbf{else} $s' \gets i$
     			}
     			$e(s') \gets e(s') + 1$; $S \gets S \cup \{s'\}$\;
     		}
        }
        }
        \textbf{return} $e$\;\label{returneLine}
\end{algorithm}

Let us explain our algorithm (given as Algorithm~\ref{mainAlgo}) on an intuitive level. The algorithm starts with an initial mapping $e \coloneq e_0 \colon [|u|] \to [|\doc|]$, which is then changed into a $\mathcal C$-embedding of $u$ in $\doc$ (if one exists) by only moving single positions to the right, i.e., by single steps that re-define some $e(i) \coloneq e(i) + c$
and leave all other $e(i')$ with $i' \in [|u|] \setminus \{i\}$ unchanged.
Now for our current mapping $e$ there are three types of \emph{violations} that may cause $e$ to \emph{not} be a $\mathcal C$-embedding of $u$ in $\doc$: (1) $u[s] \neq \doc[e(s)]$ for some $s \in [|u|]$, (2) $e(s) \geq e(s + 1)$ for some $s \in [|u| - 1]$, and (3)~$\doc[e(i) + 1 : e(j)-1] \notin L$ for some $(i, j, L) \in\mathcal C$. In the set $S$, we maintain positions $s \in [|u|]$ for which we still have to check whether they participate in one of these violations, i.e., we maintain the invariant that if $s \notin S$, then $s$ does not participate in any violations of any of these three types.
Therefore, as soon as $S$ is empty, we have found our $\mathcal C$-embedding of $u$ in $\doc$. Let us next discuss how we modify the current mapping $e$. As long as $S$ is not empty, we consider some $s \in S$, and remove it from $S$. If $s$ participates in a violation of type (1), then we move $e(s)$ one step to the right (see Line~\ref{symbolCond}) and if $s$ participates in a violation of type (2), then we move $e(s + 1)$ to $e(s) + 1$ (see Line~\ref{orderCond}). If $s$ is responsible for a violation of type (3), then there is some $(i, j, L) \in\mathcal C$ with $s \in \{i, j\}$ and $\doc[e(i) + 1 : e(j) - 1] \notin L$. In this case, we move either $e(j)$ or $e(i)$ one step to the right, depending on whether or not $\longright(e(i), L) > e(j)$ (see Line~\ref{condLineTwo}). Moreover, whenever we move some position, we have to put it in $S$, since moving it may cause violations that have to be checked later. In every iteration of the while-loop, either we move a position of the current mapping $e$ to the right, or, if we do not change $e$, then we remove an element from $S$ without adding new elements. Consequently, at some point either $S$ will be empty, in which case we output the current mapping, or the current mapping $e$ satisfies $e(i) > |\doc|$, in which case we conclude that no $\mathcal{C}$-embedding exists.

We will next prove the correctness of this approach and then we prove the running time claimed in Theorem~\ref{mainAlgoTheorem} (for some proofs we only provide sketches; full details can be found in Appendix~\ref{mainAlgoAppendix}). However, before we can proceed, we need some more definitions and statements.

In the following, we fix an instance $(u, \mathcal{C}, \doc)$ of the matching problem, where every gap-constraint from $\mathcal{C}$ is a left-convex gap-constraint. We denote by $\leq$ the pointwise order on mappings, i.e., for two mappings $e, e'\colon [|u|] \to [|\doc|]$, we write $e \leq e'$ if and only if $e(i) \leq e'(i)$ for every $i \in [|u|]$. Clearly, `$\leq$' is a partial order, which means that we can talk about mappings that are \emph{minimal} (with respect to a set of mappings). For some mapping $e_{0}\colon [|u|] \to [|\doc|]$, we say that $e\colon [|u|] \to [|\doc|]$ is an \emph{$e_{0}$-minimal $\mathcal{C}$-embedding of $u$ in $\doc$} if and only if $e$ is a $\mathcal{C}$-embedding of $u$ in $\doc$ with $e_0 \leq e$ and $e$ is minimal within the set of all $\mathcal{C}$-embeddings $e'$ of $u$ in $\doc$ with $e_0 \leq e'$. Note that this notion is defined for a mapping $e_0 \colon [|u|] \to [|\doc|]$, i.e., $e_0$ is not necessarily an embedding of $u$ in $\doc$ (and, even when it is an embedding of $u$ in $\doc$, it may still violate gap-constraints from $\mathcal{C}$). A key point of the left-convex property is stated by the next lemma (similar to Lemma \ref{lem:LCON_min_closed}).\looseness=-1

\begin{lemma}\label{uniquenessLemma}
Let $e_{0}\colon [|u|] \to [|\doc|]$ be a mapping, and assume that there exists a $\mathcal{C}$-embedding $e$ of $u$ in $\doc$ with $e_0 \leq e$. Then there is a unique $e_{0}$-minimal $\mathcal{C}$-embedding of $u$ in $\doc$.
\end{lemma}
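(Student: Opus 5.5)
The plan is to show that the set $E_0$ of all $\mathcal{C}$-embeddings $e$ of $u$ in $\doc$ with $e_0 \leq e$ is closed under pointwise minimum. Since $E_0$ is finite and, by assumption, nonempty, its pointwise minimum $e_{\min}(x) = \min\{e(x) \mid e \in E_0\}$ is then obtained by iterating the binary minimum, so $e_{\min}$ lies in $E_0$ and satisfies $e_{\min} \leq e$ for all $e \in E_0$. Such an element is the unique minimal element of $E_0$: any minimal $e \in E_0$ satisfies $e_{\min} \leq e$, and minimality forces $e = e_{\min}$. So everything reduces to the closure property.

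For closure, I will take $e, e' \in E_0$ and set $f(x) = \min\{e(x), e'(x)\}$. Clearly $e_0 \leq f$.

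\textbf{Letters.} For each $x$, $f(x)$ equals $e(x)$ or $e'(x)$, so $\doc[f(x)] = u[x]$.

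\textbf{Order.} For $x < |u|$, suppose without loss of generality $f(x) = e(x)$. Then $f(x) = e(x) < e(x+1)$ and $f(x) \leq e'(x) < e'(x+1)$, hence $f(x) < f(x+1)$.

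\textbf{Gap-constraints.} These are the crux, and I expect this to be the main step needing care. I could invoke Lemma~\ref{lem:LCON_min_closed}, but I will argue directly. Fix $(i,j,L) \in \mathcal{C}$. If $f(i)$ and $f(j)$ come from the same embedding, the gap equals that embedding's gap and lies in $L$. Otherwise, by symmetry, say $f(i) = e(i) \leq e'(i)$ and $f(j) = e'(j) < e(j)$.
\begin{itemize}
\item The gap of $e$, namely $\doc[e(i)+1 : e(j)-1]$, lies in $L$.
\item The gap of $e'$, namely $\doc[e'(i)+1 : e'(j)-1]$, lies in $L$ and is a factor of the former, because $e(i) \leq e'(i)$ and $e'(j) < e(j)$.
\end{itemize}
Write the $e$-gap as $pvw$, where $v$ is the $e'$-gap, $p = \doc[e(i)+1 : e'(i)]$ and $w = \doc[e'(j) : e(j)-1]$. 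Left-convexity then yields $pv = \doc[e(i)+1 : e'(j)-1] \in L$, which is exactly the $(i,j)$-gap of $f$.

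The only subtlety here is the index bookkeeping. I need $e'(i) < e'(j)$, which holds because $e'$ is an embedding and $i < j$, so the gap of $f$ is well-defined. With that checked, the closure property follows and the lemma is proved.
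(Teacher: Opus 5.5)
Your proof is correct and follows essentially the paper's argument: the set of $\mathcal{C}$-embeddings above $e_0$ is closed under pointwise minimum, with left-convexity applied to the crossed case $e(i) \leq e'(i) < e'(j) < e(j)$. The only difference is that you take the global pointwise minimum of this finite nonempty set, which makes existence of the minimal element explicit; the paper's contradiction argument for uniqueness leaves existence implicit.
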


\begin{proof}
If there are two $e_{0}$-minimal $\mathcal{C}$-embeddings $e_1$ and $e_2$ of $u$ in $\doc$, then we can consider the pointwise minimum $e_{\min}$ of $e_1$ and $e_2$, which is an embedding of $u$ in $\doc$. Obviously, $e_0 \leq e_{\min}$. We show that $e_{\min}$ is a $\mathcal{C}$-embedding of $u$ in $\doc$. To this end, let $(i, j, L) \in \mathcal{C}$ be arbitrarily chosen. If $(e_{\min}(i), e_{\min}(j)) = (e_1(i), e_1(j))$ or $(e_{\min}(i), e_{\min}(j)) = (e_2(i), e_2(j))$, then $e_{\min}$ obviously satisfies $(i, j, L)$. If $(e_{\min}(i), e_{\min}(j)) = (e_1(i), e_2(j))$ (the case $(e_{\min}(i), e_{\min}(j)) = (e_2(i), e_1(j))$ is analogous), then we have $e_1(i) \leq e_2(i) < e_2(j) \leq e_1(j)$ and $\doc[e_1(i) + 1 : e_1(j) - 1] \in L$ and $\doc[e_2(i) + 1 : e_2(j) - 1] \in L$. Since $L$ is left-convex, this implies $\doc[e_1(i) + 1 : e_2(j) - 1] \in L$ and therefore $\doc[e_{\min}(i) + 1 : e_{\min}(j) - 1] \in L$. Hence, $(i, j, L)$ is satisfied by $e_{\min}$. Finally, we note that $e_1 \neq e_{\min}$ or $e_2 \neq e_{\min}$ contradicts the $e_{0}$-minimality of $e_1$ or $e_2$; thus, $e_1 = e_2 = e_{\min}$.
\end{proof}

We are now ready to state the correctness of Algorithm~\ref{mainAlgo} in the form of the following lemma.

\begin{restatable}{lemma}{mainAlgoCorrectnessRest}\label{mainAlgoCorrectnessLemma}
On input $u \in \Sigma^*$, $\mathcal{C}$ (where $\mathcal{C}$ only contains left-convex gap-constraints), $\doc \in\Sigma^\ast$ and $e_0 \colon [|u|] \to [|\doc|]$, Algorithm~\ref{mainAlgo} returns an $e_0$-minimal $\mathcal C$-embedding of $u$ in $\doc$ if it exists, and $\bot$ otherwise.
\end{restatable}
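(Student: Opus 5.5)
The proof rests on an invariant: throughout the run of Algorithm~\ref{mainAlgo}, the current mapping $e$ satisfies $e \leq e^*$ for every $\mathcal{C}$-embedding $e^*$ of $u$ in $\doc$ with $e_0 \leq e^*$. By Lemma~\ref{uniquenessLemma}, this amounts to $e \leq \hat e$, where $\hat e$ is the unique $e_0$-minimal $\mathcal{C}$-embedding, whenever one exists. Together with termination and the invariant on $S$ stated in the informal description, this gives the lemma.

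\textbf{Termination.} Each iteration either increases $\sum_s e(s)$ or shrinks $S$ without adding to it. Moreover, values exceeding $|\doc|+1$ are never examined before the check in Line~\ref{returnBotLine}. Strictly, an update in Line~\ref{endFirstPartLine} could set $e(s+1)$ beyond $|\doc|$, but that index is then in $S$ and will trigger the return. Hence there are only finitely many iterations.

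\textbf{Soundness.} If the algorithm returns $e$, then $S=\emptyset$, and we show $e$ is a $\mathcal{C}$-embedding. The invariant is: every $s \notin S$ takes part in no violation of type (1), (2) (as left endpoint $s$), or (3) (as endpoint of a constraint). When $s$ is removed, each violation involving it is either repaired or some position is moved and re-added. Changing $e(t)$ can only create new violations involving $t$, and $t$ is re-added. One subtlety: a violation of type (2) between $s-1$ and $s$ is checked only when processing $s-1$. But $e(s)$ only increases, and whenever $e(s-1)$ changes, $s-1$ is in $S$. So I would phrase the invariant as "for $s \notin S$: $u[s]=\doc[e(s)]$, $e(s)<e(s+1)$ if $s<|u|$, and all constraints with endpoint $s$ hold", and check that moving $e(t)$ right can break only conditions attached to $t$ or $t-1$'s ordering. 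Increasing $e(t)$ cannot break $e(t-1)<e(t)$, so only conditions attached to $t$ are affected.

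\textbf{Minimality invariant (main step).} Assume $\hat e$ exists and $e \leq \hat e$ before a move; we show the moved position still satisfies the bound.
\begin{itemize}
\item \emph{Type (1) move.} If $u[s]\neq\doc[e(s)]$, then $e(s)\neq\hat e(s)$, so $e(s)<\hat e(s)$.
\item \emph{Type (2) move.} We have $\hat e(s+1)>\hat e(s)\geq e(s)$, so setting $e(s+1)=e(s)+1$ keeps $e(s+1)\leq \hat e(s+1)$.
\item \emph{Type (3) move, Line~\ref{condLineTwo} case $\longright(e(i),L)>e(j)$, moving $j$.} Suppose instead $\hat e(j)=e(j)$. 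Then $\hat e(i)\geq e(i)$ and $\doc[\hat e(i)+1:e(j)-1]\in L$. Let $t=\longright(e(i),L)>e(j)$, so $\doc[e(i)+1:t-1]\in L$, and this word contains $\doc[\hat e(i)+1:e(j)-1]$ as a factor. Left-convexity gives $\doc[e(i)+1:e(j)-1]\in L$, contradicting the violation. Hence $e(j)<\hat e(j)$.
\item \emph{Type (3) move, else case, moving $i$.} Suppose $\hat e(i)=e(i)$. Then $\hat e(j)\geq e(j)$ and $\doc[e(i)+1:\hat e(j)-1]\in L$, so $\longright(e(i),L)\geq \hat e(j)\geq e(j)$. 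By the case assumption, $\longright(e(i),L)\leq e(j)$, so $\hat e(j)=e(j)$. But then $\hat e$ would violate the constraint, a contradiction. (The case $\longright=0$ is covered by the same inequality.)
\end{itemize}

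\textbf{Conclusion.} If $\hat e$ exists, the invariant gives $e(s)\leq\hat e(s)\leq|\doc|$ throughout, so $\bot$ is never returned. The returned $e$ is a $\mathcal{C}$-embedding with $e_0\leq e\leq\hat e$, and by minimality $e=\hat e$. If no embedding exists, soundness shows the algorithm cannot return a mapping, so by termination it returns $\bot$. The Line~\ref{condLineTwo} case analysis is the delicate step, and it is exactly where left-convexity is used.
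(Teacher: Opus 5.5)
Your minimality invariant is the paper's $(\dagger)_e$, and your four cases for it are correct. Your two Line~\ref{condLineTwo} cases also implicitly cover the situation $e(i)\geq e(j)$, which the paper treats separately, because $\hat e$ satisfying $(i,j,L)$ forces $\hat e(i)<\hat e(j)$.

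The flaw is in the soundness part. The invariant you propose, ``for $s\notin S$, all gap-constraints with endpoint $s$ hold'', is not maintained by Algorithm~\ref{mainAlgo}, and it already fails after one iteration. Take $u=\ta\ta$, $\mathcal C=\{(1,2,[2,2])\}$, $\doc=\ta\ta\ta\ta$ and $e_0(k)=k$, and let the algorithm pick $s=1$. The gap is $\eword\notin L$ and $\longright(1,L)=4>e(2)$, so only $e(2)$ is moved (to $3$) and only $2$ is put back into $S$. Position $1$ leaves the iteration outside $S$, yet the gap $\doc[2:2]$ of length $1$ still violates the constraint. In general, a one-step move in the foreach-loop need not repair the violation. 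The same breakage occurs whenever the other endpoint of an already-checked constraint is moved later in the same foreach-loop or in a later iteration. Your principle ``changing $e(t)$ only creates violations involving $t$, and $t$ is re-added'' is right. But such a violation also involves the other endpoint, which may stay outside $S$, so the claim that no $s\notin S$ takes part in a type (3) violation is false.

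The repair is to make the gap part of the invariant pairwise, as in the paper's $(\ddagger)_{e,S}$:
\begin{itemize}
\item each $s\notin S$ satisfies the symbol condition and, as left endpoint, the order condition;
\item every constraint whose endpoints are \emph{both} outside $S$ is satisfied (equivalently, every violated constraint has an endpoint in $S$).
\end{itemize}
With this version your reasoning goes through. Moving some $e(t)$ right while inserting $t$ into $S$ preserves it; this is the paper's Claim~\ref{SInvariantClaim}. If $s$ is not in $S$ after its iteration, then $e(s)$ was never changed in it. Moreover, every $(i,j,L)$ with $s\in\{i,j\}$ and the other endpoint outside $S$ was found satisfied by the foreach-loop, with neither endpoint moved afterwards. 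You only use the invariant at $S=\emptyset$, where the pairwise version already gives all constraints, so your conclusion stands.

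A minor point on termination: $\sum_s e(s)$ is not a priori bounded. While processing $j$, the foreach-loop can keep pushing a position $i$ with $e(i)>|\doc|$ further, since then $\longright(e(i),L)=0\leq e(j)$. Instead, count iterations on positions whose value is at most $|\doc|$; there are at most $|\doc|+1$ per position. The first iteration on a larger position returns $\bot$.
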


\renewcommand{\proofname}{Proof Sketch}
\begin{proof}
For every $e \colon [|u|] \to [|\doc|]$, we define the following invariant.

\smallskip

\noindent \emph{Invariant $(\dagger)_{e}$}: If there exists a $\mathcal C$-embedding $e^*$ of $u$ in $\doc$ with $e_0 \leq e^*$, then $e \leq e^*$ (where $e$ is the current mapping of the algorithm). 

\smallskip

It can be easily verified that $(\dagger)_{e}$ holds at the beginning of the algorithm, and it can be shown that the modifications of $e$ carried out by an iteration of the while-loop maintain $(\dagger)_{e}$. 
For the modifications possibly caused by the if-statements in Line~\ref{symbolCond}~and~\ref{orderCond} this is not too hard to see. If the condition of Line~\ref{symbolCond} is satisfied, then $u[s] \neq \doc[e(s)]$ and we will change $e$ into $e'$ with $e'(i) = e(i)$ for every $i \in [|u|] \setminus \{s\}$ and $e'(s) = e(s) + 1$. Since $(\dagger)_e$ is satisfied, we know that any $\mathcal C$-embedding $e^*$ of $u$ in $\doc$ with $e_0 \leq e^*$ satisfies $e \leq e^*$. But since $u[s] \neq \doc[e(s)]$, we also know that $e^*(s) \neq e(s)$, which means that $e'(s) = e(s) + 1 \leq e^*(s)$. Consequently, any $\mathcal C$-embedding $e^*$ of $u$ in $\doc$ with $e_0 \leq e^*$ satisfies $e' \leq e^*$; thus, $(\dagger)_{e'}$ is satisfied. A similar reasoning applies to the case of Line~\ref{orderCond}. 

The more difficult cases are the calls of Line~\ref{condLineOne} inside the foreach-loop, whose correctness hinges on the left-convex property. Let us assume that the condition of Line~\ref{condLineOne} is satisfied for some $(i, j, L) \in\mathcal C$ with $s \in \{i, j\}$, i.e., $\doc[e(i) + 1 : e(j)-1] \notin L$. Moreover, let us assume that $\longright(e(i), L) > e(j)$, which means that we change $e$ into $e'$ with $e'(k) = e(k)$ for every $k \in [|u|] \setminus \{j\}$ and $e'(j) = e(j) + 1$. In particular, we also know that $\doc[e(i) + 1 : \longright(e(i), L) - 1] \in L$. If $(\dagger)_{e'}$ is not satisfied, then there is some $\mathcal C$-embedding $e^*$ of $u$ in $\doc$ with $e_0 \leq e^*$, but $e' \not\leq e^*$. Since $(\dagger)_e$ is satisfied, we know that $e \leq e^*$, which means that $e(j) = e^*(j)$. Since $\doc[e(i) + 1 : e(j)-1] \notin L$, we also know that $e'(i) = e(i) < e^*(i)$. Let us now consider the string $uvw$ with $u = \doc[e(i) + 1 : e^*(i)]$, $v = \doc[e^*(i) + 1 : e^*(j)-1]$ and $w = \doc[e^*(j) : \longright(e(i), L) - 1]$. As observed above, $uvw \in L$, and since $e^*$ is a $\mathcal C$-embedding of $u$ in $\doc$, we also know that $v \in L$, which, by the left-convexity of $L$, implies that $uv = \doc[e(i) + 1 : e(j) - 1] \in L$; a contradiction. The case $\longright(e(i), L) \leq e(j)$, where we have $e'(k) = e(k)$ for every $k \in [|u|] \setminus \{i\}$ and $e'(i) = e(i) + 1$, is a bit simpler. If $(\dagger)_{e'}$ is not satisfied, then there is some $\mathcal C$-embedding $e^*$ of $u$ in $\doc$ with $e_0 \leq e^*$, but $e' \not\leq e^*$. Since $(\dagger)_e$ is satisfied, we know that $e \leq e^*$, which means that $e(i) = e^*(i)$. Since $\doc[e(i) + 1 : e(j)-1] \notin L$, we also know that $e'(j) = e(j) < e^*(j)$, which means that $\longright(e(i), L) > e(j)$; a contradiction.

Note that we might also call Line~\ref{condLineOne} in the case where $e(i) \geq e(j)$, which means that $\doc[e(i) + 1 : e(j)-1] = \bot \notin L$. This constitutes a special case, for which correctness can nevertheless be proven in a very similar way as sketched above (see Appendix~\ref{mainAlgoAppendix}).

This shows that $(\dagger)_{e}$ holds before every iteration of the while-loop. Next, we formulate another invariant, for which we need the following terminology. We say that $s \in [|u|]$ \emph{satisfies the symbol condition} with respect to $e$ if $u[s] = \doc[e(s)]$, $s \in [|u|-1]$ \emph{satisfies the order condition} with respect to $e$ if $e(s) < e(s + 1)$, and $s, s' \in [|u|]$ \emph{satisfy the gap condition} with respect to $e$ if $(i, j, L) \in\mathcal C$ with $\{i, j\} = \{s, s'\}$ implies $\doc[e(i) + 1 : e(j) - 1] \in L$. 
For every $e \colon [|u|] \to [|\doc|]$ and $S \subseteq [|u|]$, we define:

\smallskip

\noindent \emph{Invariant $(\ddagger)_{e, S}$}: Each $s \in [|u|] \setminus S$ satisfies the symbol condition and the order condition with respect to $e$, and all $s, s' \in [|u|] \setminus S$ satisfy the gap condition with respect to $e$. 

\smallskip

Obviously, $(\ddagger)_{e, S}$ is satisfied at the beginning of the algorithm, since $S = [|u|]$. It can be shown that $(\ddagger)_{e, S}$ is maintained by iterations of the while-loop as follows. We can first show that if $(\ddagger)_{e, S}$ is satisfied before an iteration, then $(\ddagger)_{e, S \cup \{s\}}$ is satisfied after the iteration (note that $e$ and $S$ always refer to these program variables at the point of the algorithm we refer to). 
This is because whenever we change any position of $e$
then we also put it into $S$. So it remains to prove that  $(\ddagger)_{e, S}$ holds after the iteration. 
The fact that $(\ddagger)_{e, S \cup \{s\}}$ holds directly means that all $s' \in [|u|] \setminus (S \cup \{s\})$ satisfy the symbol and order condition, and all $s', s'' \in [|u|] \setminus (S \cup \{s\})$ satisfy the gap condition. Now if $s \in S$, then $(\ddagger)_{e, S \cup \{s\}} = (\ddagger)_{e, S}$ and we are done. If $s \notin S$, then the fact that $s$ was never added to $S$ by any of the Lines~\ref{symbolCond},~\ref{orderCond}~and~\ref{condLineOne} directly means that $s$ must also satisfy the symbol and order condition, and all $i, j \in [|u|] \setminus S$ with $s \in \{i, j\}$ satisfy the gap condition. Thus, $(\ddagger)_{e, S}$ holds after the iteration.

These two invariants directly imply the correctness. The algorithm obviously terminates at some point. If $S$ gets empty in the last iteration, then we return $e$, which, due to $(\ddagger)_{e, S}$ has to be a $\mathcal{C}$-embedding of $u$ in $\doc$, and due to $(\dagger)_e$, has to be $e_0$-minimal.
If $e(i) > |\doc|$ in the last iteration, then, due to $(\dagger)_{e}$, there is no $\mathcal{C}$-embedding $e^\ast$ of $u$ in $\doc$ with $e_0\leq e^\ast$ and therefore we return $\bot$.
\end{proof}
\renewcommand{\proofname}{Proof}

Let us next come to estimating the running time of Algorithm~\ref{mainAlgo}. If we can perform the checks of Lines~\ref{condLineOne}~and~\ref{condLineTwo}, i.e., checking $\doc[e(i) + 1 : e(j)-1] \notin L$ and $\longright(e(i), L) > e(j)$, in constant time, then Algorithm~\ref{mainAlgo} has indeed the running time claimed in Theorem~\ref{mainAlgoTheorem}.

\begin{restatable}{lemma}{mainAlgoRunningTimeRest}\label{mainAlgoRunningTimeLemma}
Under the assumption that we can check the conditions of Lines~\ref{condLineOne}~and~\ref{condLineTwo} in constant time, Algorithm~\ref{mainAlgo} terminates after time $O(|\doc| (|u| + |\mathcal C|))$.
\end{restatable}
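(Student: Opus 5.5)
We bound the running time with an amortised potential argument over the iterations of the while-loop, charging each unit of work either to an increment of some $e(s)$ or to the removal of an element from $S$. We implement $S$ as a stack (or queue) together with a Boolean membership array indexed by $[|u|]$. Then insertion, removal and membership tests take constant time, and adding an element already in $S$ does nothing. In a preprocessing step taking $O(|u| + |\mathcal C|)$ time, we build for each $s \in [|u|]$ the list of constraints $(i,j,L) \in \mathcal C$ with $s \in \{i,j\}$. This makes the foreach-loop of an iteration with chosen element $s$ run in time $O(1 + d(s))$, where $d(s)$ is the number of constraints incident to $s$. By the assumption of the lemma, each check in Lines~\ref{condLineOne} and~\ref{condLineTwo} is $O(1)$.

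\textbf{Counting increments.} Every modification of $e$ moves a single position strictly to the right: by one step in Line~\ref{addsLine} and in the foreach-loop, and by at least one step in Line~\ref{endFirstPartLine}, since there $e(s+1) \le e(s)$ is replaced by $e(s)+1$. Values never decrease. A position whose value exceeds $|\doc|$ causes termination the next time it is selected, in Line~\ref{returnBotLine}. Positions can only be incremented while they are at most $|\doc|$, with a single final step past $|\doc|$: in the foreach-loop the incremented position $s'$ is either $s$ or another endpoint, and in the latter case the loop may continue, but the bound on the value is unaffected. Hence the total number of increment events over the whole run is at most $|u|(|\doc|+1)$.

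\textbf{Charging iterations.} Each iteration picks some $s$ and removes it from $S$. Every later re-insertion of any element into $S$ is caused by an increment event. So the number of iterations is at most $|u|$ (initial content) plus the number of increment events, which is $O(|u||\doc|)$. This bounds the $O(1)$ part of each iteration. For the foreach-loop, position $s$ is selected at most $1 + (\text{number of increments of } e(s))$ times, which is at most $|\doc|+2$. Each selection costs $O(d(s))$, so the total foreach cost is
\[
\sum_{s\in[|u|]} O\bigl(d(s)(|\doc|+2)\bigr) = O(|\doc|\,|\mathcal C|),
\]
because $\sum_s d(s) = 2|\mathcal C|$. A single iteration may perform several increments inside the foreach-loop; each one is still charged to the increment budget, so the bounds remain valid. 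Summing everything gives $O(|\doc|(|u|+|\mathcal C|))$.

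\textbf{Main obstacle.} The delicate point is the per-position bound on how often $s$ is selected. We must use that $s$ re-enters $S$ only when $e(s)$ itself increases, and not merely when a neighbour or a constraint partner moves. This holds by inspection: every line that adds an element to $S$ adds exactly the position it just incremented ($s$, $s+1$, or $s'$).
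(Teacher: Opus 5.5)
Your structure matches the paper's: bound the number of $s$-iterations by $O(|\doc|)$ for each $s$, then charge $O(1+d(s))$ to each one. The problem is the step you use to get that bound, which is false as stated.

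\textbf{The false step.} You claim that a position is incremented only while its value is at most $|\doc|$, plus one final step. But the test $e(s) > |\doc|$ in Line~\ref{returnBotLine} applies only to the \emph{selected} element. A position already pushed past $|\doc|$ can stay in $S$ and keep being incremented by the foreach-loop of other iterations. It can also be incremented several times within one iteration, once per violated constraint for which it is the moved endpoint.

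A concrete instance: take $|u|=2$, $e_0(1)=|\doc|$, $e_0(2)=1$, a document in which $u[2]$ does not occur, and $k$ distinct left-convex constraints $(1,2,L_1),\ldots,(1,2,L_k)$. Suppose the algorithm always selects $2$, which is allowed since the choice is arbitrary. Then every iteration moves $e(2)$ by one because of the symbol mismatch. Each of the $k$ constraints sees the undefined gap, and since $\longright(e(1),L_t)=0\le e(2)$, each one moves $e(1)$. After about $|\doc|$ iterations, $e(1)$ has been incremented roughly $k|\doc|$ times. This exceeds both your global bound $|u|(|\doc|+1)$ and your per-position bound $|\doc|+1$. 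Your bounds on the number of iterations and on the number of selections of $s$ are both derived from these claims, so as written they are unsupported, even though the lemma itself still holds.

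\textbf{The repair.} Bound selections directly instead of increments; this is the paper's argument, and your ``main obstacle'' paragraph already contains the key fact.
\begin{itemize}
\item Whenever $s$ is selected, either $e(s)>|\doc|$ and the algorithm returns immediately, or $e(s)\le|\doc|$.
\item Between two consecutive selections of $s$, it must have been re-inserted into $S$. Re-insertion happens only together with a strict increase of $e(s)$.
\item So the values of $e(s)$ at its successive selections are strictly increasing, and all but the last lie in $[1,|\doc|]$. This gives at most $|\doc|+1$ $s$-iterations.
\end{itemize}
This bounds the total number of iterations by $O(|u||\doc|)$ and the total foreach work by $O(|\doc|\,|\mathcal C|)$. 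The increments themselves, however many there are, are paid for by the constant cost of the line or foreach step that performs them. No global count of increment events is needed.
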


\renewcommand{\proofname}{Proof Sketch}
\begin{proof}
We call an iteration of the while-loop an \emph{$s$-iteration} if $s$ is removed from $S$ in Line~\ref{startLine}. We note that for each $s \in [|u|]$ there are $O(|\doc|)$ $s$-iterations, so there are $O(|\doc|\cdot |u|)$ iterations of the while-loop. Hence, executing Lines~\ref{startLine}~to~\ref{endFirstPartLine} of all iterations requires time $O(|\doc| \cdot |u|)$. Each $s$-iteration performs $k_s$ iterations of the foreach-loop, where $k_s$ is the number of gap-constraints $(i, j, L)$ with $s \in \{i, j\}$. Thus, we need time $O(|\doc| k_s)$ for all the foreach-loops inside of $s$-iterations, which over all $s \in [|u|]$ sums up to $O(|\doc| \cdot |\mathcal C|)$ (note that $\sum^{|u|}_{i = 1} k_i = O(|\mathcal C|)$).
\end{proof}
\renewcommand{\proofname}{Proof}

In order to conclude the proof of Theorem~\ref{mainAlgoTheorem}, we have to show how the checks of Lines~\ref{condLineOne}~and~\ref{condLineTwo} can be carried out efficiently. First, we will see that, as a consequence of the left-convex property of $L$, the numbers $\longright(p, L)$ and $\shortleft(q, L)$ tell us whether $\doc[p+1:q-1]\in L$:

\begin{lemma}\label{longestRightShortestLeftLemma}
Given $p,q\in[|\doc|]$ and given a language $L\in\LCON$, we have $\doc[p+1:q-1]\in L$ if and only if $\longright(p, L) \geq q$ and $\shortleft(q, L) \geq p$.
\end{lemma}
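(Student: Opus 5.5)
The forward direction follows straight from the definitions. Suppose $\doc[p+1:q-1]\in L$; in particular this factor is defined, so $q \geq p+1$. Then $q$ belongs to the set $\{t\in[p+1,|\doc|]\mid \doc[p+1:t-1]\in L\}$, and taking the maximum gives $\longright(p,L)\geq q$. Symmetrically, $p$ belongs to $\{t\in[q-1]\mid \doc[t+1:q-1]\in L\}$, so $\shortleft(q,L)\geq p$. There is a small edge case when $p$ is not in the range of that set, for instance because of the convention that the sets start at $1$. I will read these boundaries off the definitions as stated: $p\in[|\doc|]$ and $q\leq |\doc|$, together with $q\geq p+1$, put $p$ in $[q-1]$, so the edge case does not actually arise.

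For the backward direction, assume $\longright(p,L)\geq q$ and $\shortleft(q,L)\geq p$. Set $r=\longright(p,L)$ and $t=\shortleft(q,L)$. Since $q\geq 1$, we have $r\geq 1$, so $r$ is not the default value $0$, and therefore $\doc[p+1:r-1]\in L$ with $r\geq q$.

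Next I need $t$ to come from a genuine witness rather than from the default value $0$. If $p\geq 1$, then $t\geq p\geq 1$, so $t$ is a real witness: $\doc[t+1:q-1]\in L$ and $t\leq q-1$. Together with $t\geq p$ this gives the sandwich $p\leq t<q\leq r$.

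Now I apply left-convexity to the word $xyz\in L$, where $x=\doc[p+1:t]$, $y=\doc[t+1:q-1]\in L$ and $z=\doc[q:r-1]$. These pieces are all well defined because $p\leq t$ and $q\leq r$. Left-convexity yields $xy=\doc[p+1:q-1]\in L$, which is what we want.

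The main obstacle is bookkeeping with the degenerate cases and the $\bot$ conventions. I have to make sure each factor is defined, which here holds because $t\leq q-1$ and $q\leq r$, and I have to check that neither $\longright$ nor $\shortleft$ can satisfy its inequality through the default value $0$. This uses $p,q\geq 1$. I also need $q>p$ in the backward direction, and that follows from $t\leq q-1$ combined with $t\geq p$.

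Note that in the backward direction the shortest-left witness is essential: it supplies the middle factor $y\in L$ that sits inside the long word $xyz$. No minimality of $\shortleft$ is needed, only membership of its witness.
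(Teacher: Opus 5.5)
Your proof is correct and follows the paper's argument: with $t=\shortleft(q,L)$ and $r=\longright(p,L)$ you obtain $p\le t<q\le r$, and you apply left-convexity to $\doc[p+1:t]\,\doc[t+1:q-1]\,\doc[q:r-1]\in L$, whose middle factor is in $L$. The paper leaves implicit the checks that neither value can be the default $0$ (because $p,q\geq 1$) and that every factor is defined; you handle both correctly.
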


\begin{proof}
If $\doc[p+1:q-1]\in L$, then $\longright(p, L) \geq q$ and $\shortleft(q, L)\geq p$ trivially hold. Thus, for $x \coloneq \shortleft(q, L)$ and $y \coloneq \longright(p, L)$, let us assume that we have $x\geq p$ and $y\geq q$. Then $\doc[p+1:y-1]=\doc[p+1:x]\doc[x+1:q-1]\doc[q:y-1]\in L$ and $\doc[x+1:q-1]\in L$ hold. Since $L\in\LCON$, we know that $\doc[p+1:x]\doc[x+1:q-1]=\doc[p+1:q-1]\in L$ holds. 
\end{proof}

Consequently, if we have all the numbers $\longright(k, L)$ and $\shortleft(k, L)$ for every $k \in [|\doc|]$ and $(i, j, L) \in\mathcal C$ at our disposal, then we can perform the checks of Lines~\ref{condLineOne}~and~\ref{condLineTwo} in constant time. Thus, let $\LRArray$ and $\SLArray$ be arrays such that, for every $k\in [|\doc|]$ and language $L$ with $(i,j,L) \in \mathcal C$ for some $i$ and~$j$, we have $\LRArray[k][L] = \longright(k, L)$ and $\SLArray[k][L] = \shortleft(k, L)$. We can compute these arrays using the NFA $M_L$ for $L$, i.e., we construct a product graph of $M_L$ and $\doc$ and then use dynamic programming on this graph; see Appendix~\ref{computeLRSLArraysAppendix} for details.

\begin{restatable}{lemma}{computeLRSLArraysRest}
\label{computeLRSLArrays}
We can compute the arrays $\LRArray$ and $\SLArray$ in time $O(|\doc| \cdot \lVert \mathcal{C} \rVert)$.
\end{restatable}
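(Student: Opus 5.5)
We compute the two arrays separately, one pass over $\doc$ per constraint language $L$, with the pass for $L$ costing $O(|\doc| \cdot |M_L|)$. Here $M_L$ is the $\eword$NFA for $L$ obtained in linear time from its regular expression, so $|M_L| = O(|r|)$. Summing over all $(i,j,L) \in \mathcal{C}$ then gives $O(|\doc| \cdot \lVert \mathcal{C}\rVert)$. Length constraints are handled directly, in constant time per entry: $\longright(k,[a,b])$ is the largest $t \le |\doc|$ with $t-k-1 \le b$ and $t-k-1\ge a$, and $\shortleft(k,[a,b])$ is the analogous smallest-gap value $k-1-a$, provided it is at least $1$.

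\textbf{Computing $\SLArray$.} We handle $\SLArray$ first. Build the product DAG $G$ with vertex set $Q \times [0,|\doc|]$, where $Q$ is the state set of $M_L$. It has an edge $(p,x-1)\to(p',x)$ whenever $p \xrightarrow{\doc[x]} p'$ in $M_L$, and an edge $(p,x)\to(p',x)$ whenever $p\xrightarrow{\eword}p'$. Its size is $O(|\doc|\cdot|M_L|)$. For every vertex $(p,x)$ let $\mathrm{best}(p,x)$ be the maximal $t \le x$ such that $p$ is reachable from $(q_0,t)$, i.e.\ $M_L$ can read $\doc[t+1:x]$ from $q_0$ to $p$. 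This maximum propagates along edges: $\mathrm{best}(p,x)$ is the maximum of $x$ (if $p=q_0$) and the $\mathrm{best}$-values of all predecessors. Hence one forward sweep computes it, processing $x=0,1,\dots$ and, within each layer, the $\eword$-edges in a suitable order. Since $\eword$-cycles are possible, within a layer we take strongly connected components of the $\eword$-graph, or equivalently do a max-propagation via BFS from vertices sorted by value; I would rather precompute an $\eword$-cycle-free NFA of linear size to keep this linear. Then $\shortleft(k,L)$ is the maximum of $\mathrm{best}(f,k-1)$ over the final states $f$, with appropriate boundary handling: the value must lie in $[k-1]$, and the convention $t=k-1$ gives the empty word, which corresponds to $\mathrm{best} = k-1$.

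\textbf{Computing $\LRArray$.} Symmetrically, $\longright(k,L)$ is one plus the maximal $y$ such that $\doc[k+1:y]\in L$. We compute it by a backward sweep on the reversed graph. For each vertex $(p,x)$ let $\mathrm{far}(p,x)$ be the maximal $y\ge x$ such that from state $p$ the automaton can read $\doc[x+1:y]$ and reach a final state. This is computed from $x=|\doc|$ downward by max over successors, together with $x$ itself if $p$ is final. Then $\LRArray[k][L]=\mathrm{far}(q_0,k)+1$ when it is defined, and $0$ otherwise.

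\textbf{Main obstacle.} The only subtle step is handling $\eword$-transitions within a layer while keeping the sweep linear. I would first remove $\eword$-cycles by collapsing strongly connected components of the $\eword$-graph, which preserves the language and does not increase the size. I would then topologically order the $\eword$-graph once and reuse that order in every layer. With this, each layer costs $O(|M_L|)$ and each pass costs $O(|\doc|\cdot|M_L|)$, as required.
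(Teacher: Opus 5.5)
Your proposal is correct and follows the paper's proof essentially step by step: the same product graph of $M_L$ with $\doc$, the same collapsing of $\eword$-SCCs so that one topological order of the $\eword$-graph can be reused in every layer, and the same forward (for $\SLArray$) and backward (for $\LRArray$) max-propagation in $O(|\doc|\cdot|M_L|)$ per constraint. Your explicit constant-time treatment of length constraints is a useful addition that the paper leaves implicit. The only nit, which the paper shares, is that $\mathrm{far}$ should range over $y\le|\doc|-1$ so that $\LRArray$ never takes the value $|\doc|+1$; this is harmless for the way the array is later used anyway.
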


In order to conclude the proof of Theorem~\ref{mainAlgoTheorem}, we note that we can solve the matching problem with left-convex gap-constraints for some instance $(u, \mathcal{C}, \doc)$ by simply running Algorithm~\ref{mainAlgo} on this instance with the initial mapping $e_0$ defined by $e_0(i) \coloneq i$ for every $i \in [|u|]$.

\subsection{Conditional Lower Bound}\label{sec:lowerBound}

We can show that the upper bound $O(|\doc| (|u| + \lVert \mathcal C\rVert))$ of Theorem~\ref{mainAlgoTheorem} is conditionally optimal. Let $r$ be a regular expression over $\Sigma$ and let $w \in \Sigma^*$. We define the strings $\doc=\#\$w\$\#$ and $u=\#\#$, where $\$, \# \notin \Sigma$, and the set of gap-constraints $\mathcal C=\{(1,2,L)\}$, where $L = \{\$ w \$ \mid w \in \mathcal{L}(r)\}$. Obviously, $(1,2,L)$ is a left-convex gap-constraint, which means that $(u, \mathcal{C},\doc)$ is an instance of the matching problem, and this instance can be obtained in linear time from $w$ and $r$. Moreover, $w \in \mathcal{L}(r)$ if and only if $u \preceq_{\mathcal{C}} \doc$. Hence, if we can check $u \preceq_{\mathcal{C}} \doc$ in time $O((|\doc|\lVert \mathcal C \rVert)^{1-\epsilon})$, for some $\epsilon>0$, then we can also check $w \in \mathcal{L}(r)$ in time $O((|w| |r|)^{1-\epsilon})$. The latter contradicts 
SETH (as it 
contradicts the orthogonal vectors hypothesis, see~\cite{backurs2016regular,DBLP:journals/theoretics/BringmannGKL24}); clearly, if $L$ 
is specified by an arbitrary $\eword$NFA, the same lower bound holds.
Similarly, even if the languages defining the gap-constraints are chosen from some very simple classes (e.g., length constraints), the upper bound $O(|\doc| (|u| + \lVert \mathcal C\rVert))$ of Theorem~\ref{mainAlgoTheorem} is still conditionally optimal. In \cite[Theorem 3]{DayKMS22} it was shown that, even if we only allow gap-constraints of the form $(i,i+1,[a,b])$, with $a,b$ constants, then checking whether $u \preceq_{\mathcal{C}} \doc$ in time ${O}((|\doc|(|u|+\lVert \mathcal C \rVert))^{1-\epsilon})$, for $\epsilon > 0$, would contradict the orthogonal vectors hypothesis, and, as such, SETH \cite{doi:10.1142/9789813272880_0188}.

\subsection{Jumping Over Larger Factors and Other Alternative Approaches}\label{gaoetalapproach}

  We observe that, when $u[s]$ is not mapped to the right letter by~$e$, then
  Line~\ref{addsLine} only increments $e(s)$ by one. One easy improvement in
  practice would be to jump
  directly to the next occurrence of the letter $u[s]$ in~$\doc$ (or fail if
none exists). Similarly, we observe that when in an iteration of the while-loop the condition of
Line~\ref{condLineOne} is satisfied, i.e., $\doc[e(i) + 1 : e(j)-1] \notin L$,
then Algorithm~\ref{mainAlgo} moves either position $i$ or $j$ by only one
position, i.e., either $e(i) \coloneq e(i) + 1$ or $e(j) \coloneq e(j) + 1$.
However, it can be shown that in such a case, we could as well set $e(i)
\coloneq p$ and $e(j) \coloneq q$, where $p,q\in[|\doc|]$ such that $e(i) \leq
p, e(j) \leq q$, $\doc[p+1:q-1]\in L$ and $p,q$ are minimal with this property.
(The uniqueness of the (pointwise) minimal pair $(p,q)$ is a direct consequence of the left-convexity of $L$; see Lemma~\ref{uniquenessLemma}.)
The correctness of this follows again from the left-convexity property. Especially for instances where $\doc$ is sparse with respect to factors that match the constraint languages, shifting positions over larger chunks of $\doc$ as explained above could help finding a valid embedding (or verifying that none exists) much faster. 

Of course, the question is how the pair $(p, q)$ can be computed efficiently.
For this, we can again exploit the numbers $\longright(k, L)$ and $\shortleft(k,
L)$. More precisely, our desired $p$ and $q$ are given as $p =
\min\{x\in[e(i),|\doc|]\mid \longright(x, L) \geq e(j)\}$ and $q =
\min\{x\in[e(j),|\doc|] \mid \shortleft(x, L)\geq e(i)\}$. However, it is too
costly to na\"ively compute $p$ and $q$ according to these definitions; we can
instead use a data structure result by Gao et
al.~\cite{gao_et_al:LIPIcs.ESA.2020.54} for \emph{orthogonal range successor}
queries to speed up this computation. Unfortunately, this approach leads to an
overall running time of $\tilde{O}(|\doc| (|u| + \lVert \mathcal C\rVert))$,
i.e., asymptotically worse by logarithmic factors.
On the positive side, it can be shown that for the special cases where all gap-constraints are just length constraints, or all gap-constraints are singleton languages, the above described variant of Algorithm~\ref{mainAlgo} that allows jumping over larger chunks of $\doc$ can actually be implemented to run, in the worst case, in time $O(\lVert \mathcal C\rVert + |\doc| (|u| + | \mathcal C|))$, so faster than the general approach from Theorem~\ref{mainAlgoTheorem}. Moreover, in the general case, even though in an asymptotic worst-case analysis we have to pay with logarithmic factors if we want to be able to jump over larger portions of $\doc$, such an improvement could still pay off in an experimental analysis on real-world instances (which would go beyond the scope of this work). All the technical details of the approach described in this subsection are provided in Appendix~\ref{sec:GaoAppendix}.

\section{Computing and Enumerating All Satisfying Embeddings}\label{sec:computeAll}

Our Algorithm~\ref{mainAlgo} from the previous section decides the matching problem (i.e., it checks whether $\query(\doc)\neq\emptyset$) by producing a witness embedding $e\in\query(\doc)$. Next, we extend the procedure to enumerate with bounded delay \emph{all} $\mathcal C$-embeddings of $u$ in~$\doc$. This is 
motivated by the fact that, in practical tasks like complex event recognition,
computing or enumerating all solutions is more important than just computing some witness. For presentational reasons, we first devise an algorithm that computes $\query(\doc)$, which shall then be extended to an enumeration algorithm.

This extension hinges on the property of Algorithm~\ref{mainAlgo} that, called for some starting mapping $e_0$, it returns not just any $\mathcal C$-embedding of $u$ in $\doc$, but the \emph{$e_0$-minimal} $\mathcal C$-embedding $e_{\min}$ of $u$ in $\doc$, if it exists. Hence, our algorithm can be employed recursively in the following general way: We first call Algorithm~\ref{mainAlgo} for the trivial initial mapping $e_0$, where $e_0(i)=i$ for all $i\in[|u|]$, to obtain the $e_0$-minimal ${\mathcal C}$-embedding $e_{\min}$ of $u$ in $\doc$. Next, we want to call Algorithm~\ref{mainAlgo} again to compute somehow a ``next'' $\mathcal C$-embedding $e'_{\min}$ of $u$ in $\doc$ that is larger than $e_{\min}$, and so on. Obviously, we cannot just choose $e_{\min}$ as initial mapping for the next call of Algorithm~\ref{mainAlgo}, since $e_{\min}$ is a $\mathcal C$-embedding of $u$ in $\doc$, so the algorithm would simply output $e_{\min}$ again. Instead, we have to slightly modify $e_{\min}$ before using it as the initial mapping of the next call, and simply moving one of its positions one step to the right seems to be a natural choice of such a modification. However, we have to be careful to explore the complete solution space while ensuring that each embedding is discovered exactly once. Let us next sketch how this recursion has to be organised.

For a given initial mapping $e_0$ and for every $i \in \{0, 1, 2 \ldots, |u|-1\}$, we define $S(e_0, i) = \{e \mid e_0 \leq e, e(1)=e_0(1),\ldots,e(i)=e_0(i), u\preceq_{e,\mathcal C}\doc\}$, i.e., the set of all $\mathcal{C}$-embeddings of $u$ in $\doc$ lower bounded by $e_0$ that agree with $e_0$ on the first $i$ positions. This set can be partitioned in a natural way as follows. Let us first note that $S(e_0, i) = \emptyset$ if and only if we have property (1): $S(e_0, 0) = \emptyset$ (i.e., there is no ${\mathcal C}$-embedding $e$ of $u$ in $\doc$ with $e_0 \leq e$) or property $(2)$: the $e_0$-minimal $\mathcal C$-embedding $e_{\min}$ of $u$ in $\doc$ does exist, but $e_{\min}(k) > e_0(k)$ for some $k \in [i]$ (i.e., no $\mathcal C$-embedding of $u$ in $\doc$ that is lower bounded by $e_0$ also agrees with $e_0$ on the first $i$ positions). These properties can be checked by one call of Algorithm~\ref{mainAlgo} with initial mapping $e_0$: If the algorithm returns $\bot$, then property $(1)$ holds, and if it returns some $\mathcal C$-embedding $e_{\min}$, we can simply check whether $e_{\min}(k) > e_0(k)$ for some $k \in [i]$ in order to check property $(2)$. 

In the following, for every $j \in \{i+1, \ldots, |u|\}$ 
and mapping $e$, 
let $e^{(j)}$ be obtained from~$e$
by moving the image of $j$ one position to the right. If $S(e_0, i) \neq \emptyset$ and $e_{\min}$ is the $e_0$-minimal $\mathcal C$-embedding of $u$ in $\doc$ (which has been computed by one call to Algorithm~\ref{mainAlgo}), then it can be shown that $S(e_0, i)$ is the \emph{disjoint} union of $\{e_{\min}\}$ and all the sets $S(e_{\min}^{(i+1)},i), S(e_{\min}^{(i+2)},i+1), \ldots, S(e_{\min}^{(|u|)},|u|-1)$. This entails a recursive procedure to compute the set $S(e_0, 0)$, which equals $\query(\doc)$ if $e_0$ is chosen such that $e_0(i)=i$ for all $i\in[|u|]$.

We can now sketch our procedure. Initially, we are given the trivial mapping $e_0$ with $e_0(i)=i$ for all $i\in[|u|]$, and compute the $e_0$-minimal $\mathcal C$-embedding $e_{\min}$ of $u$ in $\doc$. Then, for every $i\in[|u|]$, we recursively determine the set of all $\mathcal C$-embeddings $e'$ with $e_{\min}<e'$, where $e'$ and $e_{\min}$ coincide on the first $i-1$ positions, and where $e'(i)>e_{\min}(i)$. Assume we are given some mapping $e$ and $j\in\{0,1,\ldots,|u|-1\}$, indicating that we are not allowed to move positions $1,\ldots,j$, and the $e$-minimal $\mathcal C$-embedding $e^\ast$ of $u$ in $\doc$ exists and coincides with $e$ on the first $j$ positions. Then, we output $e^\ast$, determine which parts (characterised by $(e^\ast)^{(i)}$ and $i\in[j+1,|u|]$) of the partition still contain embeddings, and recursively output the elements of all non-empty subsets. Recall that, as explained above, we can check whether a subset given by $(e^\ast)^{(i)}$ for some $i\in[j+1,|u|]$ is empty.

In order to obtain an enumeration algorithm with bounded delay, we have to be careful with respect to one aspect. It might be possible that some of the sets $S(e^{(j)}, j-1)$ with $j \in \{i+1, \ldots, |u|\}$ are empty, so if we start a recursive call for it this may lead to many chains of recursive calls that will terminate without a new result, which blows-up the delay. Thus, we need a procedure $\nextMove(e,i)$ that, given a $\mathcal C$-embedding $e$ of $u$ in $\doc$ and $i\in\{0,1,\ldots,|u|-1\}$, determines the smallest $i'\in[i+1,|u|]$ such that there is an $e^{(i')}$-minimal $\mathcal C$-embedding that does not increase any position $k\in[i'-1]$ of $e$. Thus, after having finished a recursive call for computing some $S(e_{\min}^{(i)},i-1)$, 
$\nextMove(e_{\min}, i)$ tells us for which $i' > i$ we should recursively compute the set $S(e_{\min}^{(i')},i'-1)$ next, i.e., it tells us that the computation of the sets $S(e_{\min}^{(i+1)},i), \ldots, S(e_{\min}^{(i'-1)},i'-2)$ can be skipped, since they are empty anyway. Such a $\nextMove$ procedure can be devised as follows: We simply use Algorithm~\ref{mainAlgo} (as explained above) to check emptiness for $S(e_{\min}^{(i+1)},i), S(e_{\min}^{(i+2)},i+1), \ldots$ until we find a non-empty such set (this obviously introduces a factor $O(|u|)$, which can be removed in the case that our constraint languages are both left- and right-convex; see explanations below). For the enumeration algorithm, it is also important that we can organise our recursive calls in such a way that the algorithm is tail-recursive. 
Let us now state our result (see Appendix~\ref{sec:enumDetails} for details).

\begin{theorem}\label{theorem:compvariant}
Let $u\in\Sigma^\ast$ be a query string, let $\mathcal C$ be a set of left-convex gap-constraints for $u$, and let $\doc\in\Sigma^\ast$ with $|u|\leq|\doc|$. Then we can enumerate the set $\query(\doc)$ of all $\mathcal C$-embeddings of $u$ in $\doc$ with $O(|\doc| (|u| + \lVert \mathcal C\rVert))$ preprocessing time and $O(|u|\cdot |\doc| (|u| + \lVert \mathcal C\rVert))$ delay. (Note that this entails a total computation of $\query(\doc)$ in time  $O(|\query(\doc)|\cdot|u|\cdot|\doc| (|u| + \lVert \mathcal C\rVert))$.)
\end{theorem}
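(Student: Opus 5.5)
The plan is to turn the recursive partition scheme sketched above into a tail-recursive (or explicitly stack-based) enumeration procedure, and to bound the preprocessing and the delay by counting calls to Algorithm~\ref{mainAlgo}. For preprocessing, we compute the arrays $\LRArray$ and $\SLArray$ (Lemma~\ref{computeLRSLArrays}, time $O(|\doc|\cdot\lVert\mathcal C\rVert)$). We then run Algorithm~\ref{mainAlgo} once with $e_0(i)=i$. By Lemma~\ref{mainAlgoCorrectnessLemma} this yields the $e_0$-minimal $\mathcal C$-embedding or $\bot$. Together with Lemma~\ref{mainAlgoRunningTimeLemma}, each call costs $T=O(|\doc|(|u|+\lVert\mathcal C\rVert))$ once the arrays are available.

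The first key step is the partition claim. If $S(e_0,i)\neq\emptyset$ with $e_0$-minimal embedding $e_{\min}$ (unique by Lemma~\ref{uniquenessLemma}), then $S(e_0,i)$ is the disjoint union of $\{e_{\min}\}$ and the sets $S(e_{\min}^{(j)},j-1)$ for $j\in[i+1,|u|]$. For any $e\in S(e_0,i)\setminus\{e_{\min}\}$ we have $e_{\min}\le e$ by minimality and uniqueness. So there is a first index $j$ with $e(j)>e_{\min}(j)$. Since $e$ and $e_0$ agree on $[i]$ and $e_{\min}(k)\le e(k)=e_0(k)\le e_{\min}(k)$ there, this index satisfies $j>i$. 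Hence $e\in S(e_{\min}^{(j)},j-1)$, and $j$ is uniquely determined by $e$, which gives disjointness. The converse inclusion is immediate, because $e_{\min}$ agrees with $e_0$ on $[i]$.

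Next I would set up the enumeration. I would maintain a stack of frames $(e^\ast,j,\ell)$, where $e^\ast$ is an already output embedding, $j$ is the frozen prefix length, and $\ell$ is the next index to try. A frame represents the still-pending sets $S((e^\ast)^{(\ell')},\ell'-1)$ for $\ell'\ge\ell$. Emptiness of such a set is tested by one call of Algorithm~\ref{mainAlgo} on $(e^\ast)^{(\ell')}$, followed by checking that the result agrees with $(e^\ast)^{(\ell')}$ on $[\ell'-1]$. When $\ell'>|\doc|$ the test fails trivially. $\nextMove(e^\ast,\ell-1)$ scans $\ell'=\ell,\ell+1,\dots$ and uses at most $|u|$ such calls. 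When it finds a nonempty set with minimal element $e'$, we output $e'$, advance the current frame to $\ell'+1$, and push $(e',\ell'-1,\ell')$.

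For tail recursion, if the current frame has no further nonempty parts after $\ell'$, we pop it before pushing. This guarantees that every frame on the stack has at least one pending nonempty part. Checking that invariant costs another $\nextMove$ computation, which we perform eagerly and store in the frame. With this invariant, after each output the next output is found by one $\nextMove$ on the top frame, plus one eager $\nextMove$ per modified frame. Both cost $O(|u|\cdot T)$, and no chain of fruitless pops occurs. Correctness (each embedding output exactly once) follows from the partition claim by induction.

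I expect the main obstacle to be this delay bookkeeping: ensuring that empty branches and exhausted frames never accumulate between two outputs. The eager ``has another nonempty part'' check together with popping exhausted frames is the step I would verify most carefully. Memory is not a concern, since each frame stores an $O(|u|)$ mapping. Summing the delay over all outputs gives the stated total time $O(|\query(\doc)|\cdot|u|\cdot T)$.
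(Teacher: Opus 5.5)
Your proposal is correct and follows essentially the paper's route. It uses the same partition of $S(e_0,i)$ into $\{e_{\min}\}$ and the sets $S(e_{\min}^{(j)},j-1)$ (Claim~\ref{claim:disjointsets}), the same na\"ive $\nextMove$ via at most $|u|$ calls to Algorithm~\ref{mainAlgo}, and the same tail-recursion idea: your eager ``does this frame still have a nonempty part'' check with popping of exhausted frames is exactly the paper's loop condition $\nextMove(e^\ast,j)\neq\bot$ in Algorithm~\ref{algorithm:tailrec}, which makes the last recursive call tail-recursive (minor slip: the trivially failing case is $e^\ast(\ell')+1>|\doc|$, not $\ell'>|\doc|$).
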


Setting $T$ to be the running time of Algorithm~\ref{mainAlgo}, we can therefore compute (or enumerate) $\query(\doc)$ in time $O(|\query(\doc)| \cdot |u| \cdot T)$ (or with $O(T)$ preprocessing time and $O(|u|\cdot T)$ delay). This leads to the natural question of whether the factor $|u|$ can be removed. We can show that this is indeed possible, if the constraint languages are all left- \emph{and} right-convex. This is not too unlikely, e.g., it is provided by the practically relevant length constraints.

The reason for this factor $|u|$ is that after having completed the recursive call for one set $S(e^{(i)},i-1)$, we need to know the smallest $i'$ with $i' > i$ such that $S(e^{(i')},i'-1)$ is non-empty so that we can start a recursive call for this set. We determine this by the $\nextMove$ procedure mentioned above. Now if all constraint languages are left- and right-convex, we can apply the following trick in order to implement the $\nextMove$ procedure more efficiently.

Recall that, given some $\mathcal C$-embedding $e$ of $u$ in $\doc$ and $i\in\{0,1,\ldots,|u|-1\}$, the procedure $\nextMove(e,i)$ determines the smallest $i'\in[i+1,|u|]$ such that there is an $e^{(i')}$-minimal $\mathcal C$-embedding that does not increase any position $k\in[i'-1]$ of $e$. 
Instead of checking this for each $i + 1, i + 2, \ldots$ individually, we construct a mapping $e'$ that agrees on its first $i$ positions with $e$ and has its remaining positions $i+1, i+2, \ldots, |u|$ pushed all the way to the right, i.e., $e'(j) = |\doc| - (|u| - j)$ for every $j \in [i + 1, |u|]$. Then, we compute a $\mathcal C$-embedding $e_{\max}$ of $u$ in $\doc$ that is \emph{$e'$-maximal}. For this, we simply apply a \emph{reversed} version of Algorithm~\ref{mainAlgo} (note that the concept of $e'$-maximality of $\mathcal C$-embeddings and the uniqueness Lemma~\ref{uniquenessLemma} apply analogously). Obviously, this reversed version of Algorithm~\ref{mainAlgo} is correct due to the right-convexity of the constraint languages. This $e'$-maximal $\mathcal C$-embedding of $u$ in $\doc$ determines the desired $i'$ as follows. 

Since $e_{\max}$ is $e'$-maximal, we have that $e_{\max} \leq e'$. Moreover, $e \leq e'$ and $e$ is a $\mathcal C$-embedding of $u$ in $\doc$, which means that $e$ itself is also a candidate for an $e'$-maximal $\mathcal C$-embedding of $u$ in $\doc$ and therefore $e \leq e_{\max}$. Consequently, $e \leq e_{\max} \leq e'$, which means that $e_{\max}(j) = e'(j)$ for every $j \in [i]$, since, by definition, we have that $e(j) = e'(j)$ for every $j \in [i]$.
Let now $k$ be the smallest position of $[i + 1, |u|]$ such that $e_{\max}(k) > e(k)$. This means that there must be an $e^{(k)}$-minimal $\mathcal C$-embedding of $u$ in $\doc$ that does not increase any position $j \in [k-1]$ of $e$. Moreover, since $k$ is chosen minimal, we can conclude that $k$ is our desired position $i'$. On the other hand, if there is no such $k$, then $e_{\max}=e$ and there can be no $\mathcal C$-embedding $e^\ast$ of $u$ in $\doc$ with $e<e^\ast$ that agrees with $e$ on the first $i$ positions. Thus, $S(e,i)=\{e\}$ and there are no further subsets left to consider.
In particular, we need only one call to Algorithm~\ref{mainAlgo} for this procedure.

Adapting our procedure for Theorem~\ref{theorem:compvariant} with this trick allows us to prove the following improvement in the case where all constraint languages are left- and right-convex (full details can be found in Appendix~\ref{sec:enumDetails}):

\begin{restatable}{theorem}{lrconEnumRest}\label{thm:lrconEnum}
Let $u\in\Sigma^\ast$ be a query string, let $\mathcal C$ be a set of gap-constraints for $u$ that are both left- and right-convex, and let $\doc\in\Sigma^\ast$ with $|u|\leq|\doc|$. Then we can enumerate the set $\query(\doc)$ with $O(|\doc| (|u| + \lVert \mathcal C\rVert))$ preprocessing time and $O(|\doc| (|u| + \lVert \mathcal C\rVert))$ delay. (Note that this entails a total computation of $\query(\doc)$ in time $O(|\query(\doc)|\cdot |\doc| (|u| + \lVert \mathcal C\rVert))$.)
\end{restatable}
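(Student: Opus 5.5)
We reuse the recursive enumeration scheme behind Theorem~\ref{theorem:compvariant} and change only how $\nextMove$ is implemented, replacing the up to $|u|$ calls to Algorithm~\ref{mainAlgo} by a single call to its reversed (right-convex) variant.

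\textbf{Preprocessing.} We compute the arrays $\LRArray$ and $\SLArray$ (Lemma~\ref{computeLRSLArrays}), and symmetrically the arrays for the mirrored quantities (shortest factor starting at $k+1$ in $L$, longest factor ending at $k-1$ in $L$). The mirror image of Lemma~\ref{longestRightShortestLeftLemma} holds for right-convex $L$, so these arrays support constant-time membership tests in the reversed algorithm. By the same product-graph dynamic programming, this costs $O(|\doc|\cdot\lVert\mathcal C\rVert)$. We then run Algorithm~\ref{mainAlgo} with $e_0(i)=i$ to obtain the first embedding, or report that there is none. By Lemma~\ref{mainAlgoRunningTimeLemma}, the total preprocessing time is $T=O(|\doc|(|u|+\lVert\mathcal C\rVert))$.

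\textbf{Correctness of $\nextMove$.} The reversed Algorithm~\ref{mainAlgo} moves positions leftwards starting from $e'$, where $e'$ agrees with $e$ on $[i]$ and $e'(j)=|\doc|-(|u|-j)$ for $j\in[i+1,|u|]$. Its correctness follows from the mirrored versions of Lemma~\ref{uniquenessLemma} and Lemma~\ref{mainAlgoCorrectnessLemma}, since right-convexity is exactly left-convexity of the reversed language. It returns the $e'$-maximal embedding $e_{\max}$, and the text already shows $e\le e_{\max}\le e'$.

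It remains to check that the least $k\in[i+1,|u|]$ with $e_{\max}(k)>e(k)$ is exactly $\nextMove(e,i)$.
\begin{itemize}
\item For $k$ itself: $e_{\max}\ge e^{(k)}$ and $e_{\max}$ agrees with $e$ on $[k-1]$. Hence the $e^{(k)}$-minimal embedding exists and, lying below $e_{\max}$, does not increase positions in $[k-1]$.
\item For any smaller $k'\in[i+1,k-1]$: suppose some embedding $f\ge e^{(k')}$ agrees with $e$ on $[k'-1]$. Since $f\le e'$, the pointwise maximum of $f$ and $e$ is a $\mathcal C$-embedding, because right-convexity gives max-closure exactly as in Lemma~\ref{uniquenessLemma}. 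This maximum lies below $e'$ and exceeds $e$ at $k'$. Maximality of $e_{\max}$ then forces $e_{\max}(k')>e(k')$, a contradiction.
\end{itemize}
So one call of cost $O(T)$ implements $\nextMove$.

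\textbf{Delay bound (the main obstacle).} Between two consecutive outputs we want only $O(1)$ calls of cost $T$. In the recursion, after outputting $e^\ast$ at level $j$, we call $\nextMove(e^\ast,j)$ once, giving $i'$. We then call Algorithm~\ref{mainAlgo} on $(e^\ast)^{(i')}$, which is guaranteed nonempty and yields the next output immediately.

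The difficulty arises when a subtree finishes and control returns up the recursion stack: naively this could cost one $\nextMove$ per ancestor level. To avoid this, we organise the procedure tail-recursively, as in Theorem~\ref{theorem:compvariant}. For the current $e^\ast$ we only ever need to remember the next nonempty branch index, so we compute the continuation index for the parent before descending. Concretely, before recursing into branch $i'$ of $e^\ast$ we compute $\nextMove(e^\ast,i')$. If it returns nothing, the frame is discarded (tail call), so that exhausted frames never remain on the stack. Thus every frame on the stack has a pending nonempty branch, and when a subtree ends, the next pending frame produces an output within $O(1)$ calls.

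The step I would scrutinise most is the bookkeeping needed to keep this invariant across pops. The invariant guarantees at most a constant number of $\nextMove$ and Algorithm~\ref{mainAlgo} calls per output, which gives delay $O(T)$. Combined with the preprocessing bound above, this proves the theorem.
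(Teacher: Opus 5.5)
Your proposal is correct and takes essentially the same route as the paper. The paper also uses the tail-recursive enumeration of Algorithm~\ref{algorithm:tailrec}, implements $\nextMove$ with a single call to the reversed algorithm that computes the $e'$-maximal embedding, and proves in Claim~\ref{claim:nextMoveRCON} that the least $k$ with $e_{\max}(k)>e(k)$ is the answer. One small fix: in your second bullet, apply max-closure to $f$ and $e_{\max}$ rather than to $f$ and $e$ (the maximum of $f$ and $e$ is just $f$); this yields $f\le e_{\max}$ and hence $e_{\max}(k')\geq f(k')>e(k')$.
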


\section{Hardness Results for Subsequence Matching with
Non-Left-Convex Languages}\label{sec:hardness}

Having presented our algorithm for subsequence matching with left-convex
constraint languages, we now present complexity lower bounds in the case of
non-left-convex languages.
It is known that the matching problem is NP-complete in general (see~\cite{manea2024generalisedgaps}). 
However, the reduction from~\cite{manea2024generalisedgaps} requires complicated non-left-convex constraint languages. 
In the following, we will demonstrate that if we even slightly deviate from the left-convex setting of the previous section, then we will obtain intractability again. More precisely, even if we only allow length constraints (which are rather simple left-convex gap-constraints) and in addition $\{\ta \ta, \eword\}$ as the only non-left-convex\footnote{This language is indeed not left-convex: for $u = w = \ta$ and $v = \eword$, we have $uvw = \ta \ta \in L$ and $v = \eword \in L$, but $uv = \ta \notin L$.} gap-language, 
then we will show that subsequence matching is again NP-complete. Note that the language $\{\ta \ta, \eword\}$ could be considered as one of the smallest non-left-convex languages, since every language with strictly smaller cardinality or smaller maximum word size is necessarily left-convex.%

Let us first introduce some terminology. For a fixed language $L$, the \emph{matching problem with $L$-constraints} is the restricted case of the matching problem where the input query string $u$ and regular gap-constraints $\mathcal{C}$ satisfy $\mathcal{C} \subseteq \{(i, j, L) \mid i, j \in [|u|]\}$, i.e., all gap-constraints are $L$-constraints. For the \emph{matching problem with length constraints and $L$-constraints}, we additionally allow length constraints, i.e., we consider the restricted case of the matching problem where the input is assumed to obey the restriction that $\mathcal{C} \subseteq \{(i, j, L), (i, j, [\ell, r]) \mid i, j \in [|u|], \ell, r \in \{0, 1, \ldots, |\doc|\}\}$, where $\doc$ is the input document. In the rest of this section, we assume that $\Sigma = \{\ta, \tb\}$.

We first show that the matching problem is NP-complete, even if every constraint from $\mathcal{C}$ is either a length constraint or an $L$-constraint with $L = \{\ta \ta, \eword\}$.

\begin{restatable}{theorem}{aaLangThm}
\label{aaepsLanguageHardnessTheorem}
The matching problem with length constraints and $\{\ta \ta, \varepsilon\}$-constraints is NP-complete.
\end{restatable}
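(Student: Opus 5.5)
Membership in NP is immediate: a candidate embedding $e$ can be guessed and all constraints verified in polynomial time, since membership of a factor in $\{\ta\ta,\eword\}$ or in a length interval is trivial to check. For hardness, I would reduce from 3-SAT (or, if it makes the gadgets cleaner, from 1-in-3-SAT, which is even easier to encode with exact sums). I would work over a document that is essentially $\doc=\ta^n$, possibly with a few $\tb$-delimiters, and a query $u\in\ta^*$. Then the symbol conditions are vacuous and the instance is a pure difference-constraint system over the positions $e(1)<\dots<e(|u|)$. In this system:
\begin{itemize}
\item a length constraint $(p,q,[\ell,r])$ expresses $e(q)-e(p)\in[\ell+1,r+1]$, and with $\ell=r$ it gives an exact offset;
\item a constraint $(p,p+1,\{\ta\ta,\eword\})$ expresses the binary choice $e(p+1)-e(p)\in\{1,3\}$.
\end{itemize}
The non-left-convex language thus provides exactly the disjunctive ``two-valued'' relation that min-closed difference constraints cannot express; the length constraints do everything else.

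The gadgets, in order, are as follows.
\begin{enumerate}
\item \emph{Variable gadget.} For each variable $x$ I reserve a consecutive pair of query positions $p_x,p_x+1$ with a $\{\ta\ta,\eword\}$-constraint. I read gap $\eword$ as ``false'' and gap $\ta\ta$ as ``true'', so the shift is $\delta_x=e(p_x+1)-e(p_x)-1\in\{0,2\}$.
\item \emph{Copying.} Each occurrence of $x$ in a clause gets its own pair $(q,q+1)$, again with a $\{\ta\ta,\eword\}$-constraint. I tie it to the original by exact length constraints $(p_x,q,[c,c])$ and $(p_x+1,q+1,[c,c])$ for a suitable constant $c$, so that $\delta$ is preserved.
\item \emph{Negation.} For a negative occurrence I cross the copy. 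I fix $e(q)-e(p_x)$ exactly and $e(q+1)-e(p_x+1)$ exactly but with offsets differing by $2$, so that $e(q+1)-e(q)=4-(e(p_x+1)-e(p_x))$. This swaps $1\leftrightarrow 3$, i.e., the two truth values.
\item \emph{Clause gadget.} For clause $C$ I place its three literal pairs consecutively inside a segment of $u$ delimited by positions $s_C<t_C$. The spacing between the pairs is fixed by exact length constraints, so that $e(t_C)-e(s_C)=K_C+\sum_{\text{lit}}\delta$ for a known constant $K_C$. A length constraint $(s_C,t_C,[K_C+1,K_C+5])$, suitably shifted for the ``minus one'' convention, then demands $\sum\delta\ge 2$, i.e., at least one true literal. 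For 1-in-3-SAT I would instead use the exact value $K_C+2$.
\end{enumerate}
Given these gadgets, correctness goes in two directions. From a satisfying assignment I build $e$ explicitly, laying out the gadgets left to right with generous fixed slack. Conversely, from any $\mathcal C$-embedding I read off the assignment via the $\delta_x$ values, and the clause constraints certify that it is satisfying.

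The step I expect to be the main obstacle is the global linear layout. The query positions must respect $e(1)<\dots<e(|u|)$, and every copy and negation link is an exact offset between positions that may be far apart in $u$. I therefore have to choose the order of all gadget positions in $u$ and the constants $c$ so that all offsets are positive, consistent with that order, and never force unintended overlaps. They must also hold simultaneously for both values of every $\delta$. My first attempt would place all variable gadgets first, spaced by a large constant $M$ (for instance $M=10$ times the number of gadgets), followed by the clause segments, each also spaced by $M$. All copy offsets are then positive and large, and the variable slack of at most $2$ per pair is absorbed without affecting the order. The exact inter-pair offsets inside clause segments would be anchored to $s_C$ (rather than chained), which keeps the clause sum clean. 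Finally I would take $n$ polynomially large so that the whole layout fits into $\doc$.
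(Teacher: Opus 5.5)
Your NP-membership argument is fine, and so is reading a $\{\ta\ta,\eword\}$-constraint on consecutive positions over a unary document as $e(p+1)-e(p)\in\{1,3\}$. The reduction fails, however, because your copy gadget and your clause gadget cannot coexist.

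A copy pair tied by exact offsets at both endpoints is a rigid translate of its variable pair: $e(q)=e(p_x)+a_q$ and $e(q+1)=e(p_x)+a_q+1+\delta_x$. To get $e(t_C)-e(s_C)=K_C+\sum\delta$ you must chain the three copy pairs of a clause by exact offsets. Then the chaining constraint $e(q_y)-e(q_x+1)=c$ becomes $e(p_y)-e(p_x)=\delta_x+\mathrm{const}$ on the variable gadgets. Because variables recur across clauses, these relations close cycles. For instance, positive clauses with chain orders $(x,y,z)$ and $(x,z,w)$ give $e(p_z)-e(p_x)=\delta_x+\delta_y+k$ and $e(p_z)-e(p_x)=\delta_x+k'$. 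Hence $\delta_y=k'-k$: the truth value of $y$ is fixed by your constants, independently of the formula, so the reduction is incorrect.

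This is not the layout issue you anticipate. These are exact equalities, so no choice of $M$ or of slack absorbs them; exact spacing of the variable gadgets only adds the $\delta$'s of intermediate variables to the equations. Your alternative of anchoring every pair to $s_C$ removes the cycles, but then $e(t_C)-e(s_C)$ depends only on the last literal, so the sum is lost. Separately, your negation gadget as written gives $e(q+1)-e(q)=(e(p_x+1)-e(p_x))\pm 2$, which together with $\{1,3\}$ forces the variable's value. You would need to tie $q$ to $p_x+1$ and $q+1$ to $p_x$, with offsets differing by $4$.

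The missing idea is to check each clause \emph{locally}, with auxiliary positions of bounded domain, rather than by an arithmetic sum along a chain. In the paper, the $\tb$-delimiters together with length constraints confine each $\ta$ of $u$ to a block of two or three cells of $\doc$. Exact length constraints then only synchronise choices between blocks and never accumulate shifts. In your unary setting you could emulate this by forcing $e(1)=1$ (via a length constraint to the last position) and bounding every position relative to position $1$.

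The paper's gadgets are as follows.
\begin{itemize}
\item Each variable is $\tb\ta\ta\tb$ embedded into $\tb\ta\ta\ta\ta\tb$, with a $\{\ta\ta,\eword\}$-constraint between the two $\ta$'s and two $[0,1]$-constraints.
\item In each clause gadget, an auxiliary $S_1$-$\ta$ in a three-cell block is forced to the middle exactly when $l_1=l_2=0$, using two $[0,2]$-constraints.
\item $S_1$ is copied to $S_2$ by a $[5,5]$-constraint.
\item A second $\{\ta\ta,\eword\}$-constraint maps $S_2$ non-monotonically to $L_\vee$: left or right $\mapsto 1$, middle $\mapsto 0$.
\item A final $[1,2]$-constraint between $L_\vee$ and $L_3$ excludes the all-false case.
\end{itemize}
Some auxiliary mechanism of this kind is unavoidable. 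Binary constraints among the three literal positions alone cannot exclude exactly the all-false triple, since every pair of its values also occurs in an allowed triple.
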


From this result we can also derive other interesting lower bounds. A main
ingredient of our reduction (see Appendix~\ref{sec:hardnessDetails}) is that we
can use the constraint language $\{\ta \ta, \eword\}$ to enforce that $\ta \ta$
is embedded into $\ta \ta \ta \ta$ either by mapping to the middle two
$\ta$-occurrences or to the outer two $\ta$-occurrences. However, by a slightly
more involved construction, we can show that using the language
$\{\ta\tb,\varepsilon\}$ instead of $\{\ta\ta,\varepsilon\}$ also works. 
This implies that the matching problem becomes intractable even when we restrict instances to require that each constraint language $L$ is either left-convex or right-convex, or is a length constraint.
Indeed, since $\{\ta\tb, \ta, \varepsilon\} \cap \{\ta\tb, \tb, \varepsilon\} = \{\ta\tb, \varepsilon\}$, we can ``simulate'' a constraint $(i, j, \{\ta\tb, \varepsilon\})$ (which is neither left- nor right-convex, and leads to intractability) by a left-convex constraint $(i, j, \{\ta\tb, \ta, \varepsilon\})$ and a right-convex constraint $(i, j, \{\ta\tb, \tb, \varepsilon\})$. 

Our reductions heavily use the property that we can use different constraints on different pairs of positions of $u$. This leads to the question of whether we could regain tractability by disallowing length constraints and only allowing one single regular language for all constraints in $\mathcal{C}$. By a non-trivial modification of the reduction from Theorem~\ref{aaepsLanguageHardnessTheorem} we can answer this question in the negative: there is a fixed (non-left-convex) regular language $L$ over $\Sigma = \{\ta, \tb\}$ such that the matching problem with only $L$-constraints is NP-complete even in the absence of length constraints.

\section{Conclusions and Future Work}

We have shown that using left-convex (or right-convex) languages as constraint languages in the context of CER based on subsequence matching entails efficient matching and enumeration algorithms. Moreover, our hardness results suggest that even using simple non-left-convex languages can lead to intractability. What is missing is a full dichotomy in the sense of finding a language class such that subsequence matching is tractable if and only if we use constraint languages from that language class. We believe that this is a rather challenging research task. Following the CSP approach of  Section~\ref{sec:CSP}, one direction could be to study whether more general conditions on gap-constraint languages can ensure the tractability of the CSP-instances that we construct.

Other than that, our algorithms are easy to implement and have running times that make them competitive also in a practical scenario. Consequently, prototype implementations and an experimental analysis would be a possible future research task. Furthermore, it would be interesting to investigate whether our general algorithmic approach can be adapted to query languages for complex event processing tailored to more practical scenarios (e.g.~\cite{GarciaR25}).

\bibliographystyle{ACM-Reference-Format}
\bibliography{main}

\appendix

\section{More Information About the Left-Convex Property}\label{sec:leftConvexProp}

In this section, we explore in more detail the class of left-convex languages for which we have presented an efficient algorithm. We show that left-convex languages serve as a unifying explanation of the tractability of many classes of regular languages that are interesting on their own. Then, we discuss the closure properties of the class of left-convex languages, showing that they are closed under intersection but not under union. We last discuss the complexity of testing whether a language is left-convex: we show that it is PSPACE-complete given an NFA, but that it is in polynomial time when the input is a deterministic finite automaton (DFA).

Recall the definition of the left-convex property given in the Introduction (Definition~\ref{def:lcon}). Analogously, we define a language $L$ to be \emph{right-convex} if $uvw\in L$ and $v\in L$ implies $vw \in L$. We shall use $\RCON$ to denote the class of right-convex languages. Also note that $L$ is right-convex if and only if $L^R$ is left-convex. 

\medskip
\noindent \textbf{Examples of Left-Convex Languages.}
We overview here more example languages which have the left-convex property (the proofs
that the respective classes of languages have this property are immediate and
omitted):
\begin{itemize}[nosep]
    \item Infix-closed (respectively, prefix-closed) languages,  i.e., $v \in L \implies u \in L$ for all proper infixes (respectively, prefixes) $u$ of $v$.
    \item Infix-free 
     languages, i.e., $v \in L \implies u \notin L$ for all proper infixes $u$ of $v$. 
    \item Finite languages where all words have the same length, i.e.,
      $L=\{v_1,v_2,\dots,v_k \mid |v_1|=|v_2|=\ldots=|v_k|\}$. Indeed, these are a special case of
      infix-free languages.
    \item Languages with unique start or end delimiter, i.e., $L \subseteq
    \{xv \mid v \in \Sigma^\ast\}$ or $L \subseteq \{vx \mid v \in
    \Sigma^\ast\}$, where $x \notin \Sigma$.
    \item Languages closed under left-extension, i.e., $v \in L
      \Rightarrow uv \in L$ for all $u, v \in \Sigma^*$.

        \item Languages defined by a forbidden set of infixes and a required set of infixes, i.e., given sets of words $F, R \subseteq \Sigma^*$, $L = (\Sigma^* \setminus \Sigma^* F \Sigma^*) \cap (\Sigma^* R
      \Sigma^*)$, for some alphabet $\Sigma$. \\
      Note that we can additionally add a set of forbidden prefixes and a set of required suffixes and the languages thus defined are still left-convex.
    \item Downward and upward closures of any language. \\
    Recall that the downward and upward closure of a language $L$ is the set of all subsequences (respectively, supersequences) of the words of $L$.
    \item Length constraints, i.e., languages $L = \{w \mid a
      \leq |w| \leq b\}$ for some integers $a \leq b$.
\item Languages defined by upper and lower bounding the number of occurrences of certain symbols, e.g., $L = \{w \mid 3 \leq |w|_{\ta} \leq 7 \wedge 5 \leq |w|_{\tb} \leq 23\}$, where $|w|_{\ta}$ denotes the number of occurrences of $\ta$ in $w$.
\end{itemize} 

Note that prefix- and suffix-free languages are not necessarily left-convex. For instance, the language $L=\{\ta\tb\ta,\tb\}$ is both prefix-free and suffix-free, but is not left-convex, as for $u=\ta, v=\tb, w=\ta$ we have that $uvw=\ta\tb\ta \in L$, $v = \tb \in L$, but $uv=\ta\tb\notin L$. 

Also, languages defined by a required set of prefixes ($L= S \Sigma^*$ for some set $S\subseteq \Sigma^*$, over alphabet $\Sigma$) are not necessarily left-convex. For instance, for $S=\{\ta\tb\ta,\tb\}$, let $u=\ta$, $v=\tb$, $w=\ta$, so $uvw=\ta \tb \ta \in L$, $v = \tb \in L$, but $uv=\ta\tb\notin L$. With respect to left-convex languages that are not right-convex (or right-convex languages that are not left-convex), we observe that any language of the form $S \Sigma^*$ for some $S \subseteq \Sigma^*$, even if it is not left-convex, is right-convex. Thus, the language defined by $S=\{\ta\tb\ta,\tb\}$ is an example of a right-convex language that is not left-convex (and reversing it yields an example of a left-convex language that is not right-convex). 

The language $\{\ta\ta, \eword\}$ is a very simple example of a language that is neither left- nor right-convex. This can be easily extended to classes of languages that are neither left- nor right-convex, e.g., languages of the form $\{w, \eword\}$ with $|w| \geq 2$, or languages of the form $L=\{w\in \{\ta,\tb\}^*\mid |w|_{\ta}$ is divisible by $k \}$ and $k\geq 2$.

\medskip
\noindent \textbf{Closure Properties.}
The class of left-convex languages (or right-convex languages) is obviously not closed under union, because the singleton languages $\{\eword\}$ and $\{\ta\ta\}$ are trivially both left-convex and right-convex but their union $\{\eword, \ta\ta\}$ is neither. However,
the class of left-convex languages (as well as the class of right-convex languages) is easily seen to be closed under intersection; this is particularly interesting in the context of this paper, as it shows that multiple left-convex gap-constraints $(i, j, L_1), (i, j, L_2), \ldots, (i, j, L_m)$ for the same pair $(i, j)$ of positions could always be interpreted as a single left-convex gap-constraint $(i,j, \bigcap^m_{k = 1} L_k)$. However, obtaining a representation of the language $\bigcap^m_{k = 1} L_k$ from the representations of the languages $L_k$ with $k \in [m]$ is usually computationally inefficient. Thus, it is a relevant feature of our algorithm that it can handle the situation of several left-convex constraints for the same pair of positions without the necessity of computing a representation of the intersection.

\medskip
\noindent \textbf{Membership Testing.}
Let us next consider the problem of checking whether a given regular language is left-convex.

\begin{theorem}\label{LeftConvexityCheckTheorem}
Checking whether a given NFA accepts a left-convex language is PSPACE-complete, but it can be done in polynomial time for a given DFA.
\end{theorem}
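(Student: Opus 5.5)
We have to prove two things: that deciding left-convexity is PSPACE-hard (and in PSPACE) for NFAs, and that it is polynomial for DFAs. In both cases we work with the complement of left-convexity. A language $L$ is \emph{not} left-convex if and only if there exist words $x,v,w$ with $xvw\in L$, $v\in L$ and $xv\notin L$.

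\textbf{DFA case.} Let $A=(Q,\Sigma,\delta,q_0,F)$ be a complete DFA. The witness $(x,v,w)$ can be read as a path in $A$: $x$ goes from $q_0$ to some state $p$, then $v$ goes from $p$ to a state $r$. The conditions become:
\begin{itemize}
\item $xv\notin L$ means $r\notin F$;
\item $xvw\in L$ means some word $w$ leads from $r$ to $F$;
\item $v\in L$ means the same $v$, read from $q_0$, leads to a state in $F$.
\end{itemize}
So we build the product automaton on $Q\times Q$ that reads $v$ simultaneously from $q_0$ and from $p$, for every $p$ reachable from $q_0$. We then ask whether some pair $(f,r)$ with $f\in F$, $r\notin F$ and $F$ co-reachable from $r$ is reachable from some start pair $(q_0,p)$. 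This is a reachability question in a graph with $|Q|^2$ vertices, so it takes polynomial time.

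\textbf{NFA upper bound.} We run the same search nondeterministically on the subset automaton, keeping three sets of NFA states:
\begin{itemize}
\item $P$, the set reached after $x$;
\item $R$, the set reached from $P$ while reading $v$;
\item $V$, the set reached from the initial states while reading $v$.
\end{itemize}
We also track whether $xvw\in L$ can be achieved by guessing $w$ letter by letter. More simply, we use the subset construction on the fly and store only polynomially many bits at each step. This places non-left-convexity in NPSPACE, which equals PSPACE. Since PSPACE is closed under complement, left-convexity is in PSPACE.

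\textbf{NFA hardness.} This is the main step. We reduce from NFA universality, which is PSPACE-complete, even over a fixed alphabet and with $\eword$ accepted. Given an NFA $B$ over $\Sigma$, we pick a fresh letter $\#$ and construct, in polynomial time, an NFA for
\[
L_B=\#\,\LL(B)\;\cup\;\#\,\Sigma^*\#\;\cup\;\{\eword\}\ldots
\]
The exact definition of $L_B$ has to be tuned so that $L_B$ is left-convex if and only if $\LL(B)=\Sigma^*$. The natural shape to aim for is
\[
L_B=\#\Sigma^*\#\#\;\cup\;\#\#\;\cup\;\#\LL(B)\#.
\]
Consider the decomposition $x=\#y$, $v=\#\#$, $w=\eword$ applied to the word $\#y\#\#\in L_B$. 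It forces $xv=\#y\#\#\in L_B$, which holds trivially, so this decomposition gives no constraint and the design above is wrong as it stands. We therefore have to choose the design so that the forced membership $xv\in L_B$ is exactly $\#y\#\in L_B$. That requires $v=\#$ to lie in $L_B$.

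So we try
\[
L_B=\{\#\}\;\cup\;\#\Sigma^*\#\#\;\cup\;\#\LL(B)\#.
\]
Take $xvw=\#y\#\#$ with $v=\#$ at the middle position. Then $xv=\#y\#$, and left-convexity requires $y\in\LL(B)$. Conversely, assume $\LL(B)=\Sigma^*$. We then check every triple $(x,v,w)$ with $xvw\in L_B$ and $v\in L_B$. The only infix of a word of $L_B$ that itself lies in $L_B$ is a single $\#$ or the whole word, or possibly $\#\#$ and similar small cases. In each case we verify that the prefix ending with $v$ lies in $L_B$. These cases are small, and I expect the case analysis to go through after minor adjustments, such as ensuring that the infix $\#\#$ is not in $L_B$. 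The delicate part of the whole proof is this case analysis for the converse direction; the rest is routine.
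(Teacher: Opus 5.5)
\textbf{Gap: one direction of the hardness reduction is missing.} Your DFA algorithm matches the paper's argument: take the product of $A$ with itself, start from the pairs $(q_0,p)$ with $p$ reachable, and search for a pair $(f,r)$ with $f\in F$, $r\notin F$ and $r$ co-reachable. Your on-the-fly subset-construction argument for PSPACE membership also matches. For the latter, add a step counter, or the paper's pumping bounds $2^{|Q|}$ and $(2^{|Q|})^2$ on the guessed words, so that every branch halts. The language you finally reach, $L_B=\{\#\}\cup\#\Sigma^*\#\#\cup\#\LL(B)\#$, is exactly the paper's $K$. Your witness $x=\#y$, $v=\#$, $w=\#$ for $y\notin\LL(B)$ is the paper's proof that non-universality gives non-left-convexity.

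You explicitly leave the converse open: you never show that $\LL(B)=\Sigma^*$ implies $L_B$ is left-convex. The correctness of the reduction is the whole content of the hardness proof, so the proof is incomplete without it. Moreover, the ``adjustment'' you anticipate, ensuring $\#\#\notin L_B$, cannot be made under this very hypothesis: $\eword\in\LL(B)$ forces $\#\#\in\#\LL(B)\#$.

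No adjustment is needed, and the missing argument is short:
\begin{itemize}
\item Let $xvw\in L_B$ and $v\in L_B$. Every word of $L_B$ is nonempty and begins and ends with $\#$.
\item If $x=\eword$, then $xv=v\in L_B$. If $w=\eword$, then $xv=xvw\in L_B$.
\item Otherwise, this occurrence of $v$ starts and ends at $\#$-positions of $xvw$ that are neither its first nor its last position.
\item The words $\#$ and $\#y\#$ have no such interior $\#$-position. In $\#y\#\#$ the only one is the middle $\#$, so $v=\#$, $x=\#y$, and $xv=\#y\#\in L_B$ because $y\in\Sigma^*=\LL(B)$.
\end{itemize}
Hence $L_B$ is left-convex, which closes the gap and completes the reduction.
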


\begin{proof}
Let us first start with a proof for the DFA-case, since we shall use parts of it for the NFA-case later on. 

Let $M$ be a DFA. We observe that $\mathcal{L}(M)$ is not left-convex if $M$ accepts a word $x$ that can be factorised into $x = uvw$ such that $M$ accepts $v$, but not $uv$. This is the case if and only if there is a path from the start state $q_0$ to an accepting state $q_f$ such that:
\begin{itemize}[nosep]
\item there are states $p_1$ and $p_2$ that lie on this path in that order,
\item $p_2$ is non-accepting,
\item the word $v$ read between $p_1$ and $p_2$ is accepted by $M$ (i.e., there is a $v$-labelled path from $q_0$ to some accepting state).
\end{itemize}
For the if-direction, note that if we denote the word read between $q_0$ and $p_1$ as $u$, the word read between $p_1$ and $p_2$ as $v$, and the word read between $p_2$ and $q_f$ as $w$, then we have that $uvw \in \mathcal{L}(M)$ and $v \in \mathcal{L}(M)$, but $uv \notin \mathcal{L}(M)$ (since $p_2$ is not accepting). For the only-if-direction, assume that $M$ accepts a word $x = uvw$ such that $M$ accepts $v$, but not $uv$. Then we take the accepting path of $M$ on $x$ and let $p_1$ and $p_2$ be the states reached after reading $u$ and $uv$, respectively. Obviously, the conditions stated above are satisfied.

We can therefore solve the problem as follows: In polynomial time, we can make sure that every state of $M$ is reachable from the start state and that there is exactly one state that cannot reach an accepting state, which we call the trap state. For every pair $(p_1, p_2)$ of states such that neither $p_1$ nor $p_2$ are the trap state and $p_2$ is not accepting, we check whether there is a string $v$ that brings $M$ from $p_1$ to $p_2$ and from $q_0$ to an accepting state. To this end, we construct the DFA $M_{p_1, p_2}$ that accepts all words that $M$ can read between $p_1$ and $p_2$, and then the DFA $M_{\cap}$ that accepts $\mathcal{L}(M) \cap \mathcal{L}(M_{p_1, p_2})$. Finally, we only have to check emptiness for $M_{\cap}$.

This algorithm can clearly be carried out in polynomial time. 

Let us now come to the NFA-case and let us start with the upper bound, i.e., membership to PSPACE. Let $M$ be an NFA with state set $Q$ and let $M_{\mathsf{D}}$ be the DFA obtained from $M$ by the subset construction (obviously, we cannot afford to explicitly construct $M_{\mathsf{D}}$), and let us also assume that the state set $Q_{\mathsf{D}}$ of $M_{\mathsf{D}}$ is the full power set of $Q$. We can now check whether $\mathcal{L}(M)$ is left-convex, by checking whether $\mathcal{L}(M_{\mathsf{D}})$ is left-convex as described above in the DFA-case. Hence, we have to explain how we can do this nondeterministically in space polynomial in $|M|$. 

We first guess two states $P_1$ and $P_2$ of $M_{\mathsf{D}}$ (recall that states of $M_{\mathsf{D}}$ are just subsets of $M$'s states) such that $P_2$ is not an accepting state of~$M_{\mathsf{D}}$. Next, we check whether there is a word that brings $M_{\mathsf{D}}$ from the start state to the state $P_1$. To this end, let us first observe that there is such a word if and only if there is such a word of length at most $2^{|Q|}$ (due to standard pumping arguments). Hence, we can guess a word of length $2^{|Q|}$ symbol by symbol and in parallel simulate $M_{\mathsf{D}}$ on this word, and we stop as soon as we reached $P_1$ or the word length has reached $2^{|Q|}$. In each step of this simulation, we only have to store $O(|Q|)$ states of $M$ and a number that is at most $2^{|Q|}$ (to keep track of the word length). Since we can store a number that is at most $2^{|Q|}$ with $|Q|$ bits, we only require space polynomial in $|Q|$. In a similar way, we can nondeterministically check whether there is a word that brings $M_{\mathsf{D}}$ from $P_2$ to some accepting state.  

Now, it only remains to check whether there is a word $v$ that brings $M_{\mathsf{D}}$ from $P_1$ to $P_2$ and also from the start state to some accepting state. We first observe that there is such a word if and only if there is such a word of length at most $|Q_{\mathsf{D}}|^2$ (recall that $Q_{\mathsf{D}}$ is the state set of $M_{\mathsf{D}}$). Indeed, this follows again from a pumping argument, as we now explain. If the shortest such word $v$ satisfies $|v| > |Q_{\mathsf{D}}|^2$ and there is a $v$-path $(S_0, S_1, \ldots, S_{|v|})$ from the start to an accepting state and a $v$-path $(T_0, T_1, \ldots, T_{|v|})$ from $P_1$ to $P_2$, then $(S_i, T_i) = (S_j, T_j)$ for some $i, j$ with $0 \leq i < j \leq |v|$. This means that $(S_0, \ldots, S_i, S_{j+1}, \ldots, S_{|v|})$ is a path from the start to an accepting state and $(T_0, \ldots, T_i, T_{j+1}, \ldots, T_{|v|})$ is a path from $P_1$ to $P_2$, and the label of both paths is identical. This contradicts the minimality of $v$. Hence, we only have to check if there is a word $v$ of size at most $|Q_{\mathsf{D}}|^2 = (2^{|Q|})^2$ that brings $M_{\mathsf{D}}$ from $P_1$ to $P_2$ and also from the start to an accepting state. This can be done as follows: We guess a word of length $(2^{|Q|})^2$ symbol by symbol and in parallel simulate $M_{\mathsf{D}}$ on this word starting in the start state and also simulate $M_{\mathsf{D}}$ on this word starting in $P_1$. We stop as soon as we reach an accepting state with the first simulation and $P_2$ with the second simulation, or when the word length has reached $(2^{|Q|})^2$. Again, we only require space that is polynomial in $|Q|$, since we have to store $O(|Q|)$ states of $M$ and a number of size at most $(2^{|Q|})^2$. This concludes the proof that the problem is in PSPACE.

Let us next come to the lower bound, i.e., that the problem is PSPACE-hard. We prove this by reducing from the PSPACE-hard universality problem for NFAs, i.e., the problem to decide $\mathcal{L}(M) = \Sigma^*$ for a given NFA $M$ over $\Sigma$. 

Let $M$ be an arbitrary NFA that accepts a language $L$ and we want to decide whether $L = \Sigma^*$. We define the regular language $K = \{\# x \# \mid x \in L\} \cup \{\# x \# \# \mid x \in \Sigma^*\} \cup \{\#\}$, where $\#$ is a fresh letter not in $\Sigma$, and we note that an NFA for $K$ can be easily constructed in polynomial time from~$M$. We now claim that $K$ is not left-convex if and only if $L \neq \Sigma^*$.

Let us first prove the if-direction and assume that $L \neq \Sigma^*$, which means that there is some $x \in \Sigma^*$ with $x \notin L$. We will now consider the word $\# x \# \#$ and its factorisation $u = \# x$, $v = \#$ and $w = \#$. Obviously, $uvw \in K$ and $v \in K$, but $uv = \# x \# \notin K$ because $x \notin L$. Consequently, $K$ is not left-convex.

For the only-if-direction, let us assume that $K$ is not left-convex, which means that there is some $uvw \in K$ with $v \in K$, but $uv \notin K$. Since $uvw \in K$, we have that $uvw \in \{\# x \# \mid x \in L\}$ or $uvw \in \{\# x \# \# \mid x \in \Sigma^*\}$ or $uvw \in \{\#\}$. We observe that $uvw \in \{\# x \# \mid x \in L\}$ is not possible, since then $v \in K$ implies that $v$ is the first or last $\#$-occurrence or $v = uvw$, which contradicts $uv \notin L$. Likewise, $uvw = \#$ is not possible while ensuring 
$v \in K$ and $uv \notin K$. Hence, we must have $uvw = \# x \# \#$ for some $x \in \Sigma^*$. 
If $v$ is a prefix or a suffix, we would have $v = uv$ or $uv = uvw$, respectively, which would contradict $uv \notin K$. Since $v \in K$, we conclude that $v$ must be the second occurrence of $\#$. Thus, $u = \# x$, $v = \#$ and $w = \#$. Therefore, $uv \notin K$ implies $\# x \# \notin K$, which means that $x \notin L$. Consequently, $L \neq \Sigma^*$. 
\end{proof}

\section{Full Details for Section~\ref{sec:CSP}}\label{CSPAppendix}

In this section, we discuss in more details the formulation of the subsequence matching with gap-constraints problem as a Constraint Satisfaction Problem (for short, CSP), and how this leads to a series of interesting results. As mentioned in the main part of the paper, this allows us to show that the respective problem can be solved in polynomial time when left-convex regular constraints (e.g., length constraints) are considered. 

Let us first introduce the CSP-related machinery that we will use in this section; we follow the definitions from \cite{CohenJ06}, which extend the terminology from the main part of the paper to a more general setting. This setting allows us a more precise and detailed presentation compared to that from the main part.

Let $\domain$ be a set (called domain in the following) and let $R_\domain$ denote the set of all finitary relations over $\domain$ (subsets of $\domain^k$ for any $k\geq 1$). A {\em constraint language} $\Gamma$ over $\domain$ is a subset of $R_\domain$.

We also define the set $O_\domain$ of all mappings $f:\domain^k \rightarrow \domain$, for all $k\geq 1$. A $k$-ary operation $f\in O_\domain$ is called a \emph{weak near-unanimity operation} (WNU, for short) if it satisfies $f(y, x,\ldots , x) = f(x, y, x, \ldots , x) = \cdots  = f(x, x,\ldots , x, y)$ for all $x, y \in \domain$; $f$ is called \emph{idempotent} if $f(x, x,\ldots , x) = x$ for all $x \in\domain$. A binary mapping $f\in O_\domain$ which satisfies the identities $f(x, f(y, z)) = f(f(x, y), z)$ (associativity), $f(x, y) = f(y, x)$ (commutativity), and which is idempotent is called a \emph{semilattice operation} over $\domain$. 

For  $f:\domain^k \rightarrow \domain$ and tuples $t_1, \ldots , t_k\in \domain^n$, we define $f(t_1,\ldots , t_k) \in \domain^n$ to be the $n$-tuple
$ \langle f(t_1[1], \ldots, t_k[1]), \ldots , f(t_1[n], \ldots, t_k[n])\rangle$. We say that $f:\domain^k \rightarrow \domain$ \emph{preserves} an $n$-ary relation $R$ over $\domain$ (or $f$ is a {\em polymorphism} of $R$) if $f(t_1, \ldots, t_k) \in R$ for all $n$-tuples $t_1, \ldots, t_k \in R$. For some constraint language $\Gamma$ over $\domain$, we define $\Pol(\Gamma)= \{f \in O_\domain \mid f \mbox{ preserves each relation from }\Gamma\}$. 

Now, for any domain $\domain$ and any constraint language $\Gamma$ over $\domain$, the constraint satisfaction problem $\CSP(\Gamma)$ is the decision problem which has as instance (as input) a triple ${\mathcal P}=\langle V, \domain, \const\rangle$, where $V=\{1,\ldots, m\}$ is a set of variables (denoted, for simplicity, with the first $m$ natural numbers) and $\const$ is a set of constraints $\{\const_1, \ldots,\const_q\}$, such that each constraint $\const_i\in \const$ is a pair $\langle s_i,R_i\rangle$, where $s_i\subseteq V$ is a set of variables of size $n_i$ (called the \emph{constraint scope}) and $R_i \in \Gamma$ is an $n_i$-ary relation over $\domain$, called the \emph{constraint relation}; w.l.o.g., we assume that $s_i$ is increasingly ordered, and by $s_i[j]$ we denote the $j^{th}$ element of $s_i$.  An instance ${\mathcal P}=\langle V, \domain, \const\rangle$ of $\CSP(\Gamma)$  is called \emph{normalised} if there do not exist two different constraints in $\const$ that have the same scope. A solution for an instance $\mathcal P$ of $\CSP(\Gamma)$ is a function $\phi: V\rightarrow \domain$ such that for each $\langle s_i,R_i\rangle \in \const$ the tuple $\langle \phi (s_i[1]), \ldots, \phi(s_i[n_i])\rangle$ is in $R_i$. 
Accordingly, for some instance ${\mathcal P}$ of $\CSP(\Gamma)$, we are interested in whether it admits a solution. 

Note that, in this definition of CSPs, the problems are parameterised by the constraint language $\Gamma$, which restricts the type of allowed constraints. A constraint language $\Gamma$ is said to be \emph{tractable} if $\CSP(\Gamma')$ can be solved in polynomial time, for each finite subset $\Gamma'\subseteq \Gamma$. For the sake of completeness, let us mention also that a constraint language $\Gamma$ is said to be \emph{NP-complete} if $\CSP(\Gamma')$ is NP-complete, for each finite subset $\Gamma'\subseteq \Gamma$. 

In \cite{JeavonsCG97,Jeavons98} it was shown that the tractability of $\CSP(\Gamma)$ depends on the algebraic properties of the set $\Pol(\Gamma)$ of operations which preserve all relations of $\Gamma$. Consequently, in \cite{CohenJ06} and the references therein, a series of results are overviewed, where one is interested in the complexity of solving $\CSP(\Gamma)$ when the constraint language $\Gamma$ fulfils a series of structural and algebraic properties. Finally, in a breakthrough result, a complete characterisation of the type of constraints (i.e., the constraint languages) for which the corresponding CSPs are solvable in polynomial time was given in \cite{Zhuk20} (see also \cite{Zhuk17,Bulatov17}).
\begin{theorem}[\cite{Zhuk20}]\label{thm:dichotomyCSP}
For any constraint language $\Gamma$ over a finite domain $\domain$, if $\Pol(\Gamma)$ contains a WNU, then $\Gamma$ is tractable. Otherwise, $\CSP(\Gamma)$ is NP-complete.
\end{theorem}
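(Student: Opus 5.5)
The statement is a deep external result, so I would not reprove it from scratch. I would prove it by assembling the algebraic approach to CSPs, with the two directions handled by quite different machinery. The common starting point is the Galois correspondence of Jeavons et al.~\cite{JeavonsCG97,Jeavons98}. The complexity of $\CSP(\Gamma)$ for finite $\Gamma$ depends only on $\Pol(\Gamma)$, since relations primitively positively definable from $\Gamma$ are exactly those preserved by $\Pol(\Gamma)$, and pp-definitions yield polynomial reductions. Next I would reduce to the idempotent case: replace $\Gamma$ by its core and add all unary singleton relations $\{a\}$, $a\in\domain$. This preserves the complexity up to polynomial reductions, and it makes $\Pol(\Gamma)$ idempotent.

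\emph{Hardness direction.} Suppose $\Pol(\Gamma)$ (idempotent, after the core step) contains no WNU. By the theorem of Maróti and McKenzie, an idempotent clone has a WNU if and only if it has a Taylor term. So here $\Pol(\Gamma)$ has no Taylor operation. By Taylor's theorem and the Bulatov--Jeavons--Krokhin analysis, the variety generated by the algebra $(\domain,\Pol(\Gamma))$ then contains a two-element algebra all of whose operations are projections. Such an algebra arises as a homomorphic image of a subalgebra of a finite power. Consequently every relation on $\{0,1\}$, in particular that of 1-in-3-SAT, is pp-interpretable in $\Gamma$, which gives NP-hardness. Membership in NP is trivial, since a solution is a polynomial-size certificate.

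\emph{Tractability direction.} Assume $\Pol(\Gamma)$ contains a WNU; equivalently, by Maróti--McKenzie, a Taylor (in fact cyclic, by Barto--Kozik) idempotent polymorphism. I would follow Zhuk's algorithm~\cite{Zhuk20}, which works as follows.
\begin{enumerate}
\item Enforce cycle-consistency (a strong local consistency) and irreducibility of the instance.
\item Show that every domain $D_x$ either has a proper \emph{strong subuniverse} (binary absorbing, central, or projective/PC) or the algebra on $D_x$ has a nontrivial affine quotient.
\item In the first case, the key \emph{reduction lemma} shows that if a consistent instance has a solution, then it has one after shrinking some domain to such a strong subuniverse. This lets one recurse on strictly smaller domains.
\item In the affine case, the remaining instance is solved by Gaussian elimination over the linear quotient, iterated over the possible "linear parts".
\end{enumerate}
Polynomial running time follows because each recursion level either shrinks some domain or makes a bounded number of linear-algebra calls, with depth bounded by $|\domain|$.

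The main obstacle is the reduction lemma in the tractability direction. It requires proving that consistency plus irreducibility guarantees that restricting to absorbing, central or PC subuniverses never destroys all solutions, and that the affine fallback is sound. This is the bulk of Zhuk's (and, independently, Bulatov's~\cite{Bulatov17}) work. Within this paper I would not reproduce it; I would cite \cite{Zhuk20,Zhuk17} for it, while the hardness direction and the reductions above are classical and can be stated with references.
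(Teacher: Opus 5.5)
Your proposal takes the same route as the paper, which gives no proof of this statement and simply imports it from \cite{Zhuk20}. You likewise defer the tractability direction to Zhuk (and Bulatov), and your hardness sketch is the standard Bulatov--Jeavons--Krokhin argument: reduce to the core plus constants, use Mar\'oti--McKenzie and Taylor to get a two-element projection algebra in the generated variety, then pp-interpret 1-in-3-SAT. Two harmless slips in your outline of Zhuk's algorithm: ``PC'' stands for polynomially complete rather than projective, and each level of the recursion makes polynomially many (not boundedly many) calls, with polynomial time coming from the recursion depth being bounded in terms of $|\domain|$ only.
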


In particular, the next proposition follows now immediately.
\begin{proposition}[Proposition 6.37 from \cite{CohenJ06}, originally from \cite{JeavonsCG97}]\label{prop:semilatticeCSP}
For any constraint language $\Gamma$ over a finite domain $\domain$, if $\Pol(\Gamma)$ contains a semilattice operation, then $\Gamma$ is tractable.
\end{proposition}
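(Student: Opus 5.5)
The plan is to exhibit an explicit semilattice polymorphism and then invoke Theorem~\ref{thm:dichotomyCSP}. Take the binary operation $f(x,y) = \min\{x,y\}$ on the finite domain $\domain$, which carries a linear order (for us, $\domain = [n]$). It is idempotent since $\min\{x,x\} = x$, commutative since $\min\{x,y\} = \min\{y,x\}$, and associative since $\min\{x,\min\{y,z\}\} = \min\{x,y,z\} = \min\{\min\{x,y\},z\}$. Hence $f$ is a semilattice operation; dually, $\max$ is one too. This yields a concrete instance of the hypothesis of Proposition~\ref{prop:semilatticeCSP}, which we need later for min-closed constraints.

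The main step is to show that a semilattice operation is a weak near-unanimity operation. Read as a $2$-ary operation, the WNU identities say $f(y,x) = f(x,y)$ for all $x,y \in \domain$, which is exactly commutativity. So $f \in \Pol(\Gamma)$ is itself a WNU, and Theorem~\ref{thm:dichotomyCSP} gives tractability of $\Gamma$ immediately.

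The only point where I expect an obstacle is if the arity of a WNU is required to be at least $3$, as is common in the literature. In that case I would instead define the ternary term $g(x,y,z) = f(x,f(y,z))$. Associativity and commutativity make $g$ depend only on the set $\{x,y,z\}$, and idempotency collapses repeated arguments. Hence
\[ g(y,x,x) = g(x,y,x) = g(x,x,y) = f(x,y). \]
Moreover, $g$ is a polymorphism because polymorphisms are closed under composition. Applying $f$ coordinatewise to tuples of $R$ stays in $R$, so applying it twice does as well. Thus $\Pol(\Gamma)$ contains the WNU $g$, and Theorem~\ref{thm:dichotomyCSP} again yields that $\Gamma$ is tractable.
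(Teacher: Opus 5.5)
Your argument is correct and matches the paper's: a semilattice operation is commutative, hence already a binary WNU in the paper's sense, so Theorem~\ref{thm:dichotomyCSP} applies; your ternary term $f(x,f(y,z))$ is a sound safeguard in case WNUs must have arity at least $3$. The opening paragraph about $\min$ is not needed, and its $f$ should not be conflated with the semilattice polymorphism supplied by the hypothesis, which is the operation the rest of your argument actually uses.
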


Indeed, semilattice operations are particular WNUs. Interestingly (see \cite{Bessiere06,CohenJ06}), the tractability of $\Gamma$, under the conditions of Proposition~\ref{prop:semilatticeCSP} and moreover assuming that there is a constant upper bound on the arity of the constraint relations, follows from the fact that all instances in $\CSP(\Gamma)$ can be solved in polynomial time by a class of concrete algorithms which implement an algorithmic technique called {\em enforcing generalised arc consistency} (these algorithms are significantly less complex compared to the general algorithm from \cite{Zhuk20}, but the degree of the polynomial describing their runtime depends on the aforementioned arity-bound). In particular, we recall the algorithms AC4 (defined in \cite{YuanlinY01}, as a more efficient implementation of AC3 \cite{Mackworth77}), AC2001/3.1 or GAC2001/3.1 \cite{BessiereRYZ05}, which solve a normalised instance ${\mathcal P}=\langle V,\const,\domain\rangle$, which only contains unary or binary constraints, in $O(|\const| |\domain|^2)$ time (which is also optimal~\cite{Bessiere06}).

Further, there are quite a few interesting examples of constraint languages $\Gamma$ such that $\Pol(\Gamma)$ contains a semilattice operation; see \cite{CohenJ06}. 
Relevant for our paper is that for any constraint language $\Gamma$ which is a collection of \emph{min-closed} (or \emph{max-closed}) relations, $\CSP(\Gamma)$ can be solved in polynomial time. A relation is min-closed (respectively, max-closed) if $\min\{\cdot,\cdot\}$, the binary operation computing the minimum of its two arguments, is a polymorphism of $R$ (respectively, $\max\{\cdot,\cdot\}$, the binary operation computing the maximum of its two arguments, is a polymorphism of $R$). In more detail, this means that a $k$-ary relation $R$ is min-closed, if for every pair of $k$-tuples $t,t'\in R$ we have that the $k$-tuple $\langle \min\{t[1],t'[1]\}, \ldots, \min\{t[k],t'[k]\}\rangle \in R$ (max-closed relations can also be defined analogously). Clearly, if $\Gamma$ is a collection of \emph{min-closed} (or \emph{max-closed}) relations, then $\min$ (respectively, $\max$) can be used as the semilattice operation from the statement of Proposition~\ref{prop:semilatticeCSP}, so we know from that proposition that $\Gamma$ is tractable.
As shown in \cite{JeavonsC95}, there are many relevant classes of min-closed and max-closed constraint languages. For instance, all unary constraints are min-closed, and all basic arithmetic constraints over the natural numbers in the constraint programming language CHIP \cite{HentenryckDT92} are min- and max-closed; such relations include, e.g., those relations that are defined by linear integer (in)equations on the set of variables. \looseness=-1

After this long introduction, let us come back to the subsequence matching with gap-constraints problem, and see how this can be formalised as a CSP-instance. Recall that we are given a \emph{query string} $u \in \Sigma^*$, $|u|=m$, along with a set $\mathcal{C}$ of regular gap-constraints for $u$, and a \emph{document} $\doc \in \Sigma^*$, $|\doc|=n$, and we want to see whether there is a $\mathcal{C}$-embedding of $u$ in $\doc$. We define $V=\{1,\ldots, m\}$ and $\domain=\{1,\ldots,n\}$. The constraints $\const'$ are defined as follows:
\begin{itemize}
\item For every $i\in [m]$, we define a unary constraint $\langle (i) , R_i\rangle$, where $R_i=\{ a\in [n]\mid  \doc[a]=u[i]\}$. 
\item For every $i\in [m-1]$, we define a binary constraint $\langle (i,i+1) , R_{(i,i+1)}\rangle$, where $R_{(i,i+1)} =\{(a,b)\in [n]\times [n] \mid a<b\}$. 
\item For every gap-constraint $(i,j,L)\in {\mathcal C}$, we define a binary constraint $\langle (i,j) , R_{(i,j,L)}\rangle$, where $ R_{(i,j,L)} =\{(a,b) \in [n]\times[n]\mid  \doc[a+1:b-1]\in L\}$. 
\end{itemize}
Let $\CSP'_{u,\doc,{\mathcal C}}=\langle V,\domain,\const'\rangle$ be the CSP-instance defined for the instance $u,\doc, {\mathcal C}$ of the subsequence matching with gap-constraints problem. It is immediate that this CSP-instance has a solution if and only if there exists a ${\mathcal C}$-embedding of $u$ in $\doc$. Note that, while having only unary and binary constraints, this CSP-instance is not necessarily normalised (as there can be multiple constraints for the same pair of variables).
Fortunately, it can be shown that in $O(\lVert\mathcal C \rVert |\doc|^2)$ time we can construct $\CSP'_{u,\doc,{\mathcal C}}=\langle V,\domain,\const'\rangle$ and then turn it into an equivalent normalised CSP-instance 
by efficiently intersecting the constraints having the same scope (the same proof obviously works for Lemma~\ref{lem:RegularMembershipSubstr}):

\begin{lemma}[Lemma~\ref{lem:RegularMembershipSubstr}, restated in the more general setting]
Given $u, \doc,$ and ${\mathcal C}$, we can construct in $O(\lVert{\mathcal C} \rVert |\doc|^2)$ time the CSP-instance $\CSP'_{u,\doc,{\mathcal C}}=\langle V,\domain,\const'\rangle$. In the same time complexity, we can construct $\CSP_{u,\doc,{\mathcal C}}=\langle V,\domain,\const\rangle$, a normalised CSP-instance with $O(|u|+|{\mathcal C}|)$ constraints, which has a solution if and only if there exists a ${\mathcal C}$-embedding of $u$ in $\doc$.\looseness=-1
\end{lemma}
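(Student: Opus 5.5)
The plan is to build each relation of $\const'$ explicitly as a Boolean $n\times n$ table (or a unary $n$-vector), and then to normalise by intersecting tables that share a scope. The unary constraints $R_i$ take $O(|u|\cdot|\doc|)$ time: scan $\doc$ once for each $i\in[m]$. Each order constraint $R_{(i,i+1)}$ is filled in directly in $O(|\doc|^2)$ time, giving $O(|u|\,|\doc|^2)$ in total. Since the paper treats the construction of $\CSP'$ and the normalisation together, I allow the bound $O((|u|+\lVert\mathcal C\rVert)|\doc|^2)$ here. If the bound must be read literally, the order constraints can instead be handled implicitly, since their tables are fixed and need not be materialised.

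The main step is computing $R_{(i,j,L)}=\{(a,b)\mid \doc[a+1:b-1]\in L\}$ in $O(|r|\cdot|\doc|^2)$ time, where $r$ is the regular expression with $\LL(r)=L$.
\begin{itemize}
\item Convert $r$ in linear time into an $\eword$NFA $A$ with $O(|r|)$ states and transitions, using the classical result recalled in the preliminaries.
\item Fix a start position $a\in\{0,\ldots,n\}$ and simulate $A$ on $\doc[a+1:n]$ letter by letter, maintaining the set of active states together with its $\eword$-closure.
\item After reading $\doc[a+1:b-1]$, check whether an accepting state is active. If so, set $(a,b)$ in the table. The pairs with $b=a+1$ correspond to the empty word and are covered by checking the initial closure.
\end{itemize}
Each simulation step costs $O(|A|)=O(|r|)$, because the $\eword$-closure of a state set is a graph search over the $\eword$-edges in time linear in $|A|$. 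So one start position costs $O(|r|\,|\doc|)$, all start positions together cost $O(|r|\,|\doc|^2)$, and summing over $\mathcal C$ gives $O(\lVert\mathcal C\rVert\,|\doc|^2)$. Pairs with $b\le a$ correspond to the undefined gap $\bot$ and are left false.

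Normalisation then proceeds as follows.
\begin{itemize}
\item Group the binary constraints by scope $(i,j)$, bucketing in $O(|u|+|\mathcal C|)$ time by using the pair $(i,j)$ as an index into lists.
\item Replace each group by the entrywise AND of its tables. This costs $O(|\doc|^2)$ per constraint merged, so $O(|\mathcal C|\,|\doc|^2)$ in total.
\item The order constraint on $(i,i+1)$ is merged into any gap-constraint with the same scope in the same way.
\end{itemize}
The result has at most $m+(m-1)+|\mathcal C|=O(|u|+|\mathcal C|)$ constraints. An assignment satisfies a merged constraint exactly when it satisfies every constraint in that group, so the solutions are unchanged. Combined with the observation above that solutions of $\CSP'$ are exactly the $\mathcal C$-embeddings, $\CSP_{u,\doc,\mathcal C}$ has a solution if and only if a $\mathcal C$-embedding exists.

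The step I expect to need the most care is the membership precomputation. The point is to avoid a naive membership test for each of the $\Theta(|\doc|^2)$ pairs, which would give cubic time. One NFA simulation per start position, advanced incrementally in $b$, is what brings the cost down to $|r|\cdot|\doc|^2$.
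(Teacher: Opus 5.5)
Your proposal is correct and follows the paper's proof. The paper also runs one state-set simulation of the $\eword$NFA per start position and per constraint, for $O(\lVert\mathcal C\rVert\,|\doc|^2)$ in total, and then intersects all relations with the same scope; it uses lists and a counting matrix where you use Boolean tables, which makes no real difference. Your caveat about the order constraints is a fair refinement: building $|u|-1$ separate copies of $\{(a,b)\mid a<b\}$ costs $O(|u|\,|\doc|^2)$, so the paper's claim that the remaining constraints take $O(nm)$ time implicitly relies on sharing or implicitly representing that one relation, as you suggest.
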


\begin{proof}
Let $n=|\doc|$ and $m=|u|$. To show the first part of our claim, let us recall our assumption that each regular language $L$ appearing in ${\mathcal C}$ is given as an $\eword$NFA. Thus, for each constraint $(i,j,L)\in {\mathcal C}$ and each position $a$ of $\doc$, we run (in the standard way, sometimes called state-set simulation) the $\eword$NFA accepting $L$ with the suffix $\doc[a:n]$ of $\doc$ as input string: we start in the initial state of the respective $\eword$NFA, and maintain the set of reached states after the substring $\doc[a:b]$ was processed. Whenever we reach a position $b$ such that the maintained state-set contains a final state, we add $(a-1,b+1)$ to the set $ R_{(i,j,L)} $ (defined above when $\CSP'_{u,\doc,{\mathcal C}}$ was introduced); for simplicity, the sets $ R_{(i,j,L)} $ are implemented as lists. This whole process (executed for all constraints and all suffixes of $\doc$) correctly constructs all binary constraints  $\langle (i,j) , R_{(i,j,L)}\rangle$ from the set $\const'$ of constraints. For a fixed position of $\doc$ and a fixed constraint $(i,j,L)\in {\mathcal C}$ this process described above takes $O(n \cdot s_L)$ time, where $s_L$ is the total size of the $\eword$NFA accepting $L$. Thus, the processing done for a fixed constraint and all positions of $\doc$ takes $O(n^2 s_L)$ time in total. 
Consequently, the time needed to process all constraints is $O(n^2 \lVert {\mathcal C}\rVert)=O(\lVert {\mathcal C} \rVert |\doc|^2)$. 

The other constraints can be trivially constructed in $O(nm)$ time. As $n\geq m$,  the total time needed to construct $\CSP'_{u,\doc,{\mathcal C}}=\langle V,\domain,\const'\rangle$ is $O(\lVert{\mathcal C} \rVert |\doc|^2)$. The total number of constraints in this CSP-instance is $O( |u| + |{\mathcal C}| )$.

In the next step, we normalise this CSP-instance. For each pair $(i,j)\in [m]\times [m]$ such that $\CSP'_{u,\doc,{\mathcal C}}=\langle V,\domain,\const'\rangle$ has at least one constraint $\langle (i,j), R_{(i,j,L)} \rangle$, we proceed as follows. 

Let $\langle (i,j), X_1 \rangle,\ldots, \langle (i,j), X_p \rangle$ be all constraints involving that pair $(i,j)$ (note that an $X_{\ell}$ can have the form $R_{(i, j, L)}$ or the form $R_{(i, j)}$ in the case where $j = i+1$). We define an $n\times n$ matrix $M$, whose elements are initialised with $0$; then, we traverse the lists $X_c$, with $c\in [p]$, and each time we meet a pair $(a,b)$ we increase $M[a,b]$ by $1$. In the end, we produce the list $\widehat R_{(i,j)}$ of all pairs $(a,b)$ such that $M[a,b]=p$, and define the new constraint $\langle (i,j), \widehat R_{(i,j)} \rangle$ as the single constraint involving the pair $(i,j)$. This process is equivalent to intersecting the lists $X_1,\ldots,X_p$, and takes $O(p\cdot n^2)$ time. 
So, running it on all constraints from $\CSP'_{u,\doc,{\mathcal C}}=\langle V,\domain,\const'\rangle$ produces a normalised CSP-instance, equivalent to $\CSP'_{u,\doc,{\mathcal C}}$, in time $O(|{\mathcal C}| n^2)$. The total number of constraints in this instance is still $O( |u| + |{\mathcal C}| )$.

The conclusion now follows immediately.
\end{proof}

Now, if $\CSP_{u,\doc,{\mathcal C}}$ is an instance of $\CSP(\Gamma)$ for some constraint language $\Gamma$ over $\domain$ such that $\Pol(\Gamma)$ contains a WNU, then $\CSP_{u,\doc,{\mathcal C}}$ can be solved in polynomial time. If $\Pol(\Gamma)$ contains a semilattice operation, then $\CSP_{u,\doc,{\mathcal C}}$ can be solved, by algorithms such as AC4, which enforce arc consistency, in $O((|u|+|\mathcal C|)|\doc|^2)$ time. 
In particular, this holds for the case when the constraints of $\CSP_{u,\doc,{\mathcal C}}$ are min-closed. We directly get the following result. 

\CSPsolutionRest*

This CSP-centred approach is interesting as it seems to provide the right level of abstraction for the fundamental understanding of the subsequence matching with gap-constraints problem. Due to the very particular type of the domain and of the constraints involved in the definition of $\CSP_{u,\doc,{\mathcal C}}$, the general results regarding CSP do not immediately provide a dichotomy regarding the tractability of this problem; however, they suggest that there is some hope in reaching such a result based on insights inherited from the general theory. 

Looking a bit closer at Theorem~\ref{thm:CSPsolution}, we also notice a downside. Our theorem assumes that the constraints of $\CSP_{u,\doc,{\mathcal C}}$ are min-closed. But, in fact, we can also check this property, in a straightforward manner, in polynomial time (once $\domain$ is determined by $\doc$). However, this does not fit our approach to (and motivation of) the subsequence matching problem: it would mean that if the constraints of $\CSP_{u,\doc,{\mathcal C}}$ are not min-closed, then this algorithm would not work, and this situation is only detected after the CSP is constructed. So, in the following we focus on identifying classes of regular languages which, when used as gap-constraints, guarantee that $\const$ is always min-closed (respectively, max-closed), irrespective of the document $\doc$. 

To this end, recall the definition of left-convex languages from Section~\ref{sec:intro} (Definition~\ref{def:lcon}): a language $L$ is left-convex if $uvw \in L$ and $v\in L$ implies $uv \in L$. Let $\LCON$ be the class of left-convex languages. We can immediately see 
that if all $(i, j, L) \in \mathcal{C}$ satisfy $L \in \LCON$, then all constraints of $\CSP_{u,\doc,{\mathcal C}}$ are min-closed.

\LCONmin*
 
\begin{proof}
The statement holds for the unary constraints by definition.
For a constraint $\langle (i,j),R\rangle$ of $\CSP_{u,\doc,{\mathcal C}}$, we have, on the one hand, that $R$ contains only pairs $(a,b)$ with $a<b$ and, on the other hand, that there exist left-convex languages $H_1,\ldots, H_p$ such that $(i,j,H_c)\in {\mathcal C}$ for all $c\in [p]$, and for each pair $(a,b)\in R$ we have that $\doc[a+1:b-1]\in L=\cap_{c\in [p]}H_c$.
As the class of left-convex languages is closed under intersection, $L$ is left-convex. 
(In particular, in the case where $j = i+1$ it may be the case that the constraint $\langle (i,j),R\rangle$ only arises from a constraint of the form $R_{(i,i+1)}$ and there might be no gap-constraint $(i,j,H_c)\in\mathcal C$. In this case, we have $p=0$ and $L=\Sigma^\ast$, which is left-convex.)

Now, if $(a, b), (c, d) \in R$, the situation is as follows. If the intervals $(a, b)$ and $(c, d)$ are completely disjoint or they properly overlap (i.e., $a \leq c \leq b \leq d$ or $c \leq a \leq d \leq b$), then $(\min\{a, c\}, \min\{b, d\}) \in \{(a, b), (c, d)\}$. If, on the other hand, $a \leq c < d \leq b$ (the case $c \leq a < b \leq d$ is analogous), then, setting $u = \doc[a + 1 : c]$, $v = \doc[c+1 : d - 1]$ and $w = \doc[d : b-1]$, we have that $uvw \in L$ and $v \in L$. Thus, due to the left-convexity of $L$, $uv \in L$ and $uv = \doc[a+1 : d-1] = \doc[\min\{a, c\} + 1 : \min\{b, d\} - 1]$. 
\end{proof}
 
Similarly, if $L$ is right-convex, then all constraints of $\CSP_{u,\doc,{\mathcal C}}$, the CSP-instance produced in Lemma~\ref{lem:RegularMembershipSubstr}, are max-closed. So, the result of Theorem~\ref{thm:CSPsolution} holds for the case when all gap-constraints in ${\mathcal C}$ are left-convex (respectively, right-convex). 

As mentioned in the introduction, the matching problem is NP-complete in general, and we have just shown that requiring left-convexity (or right-convexity) for the regular gap-constraint languages is sufficient for tractability. Interestingly, Appendix~\ref{sec:leftConvexProp} exhibits a plethora of left-convex languages which cover important practically relevant cases for the subsequence matching with gap-constraints problem, especially length-constrained languages. Also, Section~\ref{sec:hardness} shows that even slight deviations from the left-convex setting lead to intractability, thus suggesting that left-convexity (and right-convexity) might be a good property to consider in trying to show a dichotomy regarding the tractability of $\CSP_{u,\doc,{\mathcal C}}$.

\section{Full Details for Section~\ref{sec:mainAlgo}}\label{mainAlgoAppendix}

\subsection{Full Proof of Lemma~\ref{mainAlgoCorrectnessLemma}}

\mainAlgoCorrectnessRest*

\begin{proof}
For every $e \colon [|u|] \to [|\doc|]$, we define the following invariant.

\smallskip

\noindent \emph{Invariant $(\dagger)_{e}$}: If there exists a $\mathcal C$-embedding $e^*$ of $u$ in $\doc$ with $e_0 \leq e^*$, then $e \leq e^*$ (where $e$ is the current mapping of the algorithm). 

\smallskip

Let us first observe that before the first iteration, we have that $e = e_0$; thus, if there exists a $\mathcal C$-embedding $e^*$ of $u$ in $\doc$ with $e_0 \leq e^*$, then $e \leq e^*$ is trivially true. This shows that $(\dagger)_e$ is satisfied before the first iteration. Let us now assume that $(\dagger)_e$ is satisfied before an iteration of the while-loop. We will consider the if-statements of Lines~\ref{symbolCond},~\ref{orderCond},~and~\ref{condLineOne} separately. 

\medskip

\noindent \emph{Line~\ref{symbolCond}}: If the condition is satisfied, then we have that $u[s] \neq \doc[e(s)]$ and we will change $e$ into $e'$ with $e'(i) = e(i)$ for every $i \in [|u|] \setminus \{s\}$ and $e'(s) = e(s) + 1$. Since $(\dagger)_e$ is satisfied, we know that any $\mathcal C$-embedding $e^*$ of $u$ in $\doc$ with $e_0 \leq e^*$ satisfies $e \leq e^*$. But since $u[s] \neq \doc[e(s)]$, we also know that $e^*(s) \neq e(s)$, which means that $e'(s) = e(s) + 1 \leq e^*(s)$. Consequently, any $\mathcal C$-embedding $e^*$ of $u$ in $\doc$ with $e_0 \leq e^*$ satisfies $e' \leq e^*$, which means that $(\dagger)_{e'}$ is satisfied. 

\medskip

\noindent \emph{Line~\ref{orderCond}}: If the condition is satisfied, then we have that $s < |u|$ and $e(s) \geq e(s + 1)$. Moreover, we will change $e$ into $e'$ with $e'(i) = e(i)$ for every $i \in [|u|] \setminus \{s + 1\}$ and $e'(s + 1) = e(s) + 1$. Since $(\dagger)_e$ is satisfied, we know that any $\mathcal C$-embedding $e^*$ of $u$ in $\doc$ with $e_0 \leq e^*$ satisfies $e \leq e^*$. But since $e^*$ is an embedding of $u$ in $\doc$, we know that $e^*(s) + 1 \leq e^*(s + 1)$, which means that $e'(s + 1) = e(s) + 1 \leq e^*(s) + 1 \leq e^*(s + 1)$. Consequently, any $\mathcal C$-embedding $e^*$ of $u$ in $\doc$ with $e_0 \leq e^*$ satisfies $e' \leq e^*$, which means that $(\dagger)_{e'}$ is satisfied. 

\medskip

\noindent \emph{Line~\ref{condLineOne}}: If the condition is satisfied for some $(i, j, L) \in\mathcal C$ with $s \in \{i, j\}$, then we have that $\doc[e(i) + 1 : e(j)-1] \notin L$. There are two cases:
(1.) either $e(i) < e(j)$,
in which case $\doc[e(i) + 1 : e(j)-1]$ is defined (and possibly $\eword$, if $e(j) = e(i) + 1$);
(2.) or $e(i) \geq e(j)$, in which case
$\doc[e(i) + 1 : e(j)-1] = \bot \notin L$ (recall that we define $\doc[p : q] = \bot$ in the case that $p > q+1$).
We first discuss case~(1.).
There are two subcases, depending on whether $\longright(e(i), L) > e(j)$
or not.
In the first subcase, where $\longright(e(i), L) > e(j)$, we change $e$ into $e'$ with $e'(k) \coloneq e(k)$ for every $k \in [|u|] \setminus \{j\}$ and $e'(j) \coloneq e(j) + 1$. In particular, we also know that $\doc[e(i) + 1 : \longright(e(i), L) - 1] \in L$. If $(\dagger)_{e'}$ is not satisfied, then there is some $\mathcal C$-embedding $e^*$ of $u$ in $\doc$ with $e_0 \leq e^*$, but $e' \not\leq e^*$. Since $(\dagger)_e$ is satisfied, we know that $e \leq e^*$, which means that $e(j) = e^*(j)$. Since $\doc[e(i) + 1 : e(j)-1] \notin L$, we also know that $e'(i) = e(i) < e^*(i)$. Let us now consider the string $uvw$ with $u = \doc[e(i) + 1 : e^*(i)]$, $v = \doc[e^*(i) + 1 : e^*(j)-1]$ and $w = \doc[e^*(j) : \longright(e(i), L) - 1]$. As observed above, $uvw \in L$, and since $e^*$ is a $\mathcal C$-embedding of $u$ in $\doc$, we also know that $v \in L$, which, by the left-convexity of $L$, implies that $uv = \doc[e(i) + 1 : e(j) - 1] \in L$; a contradiction. 

In the second subcase, where $\longright(e(i), L) \leq e(j)$, we change $e$ into $e'$ with $e'(k) \coloneq e(k)$ for every $k \in [|u|] \setminus \{i\}$ and $e'(i) \coloneq e(i) + 1$. If $(\dagger)_{e'}$ is not satisfied, then there is some $\mathcal C$-embedding $e^*$ of $u$ in $\doc$ with $e_0 \leq e^*$, but $e' \not\leq e^*$. Since $(\dagger)_e$ is satisfied, we know that $e \leq e^*$, which means that $e(i) = e^*(i)$. Since $\doc[e(i) + 1 : e(j)-1] \notin L$, we also know that $e'(j) = e(j) < e^*(j)$, which means that $\longright(e(i), L) > e(j)$; a contradiction.

Let us now discuss case~(2.), where $e(i) \geq e(j)$.
Again there are two subcases, depending on whether $\longright(e(i), L) > e(j)$ or not.
In the first subcase, where
$\longright(e(i), L) > e(j)$,
the algorithm changes $e$ into $e'$ in the same way as for case~(1.)
(i.e., by shifting $e(j)$ by one position) and, just like discussed above, we can conclude that if $(\dagger)_{e'}$ is not satisfied, then there is some $\mathcal C$-embedding $e^*$ of $u$ in $\doc$ with $e_0 \leq e^*$ and $e' \not\leq e^*$, which means that $e(j) = e^*(j)$ and $e'(i) = e(i) < e^*(i)$. Hence, $e^*(i) > e^*(j)$ and therefore $\doc[e^*(i) + 1 : e^*(j)-1] = \bot \notin L$, which is a contradiction. In the second subcase, where $\longright(e(i), L) \leq e(j)$, the algorithm changes $e$ into $e'$ by shifting $e(i)$ by one position. As in case~(1.), if $(\dagger)_{e'}$ is not satisfied, then there is some $\mathcal C$-embedding $e^*$ of $u$ in $\doc$ with $e(i) = e^*(i)$ and $e'(j) = e(j) < e^*(j)$. If $e^*(i) < e^*(j)$, then $\longright(e^*(i), L) \geq e^*(j)$, which, since $e(i) = e^*(i)$ and $e(j) < e^*(j)$, means that $\longright(e(i), L) > e(j)$ --- a contradiction. If $e^*(i) \geq e^*(j)$, then $\doc[e^*(i) + 1 : e^*(j)-1] = \bot$, which contradicts $\doc[e^*(i) + 1 : e^*(j)-1] \in L$.

\medskip

This shows that $(\dagger)_{e}$ holds before every iteration of the while-loop. 

For the next part, we need the following terminology. We say that $s \in [|u|]$ \emph{satisfies the symbol condition} with respect to $e$ if $u[s] = \doc[e(s)]$, $s \in [|u|-1]$ \emph{satisfies the order condition} with respect to $e$ if $e(s) < e(s + 1)$, and $s, s' \in [|u|]$ \emph{satisfy the gap condition} with respect to $e$ if $(i, j, L) \in\mathcal C$ with $\{i, j\} = \{s, s'\}$ implies $\doc[e(i) + 1 : e(j) - 1] \in L$. Obviously, if all $s \in [|u|]$ satisfy the symbol condition and the order condition with respect to $e$, and all $s, s' \in [|u|]$ satisfy the gap condition with respect to $e$, then $e$ is a $\mathcal C$-embedding of $u$ in $\doc$. 

Next, for every $e \colon [|u|] \to [|\doc|]$ and $S \subseteq [|u|]$, we define another invariant.

\smallskip

\noindent \emph{Invariant $(\ddagger)_{e, S}$}: Each $s \in [|u|] \setminus S$ satisfies the symbol condition and the order condition with respect to $e$, and all $s, s' \in [|u|] \setminus S$ satisfy the gap condition with respect to $e$.

\smallskip

Obviously, $(\ddagger)_{e, S}$ is satisfied at the beginning of the algorithm, since $S = [|u|]$.

\begin{claim}\label{SInvariantClaim}
Let $e \colon [|u|] \to [|\doc|]$ and let $S \subseteq [|u|]$ such that $(\ddagger)_{e, S}$ is satisfied. For some $s \in [|u|]$, let $e' \colon [|u|] \to [|\doc|]$ be such that $e'(i) = e(i)$ for every $i \in [|u|] \setminus \{s\}$ and $e'(s) > e(s)$, and let $S' = S \cup \{s\}$. Then $(\ddagger)_{e', S'}$ is satisfied. 
\end{claim}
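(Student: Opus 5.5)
This is a direct locality argument: every condition in $(\ddagger)_{e', S'}$ either involves only positions in $[|u|] \setminus S'$ or is vacuous. Since $S' \supseteq S$, we have $[|u|] \setminus S' \subseteq [|u|] \setminus S$. Moreover, $s \notin [|u|] \setminus S'$, so $e$ and $e'$ agree on every position outside $S'$. The plan is to check the three kinds of conditions in turn, each time reducing them to $(\ddagger)_{e,S}$.

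\emph{Symbol condition.} Take $t \in [|u|] \setminus S'$. Then $t \neq s$, so $e'(t) = e(t)$. Since $t \notin S$, $(\ddagger)_{e,S}$ gives $u[t] = \doc[e(t)] = \doc[e'(t)]$.

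\emph{Gap condition.} Take $t, t' \in [|u|] \setminus S'$ and a constraint $(i,j,L) \in \mathcal{C}$ with $\{i,j\} = \{t,t'\}$. Neither endpoint is $s$, so $e'(i) = e(i)$ and $e'(j) = e(j)$. Both endpoints lie outside $S$, so $(\ddagger)_{e,S}$ gives $\doc[e'(i)+1:e'(j)-1] = \doc[e(i)+1:e(j)-1] \in L$.

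\emph{Order condition.} This is the only step needing care, because the order condition for $t$ mentions position $t+1$, which might equal $s$. Take $t \in [|u|-1] \setminus S'$. If $t+1 \neq s$, then $e'(t) = e(t)$ and $e'(t+1) = e(t+1)$, so $(\ddagger)_{e,S}$ gives $e'(t) < e'(t+1)$. If $t+1 = s$, then $e'(t) = e(t) < e(t+1) < e'(t+1) = e'(s)$, using $(\ddagger)_{e,S}$ for $t \notin S$ and the assumption $e'(s) > e(s)$. The increase of $e(s)$ can only break the order condition of $s$ itself, and $s \in S'$, so that condition is not required.

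This is exactly why the claim assumes a strict increase and why $s$ is added to $S$. Combining the three checks establishes $(\ddagger)_{e',S'}$.
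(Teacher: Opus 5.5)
Your proof is correct and follows the paper's proof essentially step for step: symbol and gap conditions carry over because $e$ and $e'$ agree outside $S'$, and the order condition for $t \notin S'$ survives because $s$ only moves to the right. Your explicit treatment of the case $t+1 = s$ spells out what the paper only states briefly ("moving symbol $s \neq s'$ to the right").
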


\begin{proof}
We first show that every $s' \in [|u|] \setminus S'$ satisfies the symbol condition and order condition with respect to $e'$. Note that $s' \notin S'$ implies that $s' \notin S$, which, due to $(\ddagger)_{e, S}$, means that the symbol and order conditions for $s'$ are satisfied with respect to $e$. Since $s' \neq s$, we know that $e'(s') = e(s')$; thus, the symbol condition for $s'$ is satisfied with respect to $e'$. We further note that $e'$ is obtained from $e$ by only moving symbol $s \neq s'$ to the right, which means that $e'(s') = e(s')$ is also sufficient to conclude that the order condition for $s'$ is satisfied with respect to $e'$. Let us now consider some $s_1, s_2 \in [|u|] \setminus S'$, and some arbitrary gap-constraint $(i, j, L) \in\mathcal C$ with $\{i, j\} = \{s_1, s_2\}$ (if no such gap-constraint exists, then, by definition, $s_1, s_2$ satisfy the gap condition with respect to $e'$). By definition, $i, j \notin S'$ implies that $i, j \notin S$, which, due to $(\ddagger)_{e, S}$, means that $i, j$ satisfy the gap condition with respect to $e$. Since $s \in S'$, we also know that $s \notin \{i, j\}$, which means that $e'(i) = e(i)$ and $e'(j) = e(j)$; thus, $i, j$ satisfy the gap condition with respect to $e'$. Consequently, $(\ddagger)_{e', S'}$ is satisfied.
\end{proof}

Note that Claim~\ref{SInvariantClaim} also covers the case where $s \in S$. We can now use Claim~\ref{SInvariantClaim} to show that the invariant $(\ddagger)_{e, S}$ is maintained by any iteration of the while-loop. Let us assume that $(\ddagger)_{e, S}$ holds before an iteration of the while-loop.
Since the iteration may repeatedly change $S$, let us consistently use $\widehat{S}$ to denote the current version of that set, i.e., the original set $S$ with elements removed and added as done by the iteration up to the point that we currently consider in our argument. Analogously, we denote by $\widehat{e}$ the current mapping. In order to prove that $(\ddagger)_{\widehat{e}, \widehat{S}}$ holds after the iteration, we proceed in two steps: We will first show that $(\ddagger)_{\widehat{e}, \widehat{S} \cup \{s\}}$ holds after the iteration (for which we use Claim~\ref{SInvariantClaim}), and then we show that $(\ddagger)_{\widehat{e}, \widehat{S}}$ holds. 

Let us first observe that $(\ddagger)_{\widehat{e}, \widehat{S} \cup \{s\}}$ trivially holds before the iteration since $\widehat{S} = \widehat{S} \cup \{s\}$, and let us next argue that $(\ddagger)_{\widehat{e}, \widehat{S} \cup \{s\}}$ is maintained by the iteration. Indeed, if we perform any of the code inside one of the if-statements of Lines~\ref{symbolCond},~\ref{orderCond},~and~\ref{condLineOne}, then we will change $\widehat{e}$ by only moving a single position $s'$ to the right and adding $s'$ to the current set $\widehat{S}$ and therefore to $\widehat{S} \cup \{s\}$ (note that $s'$ might already be in $\widehat{S} \cup \{s\}$). With Claim~\ref{SInvariantClaim}, this directly implies that $(\ddagger)_{\widehat{e}, \widehat{S} \cup \{s\}}$ is maintained by all these if-statements and therefore $(\ddagger)_{\widehat{e}, \widehat{S} \cup \{s\}}$ is satisfied after the iteration. 

Next, let us see that in fact $(\ddagger)_{\widehat{e}, \widehat{S}}$ must be satisfied. If we have $s \in \widehat{S}$ after the iteration, then $(\ddagger)_{\widehat{e}, \widehat{S} \cup \{s\}} = (\ddagger)_{\widehat{e}, \widehat{S}}$ and we are done. Hence, let us consider the remaining case where $s \notin \widehat{S}$, which particularly means that $e(s)$ is not moved by any modification of $\widehat{e}$ in the iteration. In this case, we can conclude that $s$ satisfies the symbol condition with respect to $\widehat{e}$ due to the fact that we did not carry out the if-statement of Line~\ref{symbolCond}, which would have added $s$ to $\widehat{S}$. Similarly, $s$ must satisfy the order condition with respect to $\widehat{e}$ due to the if-statement of Line~\ref{orderCond} (if this if-statement has not been carried out, then the order condition was already satisfied, and if it has been carried out, then moving $e(s+1)$ to $e(s) + 1$ makes the order condition satisfied). Finally, let us consider some arbitrary $i, j \in [|u|] \setminus \widehat{S}$. If $s \notin \{i, j\}$, then $i, j$ satisfy the gap condition with respect to $\widehat{e}$ due to $(\ddagger)_{\widehat{e}, \widehat{S} \cup \{s\}}$. If, on the other hand, $s \in \{i, j\}$ and there is some $(i, j, L) \in\mathcal C$, then $\doc[e(i) + 1 : e(j) - 1] \in L$ due to the if-statements of Line~\ref{condLineOne} in the foreach-loop. Indeed, since neither $i$ nor $j$ are in $\widehat{S}$ after the iteration, neither $i$ nor $j$ could have been added to $\widehat{S}$ when this instance of Line~\ref{condLineOne} was executed, which means that $\doc[e(i) + 1 : e(j) - 1] \in L$ must have been satisfied. Moreover, neither $e(i)$ nor $e(j)$ are changed in the remainder of the iteration, since neither $i$ nor $j$ are in $\widehat{S}$ after the iteration. Hence, $i, j$ satisfy the gap condition with respect to $\widehat{e}$. Therefore, we can conclude that $(\ddagger)_{\widehat{e}, \widehat{S}}$ holds.

It can be easily seen that our algorithm always terminates: In every iteration of the while-loop, we remove a position from $S$ without adding a new one to it, or we move at least one position at least one step to the right. This means that at some point either $S$ becomes empty or it contains some $s$ with $e(s) > |\doc|$. In both cases our algorithm terminates.

As proven above, before the last iteration of the while-loop, both $(\dagger)_e$ and $(\ddagger)_{e, S}$ are satisfied. In the last iteration, we either turn $S = \{s\}$ into $S = \emptyset$ and return $e$ in Line~\ref{returneLine}, or we return $\bot$ due to Line~\ref{returnBotLine}. In the first case, the conditions of Lines~\ref{symbolCond}~and~\ref{orderCond} are not satisfied and we find no gap-constraint such that the condition in Line~\ref{condLineOne} is satisfied. As a result, $e$ is not changed, and $s$ satisfies the symbol condition and the order condition with respect to $e$, and, due to $(\ddagger)_{e, S}$, all other positions $s' \in [|u|] \setminus \{s\}$ also satisfy the symbol condition and the order condition with respect to $e$. Now for every $i, j \in [|u|]$ with a gap-constraint $(i, j, L) \in\mathcal C$, if $s \notin \{i, j\}$, then $\doc[e(i) + 1 : e(j) - 1] \in L$ is implied by  $(\ddagger)_{e, S}$, and if $s \in \{i, j\}$, then $\doc[e(i) + 1 : e(j) - 1] \in L$ is implied by the fact that in this last iteration, we did not find a gap-constraint such that the condition in Line~\ref{condLineOne} is satisfied. Consequently, all $i \in [|u|]$ satisfy the symbol condition and the order condition with respect to $e$, and all $i, j \in [|u|]$ satisfy the gap condition with respect to $e$, which means that $e$ is a $\mathcal C$-embedding of $u$ in $\doc$. Moreover, $(\dagger)_e$ implies that if there exists a $\mathcal C$-embedding $e^*$ of $u$ in $\doc$ with $e_0 \leq e^*$, then $e \leq e^*$, which means that $e$ is $e_0$-minimal.

On the other hand, if we return $\bot$ in the last iteration, then this means that $e(s) > |\doc|$. Due to $(\dagger)_{e}$, we know that if there exists a $\mathcal C$-embedding $e^*$ of $u$ in $\doc$ with $e_0 \leq e^*$, then $e \leq e^*$, which would mean that $e^*(s) > |\doc|$; a contradiction. Hence, if we return $\bot$, then there is no $\mathcal C$-embedding $e^*$ of $u$ in $\doc$.
\end{proof}

\subsection{Full Proof of Lemma~\ref{mainAlgoRunningTimeLemma}}

\mainAlgoRunningTimeRest*

\begin{proof}
We say that an iteration of the while-loop is an \emph{$s$-iteration} if $s$ is
the position removed from $S$ in Line~\ref{startLine} at the beginning of the
loop. Each single $s \in [|u|]$ can cause only $O(|\doc|)$ $s$-iterations, since
when it is added to $S$ it is also moved by at least one position, and if it is
removed from $S$, it has to be moved first in order to be added again to $S$.
Moreover,
everything except the foreach-loop of an $s$-iteration is executed in constant
time. Consequently, over the whole course of the algorithm, executing
Lines~\ref{startLine}~to~\ref{endFirstPartLine} of all iterations of the
while-loop requires time $O(|\doc| \cdot |u|)$. Let us next estimate the
total time needed for executing the foreach-loop of all iterations of the while-loop.

For every $s \in [|u|]$, let $k_s$ be the number of gap-constraints $(i, j, L)$
with $s \in \{i, j\}$. Obviously, $\sum^{|u|}_{i = 1} k_i = 2|\mathcal C|$. In
every $s$-iteration there are at most $k_s$ iterations of the foreach-loop,
assuming that we have precomputed for each $s \in [|u|]$ a list of the $k_s$
constraints in which $s$ occurs: this preprocessing takes $O(|u| +
|\mathcal C|)$ time and therefore has no effect on the overall complexity.
Thus, we need time $O(|\doc| k_s)$ for all the foreach-loops inside of
$s$-iterations, using
our assumption that
the conditions of Lines~\ref{condLineOne}~and~\ref{condLineTwo} can be
checked in constant time. This means that we need time $O(\sum^{|u|}_{i = 1} |\doc| \cdot k_i) = O(|\doc| \sum^{|u|}_{i = 1} k_i) = O(|\doc| \cdot |\mathcal C|)$ for all the foreach-loops in total. 

Consequently, the total time is $O(|\doc| \cdot |u| + |\doc| \cdot |\mathcal C|) = O(|\doc|(|u| + |\mathcal C|))$.
\end{proof}

\subsection{Full Proof of Lemma~\ref{computeLRSLArrays}}\label{computeLRSLArraysAppendix}

\computeLRSLArraysRest*

\begin{proof}
Let us first recall the definition of $\longright(k, L) = \max\{t\in[k+1,|\doc|]\mid \doc[k+1:t-1]\in L\}$ and $\shortleft(k, L) = \max\{t\in[k-1]\mid \doc[t+1:k-1]\in L\}$,
with the value being 0 if such a factor does not exist.
Intuitively speaking, if $\longright(k, L)=t$, then $\doc[k+1:t-1]$ is the \emph{longest} substring of $\doc$ starting at position $k+1$ that is contained in $L$, and if $\shortleft(k, L)=t$, then $\doc[t+1:k-1]$ is the \emph{shortest} substring of $\doc$ ending at position $k-1$ that is contained in $L$. The arrays $\LRArray$ and $\SLArray$ are the data structures that contain the values given by $\longright$ and $\shortleft$, i.e., $\LRArray[k][L] = \longright(k, L)$ and $\SLArray[k][L] = \shortleft(k, L)$.
    
        In the following, given an $\eword$NFA $M_L=(Q,\Sigma,q_0,F,\delta)$ accepting the gap-constraint language $L$, we will build a product graph $G_L=(V,E)$ with nodes $(k,q)$ for $k\in[|\doc|],q\in Q$ such that there is a path from $(k,q)$ to $(k',q')$ in $G_L$ if and only if there is a path from $q$ to $q'$ in $M_L$ that is labelled with $\doc[k+1:k']$ (when ignoring $\eword$-transitions).
        Clearly, $\LRArray[k][L]=\max\{t\in[|\doc|]\mid \text{there is a path from } (k,q_0) \text{ to } (t-1,q_f) \text{ in } G_L \text{ for some } q_f\in F\}$. Thus, we can compute $\LRArray[k][L]$ by finding the longest (ignoring $\eword$-transitions) path from $(k,q_0)$, corresponding to the initial state $q_0$ at position $k$, to $(t-1,q_f)$, corresponding to some final state $q_f\in F$ at position $t-1$: This path from $(k,q_0)$ to $(t-1,q_f)$ in $G_L$ corresponds to an accepting run in $M_L$ and is labelled with $\doc[k+1:t-1]$. Similarly, $\SLArray[k][L]=\max\{t\in[|\doc|]\mid \text{there is a path from } (t,q_0) \text{ to } (k-1,q_f) \text{ in } G_L \text{ for some } q_f\in F\}$ and we can compute $\SLArray[k][L]$ by finding the shortest (when ignoring $\eword$-transitions) path from $(t,q_0)$ to $(k-1,q_f)$, corresponding to some accepting run in $M_L$ that is labelled with $\doc[t+1:k-1]$. 
        
        Before constructing $G_L$, we will preprocess the $\eword$NFA $M_L$ to remove cycles consisting only of $\eword$-transitions. 
        To do this, let $M_\eword=(Q,\Sigma,q_0,F,\{(p,\eword,q)\in\delta\})$ be the $\eword$NFA that we obtain from $M_L$ when only considering $\eword$-transitions. Using a depth-first search, we construct the set of strongly connected components of $M_\eword$ in time $O(|M_\eword|)=O(|M_L|)$. For ease of notation, let $SCC[q]$ be the strongly connected component that contains state $q\in Q$ and let $SCC[A]=\{SCC[q]\mid q\in A\}$ for subset $A\subseteq Q$. Then, instead of $M_L$ we will from now on work with the $\eword$NFA $M'_L=(SCC[Q],\Sigma, SCC[q_0], SCC[F], \delta')$, where $\delta'=\{(SCC[p],a,SCC[q])\mid (p,a,q)\in\delta\}$. Since we collapse states $p$ and $q$ into a single strongly connected component (i.e., $SCC[p]=SCC[q]$) if and only if there is an $\eword$-labelled path from $p$ to $q$ and an $\eword$-labelled path from $q$ to $p$, it is easy to see that $\LL(M_L)=\LL(M'_L)$ and $|M'_L|\leq |M_L|$. For ease of notation, from now on we write $M_L=(Q,\Sigma,q_0,F,\delta)$ while meaning $M'_L$. 

        Now, let us construct $G_L=(V,E)$. As explained above, every node in $V=\{(k,q)\mid k\in[|\doc|],q\in Q\}$ consists of two components: $k\in[|\doc|]$ symbolises that we already consumed prefix $\doc[1:k]$ (or some suffix thereof) when reaching node $(k,q)$, while $q$ stores the current state of the automaton after reading the already consumed symbols. For ease of notation, we say that node $(k,q)$ has \emph{index $k$}. We add an edge from $(k,q)$ to $(k,q')$ if there is an $\eword$-labelled edge from $q$ to $q'$ in $M_L$, and an edge from $(k,q)$ to $(k+1,q')$ if there is an edge from $q$ to $q'$ in $M_L$ that is labelled with letter $\doc[k+1]$. Clearly, there is a path from $(k,q)$ to $(k',q')$ for $k'>k$ if and only if there is a $\doc[k+1:k']$-labelled path (when ignoring $\eword$-transitions) from state $q$ to $q'$ in $M_L$, and there is a path from $(k,q)$ to $(k,q')$ if and only if there is an $\eword$-labelled path from $q$ to $q'$ in $M_L$. Clearly, we can build $G_L$ in $O(|\doc||M_L|)$ time. 

        We can now compute $\LRArray$ and $\SLArray$ using dynamic programming. Let us first consider the procedure for $\LRArray$. Since we removed all $\eword$-cycles, we can topologically sort all nodes with index $k\in[|\doc|]$ in time $O(|M_L|)$ (and for the whole graph thus in overall time $O(|\doc||M_L|)$), to maintain the following invariant: When considering node $(k,q)$, we already computed the values of all nodes reachable from $(k,q)$. Indeed, we consider all nodes $(k,q)$ in decreasing order, such that when computing the value of node $(k,q)$, all nodes with index $k+1$ have already been considered, and such that, if $q'$ is larger than $q$ in the topological sorting for index $k$, then the value of $(k,q')$ has been computed before the value of $(k,q)$. Now, for every $(k,q)$ we compute the value $v_j(k,q)$, which is the largest $k'$ such that there is a path from $(k,q)$ to some node $(k',q_f)$ with $q_f\in F$,
        or $-1$ if there is no such $k'$. It is easy to see that 
        \begin{equation*}
            v_j(k,q) = \begin{cases}
                \max(\{v_j(k',q')\mid ((k,q),(k',q'))\in E\}) & \text{if } q\notin F,\\
                \max(\{v_j(k',q')\mid ((k,q),(k',q'))\in E\}\cup\{k\}) & \text{if } q\in F,
            \end{cases}
            \end{equation*}
        where $\max(\emptyset)$ is denoted 
        by~$-1$.
        Finally, we set $\LRArray[k][L]=v_j(k,q_0)+1$. This is correct, since for $t = v_j(k,q_0)+1$,
        if $t>0$ we have that $\doc[k+1:t-1] \in L$ and $t$ is the largest such value with $t \in [k+1,|\doc|]$, and if $t=0$ we had $v_j(k,q_0)=-1$ which witnesses that we do not have any suitable factor.

        A similar approach can be used for $\SLArray$: we consider the nodes $(k,q)$ in increasing order according to their index $k$, again in decreasing order of the topological sorting of index $k$. For every node $(k,q)$ we compute the value $v_i(k,q)$, which is the largest $k'$ such that there is a path from $(k',q_0)$ to node $(k,q)$, or $-1$ if there is no such $k'$. Now, $\SLArray[k][L]=\max\{v_i(k-1,q_f)\mid q_f\in F\}$ with similar reasoning. 

        Since there are $O(|M_L|)$ edges to be considered for every index $k$, this dynamic programming approach clearly takes time $O(|\doc||M_L|)$.
    \end{proof}

\section{Full Details for the Approach of Section~\ref{gaoetalapproach}}\label{sec:GaoAppendix}

As noted in Section~\ref{gaoetalapproach}, when in an iteration of the while-loop the condition of Line~\ref{condLineOne} is satisfied, i.e., $\doc[e(i) + 1 : e(j)-1] \notin L$, then Algorithm~\ref{mainAlgo} moves either position $i$ or $j$ by only one position, i.e., either $e(i) \coloneq e(i) + 1$ or $e(j) \coloneq e(j) + 1$. However, we could as well set $e(i) \coloneq p$ and $e(j) \coloneq q$, where $p,q\in[|\doc|]$ such that $e(i) \leq p, e(j) \leq q$, $\doc[p+1:q-1]\in L$ and $p,q$ are minimal with this property (note that if there are any $p'\geq e(i),q'\geq e(j)$ such that $\doc[p'+1:q'-1]\in L$, then the uniqueness of the (pointwise) minimal pair $(p,q)$ is a direct consequence of the left-convexity of $L$; see Lemma~\ref{uniquenessLemma}). Similarly,  if $\doc[e(i)]\neq u[i]$, then $e$ cannot be a $\mathcal C$-embedding of $u$ in $\doc$, so when evaluating the symbol condition, it suffices to only consider positions $x>e(i)$ with $\doc[x]=u[i]$. 
The correctness of this follows again from the left-convexity property, as shown below. The resulting algorithm is not asymptotically faster in the worst case. However, we expect that the algorithm could be more efficient in practice, especially for instances where $\doc$ is sparse with respect to factors that match the constraint languages; indeed, shifting positions over larger chunks of $\doc$ as explained above could help finding a valid embedding (or verifying that none exists) more efficiently.
We can state the following result:

\begin{theorem}\label{GaoAlgoTheorem}
Algorithm~\ref{GaoAlgo} solves the matching problem in $O(|\doc| (|u|\log\log|\doc| + |\mathcal C|\sqrt{\log |\doc|} + \lVert \mathcal C\rVert))$ time.
\end{theorem}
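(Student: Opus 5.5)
We prove Theorem~\ref{GaoAlgoTheorem} by following the proof of Theorem~\ref{mainAlgoTheorem}, split into correctness and running time. Algorithm~\ref{GaoAlgo} is Algorithm~\ref{mainAlgo} with two jumping rules. On a symbol violation, $e(s)$ is set to the next occurrence of $u[s]$ in $\doc$ after $e(s)$. On a gap violation for $(i,j,L)$, the pair $(e(i),e(j))$ is set to the pointwise minimal pair $(p,q)$ with $e(i)\le p$, $e(j)\le q$ and $\doc[p+1:q-1]\in L$. If no such pair exists, the algorithm returns $\bot$.

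\textbf{Correctness.} We reuse the two invariants $(\dagger)_e$ and $(\ddagger)_{e,S}$ from Lemma~\ref{mainAlgoCorrectnessLemma}. Invariant $(\ddagger)$ is unaffected: Claim~\ref{SInvariantClaim} allows arbitrary rightward moves, and now both $i$ and $j$ may move, so we apply the claim twice and insert both into $S$. For $(\dagger)$ under the symbol jump, any $\mathcal C$-embedding $e^*\ge e$ has $\doc[e^*(s)]=u[s]$ and $e^*(s)>e(s)$. Hence $e^*(s)$ is at least the next occurrence of $u[s]$. For the gap jump, $(e^*(i),e^*(j))$ is a pair dominating $(e(i),e(j))$ whose gap lies in $L$. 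By the min-closure argument of Lemma~\ref{uniquenessLemma} (equivalently Lemma~\ref{lem:LCON_min_closed}), the set of such pairs is closed under pointwise minimum. Its minimum $(p,q)$ is therefore $\le (e^*(i),e^*(j))$, so $(\dagger)$ is preserved. If the set is empty, no $e^*$ exists and returning $\bot$ is correct. Termination and the final argument are then verbatim as in Lemma~\ref{mainAlgoCorrectnessLemma}.

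\textbf{Running time.} Each position still moves only rightwards, so there are $O(|\doc|)$ $s$-iterations per $s$. The counting of Lemma~\ref{mainAlgoRunningTimeLemma} then gives $O(|\doc|\cdot|u|)$ symbol steps and $O(|\doc|\cdot|\mathcal C|)$ constraint checks. Membership checks cost $O(1)$ via Lemma~\ref{longestRightShortestLeftLemma} and the arrays of Lemma~\ref{computeLRSLArrays}, which are built in $O(|\doc|\,\lVert\mathcal C\rVert)$ time. The next-occurrence query is a predecessor/successor query over the occurrence set of each letter. Using a van Emde Boas tree (or y-fast trie) over $[|\doc|]$ it costs $O(\log\log|\doc|)$, which accounts for the term $|\doc|\,|u|\log\log|\doc|$.

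\textbf{Main obstacle: computing $(p,q)$.} By Lemma~\ref{longestRightShortestLeftLemma}, $p=\min\{x\ge e(i)\mid \LRArray[x][L]\ge e(j)\}$. This is the least index in a suffix range whose value is at least a threshold, which is an orthogonal range successor query on the point set $\{(x,\LRArray[x][L])\}$. Symmetrically, $q=\min\{x\ge e(j)\mid \SLArray[x][L]\ge e(i)\}$. We must check that these two independently computed values form the minimal pair. For any dominating feasible pair $(p',q')$ we have $p\le p'$ and $q\le q'$, so it remains to show that $(p,q)$ is itself feasible. We do this by applying left-convexity exactly as in Lemma~\ref{longestRightShortestLeftLemma}: since $\LRArray[p][L]\ge e(j)$, $\SLArray[q][L]\ge e(i)$ and the minimal feasible pair dominates both, the conditions $\LRArray[p][L]\ge q$ and $\SLArray[q][L]\ge p$ should hold. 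This step needs care and is where we expect the most work. We then plug in the data structure of Gao et al.~\cite{gao_et_al:LIPIcs.ESA.2020.54}, which uses linear space, has near-linear preprocessing, and answers queries in $O(\sqrt{\log|\doc|})$ time. This gives $O(|\doc|\,|\mathcal C|\sqrt{\log|\doc|})$ for all gap steps. The preprocessing, taken per distinct language and per array, fits in the stated bound. Summing the three contributions yields the claimed running time.
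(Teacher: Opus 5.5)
\textbf{Gap: feasibility of the computed pair $(p,q)$.} This is the step you flag yourself. Your correctness argument treats the update as ``set $(e(i),e(j))$ to the pointwise minimal feasible pair''. The algorithm, however, computes $p$ and $q$ by two independent range-successor queries, and you only prove that they are lower bounds for every feasible pair dominating $(e(i),e(j))$. That $(p,q)$ is itself feasible is left as ``should hold''. Your sketch does not establish $\LRArray[p][L]\ge q$ or $\SLArray[q][L]\ge p$, and ``the minimal feasible pair dominates both'' has the direction reversed.

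This is exactly the content of the paper's Claim~\ref{claim:gaominimal}, and it closes quickly. Put $t_p=\LRArray[p][L]\ge e(j)$ and $t_q=\SLArray[q][L]\ge e(i)$.
\begin{itemize}
\item Since $\doc[t_q+1:q-1]\in L$, we get $\longright(t_q,L)\ge q\ge e(j)$. So $t_q$ lies in the set whose minimum is $p$, giving $p\le t_q$.
\item Symmetrically, $\doc[p+1:t_p-1]\in L$ gives $\shortleft(t_p,L)\ge p\ge e(i)$, so $q\le t_p$.
\item Hence $p\le t_q<q\le t_p$. Left-convexity applied to $\doc[p+1:t_p-1]\in L$ with middle factor $\doc[t_q+1:q-1]\in L$ yields $\doc[p+1:q-1]\in L$.
\end{itemize}
In your min-closure language: $(p,t_p)$ and $(t_q,q)$ are feasible pairs dominating $(e(i),e(j))$, so their pointwise minimum is feasible, and by the two inequalities it equals $(p,q)$.

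Alternatively, you can avoid feasibility altogether. Invariant $(\dagger)$ only needs the lower-bound property you did prove. Invariant $(\ddagger)$ only needs that a failed check strictly moves $i$ or $j$, so that it is re-inserted into $S$. This follows from Lemma~\ref{longestRightShortestLeftLemma}: $p=e(i)$ and $q=e(j)$ together would give $\longright(e(i),L)\ge e(j)$ and $\shortleft(e(j),L)\ge e(i)$, hence $\doc[e(i)+1:e(j)-1]\in L$, contradicting the failed check. Some such argument is required, and as written the proposal does not supply it.

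A minor point: when the two sets are empty (they are always empty simultaneously), the algorithm does not return $\bot$ at once. It sets both positions to $|\doc|+1$, which triggers $\bot$ later.

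\textbf{Running time.} Your accounting differs from the paper's and misquotes the data structure. For $N$ points on an $N\times N$ grid, Gao et al.\ give $O(N\sqrt{\log N})$ construction and $O(\log\log N)$ query time for orthogonal range successor, not $O(\sqrt{\log N})$ queries. The paper accounts for the three terms as follows.
\begin{itemize}
\item The $|\mathcal C|\sqrt{\log|\doc|}$ term is the preprocessing of the $2|\mathcal C|$ structures.
\item Next-occurrence queries take $O(1)$ via a table $\text{next}[x][a]$ precomputed in $O(|\doc|\,|u|)$ time, so no van Emde Boas tree is needed.
\item The $|u|\log\log|\doc|$ term comes from the range-successor queries. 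These run only $O(|\doc|\,|u|)$ times, because every failed check strictly moves some position (the argument above).
\end{itemize}
Your version charges vEB letter queries $O(\log\log|\doc|)$ each and $O(|\doc|\,|\mathcal C|)$ range queries $O(\sqrt{\log|\doc|})$ each. It still lands inside the claimed bound, so this part is acceptable once ``near-linear preprocessing'' is pinned down to $O(|\doc|\sqrt{\log|\doc|})$ per structure.
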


\begin{algorithm}
    \caption{$\embedLCONSubseq(u, \mathcal{C}, \doc, e_0)$}
        \label{GaoAlgo}
   \KwIn{$u, \mathcal{C}$ s.t. $\mathcal{C}$ only contains left-convex gap-constraints, $\doc \in\Sigma^\ast$, $e_0 \colon [|u|] \to [|\doc|]$.}
    \KwOut{$e_0$-minimal $\mathcal C$-embedding of $u$ in $\doc$ if it exists, and $\bot$ otherwise.}
        $e \coloneq e_0$; $S \coloneq \{1, 2, \ldots, |u|\}$\;
        \While{$S \neq \emptyset$}
        {
        Let $s \in S$ be arbitrarily chosen and $S \gets S \setminus \{s\}$\;
        \lIf{$e(s) > |\doc|$}{\textbf{return} $\bot$}
        \If{$u[s] \neq \doc[e(s)]$\label{GaosymbolCond}}
        {
        	$e(s) \gets \min(\{x\in[e(s)+1,|\doc|]\mid \doc[x]=u[s]\}\cup\{|\doc|+1\})$; $S \gets S \cup \{s\}$\label{GaoLetterQuery}\;
        }
        \If{$s < |u|$ and $e(s) \geq e(s + 1)$\label{GaoorderCond}}
        {
        	$e(s + 1) \gets e(s) + 1$; $S \gets S \cup \{s + 1\}$\;
        }
        \ForEach{$(i, j, L) \in\mathcal C$ with $s \in \{i, j\}$}
        {
        	\If{$\doc[e(i) + 1 : e(j)-1] \notin L$\label{GaocondLineOne}}
		{
            $p \gets \min(\{x\in[e(i),|\doc|]\mid \longright(x, L) \geq e(j)\}\cup\{|\doc|+1\})$\label{GaoLRQuery}\;
            $q \gets \min(\{x\in[e(j),|\doc|] \mid \shortleft(x, L)\geq e(i)\}\cup\{|\doc|+1\})$\label{GaoSLQuery}\;
            \lIf{$p>e(i)$}{$e(i)\gets p$; $S\gets S\cup \{i\}$}
            \lIf{$q>e(j)$}{$e(j)\gets q$; $S\gets S\cup \{j\}$}
        }
        }
        }
        \textbf{return} $e$\;
\end{algorithm}

\begin{proof}
The result of Theorem~\ref{GaoAlgoTheorem} relies on Algorithm~\ref{GaoAlgo}. To show the correctness and complexity of this algorithm, we show the following several intermediate results.
\begin{lemma}\label{GaoAlgoCorrectnessLemma}
On input $u \in \Sigma^*$, $\mathcal{C}$ (where $\mathcal{C}$ only contains left-convex gap-constraints), $\doc \in\Sigma^\ast$ and $e_0 \colon [|u|] \to [|\doc|]$, Algorithm~\ref{GaoAlgo} returns an $e_0$-minimal $\mathcal C$-embedding of $u$ in $\doc$ if it exists, and $\bot$ otherwise.
\end{lemma}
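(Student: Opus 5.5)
The proof will mirror the proof of Lemma~\ref{mainAlgoCorrectnessLemma}, reusing the two invariants $(\dagger)_e$ and $(\ddagger)_{e,S}$. The invariant $(\ddagger)_{e,S}$ and the termination argument carry over essentially verbatim. Every modification of $e$ in Algorithm~\ref{GaoAlgo} still moves a single position strictly to the right and adds it to $S$; the two updates in the foreach-loop are performed one after the other, so Claim~\ref{SInvariantClaim} applies twice. Positions are never moved to the left, and each move is by at least one step, since we update only if $p>e(i)$ or $q>e(j)$. Hence the loop terminates, and the final reasoning, that $S=\emptyset$ yields a $\mathcal C$-embedding and $e(s)>|\doc|$ yields $\bot$, is unchanged. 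One point needs a separate check. If the condition of Line~\ref{GaocondLineOne} holds, then at least one of $p>e(i)$, $q>e(j)$ must hold, since otherwise $p=e(i)$ and $q=e(j)$. In that case $\longright(e(i),L)\geq e(j)$ and $\shortleft(e(j),L)\geq e(i)$, so Lemma~\ref{longestRightShortestLeftLemma} gives $\doc[e(i)+1:e(j)-1]\in L$, which is a contradiction. Otherwise the algorithm could loop without progress. This argument requires $e(i)<e(j)$; the case $e(i)\geq e(j)$ is handled separately, since then $\longright(e(i),L)\geq e(j)$ would force $\longright(e(i),L)\geq e(i)+1>e(j)$, and I will verify that the definitions still force a move.

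The main work is re-proving $(\dagger)_e$ for the larger jumps. For Line~\ref{GaoLetterQuery}, let $e^*$ be any $\mathcal C$-embedding with $e_0\leq e^*$. By $(\dagger)_e$ we have $e^*(s)\geq e(s)$, and $\doc[e^*(s)]=u[s]\neq\doc[e(s)]$. Therefore $e^*(s)$ lies in the set whose minimum is taken, and so $e^*(s)\geq$ the new value. Line~\ref{GaoorderCond} is identical to the earlier proof.

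For Lines~\ref{GaoLRQuery} and~\ref{GaoSLQuery}, fix such an $e^*$, so $e(i)\leq e^*(i)$, $e(j)\leq e^*(j)$, and $\doc[e^*(i)+1:e^*(j)-1]\in L$.
\begin{itemize}
\item For $p$, I want to show that $x=e^*(i)$ satisfies $\longright(x,L)\geq e(j)$. This is immediate, because $\longright(e^*(i),L)\geq e^*(j)\geq e(j)$ by the definition of $\longright$. So $x=e^*(i)$ lies in the set and $p\leq e^*(i)$.
\item Symmetrically, $\shortleft(e^*(j),L)\geq e^*(i)\geq e(i)$. The quantity $\shortleft$ is a maximum over left endpoints, so it is at least $e^*(i)$. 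Hence $q\leq e^*(j)$.
\end{itemize}
So neither update overshoots $e^*$, and $(\dagger)$ is preserved. Note that I do not even need the claimed pointwise-minimality of $(p,q)$ among valid pairs; this upper-bound argument suffices. Left-convexity enters only through Lemma~\ref{longestRightShortestLeftLemma} in the progress argument above and, implicitly, in the definitions.

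The step I expect to be most delicate is this progress argument when $e(i)\geq e(j)$ or when the minima default to $|\doc|+1$. When a minimum defaults, the position becomes $|\doc|+1$ and the next processing of it returns $\bot$. This is correct by $(\dagger)$, provided that $\bot$-returns are triggered before any array access at index $|\doc|+1$; I will check that the order of lines in Algorithm~\ref{GaoAlgo} guarantees this, treating $\longright$ and $\shortleft$ at out-of-range indices as $0$, as defined.
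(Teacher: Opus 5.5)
Your proposal is correct and is essentially the paper's proof. Noting that $e^*(i)$ and $e^*(j)$ lie in the sets defining $p$ and $q$ is exactly the minimality half of Claim~\ref{claim:gaominimal}, and your appeal to Lemma~\ref{longestRightShortestLeftLemma} to show that a violated constraint always moves an endpoint makes explicit what the paper draws from the left-convexity half of that claim; this also settles your pending case $e(i)\geq e(j)$, where $\shortleft(e(j),L)<e(i)$ forces $q>e(j)$.

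One small correction: the progress fact is needed to keep $(\ddagger)_{e,S}$ valid, not to prevent stalling, since without it $s$ would leave $S$ while a violation persists. It only has to hold when $e(i),e(j)\leq|\doc|$, because any position pushed beyond $|\doc|$ remains in $S$ until it triggers the return of $\bot$.
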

\begin{proof}
The general correctness of Algorithm~\ref{GaoAlgo} follows the correctness proof given of Lemma~\ref{mainAlgoCorrectnessLemma}. We will only highlight the differences.
Let us assume the invariant $(\dagger)_e$ is satisfied for the current mapping of the algorithm. We will first show that this invariant is maintained by the modifications caused by the if-statements of Line~\ref{GaosymbolCond}~and~\ref{GaocondLineOne}.

\medskip

\noindent \emph{Line~\ref{GaosymbolCond}}: If the condition is satisfied, then we have that $u[s] \neq \doc[e(s)]$ and we will change $e$ into $e'$ with $e'(i) = e(i)$ for every $i \in [|u|] \setminus \{s\}$ and $e'(s) = \min(\{x\in[e(s)+1:|\doc|]\mid \doc[x]=u[s]\}\cup\{|\doc|+1\})$, i.e., the smallest position where letter $u[s]$ appears in $\doc[e(s)+1:|\doc|]$ (or $|\doc|+1$, if $u[s]$ does not occur in $\doc$ after position $e(s)$). Since $(\dagger)_e$ is satisfied, we know that any $\mathcal C$-embedding $e^*$ of $u$ in $\doc$ with $e_0 \leq e^*$ satisfies $e \leq e^*$. But since $u[s] \neq \doc[e(s)]$, we also know that $e^*(s) > e(s)$. Further, if $u[s]$ appears in $\doc[e(s):|\doc|]$, then its first occurrence must be at position $e'(s)$, by definition. Since $e^\ast$ has to satisfy $\doc[e^\ast(s)]=u[s]$, we can conclude that $e^\ast(s)\geq e'(s)$ and thus the invariant $(\dagger)_{e'}$ holds.

\medskip

\noindent \emph{Line~\ref{GaocondLineOne}}: 
If the condition is satisfied for some $(i, j, L) \in\mathcal C$ with $s \in \{i, j\}$, then we have that $\doc[e(i) + 1 : e(j)-1] \notin L$. Let us consider the following claim, which is reminiscent of Lemma~\ref{longestRightShortestLeftLemma}:

\begin{claim}\label{claim:gaominimal}
    Let $e\colon [|u|]\to [|\doc|]$ and $(i,j,L)\in \mathcal C$ with $e(i)=a, e(j)=b$ such that the sets $\{x\in[a,|\doc|]\mid \longright(x, L) \geq b\}$ and $\{x\in[b,|\doc|] \mid \shortleft(x, L)\geq a\}$ are not empty. Let $p=\min\{x\in[a,|\doc|]\mid \longright(x, L) \geq b\}$ and $q=\min\{x\in[b,|\doc|] \mid \shortleft(x, L)\geq a\}$. Then $\doc[p+1:q-1]\in L$ and $(p,q)$ is pointwise minimal with this property, i.e., there is no pair $(p',q')\in[a,|\doc|]\times [b,|\doc|]$ with $\doc[p'+1:q'-1]\in L$ and $p'<p$ or $q'<q$. 
\end{claim}
\begin{proof}
    Firstly, let us assume that $e$ satisfies gap-constraint $(i,j,L)$. By Lemma~\ref{longestRightShortestLeftLemma}, we have $\longright(e(i),L)\geq e(j)=b$ and $\shortleft(e(j),L)\geq e(i)=a$, thus $e(i)=p$ and $e(j)=q$ and the pair $(p,q)$ is clearly minimal with this property. 

    Let us now assume that $\doc[a+1:b-1]\notin L$, i.e., $e$ does not satisfy constraint $(i,j,L)$. Let $t_p = \longright(p, L)$ and $t_q = \shortleft(q, L)$. These positions are well-defined, since $p=\min\{x\in[a,|\doc|]\mid \longright(x, L) \geq b\}$ and $q=\min\{x\in[b,|\doc|] \mid \shortleft(x, L)\geq a\}$ and both sets are non-empty. Since $t_q$ is a valid choice for $x$ in the condition $\longright(x, L) \geq b$ and $t_p$ is a valid choice for $x$ in the condition $\shortleft(x, L)\geq a$, by the definition of $p$ and $q$, we can conclude that $p \leq t_q$ and $q \leq t_p$. Hence, $p \leq t_q < q \leq t_p$ with $\doc[p+ 1 : t_p - 1] \in L$ and $\doc[t_q + 1 : q - 1] \in L$. Since $L\in \LCON$, this means that $\doc[p + 1 : q - 1] \in L$. Further, if there was a factor $\doc[p'+1:q'-1]\in L$ for some $p'\geq a=e(i), q'\geq b=e(j)$ with $p'<p$ (the case $q'<q$ is analogous), then $\longright(p',L)\geq q'\geq b$ must hold, contradicting the minimality of $p$.
    
    See also Figure~\ref{fig:LRSL}. 
\end{proof}

\begin{figure}
 \centering
\includegraphics[scale=0.7]{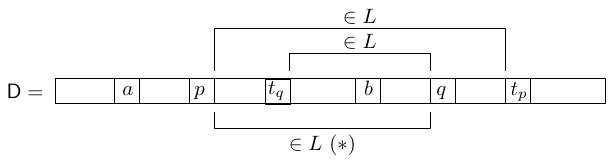}
\caption{Proof of Claim~\ref{claim:gaominimal}, where $(\ast)$ follows by left-convexity.}
\label{fig:LRSL}
\end{figure}

It directly follows from Claim~\ref{claim:gaominimal} that if there is a $\mathcal C$-embedding $e^\ast$ of $u$ in $\doc$ with $e\leq e^\ast$, then $e^\ast(i)\geq p$ and $e^\ast(j)\geq q$ must hold. Thus, if we change $e$ into $e'$ with $e'(k) = e(k)$ for every $k \in [|u|] \setminus \{i,j\}$, $e'(i)=p$, and $e'(j)=q$, then $(\dagger)_{e'}$ is satisfied as well. 

\medskip

As a direct consequence of Claim~\ref{SInvariantClaim}, the second invariant $(\ddagger)_{e,S}$ is also clearly satisfied after every iteration of the algorithm, concluding our proof.
\end{proof}

To achieve the stated complexity, we need the following result (based on the availability of data structures allowing us to quickly answer the queries in Lines~\ref{GaoLetterQuery} and~\ref{GaoLRQuery}-\ref{GaoSLQuery}, respectively). 

\begin{lemma}
    Under the assumption that we can compute the query in Line~\ref{GaoLetterQuery} in constant time and the queries in Lines~\ref{GaoLRQuery} and~\ref{GaoSLQuery} in time $O(\log\log|\doc|)$, we have that Algorithm~\ref{GaoAlgo} terminates after time $O(|\doc|(|u|\log\log|\doc|+|\mathcal C|))$. 
\end{lemma}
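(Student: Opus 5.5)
We adapt the amortised counting from the proof of Lemma~\ref{mainAlgoRunningTimeLemma}. The key invariant is that every time a position $s$ is (re-)inserted into $S$, the value $e(s)$ has strictly increased. This holds in Line~\ref{GaoLetterQuery}, where $e(s)$ jumps to the next occurrence of $u[s]$ strictly after $e(s)$ (or to $|\doc|+1$). It holds in Line~\ref{GaoorderCond}, since $e(s+1)$ is increased. It also holds in the two conditional updates after Lines~\ref{GaoLRQuery}--\ref{GaoSLQuery}, because $i$ (resp.\ $j$) is only inserted when $p>e(i)$ (resp.\ $q>e(j)$).

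Since all values stay in $[1,|\doc|+1]$ and the algorithm stops as soon as some $e(s)>|\doc|$ is popped, each $s\in[|u|]$ is popped at most $O(|\doc|)$ times. Hence there are $O(|\doc|)$ $s$-iterations for every $s$, and $O(|\doc|\cdot|u|)$ iterations of the while-loop overall.

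One subtlety needs care. If the condition of Line~\ref{GaocondLineOne} holds but neither $p>e(i)$ nor $q>e(j)$, nothing moves and nothing is inserted. This can only happen if $p=e(i)$ and $q=e(j)$. By Claim~\ref{claim:gaominimal} that would mean $\doc[e(i)+1:e(j)-1]\in L$, contradicting the condition. So there is no wasted work and no infinite loop, and it only remains to account for the cost per iteration.

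Outside the foreach-loop, each iteration does constant work plus one query of Line~\ref{GaoLetterQuery}, which costs $O(1)$ by assumption. Over all iterations this gives $O(|\doc|\cdot|u|)$.

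For the foreach-loop we reuse the lists of the $k_s$ constraints containing $s$, precomputed in $O(|u|+|\mathcal C|)$ time as before. The membership test of Line~\ref{GaocondLineOne} is $O(1)$ via Lemma~\ref{longestRightShortestLeftLemma} and the arrays $\LRArray,\SLArray$. Lines~\ref{GaoLRQuery}--\ref{GaoSLQuery} cost $O(\log\log|\doc|)$ each. A naive bound therefore charges $O(|\doc|k_s\log\log|\doc|)$ to position $s$, which would yield $|\mathcal C|\log\log|\doc|$ rather than the claimed $|\mathcal C|$ term. This mismatch is the step I expect to be the main obstacle.

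To remove the $\log\log$ factor from the $|\mathcal C|$ term, I would separate the cheap checks from the expensive queries. The $O(1)$ membership checks are performed $O(|\doc|k_s)$ times in total, giving $O(|\doc|\cdot|\mathcal C|)$. The expensive queries of Lines~\ref{GaoLRQuery}--\ref{GaoSLQuery} run only when Line~\ref{GaocondLineOne} fires. By the non-stall argument above, each such firing strictly moves $e(i)$ or $e(j)$ and inserts that position into $S$. So every expensive query can be charged to a strict increase of some $e(s)$, i.e.\ to a future $s$-iteration. A single iteration may trigger several such queries, but each one corresponds to an increment of some coordinate.

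The point I need to verify is whether these charges can be bounded by $O(|\doc|)$ per position, rather than $O(|\doc|)$ per constraint. If several constraints at the same $s$ all push a common coordinate $e(i)$ within one iteration, they are still separate strict increases of $e(i)$, each advancing it by at least one. Increases of $e(i)$ total at most $|\doc|$ in value, so the number of increases of $e(i)$ is at most $|\doc|+1$. If that holds, the number of expensive queries is at most $2\sum_s(|\doc|+1)=O(|\doc|\cdot|u|)$, costing $O(|\doc|\cdot|u|\log\log|\doc|)$.

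Summing the three contributions gives $O(|\doc|(|u|\log\log|\doc|+|\mathcal C|))$ as claimed.
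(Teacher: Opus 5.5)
Your proof is correct and follows the same route as the paper: the $O(1)$ checks and updates are bounded as in Lemma~\ref{mainAlgoRunningTimeLemma}, and each execution of Lines~\ref{GaoLRQuery}--\ref{GaoSLQuery} is charged to a strict increase of some coordinate of $e$, so there are only $O(|\doc|\cdot|u|)$ of them. Your appeal to Claim~\ref{claim:gaominimal} makes explicit the paper's one-line assertion that every firing of Line~\ref{GaocondLineOne} moves a position; strictly speaking it does not cover the boundary case where both $e(i)$ and $e(j)$ already exceed $|\doc|$ (then nothing moves), but this can only happen in the at most $|u|$ iterations in which the popped coordinate first passes $|\doc|$, so it costs $O(|\mathcal C|\log\log|\doc|)$ in total and is harmless.
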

\begin{proof}
    Apart from the time needed to execute the Lines~\ref{GaoLetterQuery},~\ref{GaoLRQuery}~and~\ref{GaoSLQuery}, the complexity directly carries over from the proof of Lemma~\ref{mainAlgoRunningTimeLemma}. Namely, the time needed to check the conditions of Lines~\ref{GaosymbolCond},~\ref{GaoorderCond},~and~\ref{GaocondLineOne}, as well as the time needed for updating $e$ and $S$, is $O(|\doc|(|u|+|\mathcal C|))$. Further, over the whole computation, the time spent computing Line~\ref{GaoLetterQuery} is clearly $O(|\doc|\cdot|u|)$, since we evaluate the symbol condition in Line~\ref{GaosymbolCond} only $O(|\doc|\cdot|u|)$ many times, thus upper bounding the number of times Line~\ref{GaoLetterQuery} is executed.

    Now, let us consider the time needed to execute Lines~\ref{GaoLRQuery}~and~\ref{GaoSLQuery}. We can trivially bound the number of iterations of those lines by $O(|\doc|\cdot|\mathcal C|)$, thus leading to an overall factor of $O(|\doc|\cdot|\mathcal C|\log\log|\doc|)$. However, this is not optimal. As we only execute the Lines~\ref{GaoLRQuery}~and~\ref{GaoSLQuery} for a given $(i,j,L)$-iteration if and only if $\doc[e(i)+1:e(j)-1]\notin L$, we know that at least one position must be increased by this call. Thus, there can be at most $O(|\doc|\cdot|u|)$ many $(i,j,L)$-iterations that satisfy the condition of Line~\ref{GaocondLineOne}, so the time needed to execute Lines~\ref{GaoLRQuery}~and~\ref{GaoSLQuery} is $O(|\doc|\cdot|u|\log\log|\doc|)$, concluding the proof.
\end{proof}

As stated before the previous lemma, in order to conclude the proof of Theorem~\ref{GaoAlgoTheorem}, we have to show that the queries in Line~\ref{GaoLetterQuery}, as well as Lines~\ref{GaoLRQuery} and~\ref{GaoSLQuery}, can be computed in constant, respectively $O(\log\log|\doc|)$, time. 
Firstly, it is not hard to see that we can precompute a matrix $\text{next}[i][a]=\min\{x\in[i,|\doc|]\mid \doc[x]=a\}$ for $i\in[|\doc|]$ and $a\in\Sigma'$, where $\Sigma'=\{u[i]\mid i\in[|u|]\}$ is the alphabet of $u$, by a simple dynamic programming approach in $O(|\doc|\cdot|u|)$ time.

To efficiently compute the values $p=\min\{x\in[e(i),|\doc|]\mid \longright(x, L) \geq e(j)\}$ and $q=\min\{x\in[e(j),|\doc|] \mid \shortleft(x, L)\geq e(i)\}$ for given $e\colon [|u|]\to[|\doc|]$ and gap-constraint $(i,j,L)\in\mathcal C$, we employ a data structure result by Gao et al.~\cite{gao_et_al:LIPIcs.ESA.2020.54}. 
They show that, given a set $P$ of $N$ points over an $N \times N$ grid, we can construct in $O(N \sqrt{\log N})$ time a data structure that allows us to answer \emph{orthogonal range successor} queries: given a rectangle $R$, the query returns the point in $P \cap R$ with the smallest $x$-coordinate in $O(\log\log N)$ time. With this data structure we can compute $p$ and $q$ in Lines~\ref{GaoLRQuery}~and~\ref{GaoSLQuery} as follows. 

For every gap-constraint language $L$ in $\mathcal C$, we construct two data structures for orthogonal range successor queries on the $|\doc| \times |\doc|$ grid. The first one is for the set of points $P_L^{\LRArray}=\{(k,\LRArray[k][L]) \mid k\in[|\doc|]\}$, while the second is for the set of points $P^{\SLArray}_{L}=\{(k,\SLArray[k][L])\mid k\in[|\doc|]\}$; given the arrays $\LRArray[\cdot][L]$ and $\SLArray[\cdot][L]$ computed earlier, this takes $O(|\doc|\sqrt{\log|\doc|})$ time. In order to compute $p$ and $q$, we use these data structures as follows. Since $p=\min\{x\in[e(i),|\doc|]\mid \longright(x, L) \geq e(j)\}$, we obtain $p$ as the $x$-coordinate of the point of minimal $x$-coordinate in $R \cap P_L^{\LRArray}$, where $R\coloneq [e(i),|\doc|]\times[e(j),|\doc|]$; see also Figure~\ref{fig:gaostructure} for an illustration of the procedure for $P_L^{\LRArray}$. Similarly, $q=\min\{x\in[e(j),|\doc|]\mid \shortleft(x, L) \geq e(i)\}$ is obtained as the $x$-coordinate of the point of minimal $x$-coordinate in $P_L^{\SLArray}\cap R'$, where $R'\coloneq [e(j),|\doc|]\times[e(i),|\doc|]$. If any of the two orthogonal range successor queries does not return any point, then there is no pair $(p',q')$ with $p'\geq e(i), q'\geq e(j)$ and $\doc[p'+1:q'-1]\in L$. Thus, after a preprocessing time of $O(|\mathcal{C}| \cdot |\doc| \sqrt{\log |\doc|})$, for each gap-constraint language $L$ we can execute the queries in Lines~\ref{GaoLRQuery}~and~\ref{GaoSLQuery} in $O(\log\log|\doc|)$ time.

\begin{figure}[t]
 \centering
\includegraphics[scale=0.7]{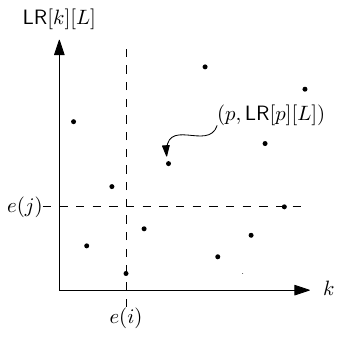}
\caption{Illustration of the computation of $p=\min\{x\in[e(i),|\doc|]\mid \longright(x,L)\geq e(j)\}$ using orthogonal range successor queries.}
\label{fig:gaostructure}
\end{figure}

This shows that the statement of Theorem~\ref{GaoAlgoTheorem} holds, and
concludes the proof of Theorem~\ref{GaoAlgoTheorem}.\end{proof} 

We conclude this section with a series of observations about two particularly relevant classes of gap-constraints, for which we can implement the variant of the algorithm from Theorem~\ref{mainAlgoTheorem}, which jumps over larger factors, quicker than the standard variant of this algorithm. 

\begin{observation}\label{obs:LengthSingleton}
First, let us consider the case when all gap-constraints are length constraints, i.e., constraints of the form $(i,j,L)$, with $L = \{w \in \Sigma^* \mid a \leq |w| \leq b\}$. In this case, it is natural to assume that these constraints are simply given as triples $(i,j,[a,b])$, with both $a$ and $b$ given in their binary representation; moreover, we can assume $a\geq j-i-1$ (otherwise, we can replace the constraint $[a,b]$ by $[j-i-1,b]$). It is not hard to see that, in this case, one can check, for a constraint $(i,j,[a,b])$, in constant time (without any preprocessing) whether some factor $\doc[p+1:q-1]$ satisfies the respective constraint; one simply needs to check whether $a\leq q-p-1\leq b$. So, we run the variant of our algorithm which jumps over longer factors, without any of the usual preprocessing. According to our approach, if in an iteration of the while-loop the condition of Line~\ref{condLineOne} (of the general algorithm) is satisfied, i.e., $\doc[e(i) + 1 : e(j)-1] \notin L$, we would like to set $e(i) \coloneq p$ and $e(j) \coloneq q$, where $p,q\in[|\doc|]$ such that $e(i) \leq p, e(j) \leq q$, $\doc[p+1:q-1]\in L$ and $p,q$ are minimal with this property. Let us note that if  $\doc[e(i) + 1 : e(j)-1] \notin L$, then either the factor $\doc[e(i)+1:e(j)-1]$ is too long or too short w.r.t. the constraint $[a,b]$. If $|\doc[e(i)+1:e(j)-1]|<a$, then we can set $e(i),e(j)$ to be the pair $(e(i),e(i)+a+1)$, and if $|\doc[e(i)+1:e(j)-1]|>b$, then we return the pair $(e(j)-b-1,e(j))$. Based on these insights, in this case, our algorithm runs in $O(|\doc| (|u|+|\mathcal{C}|))$ time.

Secondly, we consider the case when all gap-constraints are singletons, i.e., constraints of the form $(i,j,L)$, with $L = \{w\}$, for some string $w$. Assume that $L_1=\{w_1\},\ldots, L_{|\mathcal{C}|}=\{w_{|\mathcal{C}|}\}$ are the singleton languages which appear in the gap-constraints. In this case, the preprocessing is done differently. We construct the word $w=\doc w_1\cdots w_{|\mathcal{C}|}$, and \emph{longest common extension} data structures for it \cite{DBLP:journals/jda/IlieNT10,DBLP:conf/cpm/FischerH06} (also sometimes called \emph{longest common prefix} data structures); this takes $O(\lVert \mathcal{C} \rVert + |\doc|)$ time and allows us to test in $O(1)$ time, for some $i\in [|\mathcal C|]$ and $j\in [|\doc|]$, whether $w_i=\doc[j:j+|w_i|-1]$. Using these data structures, we can trivially compute in $O(|\doc|\cdot|\mathcal{C}|)$ time the arrays $\LRArray[\cdot][L_i]$ and $\SLArray[\cdot][L_i]$. Indeed, we have $\LRArray[k][L_i]=k+|w_i|+1$ if and only if $w_i=\doc[k+1:k+|w_i|]$; for convenience, we set $\LRArray[k][L_i]=\infty$ otherwise. Similarly, $\SLArray[k][L_i]=k-|w_i|-1$ if and only if $w_i=\doc[k-|w_i|:k-1]$, and $\SLArray[k][L_i]=\infty$ otherwise. Moreover, we construct data structures in total $O(|\doc|\cdot|\mathcal{C}|)$ time (for all $i$) to answer \emph{range minimum} queries~\cite{DBLP:conf/cpm/FischerH06} for the arrays $\SLArray[\cdot][L_i]$ allowing us to retrieve in $O(1)$ time the minimum value from a range $\SLArray[a:b][L_i]$; these (more efficient) structures replace, in this particular case, the structures we had for orthogonal range successor in the alternative implementation of the matching algorithm. 
As above, if in an iteration of the while-loop the condition of Line~\ref{condLineOne} (of the general algorithm) is satisfied, i.e., $\doc[e(i) + 1 : e(j)-1] \notin L$, with $L=\{w\}$, we would like to set $e(i) \coloneq p$ and $e(j) \coloneq q$, where $p,q\in[|\doc|]$ such that $e(i) \leq p, e(j) \leq q$, $\doc[p+1:q-1]=w$ and $p,q$ are minimal with this property. In this case, for $b'=\max\{e(i)+|w|+1,e(j)\}$, we now compute $p'$ as the answer to the range minimum query for $\SLArray[b':|\doc|][L]$ (i.e., $p'+1$ is the leftmost position, greater or equal to $\max\{e(i)+1,e(j)-|w|\}$, where $w$ occurs) and set $(e(i),e(j))$ to be the pair  $(p',p'+|w|+1)$. The matching algorithm runs, in this case, in $O(\lVert \mathcal{C} \rVert + |\doc| (|u|+|\mathcal{C}|))$ time. 

In the case of singleton gap-constraints, this approach of jumping over larger factors may even work better in practical settings, as the expected number of occurrences of a constraint string $w$ (corresponding to the constraint $(i,j,w)$) in a random input document $\doc$ (over an alphabet of size $\sigma$, and independent from the constraints) is $O(|\doc|/\sigma^{|w|})$ -- so the number of positions we need to consider for $e(i)$ and $e(j)$ can be significantly smaller than $|\doc|$. However, a full average-case complexity analysis of our algorithms is out of the scope of the present paper.
\hfill $\triangleleft $
\end{observation}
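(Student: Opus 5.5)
For both parts of the observation, I would not redo the correctness argument. I would reduce it to Lemma~\ref{GaoAlgoCorrectnessLemma}, which shows that the jumping variant stays correct as long as each violated constraint $(i,j,L)$ moves $(e(i),e(j))$ to the unique pointwise-minimal pair $(p,q)$ with $p\geq e(i)$, $q\geq e(j)$ and $\doc[p+1:q-1]\in L$. If no such pair exists, some position is set beyond $|\doc|$ and the algorithm returns $\bot$. So for each of the two constraint classes it suffices to show two things. First, the pair proposed in the observation is exactly this minimal pair. Second, it can be computed in $O(1)$ time after the stated preprocessing. The running time then follows from the accounting in Lemma~\ref{mainAlgoRunningTimeLemma}. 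Every position only moves to the right, so each $s$ yields $O(|\doc|)$ $s$-iterations, and each one costs $O(1+k_s)$. The symbol-condition jumps can simply be increments, or use the $\text{next}$ table built in $O(|\doc||u|)$ time.

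\emph{Length constraints.} Let $(i,j,[a,b])$ be violated, and write $\ell=e(j)-e(i)-1$. We treat $e(i)\geq e(j)$ as a case of ``too short''.
\begin{itemize}
\item If $\ell<a$, I claim the minimal pair is $(e(i),e(i)+a+1)$. It is valid because its gap length is $a\in[a,b]$, and $e(i)+a+1>e(j)$. It is minimal because any valid $(p',q')$ has $p'\geq e(i)$ and $q'\geq p'+a+1\geq e(i)+a+1$.
\item If $\ell>b$, then symmetrically the pair is $(e(j)-b-1,e(j))$. It satisfies $e(j)-b-1>e(i)$, and any valid pair has $q'\geq e(j)$ and $p'\geq q'-b-1\geq e(j)-b-1$.
\end{itemize}
Both the membership test $a\leq q-p-1\leq b$ and the new pair take $O(1)$ time with no preprocessing. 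This gives $O(|\doc|(|u|+|\mathcal C|))$.

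\emph{Singleton constraints.} We build a longest-common-extension structure on $\doc w_1\cdots w_{|\mathcal C|}$ in $O(\lVert\mathcal C\rVert+|\doc|)$ time. With it, each test ``$w_t$ occurs at position $k$ of $\doc$'' takes $O(1)$ time. This fills the arrays $\LRArray[\cdot][L_t]$ and $\SLArray[\cdot][L_t]$ in $O(|\doc||\mathcal C|)$ time, with the value $\infty$ where there is no occurrence. We then add a linear-time range-minimum structure on each $\SLArray[\cdot][L_t]$.

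For a violated $(i,j,\{w\})$, a valid pair is $(p,p+|w|+1)$ where $w$ occurs at $p+1$. The conditions $p\geq e(i)$ and $p+|w|+1\geq e(j)$ say that $p+1\geq\max\{e(i)+1,e(j)-|w|\}$. So the minimal pair is given by the leftmost occurrence of $w$ starting at or after this bound. Since the finite entries of $\SLArray[\cdot][L]$ are $k-|w|-1$ and hence increase with $k$, one range-minimum query on $\SLArray[b':|\doc|][L]$ with $b'=\max\{e(i)+|w|+1,e(j)\}$ returns exactly this $p$. Pointwise minimality of both coordinates is automatic, because $q$ is determined by $p$. This gives $O(\lVert\mathcal C\rVert+|\doc|(|u|+|\mathcal C|))$.

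The main obstacle is the singleton range-minimum step. One must check that the translation between the end index $k$ of $\SLArray$ and the occurrence start is off by neither one nor $|w|$. One must also check that ``minimum value'' really coincides with ``leftmost occurrence''. That holds because the finite values are strictly increasing, and a value of $\infty$ returned by the query signals that no pair exists, after which we set a position beyond $|\doc|$. I would verify these index boundaries carefully. In the length case the analogous care is that $a\geq j-i-1$ is needed only for consistency with the order constraints, not for the minimality argument.
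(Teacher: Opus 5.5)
Your proposal is correct and takes the same approach as the paper. For length constraints you run the jumping variant (Algorithm~\ref{GaoAlgo}) with the $O(1)$-computable minimal pairs $(e(i),e(i)+a+1)$ and $(e(j)-b-1,e(j))$. For singletons you use the LCE-plus-range-minimum query on $\SLArray[b':|\doc|][L]$. Correctness and running time are then inherited from the invariants of Lemma~\ref{mainAlgoCorrectnessLemma} and the accounting of Lemma~\ref{mainAlgoRunningTimeLemma}. Your explicit checks fill in details the observation leaves implicit: the pointwise-minimality arguments, the index translation $k=p+|w|+1$ with the monotonicity of the finite $\SLArray$ entries, and the degenerate case $e(i)\geq e(j)$.
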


\section{Full Details for Section~\ref{sec:computeAll}}\label{sec:enumDetails}

\begin{algorithm}
    \caption{$\embedAll(u,\mathcal C, \doc, e, i)$}\label{algorithm:findall}
    \KwIn{$u,\mathcal C$ s.t. $\mathcal C$ only contains $\LCON$-constraints, $\doc\in\Sigma^\ast$, $e\colon[|u|]\rightarrow[|\doc|]$, positions $1,\ldots,i$ locked for $i\in\{0,1,\ldots,|u|-1\}$.}%
    \KwOut{output all embeddings from $S(e,i)$.
    }
    $e^\ast \gets \embedLCONSubseq(u,\mathcal C,\doc,e)$\;
    \label{lin:firstreccall}
    \lIf{$e^\ast =\bot$}{\textbf{return}}\label{lin:empty}
    output $e^\ast$\; \label{lin:findalloutput}
    $j\gets\nextMove(e^\ast,i)$\; \label{lin:firstnextmove}
    \While{$j\neq\bot$}{ \label{lin:whilecondition}
        $(e^\ast)^{(j)}\gets\incr(e^\ast,j)$\; 
        $\embedAll(u,\mathcal C,\doc, (e^\ast)^{(j)},j-1)$\;
        $j\gets \nextMove(e^\ast,j)$\; \label{lin:secondnextmove}
    }
\end{algorithm}

\begin{theorem}\label{theorem:compvariantAppendix}
Let $u\in\Sigma^\ast$ be a query string with a set $\mathcal C$ of gap-constraints for $u$ such that $\mathcal C$ only contains left-convex gap-constraints, and let $\doc\in\Sigma^\ast$ with $|u|\leq|\doc|$. 
Then we can compute the set $\query(\doc)$ of all $\mathcal C$-embeddings of $u$ in $\doc$ in time $O(|\query(\doc)|\cdot|u|\cdot|\doc| (|u| + \lVert \mathcal C\rVert))$.
\end{theorem}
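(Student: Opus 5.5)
We prove the statement by running Algorithm~\ref{algorithm:findall} as $\embedAll(u,\mathcal C,\doc,e_0,0)$, where $e_0(k)=k$ for all $k\in[|u|]$. Here $\incr(e,j)$ returns $e^{(j)}$. The procedure $\nextMove(e^\ast,i)$ returns the smallest $i'\in[i+1,|u|]$ such that $\embedLCONSubseq(u,\mathcal C,\doc,(e^\ast)^{(i')})$ returns some $e'\neq\bot$ with $e'(k)=e^\ast(k)$ for all $k\in[i'-1]$; it returns $\bot$ if no such $i'$ exists. It is implemented by trying $i'=i+1,i+2,\dots$ in turn. The proof has two parts: correctness, meaning every element of $\query(\doc)$ is output exactly once, and a running-time bound obtained by charging all work to outputs.

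For correctness, the key statement is the partition claim from Section~\ref{sec:computeAll}. Suppose $S(e,i)\neq\emptyset$ and let $e^\ast$ be its $e$-minimal element, which is unique by Lemma~\ref{uniquenessLemma} and returned by Lemma~\ref{mainAlgoCorrectnessLemma}. Then
\[ S(e,i)=\{e^\ast\}\;\dot\cup\;\dot{\bigcup}_{j=i+1}^{|u|} S\bigl((e^\ast)^{(j)},j-1\bigr). \]
To prove it, take any $f\in S(e,i)$. Then $e^\ast\le f$ by minimality, and $f$ agrees with $e$, hence with $e^\ast$, on $[i]$. Indeed, $e\le e^\ast\le f$ and $e(k)=f(k)$ for $k\le i$.
\begin{itemize}
\item If $f=e^\ast$, then $f$ lies in the first part.
\item Otherwise, let $j>i$ be the smallest index with $f(j)>e^\ast(j)$. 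Then $f\ge (e^\ast)^{(j)}$ and $f$ agrees with $(e^\ast)^{(j)}$ on $[j-1]$, so $f\in S((e^\ast)^{(j)},j-1)$.
\end{itemize}
The parts are pairwise disjoint because the index $j$ is determined by $f$. Conversely, each set on the right is contained in $S(e,i)$, since $(e^\ast)^{(j)}\ge e^\ast\ge e$ and agreement on $[i]$ is preserved.

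Next we check that the emptiness test used by $\nextMove$ is correct. The set $S(e',j-1)$ is nonempty iff the $e'$-minimal $\mathcal C$-embedding exists and agrees with $e'$ on $[j-1]$. This holds because any element of $S(e',j-1)$ is $\ge$ that minimal embedding, so if the minimal embedding exceeds $e'$ at some index $k\le j-1$, every element does.

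With these two facts, induction on the recursion, for instance on $\sum_k(|\doc|-e(k))$, shows that $\embedAll(u,\mathcal C,\doc,e,i)$ outputs exactly $S(e,i)$ whenever it is called with $S(e,i)\neq\emptyset$, each element once. The top-level call also handles the empty case, via Line~\ref{lin:empty}. Since $S(e_0,0)=\query(\doc)$, this gives correctness.

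For the running time, let $T=O(|\doc|(|u|+\lVert\mathcal C\rVert))$ be the cost of one call to Algorithm~\ref{mainAlgo}, including the $\LRArray/\SLArray$ preprocessing of Lemma~\ref{computeLRSLArrays}. That preprocessing can be done once and reused, since it depends only on $\doc$ and $\mathcal C$. Apart from the top-level call, every recursive call is made only after $\nextMove$ has certified that its set is nonempty, so every call outputs its $e^\ast$. The number of calls is therefore at most $|\query(\doc)|+1$.

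Within one call with lock index $i$, the successive invocations of $\nextMove$ scan the candidate indices $i+1,\dots,|u|$ monotonically. Each $\nextMove(e^\ast,j)$ resumes at $j+1$, so each index is tested at most once per call. Each test costs one call to Algorithm~\ref{mainAlgo}, plus $O(|u|)$ time to compare agreement on $[j-1]$ and to build $(e^\ast)^{(j)}$. The work local to a call is thus $O(|u|\cdot T)$, plus $O(T)$ for its own first line.

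Summing over calls gives $O(|\query(\doc)|\cdot|u|\cdot T)$. This is the claimed bound $O(|\query(\doc)|\cdot|u|\cdot|\doc|(|u|+\lVert\mathcal C\rVert))$, with an additive $O(T)$ covering the case $\query(\doc)=\emptyset$.

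I expect the main obstacle to be the accounting argument: showing that no call is wasted, which requires the exact emptiness criterion, and that the scanning of indices across successive $\nextMove$ invocations inside one call is monotone rather than restarting. The disjointness and the agreement-on-$[i]$ argument in the partition claim also need care. They are where Lemma~\ref{uniquenessLemma}, and hence left-convexity, is used.
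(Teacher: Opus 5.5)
Your proposal is correct and follows essentially the same route as the paper's proof. Both use the recursive procedure with the naive $\nextMove$ and the same partition of $S(e,i)$ into $\{e^\ast\}$ and the sets $S((e^\ast)^{(j)},j-1)$. Both also use the same emptiness criterion for $S(e',j-1)$, and the same telescoping argument: the $\nextMove$ scans within one call cost $O(|u|\cdot T)$ in total, charged to the single output of each recursive call.
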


\begin{proof}
For simplicity, let us denote by $T(|u|, |\doc|, \lVert \mathcal{C} \rVert)$ the running time of Algorithm~\ref{mainAlgo}, i.e., $O(|\doc| (|u| + \lVert \mathcal C\rVert))$.

    Let us first give some definitions. We say that position $i$ is \emph{locked} (w.r.t. mapping $e_0$) if, when finding the $e_0$-minimal $\mathcal C$-embedding $e$ of $u$ in $\doc$, we only consider the set $\{e\mid e_0 \leq e, e(i)=e_0(i)\}$.
     In other words, by locking position $i$, we only look at embeddings $e$ that have the same value for position $i$ as mapping $e_0$ (i.e., $e(i)=e_0(i)$), and return $\bot$ (signaling that the solution set is empty) if there is a constraint that can only be satisfied by shifting $e(i)$ to some value greater than $e_0(i)$. Further, given some mapping $e$ and position $i\in\{0,1,\ldots,|u|\}$, we define the solution set $S(e,i)=\{e'\mid e\leq e', e'(1)=e(1),\ldots,e'(i)=e(i), u\preceq_{e',\mathcal C}\doc\}$ as the set of $\mathcal C$-embeddings $e'$ of $u$ in $\doc$ that can be obtained from $e$ by increasing some (or none) of its positions $i+1,\ldots,|u|$, and locking positions $1,\ldots,i$. Note that $S(e,0)$ contains all $\mathcal C$-embeddings $e'$ of $u$ in $\doc$ with $e\leq e'$ and is empty if and only if there is no $e$-minimal $\mathcal C$-embedding of $u$ in $\doc$; clearly, if $u\preceq_{e,\mathcal C}\doc$ then $e$ is contained in $S(e,i)$ for any $i\in\{0,1,\ldots,|u|\}$. 

    Next, in order to obtain subsets of $S(e,i)$, let us define the minimal mappings $e^{(j)}$ that are strictly larger than $e$ and can be obtained from $e$ by shifting $e(j)$ to $e(j)+1$ for some $j\in[|u|]$. 
    In particular, for every mapping $e\colon[|u|]\to[|\doc|]$ and $j\in[|u|]$, the mapping $e^{(j)}\colon[|u|]\to[|\doc|]$ moves $e(j)$ to position $e(j)+1$, and coincides with $e$ on all other positions. Formally, $e^{(j)}$ is defined by $e^{(j)}(p)\coloneq e(p)$ for all $p\in[|u|]\setminus\{j\}$ and $e^{(j)}(j)\coloneq e(j)+1$. Let $\incr(e,j)$ be the function that computes $e^{(j)}$ for a given mapping $e$ and $j\in[|u|]$; obviously, $e^{(j)}=\incr(e,j)$ can be computed in $O(|u|)$ time. Note that, whenever $e^{(j)}(j)>|\doc|$, then Algorithm~\ref{mainAlgo} returns $\bot$ after evaluating $j\in S$ for the first time and $S(e^{(j)},j-1)$ must necessarily be empty.   
    
    It is not hard to see that, given some $i\in\{0,1,\ldots,|u|-1\}$ and $j>i$,
    the set $S(e^{(j)},i)$ is a (not necessarily strict) subset of $S(e,i)$. We can even use these mappings to partition $S(e,i)$:  
    
    \begin{claim}\label{claim:disjointsets}
        Let $e\colon[|u|]\rightarrow[|\doc|]$ be a mapping and $i\in\{0, 1, \ldots, |u|\}$.
        \begin{itemize}[nosep]
            \item $S(e,i)=\emptyset$ if and only if $S(e,0) = \emptyset$ or $e_{\min}(k)>e(k)$ for some $k\in[i]$, where $e_{\min}$ is the $e$-minimal $\mathcal C$-embedding of $u$ in $\doc$.
            \item If $S(e,i) \neq \emptyset$, then $S(e,i)$ is the disjoint union of the sets 
                \begin{equation*}
                \{e_{\min}\}, S(e_{\min}^{(i+1)},i), S(e_{\min}^{(i+2)},i+1), \ldots, S(e_{\min}^{(|u|)},|u|-1)\,,
                \end{equation*} 
                where $e_{\min}$ is the $e$-minimal $\mathcal C$-embedding of $u$ in $\doc$.
        \end{itemize} 
    \end{claim}

    \begin{proof}
We start with the \emph{if-direction} of the first bullet point. If $S(e,0)=\emptyset$, then $S(e,i)$ is empty by definition. Therefore, let us assume in the following that the $e$-minimal $\mathcal C$-embedding $e_{\min}$ of $u$ in $\doc$ exists (and thus is an element of $S(e,0)$). If there is some $k\in[i]$ such that $e_{\min}(k)>e(k)$, then, by minimality of $e_{\min}$, there cannot be a $\mathcal C$-embedding of $u$ in $\doc$ that agrees with $e$ on positions $1,\ldots,i$. For the \emph{only-if-direction}, let us assume that $S(e,i)=\emptyset$, which also means that $e_{\min}\notin S(e,i)$. Hence, there must be a $k\in[i]$ with $e_{\min}(k)>e(k)$.

        Now, let us consider the second bullet point. By assumption $S(e,i)\neq\emptyset$; thus, using the first bullet point, the first $i$ positions of $e_{\min}$ must coincide with $e$. This means that $e_{\min} \in S(e,i)$. Further, by the minimality of $e_{\min}$, $e_{\min}\leq e'$ must hold for all $e'\in S(e,i)$ and thus $S(e,i)=S(e_{\min},i)$. By definition of $e^{(j)}_{\min}$, any set $S(e^{(j)}_{\min}, j-1)$ with $j\in[i+1,|u|]$ must be contained in $S(e,i)$. This proves the inclusion from right to left.

        For the inclusion from left to right, we first observe that $e_{\min} \in S(e,i)$ is included in the right side by definition. Let us consider some embedding $e'\in S(e,i)$ with $e'\neq e_{\min}$, then $e_{\min}<e'$ and there must be a minimal $j\in[i+1,|u|]$ such that $e'(j)>e_{\min}(j)$. Thus, $e_{\min}^{(j)}\leq e'$ and $e'(k)=e_{\min}(k)$ for all $k\in[j-1]$, so $e'$ must be contained in $S(e^{(j)}_{\min},j-1)$. This proves the inclusion from left to right. 

        For disjointness, let $e' \in S(e_{\min}^{(j)},j-1) \cap S(e_{\min}^{(j')},j'-1)$ with $j \neq j'\in[i+1,|u|]$. If $j < j'$, then $e' \in S(e_{\min}^{(j)},j-1)$ implies $e'(j) > e_{\min}(j)$ and $e' \in S(e_{\min}^{(j')},j'-1)$ implies $e'(j) = e_{\min}(j)$, which is a contradiction.
    \end{proof}

    \noindent\textbf{The \textsf{nextMove} Procedure.} To efficiently test which subsets $S(e_{\min}^{(j)},j-1)$, for $j\in[i+1,|u|]$, defined in Claim~\ref{claim:disjointsets}, are non-empty, we use the following procedure, which will be crucial for the running time of Algorithm~\ref{algorithm:findall}. For every $i\in\{0,1,\ldots,|u|-1\}$ and every $\mathcal C$-embedding $e\colon[|u|]\to[|\doc|]$ of $u$ in $\doc$, the procedure $\nextMove(e,i)$ returns the minimal $j\in[i+1,|u|]$ such that the subset $S(e^{(j)},j-1)$ of $S(e,i)$ is not empty, or returns $\bot$ if there is no such $j$. Na\"ively, as implemented in Algorithm~\ref{algo:nextMoveLCON}, this can be done in time $O(|u|\cdot T(|u|, |\doc|, \lVert \mathcal{C} \rVert))$ by testing for every $j=i+1,\ldots,|u|$ whether there is an $e^{(j)}$-minimal $\mathcal C$-embedding of $u$ in $\doc$ that agrees with $e$ on the first $j-1$ positions and returning the first $j$ for which this condition is true, or $\bot$ if there is no such $j$.
    It is easy to see that this implementation runs in time $O((j-i)\cdot T(|u|, |\doc|, \lVert \mathcal{C} \rVert))$ if $\nextMove(e,i)=j$, and in time $O((|u|-i)\cdot T(|u|, |\doc|, \lVert \mathcal{C} \rVert))$ if $\nextMove(e,i)=\bot$.

    \begin{algorithm}
    \caption{$\nextMove(e,i)$}\label{algo:nextMoveLCON}
    \KwIn{mapping $e\colon[|u|]\rightarrow[|\doc|]$, positions $1,\ldots,i$ locked for $i\in\{0,1,\ldots,|u|-1\}$.}
    \KwOut{minimal $j\in[i+1,|u|]$ s.t. $S(e^{(j)},j-1)\neq\emptyset$, or $\bot$ if there is none.}
    \For{$j=i+1$ \KwTo $|u|$}{
        $e^{(j)}\gets\incr(e,j)$\;
          $e_j^\ast\gets \embedLCONSubseq(u,\mathcal C,\doc,e^{(j)})$\;
          \If{$(e_j^\ast\neq \bot)\wedge (\forall\: k\in[j-1]\colon e_j^\ast(k)=e^{(j)}(k))$}{
              \KwRet{$j$}\;}
    }
    \KwRet{$\bot$}\;
\end{algorithm}

    Our algorithm (see Algorithm~\ref{algorithm:findall}) now works by recursively computing, given mapping $e$ and locked positions $1,\ldots,i$ for some $i\in\{0,1,\ldots,|u|-1\}$, the $e$-minimal $\mathcal C$-embedding $e_{\min}$ of $u$ in $\doc$ and partitioning the remaining space of solutions (i.e., $\mathcal C$-embeddings) into disjoint subsets. In particular, we compute $S(e,i)$ as follows: First, compute $e_{\min}$ as the $e$-minimal $\mathcal C$-embedding of $u$ in $\doc$. If, as in the first case of Claim~\ref{claim:disjointsets}, $e_{\min}=\bot$ or $e_{\min}(k)>e(k)$ for some position $k\in[i]$, then $S(e,i)$ is empty.\footnote{Note that this can only occur in the initial call to compute $S(e_0,0)$, where $e_0(k)=k$ for all $k\in[|u|]$. In all other cases we only recursively call the algorithm to compute $S(e,i)$ if we already determined that this set is non-empty, and thus $e_{\min}\in S(e,i)$.} Otherwise we recursively compute the non-empty sets $S(e_{\min}^{(j)}, j-1)$ for $j\in[i+1,|u|]$. Finding all values $j\in[i+1,|u|]$ with corresponding non-empty set $S(e_{\min}^{(j)}, j-1)$ can be done using successive calls of the $\nextMove$ procedure: first, we determine $j'=\nextMove(e_{\min},i)$ corresponding to the minimal (w.r.t. $j'$) non-empty subset, and then iteratively find the next such subset by updating $j'=\nextMove(e_{\min},j')$ (thus finding the smallest non-empty subset of $S(e_{\min},j')\subseteq S(e_{\min},i)$), until the procedure returns $\bot$, signaling that we found all subsets. 
     
     Initially, we call the procedure for $S(e_0,0)$, where $e_0\colon[|u|]\rightarrow[|\doc|]$ is the trivial mapping with $e_0(k)=k$ for all $k\in[|u|]$, and thus compute all $\mathcal C$-embeddings of $u$ in $\doc$ (since there can be no $\mathcal C$-embedding $e'$ of $u$ in $\doc$ with $e'<e_0$). 
     If the first call to $\embedLCONSubseq$ indicates in Line~\ref{lin:empty} that there is no embedding, then we stop the computation immediately.

Otherwise, given a mapping $e$ with locked positions $1,\ldots,i$ for some $i\in\{0,\ldots,|u|-1\}$, we can compute the $e$-minimal $\mathcal C$-embedding $e_{\min}$ of $u$ in $\doc$ by calling Algorithm~\ref{mainAlgo} in time $O(T(|u|, |\doc|, \lVert \mathcal{C} \rVert))$.

Then, we have to determine all non-empty subsets of $S(e_{\min},i)=S(e,i)$ using calls to $\nextMove$. If $\{j_1,\ldots,j_r\}\subseteq [i+1,|u|]$ are the positions corresponding to the non-empty subsets (i.e., $S(e_{\min}^{(j_\ell)},j_\ell-1)\neq\emptyset$ for all $\ell\in[r]$), then we have exactly $r+1$ calls of $\nextMove(e_{\min},\cdot)$, namely with the parameters $i,j_1,\ldots,j_r$, where $\nextMove(e_{\min},j_\ell)=j_{\ell+1}$ for all $\ell\in[r-1]$, $\nextMove(e_{\min},i)=j_1$, and $\nextMove(e_{\min},j_r)=\bot$. The time needed to execute these calls adds up to a telescopic sum, resulting in $O(|u|\cdot T(|u|, |\doc|, \lVert \mathcal{C} \rVert))$ time for each recursive call. We can then recursively generate all elements of the sets $S(e_{\min}^{(j_\ell)},j_{\ell}-1)$ for $\ell\in[r]$. Overall, since we have exactly one recursive call for every embedding $e\in\query(\doc)$ (in which this embedding is also output), we produce all elements of $\query(\doc)$ in time $O(|\query(\doc)|\cdot|u|\cdot T(|u|, |\doc|, \lVert \mathcal{C} \rVert))$.
This concludes the proof of Theorem~\ref{theorem:compvariantAppendix}.
\end{proof}

We can argue that we can use the algorithm computing the set $\query(\doc)$ to directly obtain an enumeration algorithm for $\query(\doc)$.

\begin{theorem}
Let $u\in\Sigma^\ast$ be a query string with a set $\mathcal C$ of gap-constraints for $u$ such that $\mathcal C$ only contains left-convex gap-constraints, and let $\doc\in\Sigma^\ast$ with $|u|\leq|\doc|$. 
Then we can enumerate the set $\query(\doc)$ of all $\mathcal C$-embeddings of $u$ in $\doc$ with $O(|\doc| (|u| + \lVert \mathcal C\rVert))$ preprocessing time and $O(|u|\cdot |\doc| (|u| + \lVert \mathcal C\rVert))$ delay.
\end{theorem}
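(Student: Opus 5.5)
We turn the recursive procedure $\embedAll$ (Algorithm~\ref{algorithm:findall}) from the proof of Theorem~\ref{theorem:compvariantAppendix} into an enumeration algorithm. The correctness argument, namely that every element of $\query(\doc)$ is output exactly once, carries over unchanged from Claim~\ref{claim:disjointsets}. What remains is to bound the preprocessing time and the delay. For the preprocessing, we compute the arrays $\LRArray$ and $\SLArray$ once, in time $O(|\doc|\cdot\lVert\mathcal C\rVert)$ by Lemma~\ref{computeLRSLArrays}. We then perform the first call $\embedLCONSubseq(u,\mathcal C,\doc,e_0)$ with $e_0(k)=k$, which costs one run of Algorithm~\ref{mainAlgo}, i.e., $T=O(|\doc|(|u|+\lVert\mathcal C\rVert))$. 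The preprocessing therefore ends after $O(T)$ time, either with the first output or with the report that $\query(\doc)=\emptyset$.

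For the delay, we analyse the work between two consecutive outputs. In each recursive call, the output happens right after a single call to Algorithm~\ref{mainAlgo}. Every further recursive call is started only after $\nextMove$ has certified that its set $S(\cdot,\cdot)$ is non-empty, so every call (except possibly the root) produces an output. Between two outputs, the algorithm executes the remaining $\nextMove$ calls of the current invocation and of several ancestor invocations, and then one $\embedLCONSubseq$ call of the next invocation. The key step is to show that the total $\nextMove$ work along this path is $O(|u|\cdot T)$, not $O(|u|)$ times the recursion depth.

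This is the main obstacle, and we would handle it with a position-charging argument. An invocation with locked prefix $i$ starts recursive calls with parameters $j_1<j_2<\dots$, and the child for $j_\ell$ locks positions $1,\dots,j_\ell-1$ and only starts its own children at positions $>j_\ell-1$. Consequently, along the chain of pending ancestors, the ranges $[\text{last } j\text{ tried}+1,\ldots]$ scanned by the remaining $\nextMove$ calls are nested intervals of $[|u|]$. By the telescoping cost bound for Algorithm~\ref{algo:nextMoveLCON} (cost $O((j'-j)\,T)$ when it returns $j'$ from $j$), we would try to charge each tested position to a distinct index, so that each index of $[|u|]$ is charged $O(1)$ times per gap.

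There is a subtlety: when an ancestor resumes at $j_\ell$, it scans positions $>j_\ell$ that a descendant may already have scanned. To avoid this double cost, we would first try to reorganise the recursion in tail-recursive form, as the paper hints. Concretely, we compute $\nextMove(e^\ast,j)$ for the \emph{next} sibling before descending into the current child, and we discard the frame when this returns $\bot$. With this order, the returning path from a leaf pops frames without doing any further work, and the only scans performed before the next output are one pending $\nextMove$ (cost $O(|u|T)$) plus one $\embedLCONSubseq$ call. Moving this work to before the descent delays the current child's output by at most $O(|u|T)$, since the $\nextMove$ is charged once per frame, giving delay $O(|u|\cdot T)$. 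Finally, we would check that the stack operations and the $\incr$ calls add only $O(|u|)$ per step, which completes the bound.
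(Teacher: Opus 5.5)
Your proposal is correct and follows essentially the same route as the paper's proof. The paper likewise makes the last recursive call tail-recursive by checking whether $\nextMove(e^\ast,j)\neq\bot$ for the next sibling before descending. As a result, every call outputs after one run of Algorithm~\ref{mainAlgo}, and only $O(1)$ calls to $\nextMove$ (each costing $O(|u|\cdot T)$) occur between consecutive outputs.

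The nested-interval charging idea you sketch first is not needed, and you can drop it. It would not suffice on its own anyway: a chain of pending ancestors can contain many frames with the same locked prefix, so their scans overlap. The tail-recursive argument alone gives the delay bound.
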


\begin{proof}
  We modify Algorithm~\ref{algorithm:findall} to make sure that the last
  recursive call performed (if any) is tail-recursive, as described in Algorithm \ref{algorithm:tailrec}. 
  \begin{algorithm}
    \caption{$\embedAll(u,\mathcal C, \doc, e, i)$}\label{algorithm:tailrec}
    \KwIn{$u,\mathcal C$ s.t. $\mathcal C$ only contains $\LCON$-constraints, $\doc\in\Sigma^\ast$, $e\colon[|u|]\rightarrow[|\doc|]$, positions $1,\ldots,i$ locked for $i\in\{0,1,\ldots,|u|-1\}$.}
    \KwOut{output all embeddings from $S(e,i)$.
    }
    $e^\ast \gets \embedLCONSubseq(u,\mathcal C,\doc,e)$\;
    \label{tailrec:firstreccall}
    \lIf{$e^\ast =\bot$}{\textbf{return}}\label{tailrec:empty}
    output $e^\ast$\; \label{tailrec:findalloutput}
    $j\gets\nextMove(e^\ast,i)$\; \label{tailrec:firstnextmove}
    \lIf{$j=\bot$}{\textbf{return}}
    \While{$\nextMove(e^\ast,j)\neq\bot$\label{tailrec:loopcond}}{
        $(e^\ast)^{(j)}\gets\incr(e^\ast,j)$\; 
        $\embedAll(u,\mathcal C,\doc, (e^\ast)^{(j)},j-1)$\;
        $j\gets \nextMove(e^\ast,j)$\; \label{tailrec:secondnextmove}
    }
    $(e^\ast)^{(j)}\gets\incr(e^\ast,j)$\;\label{tailrec:lastmoveA} 
    $\embedAll(u,\mathcal C,\doc, (e^\ast)^{(j)},j-1)$\;\label{tailrec:lastmoveB}
\end{algorithm}

  To ensure this, we basically unravel the last iteration of the while-loop of Line \ref{tailrec:loopcond}. This is achieved by simply changing the condition from Line \ref{lin:whilecondition} of Algorithm~\ref{algorithm:findall} (Line \ref{tailrec:loopcond} in Algorithm \ref{algorithm:tailrec}) from $j\neq \bot$ to $\nextMove(e^\ast,j)\neq\bot$, and treating the case when $\nextMove(e^\ast,j)=\bot$ separately, as the case that corresponds to the tail call (which will not be added to the stack of function calls); as long as $\nextMove(e^\ast,j)\neq\bot$, there will be at least one subsequent call and we deal with that in a standard recursive call.

  The correctness of the algorithm follows from the fact that it correctly
  computes the solution set, with no duplicates, as we had already argued: it restates, in a straightforward manner, Algorithm~\ref{algorithm:findall} with the last iteration of the while-loop executed separately (since, in that iteration, Line \ref{tailrec:secondnextmove} does not need to be executed). The
  preprocessing time bound is immediate. Note that the preprocessing covers
  in particular the root-level call to $\embedAll$, so we can in
  particular use it to detect whether the solution space is empty and abort
  otherwise. Let us now argue that the delay bound is respected. 

  For this, we observe that our modification to the algorithm ensures
  ($\dagger$): every recursive call to $\embedAll$ will
  produce one solution, after a time of $O(T(|u|, |\doc|, \lVert \mathcal{C}
  \rVert))$, namely, the complexity of the call at
  Line~\ref{tailrec:firstreccall}. (The fact that this is true already at the
  root-level call is ensured because we use the preprocessing to handle the
  case where there are no solutions at all.)

  From then on, when considering one given call of $\embedAll$, there are two possibilities. The
  first one is that we do no recursive call.
  The other possibility is that we do
  recursive calls,
  with the last call being tail-recursive. Let us
  argue that the delay bound is respected by considering the current
  recursive call and bounding the number of computation steps after the
  solution that it outputs at Line~\ref{tailrec:findalloutput}, until either the
  next output or the end of the enumeration.

  In the first case, we conclude after time $O(|u|\cdot T(|u|, |\doc|, \lVert
  \mathcal{C} \rVert))$ (namely, the complexity of
  Algorithm~\ref{algo:nextMoveLCON}) that no recursive calls are necessary, and
  we exit the current call.
  Thanks to the use of tail-recursion, either this terminates the overall
  algorithm, or it returns to a parent call in which some subsequent recursive call
  still needed to be made (i.e., the 
  parent call
  was not
  tail-recursive). In the latter case, after time $O(|u|\cdot T(|u|, |\doc|, \lVert
  \mathcal{C} \rVert))$, we know that a subsequent recursive call will be made,
  and by invariant ($\dagger$) we know that this call will produce a result after time $O(T(|u|, |\doc|, \lVert
  \mathcal{C} \rVert))$. Thus the overall delay satisfies our bound of $O(|u|\cdot T(|u|, |\doc|, \lVert
  \mathcal{C} \rVert))$.

  In the second case, we conclude after time $O(|u|\cdot T(|u|, |\doc|, \lVert
  \mathcal{C} \rVert))$ that some recursive call needs to be made. Note that
  this includes two invocations of Algorithm~\ref{algo:nextMoveLCON}: the
  one at Line~\ref{tailrec:firstnextmove}, and the one done in anticipation to check
  whether the first recursive call is tail-recursive (which is at
  Line~\ref{tailrec:loopcond}): however
  this just amounts to a factor of two, so the time since the output is still 
  $O(|u|\cdot T(|u|, |\doc|, \lVert \mathcal{C} \rVert))$. We then
  do the first
  recursive call to $\embedAll$ of the current call, which by invariant ($\dagger$) will produce a result after time $O(T(|u|, |\doc|, \lVert
  \mathcal{C} \rVert))$. 
  Hence, the delay bound is respected
  again. The time spent between two recursive calls originating in two consecutive iterations of the while-loop (and, thus, between two consecutive outputs caused by these calls) is, by similar arguments, $O(|u|\cdot T(|u|, |\doc|, \lVert
  \mathcal{C} \rVert))$. This concludes the proof.
\end{proof}

As we have seen in the proof of Theorem~\ref{theorem:compvariantAppendix}, the running time of Algorithm~\ref{algorithm:findall} depends significantly on the implementation of the $\nextMove$ procedure. We can now show that if $\mathcal C$ consists of only constraints that are not just left-convex but also right-convex, then we can implement $\nextMove$ more efficiently, leading to the following, improved result. Recall that $\RCON$ denotes the class of right-convex languages (see Appendix~\ref{sec:leftConvexProp}).

\lrconEnumRest*

\begin{proof}
For simplicity, let us again denote by $T(|u|, |\doc|, \lVert \mathcal{C} \rVert)$ the running time of Algorithm~\ref{mainAlgo}, i.e., $O(|\doc| (|u| + \lVert \mathcal C\rVert))$.

    The general approach follows Algorithm~\ref{algorithm:findall}
    (Theorem~\ref{theorem:compvariantAppendix}), with the same modifications
    to ensure that the last recursive call is tail-recursive. However, we can now exploit that every language $L$ is not just left- but also right-convex, and use this to find the next non-empty subset $S(e^{(j)},j-1)$ in time $O(T(|u|, |\doc|, \lVert \mathcal{C} \rVert))$, instead of time $O(|u|\cdot T(|u|, |\doc|, \lVert \mathcal{C} \rVert))$. Recall that a language $L$ is right-convex if and only if its mirror image $L^R=\{w[|w|]w[|w|-1]\cdots w[1] \mid w\in L\}$ is left-convex.
    
    Analogously to minimal embeddings, given mapping $e_0\colon[|u|]\to[|\doc|]$, we say an embedding $e$ is an \emph{$e_0$-maximal} $\mathcal C$-embedding of $u$ in $\doc$ if and only if $e$ is a $\mathcal C$-embedding of $u$ in $\doc$ with $e\leq e_0$ and $e$ is maximal within the set of all $\mathcal C$-embeddings $e'$ of $u$ in $\doc$ with $e'\leq e_0$. Analogously to Lemma~\ref{uniquenessLemma} for left-convex languages, if $\mathcal C$ consists of only right-convex gap-constraints, then the $e_0$-maximal $\mathcal C$-embedding of $u$ in $\doc$ is unique (if it exists). 
    By symmetry, it is immediate that, given strings $u,\doc$, a set $\mathcal C$ of right-convex gap-constraints for $u$, and a mapping $e_0$, we can compute the $e_0$-maximal $\mathcal C$-embedding of $u$ in $\doc$ in time $O(T(|u|, |\doc|, \lVert \mathcal{C} \rVert))$. Let us call this procedure $\mathsf{EmbedRCONMax}$ (see an implementation in Algorithm~\ref{algo:rconmax}).

\begin{algorithm}
    \caption{$\mathsf{EmbedRCONMax}(u,\mathcal C,\doc,e_0)$}\label{algo:rconmax}
    \KwIn{$u,\mathcal C$ s.t. $\mathcal C$ contains only $\RCON$-constraints, $\doc\in\Sigma^\ast$.}
    \KwOut{$\mathcal C$-embedding $e$ of $u$ in $\doc$, s.t. $e\leq e_0$ and $e$ is maximal with this property, if it exists, and $\bot$ otherwise.}
    set $\mathcal C^R$ as the reverse of $\mathcal C$, i.e., $\mathcal C^R \gets \{(|u|-j+1,|u|-i+1,L^R)\mid (i,j,L)\in\mathcal C\}$\;
    set $e_0^R$ as the reverse of $e_0$, i.e., $e_0^R \colon [|u^R|]\rightarrow [|\doc^R|]$ s.t. $e_0^R(k)=|\doc|-e_0(|u|-k+1)+1$ for all $k\in[|u|]$\;
    $e^R \gets \embedLCONSubseq(u^R,\mathcal C^R, \doc^R, e_0^R)$\;
    \lIf{$e^R=\bot$}{\KwRet{$\bot$}}
    set $e$ as the reverse of $e^R$\;
    \KwRet{$e$}\;
\end{algorithm}

    We recall the $\nextMove$ procedure. Given $\mathcal C$-embedding $e$ and $i\in\{0,1,\ldots,|u|-1\}$, $\nextMove(e,i)$ computes the minimal $j\in[i+1,|u|]$ such that $S(e^{(j)},j-1)\neq\emptyset$, and returns $\bot$ if there is no such $j$. As we have seen in the proof of Theorem~\ref{theorem:compvariantAppendix}, if $\mathcal C$ consists of left-convex languages, this can be implemented in $O((j-i)\cdot T(|u|, |\doc|, \lVert \mathcal{C} \rVert))$ time. 
    
    However, if $\mathcal C$ consists of only constraints in $\LCON\cap\RCON$, then we can compute $\nextMove(e,i)$ faster by analysing the $e'$-maximal $\mathcal C$-embedding $e_{\max}$ of $u$ in $\doc$ (computed using $\mathsf{EmbedRCONMax}$), where $e'(k)=e(k)$ for all $k\in[i]$ and $e'(k)=|\doc|-(|u|-k)$ for all $k\in[i+1,|u|]$. Intuitively, $e'$ coincides with $e$ for all locked positions and all other positions are maximal (i.e., $e(|u|)=|\doc|, e(|u|-1)=|\doc|-1,\ldots$), in order to not restrict their placement in $e_{\max}$. Thus, when computing the maximal embedding, we only consider embeddings $e''$ with $e''(k)\leq e(k)$ for all $k\in[i]$, and, because $e$ itself is a $\mathcal C$-embedding of $u$ in $\doc$, $e\leq e_{\max}$ and $e(k)=e_{\max}(k)$ for all $i\in[|u|]$ must hold. Now, $\nextMove(e,i)$ returns the smallest $j\in[i+1,|u|]$ with $e(j)<e_{\max}(j)$, or $\bot$ if $e=e_{\max}$. See Algorithm~\ref{algo:nextmoveLCONRCON} for an implementation.

\begin{algorithm}
    \caption{$\nextMoveLRCON(e,i)$}\label{algo:nextmoveLCONRCON}
    \KwIn{$\mathcal C$-embedding $e$ of $u$ in $\doc$, positions $1,\ldots,i$ locked for $i\in\{0,1,\ldots,|u|-1\}$.}
    \KwOut{minimal $j\in[i+1,|u|]$ s.t. $S(e^{(j)},j-1)\neq\emptyset$, or $\bot$ if there is none.}
    \For{$j\in[i]$}{set $e'(j)=e(j)$\;}
    \For{$j\in[i+1,|u|]$}{set $e'(j)=|\doc|-(|u|-j)$\;}
    $e_{\max}\gets \mathsf{EmbedRCONMax}(u,\mathcal C,\doc,e')$\;
    \lIf{$\not\exists\ k\in[i+1,|u|]$ s.t. $e_{\max}(k)>e(k)$}{\KwRet{$\bot$}}
    \KwRet{$\min\{k\in[i+1,|u|]\mid e_{\max}(k)>e(k)\}$}\;
\end{algorithm}

    Let us write $\nextMove_L(e,i)$ for the $O((j-i)\cdot T(|u|, |\doc|, \lVert \mathcal{C} \rVert))$ time na\"ive implementation for left-convex languages, and $\nextMove_{LR}(e,i)$ for the variant that requires all languages to be both left- and right-convex. We can now argue that $\nextMove_L(e,i)=\nextMove_{LR}(e,i)$ for all $\mathcal C$-embeddings $e$ of $u$ in $\doc$ and $i\in\{0,1,\ldots,|u|-1\}$. 

    \begin{claim}\label{claim:nextMoveRCON}
        Let $e\colon[|u|]\to[|\doc|]$ be a $\mathcal C$-embedding of $u$ in $\doc$ and $i\in\{0,1,\ldots,|u|-1\}$.
        \begin{itemize}[nosep]
            \item $S(e,i)=\{e\}$ if and only if $\nextMove_{LR}(e,i)=\bot$. 
            \item If $\nextMove_{LR}(e,i)=j\in[i+1,|u|]$ then $S(e^{(j)},j-1)$ is the smallest (w.r.t. $j$) non-empty subset of $S(e,i)\setminus\{e\}$.
        \end{itemize}
    \end{claim}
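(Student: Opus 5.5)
The plan is to work with $e_{\max}$, the $e'$-maximal $\mathcal C$-embedding computed by $\nextMove_{LR}(e,i)$. Here $e'$ agrees with $e$ on $[i]$ and is pushed fully to the right on $[i+1,|u|]$. First I would establish the sandwich $e \leq e_{\max} \leq e'$ with $e_{\max}(k)=e(k)$ for all $k\in[i]$. The argument goes as follows. Since $e$ is a $\mathcal C$-embedding with $e\leq e'$, the analogue of Lemma~\ref{uniquenessLemma} for right-convex languages says that $e_{\max}$ exists and is the unique maximal element. The max-closure from the right-convex analogue of Lemma~\ref{lem:LCON_min_closed} then gives that the pointwise maximum of $e$ and $e_{\max}$ is again a $\mathcal C$-embedding below $e'$. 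Hence it equals $e_{\max}$, i.e.\ $e\leq e_{\max}$. On $[i]$ we have $e(k)\le e_{\max}(k)\le e'(k)=e(k)$. A key consequence is that $e_{\max}\in S(e,i)$ and is the pointwise largest element of $S(e,i)$.

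For the first bullet, suppose $S(e,i)=\{e\}$. Then $e_{\max}\in S(e,i)$ forces $e_{\max}=e$, so no $k$ has $e_{\max}(k)>e(k)$, and $\nextMove_{LR}$ returns $\bot$. Conversely, if $\nextMove_{LR}(e,i)=\bot$, then $e_{\max}=e$ on $[i+1,|u|]$ and also on $[i]$, so $e_{\max}=e$. Every $e''\in S(e,i)$ satisfies $e\le e''\le e'$, and therefore $e''\le e_{\max}=e$ by maximality and max-closure. Hence $e''=e$.

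For the second bullet, let $j$ be the smallest index in $[i+1,|u|]$ with $e_{\max}(j)>e(j)$. The same sandwich argument shows $e_{\max}$ agrees with $e$ on $[j-1]$ and $e_{\max}(j)\ge e(j)+1$. Thus $e^{(j)}\le e_{\max}$ and $e_{\max}\in S(e^{(j)},j-1)$, so this set is non-empty. Moreover, $S(e^{(j)},j-1)\subseteq S(e,i)\setminus\{e\}$ is immediate, since its elements exceed $e$ at $j$. For minimality, take any $j'\in[i+1,j-1]$ and suppose some $e''\in S(e^{(j')},j'-1)$ exists. Then $e''\in S(e,i)$, so $e''\le e_{\max}$. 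This gives $e(j')<e''(j')\le e_{\max}(j')=e(j')$, a contradiction. So $j$ is the smallest index with a non-empty part. By the partition of Claim~\ref{claim:disjointsets} (applied with $e_{\min}=e$), these parts are exactly the non-empty subsets of $S(e,i)\setminus\{e\}$ of the relevant form, which yields the claim.

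The main obstacle is the step that every element of $S(e,i)$ lies below $e_{\max}$. This needs max-closure for all constraints, including the order and symbol constraints, which are trivially max-closed, and the right-convex gap constraints. It also needs the observation that the upper bound $e'$ permits any solution agreeing with $e$ on $[i]$. The latter holds because embeddings are strictly increasing, so $e''(k)\le|\doc|-(|u|-k)$ automatically.
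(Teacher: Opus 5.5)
Your proof is correct and follows essentially the same route as the paper. Both arguments establish $e\le e_{\max}\le e'$, so that $e_{\max}$ is the greatest element of $S(e,i)$, and then read off both bullets from the first index where $e_{\max}$ exceeds $e$. Your explicit use of max-closure to show that every element of $S(e,i)$ lies below $e_{\max}$ makes rigorous a step the paper argues only loosely, via uniqueness of the $e'$-maximal embedding.
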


    \begin{proof}
        Let $e'\colon[|u|]\to[|\doc|]$ be a mapping with $e'(k)=e(k)$ for all $k\in[i]$ and $e'(k)=|\doc|-(|u|-k)$ for all $k\in[i+1,|u|]$, like in the computation of $\nextMove_{LR}(e,i)$. We will first show that the $e'$-maximal $\mathcal C$-embedding $e_{\max}$ of $u$ in $\doc$ is the maximal element of $S(e,i)$. By definition, $e_{\max}\leq e'$ and thus $e_{\max}(k)\leq e'(k)=e(k)$ for all $k\in[i]$. Since all other positions of $e'$ are shifted as far as possible to the right (there cannot be an embedding $e''$ with $e''(k)>|\doc|-(|u|-k)$ for any $k\in[|u|]$, as this would contradict the order condition 
        $1\leq e''(1)<e''(2)<\ldots<e''(|u|)\leq|\doc|$), and because $e$ is a possible candidate for $e_{\max}$, we have that $e_{\max}(k)\geq e(k)$ for all $k\in[i+1,|u|]$. Thus, $e_{\max}\in S(e,i)$. Assume now that there is an $e''\in S(e,i)$ with $e_{\max}\leq e''$. By definition of $S(e,i)$, $e''$ must satisfy $e''(k)=e(k)=e_{\max}(k)$ for all $k\in[i]$ and $e''(k)>e_{\max}(k)$ for some $k\in[i+1,|u|]$. However, this contradicts that $e_{\max}$ is $e'$-maximal. Therefore, $e_{\max}$ must be the maximal element of $S(e,i)$. 

        Now, if $e_{\max}=e$, then the minimal and maximal element of $S(e,i)$ coincide and there cannot be another embedding $e''\in S(e,i)$ with $e''\neq e$. Since $\nextMove_{LR}(e,i)$ returns $\bot$ if and only if $e_{\max}=e$, the procedure correctly determines if there is no non-empty subset of $S(e,i)\setminus\{e\}=\emptyset$. 

        On the other hand, assume that $S(e,i)\neq\{e\}$. Then $e_{\max}\neq e$ and $e_{\max}\in S(e,i)$, i.e., $S(e,i)\setminus\{e\}$ must contain at least one element. Therefore, there is at least one non-empty subset $S(e^{(j)},j-1)$ for $j\in[i+1,|u|]$, namely the one containing $e_{\max}$. If $j\in[i+1,|u|]$ is the minimal position such that $e_{\max}(j)>e(j)$, then, by definition, $e_{\max}\in S(e^{(j)},j-1)$. Because $e_{\max}$ is $e'$-maximal, there can be no non-empty subset $S(e^{(k)},k-1)$ for $k\in[i+1,j-1]$: If there was some embedding $e''\in S(e^{(k)},k-1)$, then $e''(k)>e_{\max}(k)$ would contradict that $e_{\max}$ is the unique $e'$-maximal $\mathcal C$-embedding of $u$ in $\doc$. 
        Therefore, $\nextMove(e,i)$ correctly returns value $j$ if $S(e^{(j)},j-1)$ is the smallest (w.r.t. $j$) non-empty subset of $S(e,i)\setminus\{e\}$.
    \end{proof}

    Given some $\mathcal C$-embedding $e$ of $u$ in $\doc$ and $i\in\{0,1,\ldots,|u|\}$, the time needed to compute $\nextMove_{LR}(e,i)$ is $O(T(|u|, |\doc|, \lVert \mathcal{C} \rVert))$: We first compute the $e'$-maximal $\mathcal C$-embedding $e_{\max}$ of $u$ in $\doc$ in time $O(T(|u|, |\doc|, \lVert \mathcal{C} \rVert))$ and then determine the smallest $j\in[i+1,|u|]$ with $e_{\max}(j)>e(j)$, which takes $O(|u|)$ time. Since $S(e,j')\subseteq S(e,j)$ holds for all mappings $e$ and $j<j'$, we compute all $j\in[i+1,|u|]$ such that $S(e^{(j)},j-1)\neq\emptyset$ as follows: If $\nextMove_{LR}(e,i)=\bot$, there is no such $j$. Otherwise, let $j_1=\nextMove_{LR}(e,i)$, then we compute $j_{\ell}=\nextMove_{LR}(e,j_{\ell-1})$ for increasing $\ell$ until $j_{\ell}=\bot$. Now, $S(e^{(k)},k-1)\neq\emptyset$ if and only if $k\in\{j_1,\ldots,j_{\ell-1}\}$ by Claim~\ref{claim:nextMoveRCON}.
    
    The rest of the procedure follows Algorithm~\ref{algorithm:findall} 
    but replaces 
    the $\nextMove_L$ function 
    (Algorithm~\ref{algo:nextMoveLCON}),
    having a running time of $O(|u|\cdot T(|u|, |\doc|,
    \lVert \mathcal C \rVert))$, by 
    the $\nextMove_{LR}$ function 
    (Algorithm~\ref{algo:nextmoveLCONRCON}),
    having a running time of $O(T(|u|, |\doc|,
    \lVert \mathcal C \rVert))$.
    Thus, the correctness and complexity are immediate by a similar analysis
    to that of the proof of Algorithm~\ref{algorithm:findall}, concluding the
    proof of Theorem~\ref{thm:lrconEnum}.
\end{proof}

\section{Full Details for Section~\ref{sec:hardness}}\label{sec:hardnessDetails}

In this section, we give full proofs for the hardness results of Section~\ref{sec:hardness}.

\aaLangThm*

\begin{proof}
  Let $F = \{c_1, c_2, \ldots, c_m\}$ be a $3$-CNF formula, where every conjunction $c_j = \{l_{j, 1}, l_{j, 2}, l_{j, 3}\} \subseteq \{x_1, \neg x_1, x_2, \neg x_2, \ldots, x_n, \neg x_n\}$ is a clause with three literals. We will construct an instance of the matching problem with length constraints and $L$-constraints as strings $u, \doc \in \Sigma^*$ and a set of gap-constraints. We will define $u \coloneq \widehat{u} \: \widetilde{u}$ and $\doc = \widehat{\doc} \: \widetilde{\doc}$, with $\widehat{u}$ and $\widehat{\doc}$ intuitively describing a choice of Boolean values for each variable of $\{x_1, \ldots, x_n\}$ (each of them will formally be the concatenation of $x_i$-assignment gadgets for each variable $x_i$); and with $\widetilde{u}$ and $\widetilde{\doc}$ intuitively verifying the satisfaction of each clause (each of them will formally be the concatenation of $c_j$-clause gadgets for each clause $c_j$). We will explain how length constraints can be used to ensure that each assignment gadget in $\widehat{u}$ is mapped in the corresponding assignment gadget in $\widehat{\doc}$, and likewise for clause gadgets.

The rest of the proof is structured as follows. First, we define assignment gadgets. Then, we define clause gadgets. Afterwards, we explain how length constraints allow us to synchronise the gadgets. Then, we explain how $L$-constraints are used to enforce the semantics of the clause gadgets; this finishes the definition of the instance of the matching problem that we are reducing to, i.e., concludes the presentation of the reduction. Finally, we show that the reduction is correct. We now go through these successive steps.

\medskip

  \noindent\textbf{Assignment gadgets.} For every $i \in \{1, 2, \ldots, n\}$, the \emph{$x_i$-assignment gadget} is the following (where the $0$'s and $1$'s below correspond to an intuitive explanation given just below):

\begin{tabular}{ccccc}
$\widehat{u}_i =$ & $\tb$ & $\ta$ & $\ta$ & $\tb$ \\
$\widehat{\doc}_i =$ & $\tb$ & $\ta \ta$ & $\ta \ta$ & $\tb$ \\
$ $ & & $0 1$ & $01$ & 
\end{tabular}

  We interpret the first $\ta$ of $\widehat{u}_i$ to correspond to $x_i$ and the second $\ta$ of $\widehat{u}_i$ to correspond to $\neg x_i$. Furthermore, the occurrences of $\ta$ of $\widehat{\doc}_i$ are interpreted as either $0$ or $1$ as illustrated above, which indicate whether the literal mapped to this position is true or false. Consequently, embedding $\widehat{u}_i$ into $\widehat{\doc}_i$ can be interpreted as mapping $x_i$ to either $0$ or $1$ and mapping $\neg x_i$ to either $0$ or $1$. By using gap-constraints, we want to enforce that whenever $\widehat{u}_i$ is embedded into $\widehat{\doc}_i$, then this can only happen in one of the following two ways (where the $0$'s and $1$'s drawn below illustrate the intuitive intended semantics of the gadgets):

\begin{tabular}{cccccc}
$\tb$ & & $\ta$ & $\ta$ & & $\tb$ \\
$\tb$ & $\ta$ & $\ta$ & $\ta$ & $\ta$ & $\tb$ \\
& $0$ & $1$ & $0$ & $1$ &
\end{tabular}
\hspace{1cm}
or
\hspace{1cm}
\begin{tabular}{cccccc}
$\tb$ & $\ta$ &  & & $\ta$ & $\tb$ \\
$\tb$ & $\ta$ & $\ta$ & $\ta$ & $\ta$ & $\tb$ \\
& $0$ & $1$ & $0$ & $1$ &
\end{tabular}

The first embedding is interpreted as assigning $x_i$ to $1$ (and accordingly assigning $\neg x_i$ to $0$), while the second one is interpreted as assigning $x_i$ to $0$ (and accordingly assigning $\neg x_i$ to $1$). 

We enforce that the two embeddings depicted above are the only possible ones as follows. We use an $L$-constraint between the two $\ta$-occurrences in $\widehat{u}_i$, which excludes embeddings that would map both $x_i$ and $\neg x_i$ to $0$ or both $x_i$ and $\neg x_i$ to $1$. Then, in order to exclude embeddings that map both $\ta$-occurrences of $\widehat{u}_i$ to the two leftmost $\ta$-occurrences of $\widehat{\doc}_i$ or both $\ta$-occurrences of $\widehat{u}_i$ to the two rightmost $\ta$-occurrences of $\widehat{\doc}_i$, we use a $[0, 1]$-length constraint between the first $\tb$-occurrence and the first $\ta$-occurrence in $\widehat{u}_i$ and a $[0, 1]$-length constraint between the second $\ta$-occurrence and the second $\tb$-occurrence in $\widehat{u}_i$. It can be easily seen that these constraints are only satisfied by the two embeddings illustrated above.

We define $\widehat{u} = \widehat{u}_1 \widehat{u}_2 \cdots \widehat{u}_n$ and $\widehat{\doc} = \widehat{\doc}_1 \widehat{\doc}_2 \cdots \widehat{\doc}_n$, and we note that if $\widehat{u}$ can be embedded into $\widehat{\doc}$, then this means that every $\widehat{u}_i$ is embedded into $\widehat{\doc}_i$, which, as observed above, induces an assignment $\pi \colon \{x_1, x_2, \ldots, x_n\} \to \{0, 1\}$.

\medskip

\noindent\textbf{Clause gadgets.}
For every $j \in \{1, 2, \ldots, m\}$, the \emph{$c_j$-clause gadget} is defined as follows:

\begin{tabular}{ccccccccccc}
 &  & $L_1$ & $S_1$  & $L_2$ & & $S_2$ & $L_{\vee}$  & $L_3$ & \\
$\widetilde{u}_j$ = & $\tb$ & $\ta$ & $\ta$ & $\ta$ & $\tb$ & $\ta$ & $\ta$ & $\ta$ & $\tb$ \\
$\widetilde{\doc}_j$ = & $\tb$ & $\ta \ta$ & $\ta \ta \ta$ & $\ta \ta$ & $\tb$ & $\ta \ta \ta$ & $\ta \ta$  & $\ta \ta$ & $\tb$ \\
&  & $01$ &  & $10$ &  &  & $10$ & $01$ & 
\end{tabular}

 As shown in the alignment above, we mark the $6$ occurrences of $\ta$ in $\widetilde{u}_j$ with $L_1, L_2, L_3$ (representing the three literals $l_{j, 1}, l_{j, 2}$ and $l_{j, 3}$ of clause $c_j$), $L_{\vee}$ (representing the disjunction $l_{j, 1} \vee l_{j, 2}$), and $S_1$ and $S_2$ (used for auxiliary purposes). In particular, we can then talk about the $L_1$-$\ta$, $S_1$-$\ta$, $L_2$-$\ta$, $S_2$-$\ta$, $L_{\vee}$-$\ta$ and $L_3$-$\ta$ of $\widetilde{u}_j$. Moreover, for $\ell \in \{1, 2, 3, \vee\}$, the $L_{\ell}$-$\ta$-block of $\widetilde{\doc}_j$ is the $\ta \ta$-factor aligned with the $L_{\ell}$-$\ta$ (according to the alignment given above), and the $S_{1}$-$\ta$-block and $S_{2}$-$\ta$-block of $\widetilde{\doc}_j$ is the $\ta \ta \ta$-factor aligned with the $S_1$-$\ta$ and $S_2$-$\ta$, respectively.

We define $\widetilde{u} = \widetilde{u}_1 \widetilde{u}_2 \cdots \widetilde{u}_m$ and $\widetilde{\doc} = \widetilde{\doc}_1 \widetilde{\doc}_2 \cdots \widetilde{\doc}_m$, and finally $u = \widehat{u} \: \widetilde{u}$ and $\doc = \widehat{\doc} \: \widetilde{\doc}$. Due to the occurrences of $\tb$, if $u$ can be embedded into $\doc$, then every $\widehat{u}_i$ is embedded into $\widehat{\doc}_i$ and every $\widetilde{u}_j$ is embedded into $\widetilde{\doc}_j$.

\medskip

\noindent\textbf{Synchronising the gadgets.}
In the following, we will add further gap-constraints that enforce that if $u$ can be embedded into $\doc$ (such that all gap-constraints are satisfied), then the assignment $\pi$ induced by the embedding is satisfying, and if there is some satisfying assignment $\pi$, then $u$ can be embedded into $\doc$ (such that all gap-constraints are satisfied). 

The first step is to achieve a synchronisation between the assignment gadgets and the clause gadgets: For every $\ell \in \{1, 2, 3\}$, if $l_{j, \ell} \in \{x_i, \neg x_i\}$, then the $L_\ell$-$\ta$ of $\widetilde{u}_j$ is synchronised with the $x_i$-assignment gadget, i.e., if $l_{j, \ell} = x_i$, then the $L_{\ell}$-$\ta$ of $\widetilde{u}_j$ is mapped to $1$ if and only if $x_i$ is assigned to $1$ (in the sense defined above, i.e., $\widehat{u}_i$ is mapped to $\widehat{\doc}_i$ according to the first of the two possible embeddings), and if $l_{j, \ell} = \neg x_i$, then $L_{\ell}$-$\ta$ is mapped to $1$ if and only if $\neg x_i$ is assigned to $1$.

We first define, for every $i \in \{1, 2, \ldots, n\}$, $j \in \{1, 2, \ldots, m\}$ and $\ell \in \{1, 2, 3\}$, the number $\alpha(i, j, \ell)$ of symbols of $\doc$ that lie strictly between the first $\ta$-occurrence of $\widehat{\doc}_i$ and the $L_{\ell}$-$\ta$-block of $\widetilde{\doc}_j$, and let $\beta(i, j, \ell)$ denote the number of symbols of $\doc$ that lie strictly between the third $\ta$-occurrence of $\widehat{\doc}_i$ and the $L_{\ell}$-$\ta$-block of $\widetilde{\doc}_j$. Obviously, these numbers $\alpha(i, j, \ell)$ and $\beta(i, j, \ell)$ only depend on $i, j$ and $\ell$. 

For every $j \in \{1, 2, \ldots, m\}$, we add the following length constraints. If $l_{j, 1} = x_i$, then we add the length constraint $(\alpha(i, j, 1), \alpha(i, j, 1))$ between the first $\ta$-occurrence of $\widehat{u}_i$ and the $L_1$-$\ta$ of $\widetilde{u}_j$, and if $l_{j, 1} = \neg x_i$, then we add the length constraint $(\beta(i, j, 1), \beta(i, j, 1))$ between the second $\ta$-occurrence of $\widehat{u}_i$ and the $L_1$-$\ta$ of $\widetilde{u}_j$. We proceed analogously with respect to the $L_3$-$\ta$: if $l_{j, 3} = x_i$, then we add the length constraint $(\alpha(i, j, 3), \alpha(i, j, 3))$ between the first $\ta$-occurrence of $\widehat{u}_i$ and the $L_3$-$\ta$ of $\widetilde{u}_j$, and if $l_{j, 3} = \neg x_i$, then we add the length constraint $(\beta(i, j, 3), \beta(i, j, 3))$ between the second $\ta$-occurrence of $\widehat{u}_i$ and the $L_3$-$\ta$ of $\widetilde{u}_j$. With respect to the $L_2$-$\ta$, the situation is slightly different, due to the swapped order of $0$ and $1$ in the $L_2$-$\ta$-block of $\widetilde{\doc}_j$: if $l_{j, 2} = x_i$, then we add the length constraint $(\beta(i, j, 2), \beta(i, j, 2))$ between the second $\ta$-occurrence of $\widehat{u}_i$ and the $L_2$-$\ta$ of $\widetilde{u}_j$, and if $l_{j, 2} = \neg x_i$, then we add the length constraint $(\alpha(i, j, 2), \alpha(i, j, 2))$ between the first $\ta$-occurrence of $\widehat{u}_i$ and the $L_2$-$\ta$ of $\widetilde{u}_j$.

It can be easily seen that these length constraints enforce the desired synchronisation property, e.g., if $l_{j, 1} = x_i$ and the first $\ta$-occurrence of $\widehat{u}_i$ is mapped to $0$, then due to the definition of $\alpha(i, j, 1)$, the $L_1$-$\ta$ of $\widetilde{u}_j$ is also mapped to $0$, and if instead the first $\ta$-occurrence of $\widehat{u}_i$ is mapped to $1$, then the $L_1$-$\ta$ of $\widetilde{u}_j$ is also mapped to $1$. This means that embedding $u$ into $\doc$ induces an assignment $\pi$ such that $\pi(l_{j, \ell}) = 1$ if and only if the $L_{\ell}$-$\ta$ of $\widetilde{u}_j$ is mapped to $1$. 

\medskip

\noindent\textbf{Enforcing clause satisfaction.}
Next, we have to introduce gap-constraints that enforce that $\widetilde{u}_j$ can be embedded into $\widetilde{\doc}_j$ if and only if $\pi$ satisfies clause $c_j$. We start with gap-constraints that enforce that $\pi(l_{j, 1}) \vee \pi(l_{j, 2}) = 0$ implies that the $S_1$-$\ta$ is necessarily mapped to the middle of the $S_1$-$\ta$-block. This can be achieved by using a $[0, 2]$-length constraint between the $L_1$-$\ta$ and the $S_1$-$\ta$, and a $[0, 2]$-length constraint between the $S_1$-$\ta$ and the $L_2$-$\ta$. Now, if $\pi(l_{j, 1}) \vee \pi(l_{j, 2}) = 0$, then the $L_1$-$\ta$ is mapped to $0$ and the $L_2$-$\ta$ is mapped to $0$, which means that mapping $S_1$-$\ta$ to the left or to the right of the $S_1$-$\ta$-block would violate one of these $[0, 2]$-length constraints, while mapping the $S_1$-$\ta$ to the middle satisfies both these $[0, 2]$-length constraints. Moreover, we observe that we can still map $S_1$-$\ta$ in all other constellations, i.e., if $\pi(l_{j, 1}) = 1$ and $\pi(l_{j, 2}) = 0$, then the $S_1$-$\ta$ can be mapped to the right or to the middle of the $S_1$-$\ta$-block, if $\pi(l_{j, 1}) = 0$ and $\pi(l_{j, 2}) = 1$, then the $S_1$-$\ta$ can be mapped to the left or to the middle of the $S_1$-$\ta$-block, and if $\pi(l_{j, 1}) = \pi(l_{j, 2}) = 1$, then the $S_1$-$\ta$ can be mapped to the left or to the right or to the middle of the $S_1$-$\ta$-block. In summary: if $\pi(l_{j, 1}) \vee \pi(l_{j, 2}) = 0$, then $S_1$-$\ta$ must be mapped to the middle of the $S_1$-$\ta$-block, and if $\pi(l_{j, 1}) \vee \pi(l_{j, 2}) = 1$, then it is possible that $S_1$-$\ta$ is \emph{not} mapped to the middle of the $S_1$-$\ta$-block (whether it is mapped to the left or right depends on the actual values of $\pi(l_{j, 1})$ and $\pi(l_{j, 2})$).

We next synchronise the $S_1$-$\ta$ and the $S_2$-$\ta$ such that the $S_1$-$\ta$ is mapped to the left (middle, right) of the $S_1$-$\ta$-block if and only if the $S_2$-$\ta$ is mapped to the left (middle, right) of the $S_2$-$\ta$-block. This can be done by a $[5, 5]$-length constraint between the $S_1$-$\ta$ and the $S_2$-$\ta$. In particular, this means that $\pi(l_{j, 1}) \vee \pi(l_{j, 2}) = 0$ implies that the $S_2$-$\ta$ is necessarily mapped to the middle of the $S_2$-$\ta$-block, and $\pi(l_{j, 1}) \vee \pi(l_{j, 2}) = 1$ implies that it is possible to map $S_2$-$\ta$ 
\emph{not} to the middle of the $S_2$-$\ta$-block.

We also want to achieve that if the $S_2$-$\ta$ is mapped to the left or to the right of the $S_2$-$\ta$-block, then the $L_{\vee}$-$\ta$ is necessarily mapped to $1$, and if the $S_2$-$\ta$ is mapped to the middle of the $S_2$-$\ta$-block, then the $L_{\vee}$-$\ta$ is necessarily mapped to $0$. To this end, we add a $[3, 4]$-length constraint between the second $\tb$ in $\widetilde{u}_j$ and the $L_{\vee}$-$\ta$, which means that the $L_{\vee}$-$\ta$ can only be mapped to the $L_{\vee}$-$\ta$-block. Then, we add an $L$-constraint between the $S_2$-$\ta$ and $L_{\vee}$-$\ta$, which means that if the $S_2$-$\ta$ is mapped to the left or to the right of the $S_2$-$\ta$-block, then $L_{\vee}$-$\ta$ is necessarily mapped to $1$, and if the $S_2$-$\ta$ is mapped to the middle of the $S_2$-$\ta$-block, then $L_{\vee}$-$\ta$ is necessarily mapped to $0$. Hence, if $\pi(l_{j, 1}) \vee \pi(l_{j, 2}) = 0$, then $S_2$-$\ta$ is mapped to the middle of the $S_2$-$\ta$-block and therefore $L_{\vee}$-$\ta$ is mapped to $0$, and if $\pi(l_{j, 1}) \vee \pi(l_{j, 2}) = 1$, then it is possible to map $S_2$-$\ta$ \emph{not} to the middle of the $S_2$-$\ta$-block and therefore $L_{\vee}$-$\ta$ can be mapped to $1$. 

Finally we add a $[1,2]$-length constraint for $L_{\vee}$-$\ta$ and $L_3$-$\ta$. This means that if $L_{\vee}$-$\ta$ is mapped to $0$, then $L_3$-$\ta$ must be mapped to $1$, and if $L_{\vee}$-$\ta$ is mapped to $1$, then $L_3$-$\ta$ can be mapped to $0$ or $1$. 

This concludes the definition of the gap-constraints and therefore concludes the definition of the instance of the matching problem.

\medskip
\noindent\textbf{Correctness proof.}
We now assume that there is a satisfying assignment $\pi$, i.e., $\pi(l_{j, 1}) \vee \pi(l_{j, 2}) \vee \pi(l_{j, 3}) = 1$ for every $j \in \{1, 2, \ldots, m\}$, and we will show that this implies that $u$ can be embedded into $\doc$ such that all gap-constraints are satisfied. First, we embed each $\widehat{u}_i$ into $\widehat{\doc}_i$ as determined by $\pi$, i.e., if $\pi(x_i) = 1$, we use the first embedding mentioned above, and if $\pi(x_i) = 0$, we use the second embedding mentioned above. This describes an embedding of $\widehat{u}$ in $\widehat{\doc}$.

To embed $\widetilde{u}$ into $\widetilde{\doc}$, we explain how we can embed $\widetilde{u}_j$ into $\widetilde{\doc}_j$. For every $\ell \in \{1, 2, 3\}$, we map $L_{\ell}$-$\ta$ to $1$ if $\pi(l_{j, \ell}) = 1$ and to $0$ if $\pi(l_{j, \ell}) = 0$ (this obviously satisfies all the gap-constraints between $\widetilde{u}_j$ and the corresponding assignment gadgets). If $\pi(l_{j, 1}) \vee \pi(l_{j, 2}) = 1$, then, as observed above, we can map the $S_1$-$\ta$ to the left or to the right, depending on the actual values of $\pi(l_{j, 1})$ and $\pi(l_{j, 2})$. This means that we can map the $S_2$-$\ta$ to the left or to the right, and in both cases mapping $L_{\vee}$-$\ta$ to $1$ satisfies the $L$-constraint between $S_2$-$\ta$ and $L_{\vee}$-$\ta$. Furthermore, regardless of where $L_3$-$\ta$ is mapped to, the $[1,2]$-length constraint between $L_{\vee}$-$\ta$ and $L_3$-$\ta$ is satisfied. Consequently, we have an embedding of $\widetilde{u}_j$ into $\widetilde{\doc}_j$ that satisfies all gap-constraints. We have to consider the case that $\pi(l_{j, 1}) \vee \pi(l_{j, 2}) = 0$, which means that $\pi(l_{j, 1}) = 0$ and $\pi(l_{j, 2}) = 0$. In this case, $S_1$-$\ta$ (and therefore $S_2$-$\ta$) can only be mapped to the middle, which means that $L_{\vee}$-$\ta$ must be mapped to $0$. However, since $\pi(l_{j, 1}) \vee \pi(l_{j, 2}) \vee \pi(l_{j, 3}) = 1$, we know that $\pi(l_{j, 3}) = 1$, which means that $L_3$-$\ta$ is mapped to $1$, which satisfies the $[1,2]$-length constraint for $L_{\vee}$-$\ta$ and $L_3$-$\ta$. We conclude that also in this case, we have an embedding of $\widetilde{u}_j$ into $\widetilde{\doc}_j$ that satisfies all gap-constraints.

Assume now that we can embed $u$ into $\doc$ such that all gap-constraints are satisfied, and let $\pi$ be the assignment induced by embedding $\widehat{u}$ into $\widehat{\doc}$. We know that, for every $j \in \{1, 2, \ldots, m\}$, $\widetilde{u}_j$ is embedded into $\widetilde{\doc}_j$ such that all gap-constraints are satisfied. For the sake of a contradiction, let us assume that $\pi(l_{j, 1}) \vee \pi(l_{j, 2}) \vee \pi(l_{j, 3}) = 0$, which means that $\pi(l_{j, 1}) = \pi(l_{j, 2}) = \pi(l_{j, 3}) = 0$. This implies that $L_1$-$\ta$ and $L_2$-$\ta$ are both mapped to $0$, which means that $S_1$-$\ta$ (and therefore $S_2$-$\ta$) is mapped to the middle, which means that $L_{\vee}$-$\ta$ is mapped to $0$. Since $L_3$-$\ta$ is also mapped to $0$, the gap induced by $L_{\vee}$-$\ta$ and $L_3$-$\ta$ has length $0$, which violates the $[1,2]$-length constraint for $L_{\vee}$-$\ta$ and $L_3$-$\ta$. This is a contradiction to our assumption that $\widetilde{u}_j$ is embedded into $\widetilde{\doc}_j$ such that all gap-constraints are satisfied; thus, $\pi(l_{j, 1}) \vee \pi(l_{j, 2}) \vee \pi(l_{j, 3}) = 0$ is not possible. Hence, $\pi$ is satisfying.
\end{proof}

\begin{restatable}{theorem}{abLangThm}
\label{abepsLanguageHardnessTheorem}
The matching problem with length constraints and $\{\ta\tb,\varepsilon\}$-constraints is NP-complete.
\end{restatable}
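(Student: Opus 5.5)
Membership in NP is immediate: an embedding can be guessed and all constraints verified in polynomial time. For hardness we adapt the 3-SAT reduction of Theorem~\ref{aaepsLanguageHardnessTheorem}. In that reduction, $\{\ta\ta,\eword\}$-constraints serve only two purposes. First, in each $x_i$-assignment gadget they force the two $\ta$'s of $\widehat{u}_i$ onto either the middle pair or the outer pair of $\ta\ta\ta\ta$ inside $\tb\ta\ta\ta\ta\tb$. Second, in each clause gadget they link the $S_2$-$\ta$ to the $L_\vee$-$\ta$. In both places the constraint compares a gap of length $0$ with a gap of length $2$, where the length-$2$ gap is exactly the factor lying between the two images. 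So it suffices to redesign the gadgets in $\doc$ so that every length-$2$ gap we want to allow reads $\ta\tb$, while every other gap is excluded either by its length or by its content. All length constraints stay as they are.

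Concretely, we replace each relevant $\ta$-block by a block over $\{\ta,\tb\}$. The assignment block $\ta\ta\ta\ta$ becomes a block such as $\ta\ta\tb\ta$. The query letters are then mapped onto the two $\ta$'s in positions $2,4$ (outer pair, gap $\tb$) or positions $1,2$ (inner pair, gap $\eword$). To make this work, the intended $0/1$ positions have to be re-chosen and the query letters themselves changed where necessary, for example by using $\tb$ as a query letter. Since the symbol $\tb$ now also appears inside the gadgets, the gadget separators can no longer be single $\tb$'s. We therefore use a separator that cannot occur inside a gadget, such as $\tb\tb\tb$ with a matching $\tb\tb\tb$ in $u$. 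Alternatively, we keep the separators and pin every separator letter with exact length constraints $[k,k]$ from the first letter of $u$, so that its image is fixed to one position of $\doc$. This second option is the more robust one: once all separators are pinned, each gadget letter can only move inside its own block, and the analysis becomes purely local.

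Next we recompute the offsets $\alpha(i,j,\ell)$ and $\beta(i,j,\ell)$ for the new block lengths. We also re-tune the local length constraints, namely $[0,1]$ next to the assignment gadgets, $[0,2]$ around $S_1$, $[5,5]$ between $S_1$ and $S_2$, $[3,4]$ for $L_\vee$, and $[1,2]$ between $L_\vee$ and $L_3$, so that each letter's reachable positions are the same $2$ or $3$ choices as before. Correctness then follows exactly as in the earlier proof, once we have checked gadget by gadget that the $\{\ta\tb,\eword\}$-constraint admits precisely the intended image pairs.

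The main obstacle is the clause link between $S_2$ and $L_\vee$. There the earlier construction used the symmetry of $\ta\ta$: both the left and the right choice for $S_2$ had to force $L_\vee$ to $1$. With the asymmetric word $\ta\tb$, the left and right choices for $S_2$ produce gaps with different contents. I would first try to resolve this by splitting the link into two constraints that route through auxiliary letters, for instance a mirrored copy of $S_2$ connected by a $[k,k]$ length constraint, so that each of the two choices sees a gap reading $\ta\tb$. If that fails, I would reorder the $S$-blocks so that only a two-way choice (middle versus not-middle) matters, and let the length constraints absorb the remaining cases. Finally, we confirm by exhaustive inspection of each constant-size gadget that no unintended embedding survives.
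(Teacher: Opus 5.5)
You identify the crux correctly: the link from $S_2$ to $L_\vee$ breaks once the symmetric word $\ta\ta$ is replaced by $\ta\tb$. But you leave it unresolved, and none of your remedies works as stated.

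A ``mirrored copy of $S_2$ connected by a $[k,k]$ length constraint'' cannot exist. An exact length constraint only translates positions and preserves their order.

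Your fallback of letting length constraints absorb the remaining cases also fails, for a structural reason. Length constraints, letter restrictions and order constraints all define relations closed under both pointwise $\min$ and pointwise $\max$. This closure survives chaining through auxiliary letters. The relation you need between the positions of $S_2$ and $L_\vee$ contains (left, $1$), (right, $1$) and (middle, $0$), but not (middle, $1$). If the $0$-position of $L_\vee$ lies left of its $1$-position, the pointwise $\max$ of (left, $1$) and (middle, $0$) is (middle, $1$). Otherwise the pointwise $\min$ of (middle, $0$) and (right, $1$) is (middle, $1$). So no reordering of blocks can make length constraints realise this relation; the non-convex constraint must do the essential work.

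A single $\{\ta\tb,\eword\}$-constraint between the one-letter $S_2$ and $L_\vee$ fails as well. Suppose $S_2$ precedes $L_\vee$ and both extreme candidates $s_1<s_3$ must be compatible with the $1$-position $t$ of $L_\vee$. The two gaps have distinct lengths in $\{0,2\}$, so $s_3=t-1$, $s_1=t-3$ and $\doc[t-2]\doc[t-1]=\ta\tb$. Then the middle candidate $t-2$ carries $\ta$ while $s_3$ carries $\tb$, which is impossible for images of one query letter. The case where $L_\vee$ precedes $S_2$ is symmetric.

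The missing idea is the paper's. Give $S_2$ \emph{two} query letters, an $\ta$ followed by a $\tb$, place them at the end of the clause gadget, and embed them into the interleaved block $\ta\tb\ta\ta\tb\tb$. The $\ta$-positions of this block form the $S_2$-$\ta$-block and its $\tb$-positions form the $S_2$-$\tb$-block. The $\{\ta\tb,\eword\}$-constraint between the two letters then realises the permutation left$\mapsto$left (gap $\eword$), middle$\mapsto$right (gap $\ta\tb$), right$\mapsto$middle (gap $\eword$). The forced case thus becomes an \emph{extreme} position of the $S_2$-$\tb$-block. A single length constraint, $[3,7]$ between $L_\vee$ and $S_2$-$\tb$, separates it from the other two; $S_1$ is tied to $S_2$-$\ta$ by $[8,9]$.

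Your assignment gadget is also broken as written. In $\ta\ta\tb\ta$ the positions $2,4$ have gap $\tb\notin\{\ta\tb,\eword\}$, so only the pairs $(1,2)$ and $(1,4)$ survive. In both, the first letter never moves and the second jumps by two positions. So there are no two letters moving oppositely by one position to feed $x_i$ and $\neg x_i$ into the two-position literal blocks. The paper instead embeds the query $\ta\tb$ into $\ta\ta\tb\tb$, where exactly the pairs $(2,3)$ (gap $\eword$) and $(1,4)$ (gap $\ta\tb$) survive.
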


\begin{proof}
    We use a similar reduction as for Theorem~\ref{aaepsLanguageHardnessTheorem}. 
    Let $F = \{c_1,c_2,\ldots, c_m\}$ be a $3$-CNF formula, where every $c_j = \{l_{j,1}, l_{j,2}, l_{j,3}\}\subseteq \{x_1,\neg x_1, x_2, \neg x_2, \ldots, x_n, \neg x_n\}$ is a clause with three literals. 

We first construct an instance over the alphabet $\{\ta, \tb, \tc\}$ and then explain later how this can be adapted to the binary alphabet $\{\ta, \tb\}$. For every $i \in \{1, 2, \ldots, n\}$, the \emph{$x_i$-assignment gadget} is defined as follows:

    \begin{tabular}{ccccc}
        $\widehat{u}_i =$ & $\tc$ & $\ta$ & $\tb$ & $\tc$ \\
        $\widehat{\doc}_i =$ & $\tc$ & $\ta \ta$ & $\tb \tb$ & $\tc$\\
         & & $0 1$ & $01$ & 
    \end{tabular}

    We interpret the $\ta$ of $\widehat{u}_i$ to correspond to $x_i$ and the $\tb$ of $\widehat{u}_i$ to correspond to $\neg x_i$. Furthermore, the occurrences of $\ta$ and $\tb$ of $\widehat{\doc}_i$ are interpreted as either $0$ or $1$ as illustrated above. Consequently, embedding $\widehat{u}_i$ into $\widehat{\doc}_i$ can be interpreted as mapping $x_i$ to either $0$ or $1$ and mapping $\neg x_i$ to either $0$ or $1$. By using gap-constraints, we want to enforce that whenever $\widehat{u}_i$ is embedded into $\widehat{\doc}_i$, then this can only happen in one of the following two ways:
    
    \begin{tabular}{cccccc}
        $\tc$ & & $\ta$ & $\tb$ & & $\tc$ \\
        $\tc$ & $\ta$ & $\ta$ & $\tb$ & $\tb$ & $\tc$ \\
        & $0$ & $1$ & $0$ & $1$ &
    \end{tabular}
    \hspace{1cm}
    or
    \hspace{1cm}
    \begin{tabular}{cccccc}
        $\tc$ & $\ta$ &  & & $\tb$ & $\tc$ \\
        $\tc$ & $\ta$ & $\ta$ & $\tb$ & $\tb$ & $\tc$ \\
        & $0$ & $1$ & $0$ & $1$ &
    \end{tabular}
    
    The first embedding is interpreted as assigning $x_i$ to $1$ (and accordingly assigning $\neg x_i$ to $0$), while the second one is interpreted as assigning $x_i$ to $0$ (and accordingly assigning $\neg x_i$ to $1$). 

    We enforce that the two embeddings depicted above are the only possible ones as follows. We use an $L$-constraint between the $\ta$- and $\tb$-occurrence in $\widehat{u}_i$, which excludes embeddings that would map both $x_i$ and $\neg x_i$ to $0$ or both $x_i$ and $\neg x_i$ to $1$. It can be easily seen that this constraint is only satisfied by the two embeddings illustrated above.

    As for Theorem~\ref{aaepsLanguageHardnessTheorem}, we define $\widehat{u} = \widehat{u}_1 \widehat{u}_2 \cdots \widehat{u}_n$ and $\widehat{\doc} = \widehat{\doc}_1 \widehat{\doc}_2 \cdots \widehat{\doc}_n$, and we note that if $\widehat{u}$ can be embedded into $\widehat{\doc}$, then this means that every $\widehat{u}_i$ is embedded into $\widehat{\doc}_i$, which, as observed above, induces an assignment $\pi \colon \{x_1, x_2, \ldots, x_n\} \to \{0, 1\}$.

    For every $j \in \{1, 2, \ldots, m\}$, the \emph{$c_j$-clause gadget} is defined as follows:

    \begin{tabular}{ccccccccc}
        & & $L_1$ & $S_1$ & $L_2$ & $L_{\vee}$ & $L_3$ & $S_2$ & \\ 
        $\widetilde{u}_j$ & $\tc$ & $\ta$ & $\ta$ & $\ta$ & $\ta$ & $\ta$ & $\ta\ \tb$ & $\tc$ \\
        $\widetilde{\doc}_j$ & $\tc$ & $\ta\ta$ & $\ta\ta\ta$ & $\ta\ta$ & $\ta\ta$ & $\ta\ta$ & $\ta\tb\ta\ta\tb\tb$ & $\tc$ \\
        & & $01$ & & $10$ & $10$ & $01$ & & 
    \end{tabular}

    As shown in the alignment above, we mark $3$ occurrences of $\ta$ in $\widetilde{u}_j$ with $L_1, L_2, L_3$ (representing the three literals $l_{j,1}, l_{j,2}$ and $l_{j,3}$ of clause $c_j$), one occurrence of $\ta$ with $L _{\vee}$ (representing the disjunction $l_{j,1}\vee l_{j,2}$), one occurrence of $\ta$ with $S_1$ and a factor $\ta \tb$ with $S_2$ (used for auxiliary purposes). In particular, we can then talk about the $L_1$-$\ta$, $S_1$-$\ta$, $L_2$-$\ta$, $L_{\vee}$-$\ta$, $L_3$-$\ta$, $S_2$-$\ta$ and $S_2$-$\tb$ of $\widetilde{u}_j$; note that $S_2$ corresponds to two letters $\ta$ and $\tb$, where the $\ta$ has to occur before the $\tb$ in the same block. Moreover, for $\ell\in\{1,2,3,\vee\}$, the $L_{\ell}$-$\ta$-block of $\widetilde{\doc}_j$ is the $\ta\ta$-factor aligned with the $L_{\ell}$-$\ta$ (according to the alignment given above), and the $S_1$-$\ta$-block of $\widetilde{\doc}_j$ is the $\ta\ta\ta$-factor aligned with the $S_1$-$\ta$. 
    The $S_2$-$\ta$-block (and $S_2$-$\tb$-block, respectively) of $\widetilde{\doc}_j$ is the $\ta\ta\ta$-subsequence ($\tb\tb\tb$-subsequence) of the $\ta\tb\ta\ta\tb\tb$-factor aligned with the $S_2$-$\ta$ ($S_2$-$\tb$).

    We define $\widetilde{u} = \widetilde{u}_1 \widetilde{u}_2 \cdots \widetilde{u}_m$ and $\widetilde{\doc} = \widetilde{\doc}_1 \widetilde{\doc}_2 \cdots \widetilde{\doc}_m$, and finally $u = \widehat{u} \: \widetilde{u}$ and $\doc = \widehat{\doc} \: \widetilde{\doc}$. Due to the occurrences of $\tc$, if $u$ can be embedded into $\doc$, then every $\widehat{u}_i$ is embedded into $\widehat{\doc}_i$ and every $\widetilde{u}_j$ is embedded into $\widetilde{\doc}_j$.

    In the following, we will add further gap-constraints that enforce that if $u$ can be embedded into $\doc$ (such that all gap-constraints are satisfied), then the assignment $\pi$ induced by the embedding is satisfying, and if there is some satisfying assignment $\pi$, then $u$ can be embedded into $\doc$ (such that all gap-constraints are satisfied). 

    The first step is to achieve a synchronisation between the assignment gadgets and the clause gadgets: For every $\ell \in \{1, 2, 3\}$, if $l_{j, \ell} \in \{x_i, \neg x_i\}$, then the $L_\ell$-$\ta$ of $\widetilde{u}_j$ is synchronised with the $x_i$-assignment gadget, i.e., if $l_{j, \ell} = x_i$, then the $L_{\ell}$-$\ta$ of $\widetilde{u}_j$ is mapped to $1$ if and only if $x_i$ is assigned to $1$ (in the sense defined above, i.e., $\widehat{u}_i$ is mapped to $\widehat{\doc}_i$ according to the first of the two possible embeddings), and if $l_{j, \ell} = \neg x_i$, then $L_{\ell}$-$\ta$ is mapped to $1$ if and only if $\neg x_i$ is assigned to $1$.

    We first define, for every $i \in \{1, 2, \ldots, n\}$, $j \in \{1, 2, \ldots, m\}$ and $\ell \in \{1, 2, 3\}$, the number $\alpha(i, j, \ell)$ of symbols of $\doc$ that lie strictly between the first $\ta$-occurrence of $\widehat{\doc}_i$ and the $L_{\ell}$-$\ta$-block of $\widetilde{\doc}_j$, and let $\beta(i, j, \ell)$ denote the number of symbols of $\doc$ that lie strictly between the first $\tb$-occurrence of $\widehat{\doc}_i$ and the $L_{\ell}$-$\ta$-block of $\widetilde{\doc}_j$. Obviously, these numbers $\alpha(i, j, \ell)$ and $\beta(i, j, \ell)$ only depend on $i, j$ and $\ell$. 

    For every $j \in \{1, 2, \ldots, m\}$, we add the following length constraints. If $l_{j, 1} = x_i$, then we add an $[\alpha(i, j, 1), \alpha(i, j, 1)]$-length constraint  between the $\ta$-occurrence of $\widehat{u}_i$ and the $L_1$-$\ta$ of $\widetilde{u}_j$, and if $l_{j, 1} = \neg x_i$, then we add a $[\beta(i, j, 1), \beta(i, j, 1)]$-length constraint between the $\tb$-occurrence of $\widehat{u}_i$ and the $L_1$-$\ta$ of $\widetilde{u}_j$. We proceed analogously with respect to the $L_3$-$\ta$: if $l_{j, 3} = x_i$, then we add an $[\alpha(i, j, 3), \alpha(i, j, 3)]$-length constraint between the $\ta$-occurrence of $\widehat{u}_i$ and the $L_3$-$\ta$ of $\widetilde{u}_j$, and if $l_{j, 3} = \neg x_i$, then we add a $[\beta(i, j, 3), \beta(i, j, 3)]$-length constraint between the $\tb$-occurrence of $\widehat{u}_i$ and the $L_3$-$\ta$ of $\widetilde{u}_j$. With respect to the $L_2$-$\ta$, the situation is slightly different, due to the swapped order of $0$ and $1$ in the $L_2$-$\ta$-block of $\widetilde{\doc}_j$: if $l_{j, 2} = x_i$, then we add a $[\beta(i, j, 2), \beta(i, j, 2)]$-length constraint between the $\tb$-occurrence of $\widehat{u}_i$ and the $L_2$-$\ta$ of $\widetilde{u}_j$, and if $l_{j, 2} = \neg x_i$, then we add an $[\alpha(i, j, 2), \alpha(i, j, 2)]$-length constraint between the $\ta$-occurrence of $\widehat{u}_i$ and the $L_2$-$\ta$ of $\widetilde{u}_j$. 

    It can be easily seen that these length constraints enforce the desired synchronisation property, e.g., if $l_{j, 1} = x_i$ and the $\ta$-occurrence of $\widehat{u}_i$ is mapped to $0$, then due to the definition of $\alpha(i, j, 1)$, the $L_1$-$\ta$ of $\widetilde{u}_j$ is also mapped to $0$, and if instead the $\ta$-occurrence of $\widehat{u}_i$ is mapped to $1$, then the $L_1$-$\ta$ of $\widetilde{u}_j$ is also mapped to $1$. This means that embedding $u$ into $\doc$ induces an assignment $\pi$ such that $\pi(l_{j, \ell}) = 1$ if and only if the $L_{\ell}$-$\ta$ of $\widetilde{u}_j$ is mapped to $1$. 

    In order to enforce that each position of $\widetilde{u}_j$ can only be mapped to its corresponding block in $\widetilde{\doc}_j$, we also add gap-constraints between the first $\tc$ of $\widetilde{u}_j$ and all other positions of $\widetilde{u}_j$, e.g. we add a $[0,1]$-length constraint between the first $\tc$ and the $L_1$-$\ta$, and an $[11,16]$-length constraint from the first $\tc$ to both $S_2$-$\ta$ and $S_2$-$\tb$. 

    Next, we have to introduce gap-constraints that enforce that $\widetilde{u}_j$ can be embedded into $\widetilde{\doc}_j$ if and only if $\pi$ satisfies clause $c_j$. We start with gap-constraints that enforce that $\pi(l_{j, 1}) \vee \pi(l_{j, 2}) = 0$ implies that $S_1$-$\ta$ is necessarily mapped to the middle of the $S_1$-$\ta$-block. This can be achieved by using a $[0, 2]$-length constraint between the $L_1$-$\ta$ and the $S_1$-$\ta$, and a $[0, 2]$-length constraint between the $S_1$-$\ta$ and the $L_2$-$\ta$. Now, if $\pi(l_{j, 1}) \vee \pi(l_{j, 2}) = 0$, then $L_1$-$\ta$ is mapped to $0$ and $L_2$-$\ta$ is mapped to $0$, which means that mapping $S_1$-$\ta$ to the left or to the right of the $S_1$-$\ta$-block would violate one of these $[0, 2]$-length constraints, while mapping $S_1$-$\ta$ to the middle satisfies both these $[0, 2]$-length constraints. Moreover, we observe that we can still map $S_1$-$\ta$ in all other constellations, i.e., if $\pi(l_{j, 1}) = 1$ and $\pi(l_{j, 2}) = 0$, then the $S_1$-$\ta$ can be mapped to the right or to the middle of the $S_1$-$\ta$-block, if $\pi(l_{j, 1}) = 0$ and $\pi(l_{j, 2}) = 1$, then the $S_1$-$\ta$ can be mapped to the left or to the middle of the $S_1$-$\ta$-block, and if $\pi(l_{j, 1}) = \pi(l_{j, 2}) = 1$, then the $S_1$-$\ta$ can be mapped to the left or to the right or to the middle of the $S_1$-$\ta$-block. In summary: if $\pi(l_{j, 1}) \vee \pi(l_{j, 2}) = 0$, then $S_1$-$\ta$ must be mapped to the middle of the $S_1$-$\ta$-block, and if $\pi(l_{j, 1}) \vee \pi(l_{j, 2}) = 1$, then it is possible that $S_1$-$\ta$ is \emph{not} mapped to the middle of the $S_1$-$\ta$-block (whether it is mapped to the left or right depends on the actual values of $\pi(l_{j, 1})$ and $\pi(l_{j, 2})$).

    We synchronise $S_1$-$\ta$ and $S_2$-$\ta$ such that $S_2$-$\ta$ can be mapped to the left (right) of the $S_2$-$\ta$-block if and only if $S_1$-$\ta$ is mapped to the left (right) of the $S_1$-$\ta$-block and $S_2$-$\ta$ is mapped to the middle of the $S_2$-$\ta$-block if $S_1$-$\ta$ is mapped to the middle of the $S_1$-$\ta$-block. This can be done by an $[8,9]$-length constraint between $S_1$-$\ta$ and $S_2$-$\ta$, which enforces the following: \begin{itemize}[nosep]
        \item If $S_1$-$\ta$ is mapped to the left of the $S_1$-$\ta$-block, then $S_2$-$\ta$ can only be mapped to the left of the $S_2$-$\ta$-block,
        \item if $S_1$-$\ta$ is mapped to the middle of the $S_1$-$\ta$-block, then $S_2$-$\ta$ can only be mapped to the middle of the $S_2$-$\ta$-block, and 
        \item if $S_1$-$\ta$ is mapped to the right of the $S_1$-$\ta$-block, then $S_2$-$\ta$ can be mapped to the middle or the right of the $S_2$-$\ta$-block. 
    \end{itemize}  
        In particular, this means that $\pi(l_{j,1})\vee\pi(l_{j,2})=0$ implies that $S_2$-$\ta$ is necessarily mapped to the middle of the $S_2$-$\ta$-block. 

    In order to map the three values of the $S_2$-$\ta$-block to two positions, we need to permute its values. To this end, we add an $L$-constraint between $S_2$-$\ta$ and $S_2$-$\tb$: \begin{itemize}[nosep]
        \item If $S_2$-$\ta$ is mapped to the left of the $S_2$-$\ta$-block, $S_2$-$\tb$ can only be mapped to the left of the $S_2$-$\tb$-block, 
        \item if $S_2$-$\ta$ is mapped to the middle of the $S_2$-$\ta$-block, $S_2$-$\tb$ can only be mapped to the right of the $S_2$-$\tb$-block, and
        \item if $S_2$-$\ta$ is mapped to the right of the $S_2$-$\ta$-block, $S_2$-$\tb$ can only be mapped to the middle of the $S_2$-$\tb$-block.
    \end{itemize}
        Thus, $\pi(l_{j,1})\vee\pi(l_{j,2})=0$ implies that $S_2$-$\tb$ has to be mapped to the right of the $S_2$-$\tb$-block. 
        Using an additional $[3,7]$-length constraint between the $S_2$-$\tb$ and the $L_{\vee}$-$\ta$ (note that we utilise gap-constraints to the first $\tc$ to guarantee that the $L_{\vee}$-$\ta$ is mapped to the $L_{\vee}$-$\ta$-block), we can therefore enforce that if $S_2$-$\tb$ is mapped to the right of the $S_2$-$\tb$-block, then the $L_{\vee}$-$\ta$ can only be mapped to the right of the $L_{\vee}$-$\ta$-block. Otherwise, if $S_2$-$\tb$ is mapped to the left or the middle of the $S_2$-$\tb$-block, then the $L_{\vee}$-$\ta$ can be mapped to either position of the $L_{\vee}$-$\ta$-block. In particular, if $\pi(l_{j,1})\vee\pi(l_{j,2})=0$ then the $L_{\vee}$-$\ta$ can only be mapped to the right of the $L_{\vee}$-$\ta$-block and if $\pi(l_{j,1})\vee\pi(l_{j,2})=1$ then there is an embedding such that the $L_{\vee}$-$\ta$ is mapped to the left of the $L_{\vee}$-$\ta$-block. 

    Finally, we add a $[1,2]$-length constraint between the $L_{\vee}$-$\ta$ and the $L_3$-$\ta$. This means that if $L_{\vee}$-$\ta$ is mapped to the right of the $L_{\vee}$-$\ta$-block (corresponding to value $0$), then the $L_3$-$\ta$ must be mapped to $1$ (i.e., to the right of the $L_3$-$\ta$-block), and if the $L_{\vee}$-$\ta$ is mapped to the left of the $L_{\vee}$-$\ta$-block (corresponding to $1$), then the $L_3$-$\ta$ can be mapped to $0$ or $1$.  

    This concludes the definition of the gap-constraints and therefore concludes the definition of the instance of the matching problem.

    We now assume that there is a satisfying assignment $\pi$, i.e., $\pi(l_{j, 1}) \vee \pi(l_{j, 2}) \vee \pi(l_{j, 3}) = 1$ for every $j \in \{1, 2, \ldots, m\}$, and we will show that this implies that $u$ can be embedded into $\doc$ such that all gap-constraints are satisfied. First, we embed each $\widehat{u}_i$ into $\widehat{\doc}_i$ as determined by $\pi$, i.e., if $\pi(x_i) = 1$, we use the first embedding mentioned above, and if $\pi(x_i) = 0$, we use the second embedding mentioned above. This describes an embedding of $\widehat{u}$ in $\widehat{\doc}$.

    For embedding $\widetilde{u}$ into $\widetilde{\doc}$, we explain how we can embed $\widetilde{u}_j$ into $\widetilde{\doc}_j$. For every $\ell \in \{1, 2, 3\}$, we map the $L_{\ell}$-$\ta$ to $1$ if $\pi(l_{j, \ell}) = 1$ and to $0$ if $\pi(l_{j, \ell}) = 0$ (this obviously satisfies all the gap-constraints between $\widetilde{u}_j$ and the corresponding assignment gadgets). If $\pi(l_{j, 1}) \vee \pi(l_{j, 2}) = 1$, then, as observed above, we can map $S_1$-$\ta$ to the left or to the right, depending on the actual values of $\pi(l_{j, 1})$ and $\pi(l_{j, 2})$. 
        This means that we can map $S_2$-$\ta$ to the left or to the right, and therefore the $S_2$-$\tb$ can be mapped to the left or the middle. In both cases, mapping the $L_{\vee}$-$\ta$ to $1$ (i.e., the left) satisfies the length constraint between the $S_2$-$\tb$ and the $L_{\vee}$-$\ta$. Furthermore, regardless of where the $L_3$-$\ta$ is mapped to, the $[1,2]$-length constraint between the $L_{\vee}$-$\ta$ and the $L_3$-$\ta$ is satisfied. 
        Consequently, we have an embedding of $\widetilde{u}_j$ into $\widetilde{\doc}_j$ that satisfies all gap-constraints. We have to consider the case that $\pi(l_{j, 1}) \vee \pi(l_{j, 2}) = 0$, which means that $\pi(l_{j, 1}) = 0$ and $\pi(l_{j, 2}) = 0$. In this case, the $S_1$-$\ta$ (and therefore the $S_2$-$\ta$) can only be mapped to the middle, which means that the $S_2$-$\tb$ has to be mapped to the right and further, the $L_{\vee}$-$\ta$ must be mapped to $0$. However, since $\pi(l_{j, 1}) \vee \pi(l_{j, 2}) \vee \pi(l_{j, 3}) = 1$, we know that $\pi(l_{j, 3}) = 1$, which means that the $L_3$-$\ta$ is mapped to $1$, which satisfies the $[1,2]$-length constraint for $L_{\vee}$-$\ta$ and $L_3$-$\ta$. We conclude that also in this case, we have an embedding of $\widetilde{u}_j$ into $\widetilde{\doc}_j$ that satisfies all gap-constraints.

    Assume now that we can embed $u$ into $\doc$ such that all gap-constraints are satisfied, and let $\pi$ be the assignment induced by embedding $\widehat{u}$ into $\widehat{\doc}$. We know that, for every $j \in \{1, 2, \ldots, m\}$, $\widetilde{u}_j$ is embedded into $\widetilde{\doc}_j$ such that all gap-constraints are satisfied. For the sake of a contradiction, let us assume that $\pi(l_{j, 1}) \vee \pi(l_{j, 2}) \vee \pi(l_{j, 3}) = 0$, which means that $\pi(l_{j, 1}) = \pi(l_{j, 2}) = \pi(l_{j, 3}) = 0$. This implies that the $L_1$-$\ta$ and the $L_2$-$\ta$ are both mapped to $0$, which means that $S_1$-$\ta$ (and therefore $S_2$-$\ta$) is mapped to the middle and $S_2$-$\tb$ is mapped to the right, which means that the $L_{\vee}$-$\ta$ is mapped to $0$. Since the $L_3$-$\ta$ is also mapped to $0$, the gap induced by the $L_{\vee}$-$\ta$ and the $L_3$-$\ta$ has length $0$, which violates the $[1,2]$-length constraint for the $L_{\vee}$-$\ta$ and the $L_3$-$\ta$. This is a contradiction to our assumption that $\widetilde{u}_j$ is embedded into $\widetilde{\doc}_j$ such that all gap-constraints are satisfied; thus, $\pi(l_{j, 1}) \vee \pi(l_{j, 2}) \vee \pi(l_{j, 3}) = 0$ is not possible. Hence, $\pi$ is satisfying.

It now only remains to adjust the reduction above to only require a binary alphabet $\Sigma=\{\ta,\tb\}$. We will only discuss the necessary adjustments.

    For every $i\in[n]$, the \emph{$x_i$-assignment gadget} is defined as:

    \begin{tabular}{ccc}
        $\widehat{u}_i =$ & $\ta$ & $\tb$ \\
        $\widehat{\doc}_i =$ &  $\ta \ta$ & $\tb \tb$\\
         & $0 1$ & $01$
    \end{tabular}

    Like defined above, we define $\widehat{u} = \widehat{u}_1 \widehat{u}_2 \cdots \widehat{u}_n$ and $\widehat{\doc} = \widehat{\doc}_1 \widehat{\doc}_2 \cdots \widehat{\doc}_n$, and we note that if $\widehat{u}$ can be embedded into $\widehat{\doc}$, then this means that every $\widehat{u}_i$ is embedded into $\widehat{\doc}_i$, which, as observed above, induces an assignment $\pi \colon \{x_1, x_2, \ldots, x_n\} \to \{0, 1\}$.

    Further, for every $j\in[m]$, the \emph{$c_j$-clause gadget} is defined as:

    \begin{tabular}{ccccccc}
        & $L_1$ & $S_1$ & $L_2$ & $L_{\vee}$ & $L_3$ & $S_2$ \\ 
        $\widetilde{u}_j$ & $\ta$ & $\ta$ & $\ta$ & $\ta$ & $\ta$ & $\ta\ \tb$ \\
        $\widetilde{\doc}_j$ & $\ta\ta$ & $\ta\ta\ta$ & $\ta\ta$ & $\ta\ta$ & $\ta\ta$ & $\ta\tb\ta\ta\tb\tb$ \\
        & $01$ & & $10$ & $10$ & $01$ &
    \end{tabular}

    Note that the assignment (resp., clause) gadgets directly correspond to the assignment (resp., clause) gadgets as used above, but without the occurrences of letter $\tc$. Again, we define $\widetilde{u} = \widetilde{u}_1 \widetilde{u}_2 \cdots \widetilde{u}_m$ and $\widetilde{\doc} = \widetilde{\doc}_1 \widetilde{\doc}_2 \cdots \widetilde{\doc}_m$. Now, let $u = \tb \widehat{u} \: \widetilde{u}\tb$ and $\doc = \tb\widehat{\doc} \: \widetilde{\doc}\tb$. We will add a $[|\doc|-2, |\doc|-2]$-gap-constraint between the first and last letter (i.e., the $\tb$ before $\widehat{u}$ and the $\tb$ after $\widetilde{u}$) of $u$, where $|\doc|= 4n+17m+2$. Thus, we force every satisfying embedding to contain the first and last position of $\doc$, which fixes the position of the first and last letter of $u$. 

    Using these two fixed letters, we can now use length constraints to guarantee that every position of $u$ can only be mapped to its corresponding block in $\doc$. E.g. we add a $[4n-2,4n-1]$-constraint between the first $\tb$ of $u$ and the $\tb$-occurrence of the $x_n$-assignment gadget, which enforces that $\widehat{u}$ has to be embedded in $\widehat{\doc}$. These constraints also enforce that $\widetilde{u}_j$ is embedded in $\widetilde{\doc}_j$ for all $j\in[m]$ and that every position is embedded in its corresponding block. This process works analogously to the length constraints for the occurrences of $\tc$ used in the construction above. The rest of the proof is now identical to the case discussed above for alphabet $\{\ta, \tb, \tc\}$.
\end{proof}

\begin{theorem}\label{leftAndRightConvexHardnessObservation}
The matching problem where every gap-constraint is left-convex or right-convex is NP-complete.
\end{theorem}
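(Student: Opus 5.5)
Membership in NP is immediate: guess an embedding $e$ (a list of $|u|$ positions of $\doc$) and verify in polynomial time that it is an embedding and that $\doc[e(i)+1:e(j)-1]\in L$ for every $(i,j,L)\in\mathcal C$. Each such test is regular-expression membership, which is polynomial via the $\eword$NFA conversion recalled in Section~\ref{sec:prelim}.

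For hardness I reduce from Theorem~\ref{abepsLanguageHardnessTheorem}, i.e., from the matching problem with length constraints and $\{\ta\tb,\eword\}$-constraints over $\{\ta,\tb\}$. The reduction replaces constraints one by one and leaves $u$ and $\doc$ unchanged:
\begin{itemize}
\item A length constraint $(i,j,[a,b])$ is kept as it is. Length languages are infix-closed in the relevant sense, and are listed in Appendix~\ref{sec:leftConvexProp} as left-convex (and by symmetry right-convex). Its regular expression has polynomial size, because $a,b\le|\doc|$ and we may write $\Sigma^a(\Sigma\cup\eword)^{b-a}$ in size $O(|\doc|)$.
\item A constraint $(i,j,\{\ta\tb,\eword\})$ is replaced by the two constraints $(i,j,L_1)$ and $(i,j,L_2)$, where $L_1=\{\ta\tb,\ta,\eword\}$ and $L_2=\{\ta\tb,\tb,\eword\}$.
\end{itemize}
The replacement is equivalent because an embedding satisfies both new constraints iff its $(i,j)$-gap lies in $L_1\cap L_2=\{\ta\tb,\eword\}$.

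It remains to check that $L_1$ is left-convex and $L_2$ is right-convex, and this is the step that needs care. Since $L_2$ is the reversal of $\{\tb\ta,\tb,\eword\}$, it is enough to check $L_1$ and, by the symmetric argument, $L_2$. For $L_1$, suppose $xvw\in L_1$ and $v\in L_1$; we must show $xv\in L_1$. If $xvw$ has length at most $1$, then $xv$ is a prefix of a word in $\{\ta,\eword\}$, and $L_1$ is prefix-closed, so $xv\in L_1$. If $xvw=\ta\tb$, then $xv$ is a prefix of $\ta\tb$, namely $\eword$, $\ta$ or $\ta\tb$, and all three lie in $L_1$. So $L_1$ is in fact prefix-closed, and left-convexity follows from the example list in Appendix~\ref{sec:leftConvexProp}. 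Symmetrically, $L_2$ is suffix-closed and hence right-convex.

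The instance has polynomial size, so NP-hardness follows from Theorem~\ref{abepsLanguageHardnessTheorem}.
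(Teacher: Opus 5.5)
Your proposal is correct and takes the same route as the paper: reuse the reduction of Theorem~\ref{abepsLanguageHardnessTheorem} and replace each $(i,j,\{\ta\tb,\eword\})$-constraint by the left-convex $(i,j,\{\ta\tb,\ta,\eword\})$ and the right-convex $(i,j,\{\ta\tb,\tb,\eword\})$, whose intersection is $\{\ta\tb,\eword\}$. Your NP-membership argument, the prefix-/suffix-closedness check and the polynomial-size encoding of length constraints add details the paper leaves implicit, but one justification needs correcting: length languages $[a,b]$ with $a>0$ are not infix-closed, and they are left- and right-convex simply because $|v|\le|xv|\le|xvw|$ and $|v|\le|vw|\le|xvw|$.
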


\begin{proof}
The language $\{\ta \tb, \ta, \eword\}$ is left-convex (but not right-convex) and $\{\ta \tb, \tb, \eword\}$ is right-convex (but not left-convex), but their intersection $\{\ta \tb, \ta, \eword\} \cap \{\ta \tb, \tb, \eword\} = \{\ta \tb, \eword\}$ is 
the language that we used in Theorem~\ref{abepsLanguageHardnessTheorem} in addition to length constraints. (Note that this language is neither left-convex nor right-convex.) Consequently, we can prove the theorem by the same reduction that we used for the proof of Theorem~\ref{abepsLanguageHardnessTheorem}, but each constraint $(i, j, \{\ta \tb, \eword\})$ is replaced by the two left- respectively right-convex constraints $(i, j, \{\ta \tb, \ta, \eword\})$ and $(i, j, \{\ta \tb, \tb, \eword\})$.
\end{proof}

\begin{theorem}\label{fixedLangHardness}
There is a fixed regular language $L$ such that the matching problem with $L$-constraints is NP-complete.
\end{theorem}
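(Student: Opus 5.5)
We will modify the reduction of Theorem~\ref{aaepsLanguageHardnessTheorem} (or the binary-alphabet variant of Theorem~\ref{abepsLanguageHardnessTheorem}) so that every gap-constraint uses one fixed regular language $L$. Membership in NP is immediate: guess an embedding and check each regular constraint in polynomial time. For hardness we reduce again from 3-SAT, and we encode the constraint \emph{type} into the gap itself. Each constraint of the old reduction is either a $\{\ta\ta,\eword\}$-constraint or a length constraint $[\ell,r]$ with $\ell,r$ bounded by a polynomial. The idea is to insert, at each position of $u$ that starts a constraint, a short marker word that is also present in $\doc$, and to let $L$ read the prefix of the gap. This prefix tells $L$ which sub-language is in force.

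The main difficulty is that the length bounds in the old reduction are not constants: the synchronisation constraints $\alpha(i,j,\ell)$ and $\beta(i,j,\ell)$ grow with the instance, so a fixed regular $L$ cannot count them. We would therefore remove all non-constant length constraints first, as follows.

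We replace the synchronisation between assignment gadgets and clause gadgets by local propagation. For each variable $x_i$ and each occurrence of a literal of $x_i$ in a clause, we add a copy of the two-valued choice. Consecutive copies are chained by constraints whose gaps have only constant length. Constant-length constraints $[\ell,r]$ and $\{\ta\ta,\eword\}$ between neighbouring or nearby positions can then be expressed by one fixed language.

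Concretely, we would arrange $u$ and $\doc$ as a sequence of blocks in which every constrained pair of positions lies within $O(1)$ distance in $\doc$. We then take
\[
L=\{\tb^k w \mid 1\le k\le K,\ w\in L_k\},
\]
where each $L_k$ is one of the finitely many constant-size gap languages needed: $\{\ta\ta,\eword\}$ and the sets $\{w \mid a\le |w|\le b\}$ for constants $a\le b$. The selector $\tb^k$ is a run of $\tb$'s placed in $\doc$ immediately after the image of the left endpoint.

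To make the selector deterministic, a $\tb$-separator structure fixes where each letter of $u$ may map. This needs its own care, because only $L$-constraints are available to anchor positions. We would handle it with a frame: positions of $u$ that are forced, through $L$, to match maximal $\tb$-runs of distinct lengths, with the free $\ta$-positions in between. Since $\Sigma=\{\ta,\tb\}$, the separators must be $\tb$-runs, and the selector prefixes must not be confusable with separators.

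The step we expect to be hardest is this anchoring: guaranteeing that with only $L$-constraints, and no global length constraints, every gadget of $u$ is forced into its intended block of $\doc$. The simulated constraint semantics must survive the added prefixes, and no unintended alignment may satisfy $L$. We would first try chaining consecutive anchor letters by $L$-constraints whose selector requires a gap of exactly a fixed constant length. This yields rigid local placement by induction from the first letter of $u$, with the first letter pinned by a $\tb^{K+1}$ run occurring only at the start of $\doc$. Correctness then follows exactly as in the proof of Theorem~\ref{aaepsLanguageHardnessTheorem}, block by block.
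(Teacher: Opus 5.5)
Your plan cannot work: the regime you are aiming for is polynomial-time solvable, so the construction you describe cannot exist unless $\mathrm{P}=\mathrm{NP}$.

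\textbf{Why bounded gaps give tractability.} Your language $L=\{\tb^k w \mid 1\le k\le K,\ w\in L_k\}$ is finite, say with all words of length at most $c$. Any $\mathcal{C}$-embedding $e$ then satisfies $e(j)-e(i)\le c+1$ for every constraint $(i,j,L)$, and hence $j-i\le c+1$, since $e$ is strictly increasing. Consider a left-to-right dynamic program over the positions of $u$ whose state is the tuple of images of the last $c+1$ positions. It has at most $|\doc|^{c+1}$ states, and it can check every constraint at the moment its right endpoint is placed. It therefore decides the instance in time $O(|u|\cdot|\doc|^{c+2})$. The same objection applies to your ``local propagation'' step, whatever $L$ you choose. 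If every constrained pair only admits gaps of bounded length, then only polynomially many interface states cross any cut of $u$. So there is no way to route $n$ independent variable values through the string by bounded-distance chains. The long-range synchronisation constraints cannot be eliminated; they must be kept and expressed by an \emph{infinite} fixed regular language.

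\textbf{The missing idea.} A fixed regular language does not need to count the instance-dependent distances $\alpha(i,j,\ell)$ and $\beta(i,j,\ell)$ exactly. In the paper, $u$ and $\doc$ contain the same number of $\tb$'s. So every embedding, constraints aside, maps the $k$-th $\tb$ of $u$ to the $k$-th $\tb$ of $\doc$. This rigid skeleton comes for free from the letters, so the anchoring you flag as the hardest step is not needed at all.

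Because of the skeleton, each synchronisation gap has length $\alpha-1$, $\alpha$ or $\alpha+1$ for the intended value $\alpha$ (and likewise for $\beta$). The gadgets are sized so that $|\widehat{\doc}_i|=10$, $|\widetilde{\doc}_j|=28$, with a $\tb^7$ separator between the two parts. Then every intended value has the form $c+10p+28q$ with $c\in\{10,16,18,24,34,40\}$, hence is even, while the off-by-one values are odd. The language $L_s$ of all words whose length lies in this ultimately periodic set is regular. It selects exactly the intended distance, uniformly over all instances; in effect only the parity of the gap length matters.

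The local assignment and clause constraints use small finite languages $L_0, L_{C,1},\ldots,L_{C,5}$. These are pairwise disjoint, and the $\tb$-pattern inside each possible gap rules out every one of them except the intended one. Their union $K$ has no word longer than $7$, while $L_s$ has no word shorter than $10$. So the single language $L=L_s\cup K$ has exactly the intended effect on every constraint.
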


\begin{proof}
We use a similar reduction as for Theorem~\ref{aaepsLanguageHardnessTheorem}. Let $F = \{c_1, c_2, \ldots, c_m\}$ be a $3$-CNF formula, where every $c_j = \{l_{j, 1}, l_{j, 2}, l_{j, 3}\} \subseteq \{x_1, \neg x_1, x_2, \neg x_2, \ldots, x_n, \neg x_n\}$ is a clause with three literals. For convenience, we first define a reduction to the matching problem with different constraint languages. Later on, we will explain why this is indeed a reduction to the matching problem with $L$-constraints for a fixed language $L$.

For every $i \in \{1, 2, \ldots, n\}$, the \emph{$x_i$-assignment gadget} is defined as:

\begin{tabular}{cccccc}
$\widehat{u}_i =$ & $\tb$ & $\ta$ & $\tb \tb \tb \tb$ & $\ta$ & $\tb$ \\
$\widehat{\doc}_i =$ & $\tb$ & $\ta \ta$ & $\tb \tb \tb \tb$ & $\ta \ta$ & $\tb$\\
$ $ & & $0 1$ & & $01$ & 
\end{tabular}

Let $L_0 = \{\tb \tb \tb \tb, \ta \tb \tb \tb \tb \ta\}$. We add an $L_0$-constraint between the two $\ta$-occurrences of $\widehat{u}_i$, which means that, just like in the proof of Theorem~\ref{aaepsLanguageHardnessTheorem}, embedding $\widehat{u}_i$ into $\widehat{\doc}_i$ corresponds to assigning $x_i$ to $1$ and $\neg x_i$ to $0$, or the other way around.

For every $j \in \{1, 2, \ldots, m\}$, the \emph{$c_j$-clause gadget} is defined as:

\begin{tabular}{cccccccccccccc}
 &  & $L_1$ & & $S_1$ & & $L_2$ & & $S_2$ & & $L_{\vee}$ &  & $L_3$ & \\
$\widetilde{u}_j$ = & $\tb$ & $\ta$ & $\tb$ & $\ta$ & $\tb\tb$ & $\ta$ & $\tb$ & $\ta$ & $\tb \tb \tb$ & $\ta$ & $\tb \tb \tb \tb \tb$ & $\ta$ & $\tb$ \\
$\widetilde{\doc}_j$ = & $\tb$ & $\ta \ta$ & $\tb$ & $\ta \ta \ta$ & $\tb\tb$ & $\ta \ta$ & $\tb$ & $\ta \ta \ta$ & $\tb \tb \tb$ & $\ta \ta$  & $\tb \tb \tb \tb \tb$ & $\ta \ta$ & $\tb$ \\
&  & $01$ &  & & & $10$ &  & &  & $10$ & & $01$ & 
\end{tabular}

We use the same terminology of the proof of Theorem~\ref{aaepsLanguageHardnessTheorem} to talk about the $L_1$-$\ta$, $L_2$-$\ta$ and so on, and about the corresponding $\ta$-blocks in $\widetilde{\doc}_j$. Again, we set $\widehat{u} = \widehat{u}_1 \widehat{u}_2 \cdots \widehat{u}_n$, $\widehat{\doc} = \widehat{\doc}_1 \widehat{\doc}_2 \cdots \widehat{\doc}_n$, $\widetilde{u} = \widetilde{u}_1 \widetilde{u}_2 \cdots \widetilde{u}_m$, $\widetilde{\doc} = \widetilde{\doc}_1 \widetilde{\doc}_2 \cdots \widetilde{\doc}_m$, but we define $u = \widehat{u} \: \tb^7 \: \widetilde{u}$ and $\doc = \widehat{\doc} \: \tb^7 \: \widetilde{\doc}$. 

We observe that $|\widehat{\doc}_i| = 10$ and $|\widetilde{\doc}_j| = 28$ for every $i \in \{1, 2, \ldots, n\}$ and $j \in \{1, 2, \ldots, m\}$. Note that for every $i \in \{1, 2, \ldots, n\}$, $j \in \{1, 2, \ldots, m\}$, the factor of $\doc$ strictly between $\widehat{\doc}_i$ and $\widetilde{\doc}_j$ is $\widehat{\doc}_{i + 1} \cdots \widehat{\doc}_n \tb^7 \widetilde{\doc}_1 \cdots \widetilde{\doc}_{j-1}$. This means that the factors of $\doc$ that strictly lie between the first $\ta$-occurrence of $\widehat{\doc}_i$ and the $L_{1}$-$\ta$-block, the $L_{2}$-$\ta$-block, and the $L_{3}$-$\ta$-block of the $\widetilde{\doc}_j$ are
\begin{itemize}[nosep]
\item $\ta \tb^4 \ta \ta \tb \: \widehat{\doc}_{i + 1} \cdots \widehat{\doc}_n \tb^7 \widetilde{\doc}_1 \cdots \widetilde{\doc}_{j-1} \: \tb$,
\item $\ta \tb^4 \ta \ta \tb \: \widehat{\doc}_{i + 1} \cdots \widehat{\doc}_n \tb^7 \widetilde{\doc}_1 \cdots \widetilde{\doc}_{j-1} \: \tb \ta \ta \tb \ta^3 \tb \tb$, and
\item $\ta \tb^4 \ta \ta \tb \: \widehat{\doc}_{i + 1} \cdots \widehat{\doc}_n \tb^7 \widetilde{\doc}_1 \cdots \widetilde{\doc}_{j-1} \: \tb \ta \ta \tb \ta^3 \tb \tb \ta \ta \tb \ta^3 \tb^3 \ta \ta \tb^5$, respectively.
\end{itemize}
For convenience, for every $p, q \in \mathbb{N}\cup\{0\}$, we define $\alpha(p, q, 1) \coloneq 16 + 10p + 28q$, $\alpha(p, q, 2) \coloneq 24 + 10p + 28q$ and $\alpha(p, q, 3) \coloneq 40 + 10p + 28q$. From the considerations from above it follows that, for every $i \in \{1, 2, \ldots, n\}$, $j \in \{1, 2, \ldots, m\}$ and $\ell \in \{1, 2, 3\}$, there are exactly $\alpha(n-i, j-1, \ell)$ symbols that lie strictly between the first $\ta$-occurrence of $\widehat{\doc}_i$ and the $L_{\ell}$-$\ta$-block of $\widetilde{\doc}_j$. Moreover, with $\beta(p, q, \ell) = \alpha(p, q, \ell) - 6$ for every $p, q \in \mathbb{N}\cup\{0\}$ and $\ell \in \{1, 2, 3\}$, we have that there are exactly $\beta(n-i, j-1, \ell)$ symbols that lie strictly between the third $\ta$-occurrence of $\widehat{\doc}_i$ and the $L_{\ell}$-$\ta$-block of $\widetilde{\doc}_j$. In order to verify this, just observe that $\alpha(n-i, j-1, 1)$, $\alpha(n-i, j-1, 2)$, and $\alpha(n-i, j-1, 3)$ are exactly the lengths of the factors displayed above (and an analogous explanation applies to $\beta(n-i, j-1, 1)$, $\beta(n-i, j-1, 2)$, and $\beta(n-i, j-1, 3)$ just with the prefix $\ta \tb^4 \ta \ta \tb$ of the above factors replaced by $\ta \tb$).

We also define $L_{\alpha(p, q, \ell)}$ as the language of strings over $\{\ta, \tb\}$ of size $\alpha(p, q, \ell)$ and $L_{\beta(p, q, \ell)}$ as the language of strings over $\{\ta, \tb\}$ of size $\beta(p, q, \ell)$.

We can now achieve the synchronisation between the assignment gadgets and the clause gadgets in the same way as in the proof of Theorem~\ref{aaepsLanguageHardnessTheorem}, but instead of a length constraint $[\alpha(n-i, j-1, \ell), \alpha(n-i, j-1, \ell)]$ or $[\beta(n-i, j-1, \ell), \beta(n-i, j-1, \ell)]$, we use the constraint language $L_{\alpha(n-i, j-1, \ell)}$ or $L_{\beta(n-i, j-1, \ell)}$, respectively. 

Finally, we use gap-constraints in order to make the clause gadgets work, i.e., we want to enforce the following properties:

\begin{itemize}[nosep]
\item Every $\ta$ of $\widetilde{u}_j$ is mapped to its corresponding $\ta$-block of $\widetilde{\doc}_j$.
\item If $L_1$-$\ta$ is mapped to the left and $L_2$-$\ta$ is mapped to the left, then $S_1$-$\ta$ can be mapped to the left or middle.
\item If $L_1$-$\ta$ is mapped to the right and $L_2$-$\ta$ is mapped to the left, then $S_1$-$\ta$ can be mapped to the left, middle or right.
\item If $L_1$-$\ta$ is mapped to the right and $L_2$-$\ta$ is mapped to the right, then $S_1$-$\ta$ can be mapped to the right or middle.
\item If $L_1$-$\ta$ is mapped to the left and $L_2$-$\ta$ is mapped to the right, then $S_1$-$\ta$ can only be mapped to the middle.
\item $S_1$-$\ta$ is mapped to the left (middle, right) if and only if $S_2$-$\ta$ is mapped to the left (middle, right).
\item If $S_2$-$\ta$ is mapped to the left or to the right, then $L_{\vee}$-$\ta$ is mapped to $1$, and if $S_2$-$\ta$ is mapped to the middle, then $L_{\vee}$-$\ta$ is mapped to $0$.
\item If $L_{\vee}$-$\ta$ is mapped to $1$, then $L_3$-$\ta$ can be mapped to $0$ or to $1$, and if $L_{\vee}$-$\ta$ is mapped to $0$, then $L_3$-$\ta$ must be mapped to $1$.
\end{itemize}

We can achieve this by defining the following gap-constraints:

\begin{itemize}[nosep]
\item We use $L_{C, 1} = \{\tb, \tb \ta, \tb \ta \ta, \ta \tb, %
\ta \tb \ta\}$ as constraint language between $L_1$-$\ta$ and $S_1$-$\ta$.
\item We use $L_{C, 2} = \{\tb\tb, \tb\tb \ta, %
\ta \tb\tb, \ta \ta \tb\tb, \ta \tb\tb \ta\}$ as constraint language between $L_2$-$\ta$ and $S_1$-$\ta$.
\item We use $L_{C, 3} = \{\ta \ta \tb \tb \ta \ta \tb, \ta \tb \tb \ta \ta \tb \ta, \tb \tb \ta \ta \tb \ta \ta\}$ as constraint language between $S_1$-$\ta$ and $S_2$-$\ta$.
\item We use $L_{C, 4} = \{\ta \ta \tb \tb \tb, \tb \tb \tb, \ta \tb \tb \tb \ta\}$ as constraint language between $S_2$-$\ta$ and $L_{\vee}$-$\ta$.
\item We use $L_{C, 5} = \{\tb \tb \tb \tb \tb \ta, \ta \tb \tb \tb \tb \tb, \ta \tb \tb \tb \tb \tb \ta\}$ as constraint language between $L_{\vee}$-$\ta$ and $L_{3}$-$\ta$.
\end{itemize}

In order to see this, it is sufficient to observe that the allowed gap-strings of the languages above are exactly chosen such that the desired properties are enforced. 

The correctness of the reduction follows in the same way as in the proof of Theorem~\ref{aaepsLanguageHardnessTheorem}. 

It remains to show how this reduction can be changed into a reduction that only uses $L$-constraints for a fixed language $L$ (i.e., a language independent of the actual instance of the Boolean satisfiability problem).

We first observe that the languages $L_0$, $L_{C, 1}$, $L_{C, 2}$, $L_{C, 3}$, $L_{C, 4}$ and $L_{C, 5}$ are pairwise disjoint. Moreover, if we use a constraint language $L \in \{L_0, L_{C, 1}, L_{C, 2}, L_{C, 3}, L_{C, 4}, L_{C, 5}\}$ for any two positions $i$ and $j$ of $u$, then $i$ and $j$ will be necessarily embedded such that in between their images $e(i)$ and $e(j)$ in $\doc$ no string from $(L_0 \cup L_{C, 1} \cup L_{C, 2} \cup L_{C, 3} \cup L_{C, 4} \cup L_{C, 5}) \setminus L$ can occur. This is directly implied by the fact that in between $e(i)$ and $e(j)$ there is either exactly one occurrence of $\tb$, or exactly one occurrence of factor $\tb \tb$, or exactly one occurrence of a factor of the form $\tb \tb \ta^+ \tb$, or exactly one occurrence of a factor $\tb \tb\tb$, or exactly one occurrence of a factor $\tb \tb \tb \tb \tb$. Consequently, we can define $K \coloneq L_0 \cup L_{C, 1} \cup L_{C, 2} \cup L_{C, 3} \cup L_{C, 4} \cup L_{C, 5}$ and simply use $K$ as constraint language whenever we have used one of $L_0$, $L_{C, 1}$, $L_{C, 2}$, $L_{C, 3}$, $L_{C, 4}$ or $L_{C, 5}$.

Let $u$ be embedded into $\doc$ by some embedding (not necessarily satisfying any gap-constraints), and let the first $\ta$ of some $\widehat{u}_i$ be mapped to position $p$ of $\doc$ and let $L_{\ell}$-$\ta$ of some $\widetilde{u}_j$ be mapped to position $q$ of $\doc$. Then $(q - p - 1) \in \{\alpha(n-i, j-1, \ell) - 1, \alpha(n-i, j-1, \ell), \alpha(n-i, j-1, \ell) + 1\}$ (note that $q - p - 1$ is the number of symbols strictly between positions $q$ and $p$ of $\doc$, i.e., the number of symbols strictly between the positions where we mapped the mentioned $\ta$-occurrences of $u$). Similarly, if the second $\ta$ of some $\widehat{u}_i$ is mapped to position $p$ of $\doc$ and $L_{\ell}$-$\ta$ of some $\widetilde{u}_j$ is mapped to position $q$ of $\doc$, then $(q - p - 1) \in \{\beta(n-i, j-1, \ell) - 1, \beta(n-i, j-1, \ell), \beta(n-i, j-1, \ell) + 1\}$. This is a consequence of the fact that any embedding that embeds $u$ in $\doc$ must map each $\tb$-occurrence of $u$ to the corresponding $\tb$-occurrence of $\doc$.

For every $n, m \in \mathbb{N}$, $i \in \{1, 2, \ldots, n\}$, $j \in \{1, 2, \ldots, m\}$ and $\ell \in \{1, 2, 3\}$, the numbers $\alpha(n-i, j-1, \ell)$ and $\beta(n-i, j-1, \ell)$ are even, which means that the numbers $\alpha(n-i, j-1, \ell) - 1$, $\alpha(n-i, j-1, \ell) + 1$, $\beta(n-i, j-1, \ell) - 1$, $\beta(n-i, j-1, \ell) + 1$ (which are odd numbers) do not appear in the set $\{\alpha(n-i, j-1, \ell), \beta(n-i, j-1, \ell) \mid n, m \in \mathbb{N}, i \in \{1, 2, \ldots, n\}, j \in \{1, 2, \ldots, m\}, \ell \in \{1, 2, 3\}\}$. Hence, we can define 
\begin{equation*}
L_s \coloneq \bigcup_{n, m \in \mathbb{N}, i \in \{1, \ldots, n\}, j \in \{1, \ldots, m\}, \ell \in \{1, 2, 3\}} (L_{\alpha(n-i, j-1, \ell)} \cup L_{\beta(n-i, j-1, \ell)})
\end{equation*}
and simply replace every constraint language $L_{\alpha(n-i, j-1, \ell)}$ or $L_{\beta(n-i, j-1, \ell)}$ by $L_s$. 
Note that $L_s$ is a regular language.
It is important that $L_s$ caters for every possible choice of $n$ and $m$, since we need a single fixed constraint language that works for any possible SAT-instance (i.e., we can have an arbitrary number $n$ of variables and an arbitrary number $m$ of clauses). This is correct, since in any embedding of $u$ in $\doc$ (where $u$ and $\doc$ result from some arbitrary SAT-instance) the distance between the first $\ta$-occurrence of some $\widehat{\doc}_i$ and the $L_{\ell}$-$\ta$-block of some $\widetilde{\doc}_j$ is $\alpha(n-i, j-1, \ell) - 1$, $\alpha(n-i, j-1, \ell)$ or $\alpha(n-i, j-1, \ell) + 1$, but only $L_{\alpha(n-i, j-1, \ell)}$ is contained in $L_s$, while both $L_{\alpha(n-i, j-1, \ell) - 1}$ and $L_{\alpha(n-i, j-1, \ell) + 1}$ are disjoint from $L_s$ (and analogously for the third $\ta$-occurrence of $\widehat{\doc}_i$ and the $\beta$-numbers). Consequently, for those pairs of positions of $u$ where we use the constraint language $L_{\alpha(n-i, j-1, \ell)}$ (or $L_{\beta(n-i, j-1, \ell)}$), using $L_s$ instead has the exact same effect.

Finally, we note that $L_s \cap K = \emptyset$, since the longest strings of $K$ have length $7$, while the smallest strings of $L_s$ have length $\beta(0, 0, 1) = 10$. Hence, we can replace every constraint language by the constraint language $L \coloneq L_s \cup K$. It is obvious that $L$ is a regular language.
\end{proof}

\end{document}